%
% Authors:    Dmitriy N. Zhuk.
%

    \documentclass[12pt,a4paper]{article}
    \usepackage[cp1251]{inputenc}
    \usepackage[english]{babel}
\usepackage{algorithm}
\usepackage{multirow}
\usepackage[noend]{algpseudocode}    
    \usepackage{amsmath,amssymb,amsthm}
\usepackage{tikz}
\usepackage{xypic}
\usetikzlibrary{shapes}
    % standart margins
    \oddsidemargin=-0.12in
    \textwidth=6.6in
    \topmargin=-0.66in
    \textheight=10in

    \headheight=0cm
    \headsep=0cm

\righthyphenmin=2
     \usepackage{graphicx}
     \usepackage{wrapfig}
    \DeclareGraphicsRule{.bmp}{bmp}{}{}

\renewcommand*\circled[1]{\tikz[baseline=(char.base)]{
            \node[shape=circle,draw,inner sep=2pt] (char) {#1};}}

  \DeclareMathOperator\CSP{CSP}

\DeclareMathOperator{\proj}{pr}
  \DeclareMathOperator\Inv{Inv}
  \DeclareMathOperator\Pol{Pol}
  \DeclareMathOperator\NP{NP}
  \DeclareMathOperator\coNP{co-NP}
  
  \DeclareMathOperator\DP{DP}
  \DeclareMathOperator\Clo{Clo}
  \DeclareMathOperator\BoolClo{BoolClo}
  
  \DeclareMathOperator\QCSP{QCSP}

  \DeclareMathOperator{\Var}{Var}

%macros

\setcounter{MaxMatrixCols}{20}

\renewcommand{\le}{\leqslant}
\renewcommand{\ge}{\geqslant}

%states

\theoremstyle{definition}
\theoremstyle{plain}
\newtheorem{thm}{Theorem}
\newtheorem{conj}{Conjecture}
\newtheorem{lem}[thm]{Lemma}

\newtheorem{cor}[thm]{Corollary}

\newtheorem{definition}{Definition}
%%\newtheorem{remark}{���������}
%%\newtheorem{consec}{���������}[proposition]

%%\newenvironment{Proof}[1][]
%%{ %begin
%%  \trivlist
%%  \item[\hskip\labelsep\itshape{\it ��������������#1.}]\ignorespaces
%%}{ %end
%%  \endtrivlist \par
%%}
%%
%%\newenvironment{consequence}
%%{ %begin
%%  \trivlist
%%  \item[\hskip\labelsep\itshape{\bf ���������.}]\ignorespaces\it
%%}{ %end
%%  \endtrivlist \rm \par
%%}
%%
%%\newenvironment{OneTheorem}
%%{ %begin
%%  \trivlist
%%  \item[\hskip\labelsep\itshape{\bf �������.}]\ignorespaces\it
%%}{ %end
%%  \endtrivlist \rm \par
%%}
%%
%%
%%\newenvironment{ancillary_lemma}[1]
%%{ %begin
%%  \trivlist
%%  \item[\hskip\labelsep\itshape{\bf ����� #1.}]\ignorespaces\it
%%}{ %end
%%  \endtrivlist \rm \par
%%}
%%
%%\newenvironment{state}
%%{ %begin
%%  \trivlist
%%  \item[\hskip\labelsep\itshape{\bf �����������.}]\ignorespaces\it
%%}{ %end
%%  \endtrivlist \rm \par
%%}
%%\newcommand{\dotminus}{
%%\mathop{\overset{\boldsymbol\cdot}{\smash-\vrule width 0pt height 1pt}}
%%}

\newtheorem*{THMThreeElementClassification}{Theorem~\ref{ThreeElementClassification}}

\newtheorem*{THMconservative}{Theorem~\ref{thm:conservative}}

\begin{document}

\title{QCSP monsters and the demise of the Chen Conjecture}
\author{Dmitriy Zhuk\thanks{The author has received funding from the European Research Council (ERC) under the European Unions Horizon 2020 research and innovation programme (grant
agreement No 771005) and from Russian Foundation for Basic Research (grant 19-01-00200).} \and Barnaby Martin}
%\institution[1]{Department of Algebra, Charles University, Prague.}
%\institution[2]{Department of Computer Science, Durham University, U.K.}

\maketitle
\begin{abstract}
We give a surprising classification for the computational complexity of the Quantified Constraint Satisfaction Problem
over a constraint language $\Gamma$, QCSP$(\Gamma)$, where $\Gamma$ is a finite language over $3$ elements which contains all constants. In particular, such problems are either in P, NP-complete, co-NP-complete or PSpace-complete. Our classification refutes the hitherto widely-believed Chen Conjecture.

Additionally, we show that already on a 4-element domain 
there exists a constraint language $\Gamma$ such that 
$\QCSP(\Gamma)$ is DP-complete (from Boolean Hierarchy), and on a 10-element domain 
there exists a constraint language giving the complexity class
$\Theta_{2}^{P}$.

Meanwhile, we prove the Chen Conjecture for finite conservative languages $\Gamma$. If the polymorphism clone of such $\Gamma$ has the polynomially generated powers (PGP) property then QCSP$(\Gamma)$ is in NP. Otherwise, the polymorphism clone of $\Gamma$ has the exponentially generated powers (EGP) property and QCSP$(\Gamma)$ is PSpace-complete. 
\end{abstract}

\section{Introduction}

The \emph{Quantified Constraint Satisfaction Problem} $\QCSP(\Gamma)$ is the generalization of the \emph{Constraint Satisfaction Problem} $\CSP(\Gamma)$ which, given the latter in its logical form, augments its native existential quantification with universal quantification. That is, $\QCSP(\Gamma)$ is the problem to evaluate a sentence of the form $\forall x_1 \exists y_1 \ldots \forall x_n \exists y_n \ \Phi$, where $\Phi$ is a conjunction of relations from the \emph{constraint language} $\Gamma$, all over the same finite domain $D$.
%the model-checking problem of the fragment of first-order logic containing $\{\forall,\exists,\wedge,=\}$, over the fixed \emph{constraint language} $\Gamma$ (which may be seen as a relational structure). 
Since the resolution of the Feder-Vardi ``Dichotomy'' Conjecture, classifying the complexity of $\CSP(\Gamma)$, for all finite  $\Gamma$, between P and NP-complete \cite{BulatovFVConjecture,ZhukFVConjecture}, a desire has been building for a classification for $\QCSP(\Gamma)$. Indeed, since the  classification of the \emph{Valued CSPs} was reduced to that for CSPs \cite{KolmogorovKR17}, the QCSP remains the last of the older variants of the CSP to have been systematically studied but not classified. More recently, other interesting open classification questions have appeared such as that for \emph{Promise CSPs} \cite{BrakensiekG18} and finitely-bounded, homogeneous infinite-domain CSPs \cite{BartoP16}. The exact dates of these classification questions are not exactly cast in stone but the question for QCSPs dates at least to the announcement of the classification for the 2-element case in \cite{Schaefer}.

While $\CSP(\Gamma)$ remains in NP for any finite $\Gamma$, $\QCSP(\Gamma)$ can be PSpace-complete, as witnessed by \emph{Quantified 3-Satisfiability} or \emph{Quantified Graph 3-Colouring} (see \cite{BBCJK}). It is well-known that the complexity classification for QCSPs embeds the classification for CSPs: if $\Gamma+1$ is $\Gamma$ with the addition of a new isolated element not appearing in any relations, then $\CSP(\Gamma)$ and $\QCSP(\Gamma+1)$ are polynomially equivalent. Thus, and similarly to the Valued CSPs, the CSP classification will play a part in the QCSP classification. It is now clear that $\QCSP(\Gamma)$ can achieve each of the complexities P, NP-complete and PSpace-complete. It has thus far been believed these were the only possibilities (see \cite{BBCJK,hubie-sicomp,HubieSIGACT,Meditations,QC2017} and indeed all previous papers on the topic).

A key role in classifying many CSP variants has been played by Universal Algebra. 
We say that a $k$-ary operation $f$ \emph{preserves} 
an $m$-ary relation $R$, 
whenever $(x^1_1,\ldots,x^m_1),\ldots,(x^1_k,\ldots,x^m_k)$ in $R$, then also $(f(x^1_1,\ldots,x^1_k),\ldots,f(x^m_1,\ldots,x^m_k))$ in $R$.
The relation $R$ is called \emph{an invariant} of $f$, 
and the operation $f$ is called \emph{a polymorphism} of $R$.
An operation $f$ is \emph{a polymorphism} of $\Gamma$ if it preserves every relation 
from $\Gamma$. The \emph{polymorphism clone} $\Pol(\Gamma)$ is the set of all polymorphisms of $\Gamma$. 
%A $k$-ary operation $f$ on $D$ is a \emph{polymorphism} of $\Gamma$ if, for each $m$-ary relation $R$ of $\Gamma$, whenever $(x^1_1,\ldots,x^m_1),\ldots,(x^1_k,\ldots,x^m_k)$ in $R$, then also $(f(x^1_1,\ldots,x^1_k),\ldots,f(x^m_1,\ldots,x^m_k))$ in $R$.
%The \emph{polymorphism clone} $\Pol(\Gamma)$ is the set of all polymorphisms of $\Gamma$. 
Similarly, a relation $R$ is \emph{an invariant} of 
a set of functions $F$ if it is preserved by every operation from $F$.
By $\Inv(F)$ we denote the set of all invariants of $F$.
We call an operation $f$ \emph{idempotent} if $f(x,\ldots,x)=x$, for all $x$. An idempotent operation $f$ is a \emph{weak near-unanimity} (WNU) operation if $f(y,x,x,\ldots,x)=f(x,y,x,\ldots,x)=%f(x,x,y,\ldots,x)=
\cdots=f(x,x,\ldots,x,y)$. We recall the following form of the (now proved) Feder-Vardi Conjecture.
\begin{thm}[CSP Dichotomy \cite{BulatovFVConjecture,ZhukFVConjecture}]
Let $\Gamma$ be a finite constraint language with all constants. If $\Gamma$ admits some WNU polymorphism, then $\CSP(\Gamma)$ is in P.  Otherwise, $\CSP(\Gamma)$ is NP-complete.
\label{thm:csp}
\end{thm}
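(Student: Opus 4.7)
The plan is to prove the two directions of the dichotomy separately. The hardness direction, showing that absence of a WNU polymorphism gives NP-completeness, I would reduce to the standard algebraic CSP machinery: because $\Gamma$ contains all constants, $\Pol(\Gamma)$ is idempotent, and by a theorem of Mar\'oti--McKenzie the absence of any WNU polymorphism is equivalent to the absence of any Taylor term. By the classical Bulatov--Jeavons--Krokhin reduction, such a clone pp-interprets the projection clone on a two-element set, and a standard gadget construction then encodes 3-SAT as an instance of $\CSP(\Gamma)$, giving NP-hardness. Membership in NP is automatic since a satisfying assignment is a polynomially-checkable witness.

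For the tractability direction the aim is a polynomial-time decision procedure for $\CSP(\Gamma)$ under the assumption that $\Gamma$ has a WNU polymorphism. I would first pass to the idempotent setting (which $\Gamma$ affords via its constants) and study the local structure of finite idempotent algebras with a WNU term. Tame Congruence Theory, together with Mar\'oti--McKenzie, tells us that every such algebra omits type~$1$ and that, at each minimal set, the local structure falls into one of a short list of situations: an \emph{absorbing} subuniverse, a \emph{centre}, a \emph{projective congruence}, or an affine (module) part. The strategy is to design a recursive algorithm that, on each instance, performs three steps in order: (i) enforce a strong local consistency such as $(2,3)$-consistency; (ii) whenever some variable's current domain admits a proper \emph{strong subalgebra} of one of the structural types above, restrict the variable to that subalgebra without losing solutions and recurse on strictly smaller domains; and (iii) when no such reduction applies, argue that the residual consistent instance is essentially linear over a finite module and solve it by Gaussian elimination.

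The main obstacle is undoubtedly step~(ii): proving the ``no-loss'' statement that restricting a variable to a strong subalgebra preserves the solvability of the whole instance. This is exactly the technical heart of Zhuk's proof of the CSP Dichotomy, centred on his strong subalgebras theorem, and in Bulatov's proof it appears through the analysis of coloured edges and as-maximal components. Secondary obstacles are ensuring that the recursion terminates (the subalgebras must be strictly smaller and the recursion tree must be polynomially bounded) and that, once no strong subalgebra is available, the affine case is correctly identified and genuinely solvable by linear algebra; both require careful Tame Congruence Theory bookkeeping. I would therefore invest the bulk of the effort in reconstructing some version of the strong subalgebras lemma, after which the rest of the algorithm can be assembled around it; the hardness direction, by contrast, is essentially a packaging exercise on top of existing universal-algebraic results.
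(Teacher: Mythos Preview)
The paper does not prove this theorem at all: it is stated as a known result with the citation \cite{BulatovFVConjecture,ZhukFVConjecture} and used throughout as a black box (see e.g.\ the proofs of Theorem~\ref{thm:conservative} and Theorem~\ref{ThreeElementClassification}, where the authors simply invoke it). So there is no ``paper's own proof'' for you to match; the authors never intended to reprove the CSP Dichotomy here.

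Your outline is a fair high-level summary of Zhuk's actual proof in \cite{ZhukFVConjecture}: the hardness side via Mar\'oti--McKenzie and Bulatov--Jeavons--Krokhin is standard, and the tractability side via local consistency, reductions to strong subalgebras (absorbing, central, PC, linear), and a final affine/linear phase is precisely Zhuk's architecture. But be aware that what you describe as ``reconstructing some version of the strong subalgebras lemma'' is not a lemma one reconstructs in a page or two---it is the core of a proof that runs to well over a hundred pages, with substantial machinery (bridges, one-of-four subalgebras, the linear/critical analysis) that your sketch does not touch. As a plan for citing and using the result this is fine; as a plan for actually proving it inside this paper it would be wildly out of scope, and the authors correctly chose to cite rather than reprove.
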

\noindent Indeed, this theorem holds with the same criterion even without constants \cite{Barto2018}. However, for the general CSP one may assume without loss of generality that $\Gamma$ contains all constants (one can imagine these appearing in various forms, one possibility being all unary relations $x=c$, for $c \in D$). This is equivalent to the assumption that all operations $f$ of $\Pol(\Gamma)$ are idempotent. We can achieve this by moving to an equivalent constraint language known as the \emph{core}. The situation is more complicated for the QCSP and it is not known that a similar trick may be accomplished (see \cite{ChenMM13}). However, all prior conjectures for the QCSP have been made in this safer environment where we may assume idempotency and almost all classifications apply only to this situation. A rare exception to this is the paper \cite{QCSPmonoids} where the non-idempotent case is described as the \emph{terra incognita}.  We will henceforth assume $\Gamma$ contains all constants.

For the purpose of pedagogy it is useful to look at the $\Pi_2$ restriction of $\QCSP(\Gamma)$, denoted $\QCSP^2(\Gamma)$, in which the input is of the form $\forall x_1 \dots \forall x_n \exists y_1 \dots \exists y_m \;\Phi$. In order to solve this restriction of the problem it suffices to look at (the conjunction of) $|D|^n$ instances of $\CSP(\Gamma)$. It is not hard to show (see \cite{AU-Chen-PGP}) that, if $D^n$ can be generated under $\Pol(\Gamma)$ from some subset $\Sigma \subseteq D^n$, then one need only consult (the conjunction over) of $|\Sigma|$ instances of $\CSP(\Gamma)$. Suppose there is a polynomial $p$ such that for each $n$ there is a subset $\Sigma \subseteq D^n$ of size at most $p(n)$ so that  $D^n$ can be generated under $\Pol(\Gamma)$ from $\Sigma$, then we say  $\Pol(\Gamma)$ has the \emph{polynomially generated powers} (PGP) property. Under the additional assumption that there is a polynomial algorithm that computes these $\Sigma$, we would have a reduction to $\CSP(\Gamma)$. It turns out that if the nature of the PGP property is sufficiently benign a similar reduction can be made for the full $\QCSP(\Gamma)$ to the CSP with constants \cite{AU-Chen-PGP,LICS2015}. Another behaviour that might arise with $\Pol(\Gamma)$ is that there is an exponential function $f$ so that the smallest generating sets under  $\Pol(\Gamma)$ for $\Sigma \subseteq D^n$ require size at least $f(n)$. We describe this as the \emph{exponentially generated powers} (EGP) property. 
% It has now been proved for finite algebras that only the cases EGP and PGP can occur \cite{ZhukGap2015}and moreover the only form of PGP can be assumed to be \emph{switchability} (originally introduced in a different form in \cite{AU-Chen-PGP}). 
The outstanding conjecture in the area of QCSPs is the merger of Conjectures 6 and 7 in \cite{Meditations} which we have dubbed in \cite{MFCS2017} the \emph{Chen Conjecture}.
\begin{conj}[Chen Conjecture]
Let $\Gamma$ be a finite constraint language with all constants. If $\Pol(\Gamma)$ has PGP, then $\QCSP(\Gamma)$ is in NP; otherwise $\QCSP(\Gamma)$ is PSpace-complete.
\end{conj}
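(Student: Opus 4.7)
The plan is to separate the two directions of the conjecture, which have very different character. For the $\mathrm{PGP}\Rightarrow\NP$ direction, I would build on the reduction sketched in the body: under PGP, for each $n$ there is a set $\Sigma_n\subseteq D^n$ of polynomial size generating $D^n$ under $\Pol(\Gamma)$, so a $\Pi_2$ sentence $\forall\vec x\,\exists\vec y\,\Phi$ reduces to the conjunction of $|\Sigma_n|$ instances of $\CSP(\Gamma,\text{constants})$. To extend this to the full alternating $\QCSP$, I would invoke the structural analysis of PGP clones (``switchability'' in Chen's terminology) in order to compress every universally quantified block uniformly, and then have an $\NP$ algorithm guess existential witnesses across all the $\Sigma_n$-assignments and verify the conjunction in polynomial time.

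For the $\mathrm{EGP}\Rightarrow\mathrm{PSpace}$-complete direction, the natural strategy would be to turn an exponential lower bound on generating sets into an alternation-preserving encoding of QSAT. One would hope that, in each arity, EGP supplies an exponential family of ``hard'' tuples forcing the existential player to retain exponentially many bits of information across universal moves, matching the alternation depth of a $\mathrm{PSpace}$ computation; ideally, a master reduction from $\QCSP$ over Boolean implication (a known $\mathrm{PSpace}$-complete benchmark \cite{BBCJK}) would be threaded through the EGP witness relations.

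The main obstacle --- and, in the light of the abstract, the place I would seriously suspect falsity --- is precisely this second direction. EGP is an abstract algebraic statement about $(D;\Pol(\Gamma))$, whereas $\mathrm{PSpace}$-hardness demands that the relations of $\Gamma$ genuinely encode cascading quantifier alternation, and the two need not align: a clone may have EGP because some proper subuniverse of $D$ generates slowly, while $\QCSP(\Gamma)$ can only exploit that subuniverse for a $\coNP$-style ``no bad universal witness'' obstruction, never escalating to $\mathrm{PSpace}$. I would therefore pivot from proof to refutation, starting on a three-element domain, enumerating polymorphism clones with EGP, and hunting for ones whose relational side is governed by a polynomial-time check on universal assignments. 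Constructing such a $\Gamma$ would simultaneously refute the Chen Conjecture and account for the classification into $\mathrm{P}$, $\NP$-complete, $\coNP$-complete and $\mathrm{PSpace}$-complete on three elements advertised in the abstract, with $\DP$- and $\Theta_2^P$-complete examples arising from iterated or product-style variants of the same construction on larger domains.
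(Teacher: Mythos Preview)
Your high-level instincts are right, and in particular your pivot to refutation on three elements matches the paper exactly: the Chen Conjecture is false, the $\mathrm{PGP}\Rightarrow\NP$ half is already a theorem (your sketch via switchability and reduction to polynomially many $\CSP$ instances is the established route), and the $\mathrm{EGP}\Rightarrow\mathrm{PSpace}$-complete half is disproved by explicit three-element languages whose $\QCSP$ is in $\mathrm{P}$ or is $\coNP$-complete.

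Your diagnosis of the failure mechanism, however, is off in a way that would mislead the actual search. You speculate that EGP arises because ``some proper subuniverse of $D$ generates slowly'' and that $\QCSP$ only accesses this subuniverse weakly. In the paper's counterexamples this is not the picture: the relevant EGP clones live on $\{0,1,2\}$ with overlapping $\alpha=\{0,2\}$, $\beta=\{1,2\}$, and the tractability comes from the presence of specific binary/ternary polymorphisms --- what the paper calls $0$-stable operations such as $s_{0,2}$, $g_{0,2}$ and the ternary $f_{0,2}$ --- rather than from any subuniverse phenomenon. These operations do two jobs you did not anticipate: first, they collapse full alternation to $\Pi_2$ (a $\QCSP\to\QCSP^2$ reduction by repeatedly absorbing later universal moves via a Skolem-function argument), and second, they force any pp-definition of the $\tau_n$ relations to be exponentially long, which is precisely why the known $\coNP$-hardness template cannot be imported. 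The polynomial-time algorithms for $\QCSP^2$ then exploit structural lemmas about how tuples outside the target relation must each be witnessed by a distinct variable in any pp-definition.

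So your plan to ``enumerate polymorphism clones with EGP'' on three elements is exactly what is done, but the discriminating feature you should be scanning for is not a slow subuniverse: it is whether the clone contains a $0$-stable (or $1$-stable) operation, and if so, whether it further contains one of the special operations $s_{a,c},g_{a,c}$ or $f_{a,c}$. Those are the axes along which the P / $\coNP$-complete / $\mathrm{PSpace}$-complete trichotomy inside EGP is organised. Your closing guess about $\DP$ and $\Theta_2^{\mathrm{P}}$ via product-style constructions is accurate.
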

In \cite{Meditations}, Conjecture 6 gives the NP membership and Conjecture 7 the PSpace-completeness. In light of the proofs of the Feder-Vardi Conjecture, the Chen Conjecture implies the trichotomy of idempotent QCSP among P, NP-complete and PSpace-complete. Chen does not state that the PSpace-complete cases arise only from EGP, but this would surely have been on his mind (and he knew there was a dichotomy between PGP and EGP already for 3-element idempotent algebras \cite{AU-Chen-PGP}). Since \cite{ZhukGap2015}, it has been known for any finite domain that only the cases PGP and EGP arise (even in the non-idempotent case), and that PGP is always witnessed in the form of \emph{switchability}. It follows we know that the PGP cases are in NP \cite{AU-Chen-PGP,LICS2015}. 
\begin{thm}[\cite{MFCS2017}]
Let $\Gamma$ be a finite constraint language with all constants such that $\mathrm{Pol}(\Gamma)$ has PGP. Then $\QCSP(\Gamma)$ reduces to a polynomial number of instances of $\CSP(\Gamma)$ and is in NP.
\label{thm:PGP-in-NP}
\end{thm}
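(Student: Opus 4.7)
The plan is to leverage the structural theorem of Zhuk (\cite{ZhukGap2015}) that whenever $\Pol(\Gamma)$ has PGP, the generation of $D^n$ in fact arises from the stronger \emph{switchability} phenomenon. This gives, for each $n$, a polynomial-size family $\Sigma_n \subseteq D^n$ generating $D^n$ under $\Pol(\Gamma)$, together with an additional sequential structure on $\Sigma_n$ that is compatible with the coordinate order.

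First I would treat the $\Pi_2$ case $\forall \vec{x}\, \exists \vec{y}\, \Phi(\vec{x},\vec{y})$ as a warm-up. I claim the sentence is true iff for every $\vec{\sigma}\in\Sigma_n$ the CSP instance $\Phi(\vec{\sigma},\vec{y})$ is satisfiable. The forward direction is trivial; for the converse, given $k$-ary $f\in\Pol(\Gamma)$ and tuples $\vec{a}_1,\dots,\vec{a}_k\in\Sigma_n$ with corresponding existential witnesses $\vec{b}_1,\dots,\vec{b}_k$, the componentwise application $f(\vec{b}_1,\dots,\vec{b}_k)$ witnesses satisfiability at $f(\vec{a}_1,\dots,\vec{a}_k)$ because every relation occurring in $\Phi$ is an invariant of $f$. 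Since $\Sigma_n$ generates $D^n$ under $\Pol(\Gamma)$, every universal assignment obtains a witness. This yields at most $|\Sigma_n|$ polynomial-size CSP instances, and NP membership is immediate by guessing all witnesses simultaneously.

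Next I would extend to full alternation $\forall x_1 \exists y_1 \cdots \forall x_n \exists y_n\, \Phi$. The obstacle is that $y_i$ may depend only on $x_1,\ldots,x_i$, not on later universals, so a uniform choice of witnesses across the representatives is not automatic. Following the approach of \cite{AU-Chen-PGP,LICS2015}, I would exploit the specific shape of switchability: any two members of $\Sigma_n$ differ by a small number of coordinate ``switches'' between two designated values, and the switches can be linearly ordered in accordance with the coordinate (hence quantifier) order. Using this, I would collect the universally-quantified variables into a single block-level choice ranging over $\Sigma_n$, and translate the alternating sentence into polynomially many CSP instances by preserving, for each $i$, only the dependencies of $y_i$ on universals to its left. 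The switchable structure on $\Sigma_n$ guarantees that a family of strategies for these instances can be spliced into a single winning strategy for the full QCSP, because polymorphisms applied coordinate-wise respect the relational content of $\Phi$.

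The main obstacle is this splicing/alignment step: showing that a switchability generating set can be chosen so that its switch order matches an arbitrary quantifier prefix, and that witnesses chosen independently per representative can be combined into a strategy respecting the causality of the alternation. This is the content that goes beyond the bare $\Pi_2$ argument and requires switchability rather than only PGP. Once the reduction is established, NP membership follows by a single nondeterministic guess of the polynomially many existential witnesses across all representatives, followed by a polynomial-time verification that each induced CSP constraint is satisfied.
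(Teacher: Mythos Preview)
The paper does not give its own proof of this theorem: it is stated with the citation \cite{MFCS2017} and used as an imported result. In the introduction the authors explain the provenance---PGP is always witnessed by switchability \cite{ZhukGap2015}, and switchability yields the reduction to polynomially many CSP instances \cite{AU-Chen-PGP,LICS2015}---but no argument is reproduced here. So there is nothing in this paper to compare your attempt against.

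Your sketch is consistent with that cited literature: the $\Pi_2$ warm-up is exactly the standard generating-set argument, and you correctly identify that the full alternating case requires switchability (not bare PGP) to align the generating tuples with the quantifier order. You also correctly flag the splicing step as the real content. That step is genuinely nontrivial and is the substance of \cite{LICS2015,MFCS2017}; your outline gestures at it but does not carry it out, so as a self-contained proof it would be incomplete. If the goal is simply to cite the result, what you have written is an accurate summary of why it holds; if the goal is to reprove it, you would need to actually execute the alignment argument rather than just name it as an obstacle.
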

Using the CSP classification we can then separate the PGP cases into those that are in P and those that are NP-complete. %We also know from \cite{ZhukGap2015} and \cite{LICS2015} that the NP membership of Conjecture 6 is indeed true, because PGP is always witnessed in the special form of \emph{switchability}. 

A tantalizing characterization of idempotent $\Pol(\Gamma)$ that are EGP is given in \cite{ZhukGap2015}, where it is shown that $\Gamma$ must allow the \emph{primitive positive} (pp) definition (of the form $\exists x_1 \dots \exists x_n \; \Phi$) of relations
$\tau_n$ with the following special form.
\begin{definition}
Let the domain $D$ be so that $\alpha \cup \beta = D$ yet neither of $\alpha$ nor $\beta$ equals $D$. 
%Set $S$ to be the ternary relation $\{(x,y,z): (x,y,z) \in (D\setminus(\alpha\cap\beta))^{3}\setminus (\alpha^3 \cup \beta^3)\}$ and let $\overline{S}$ be its complement. 
Let $S = \alpha^{3}\cup\beta^{3}$ and $\tau_n$ be the $3n$-ary relation given by  $\bigvee_{i \in [n]} S(x_i,y_i,z_i)$. 
\label{def:tau}
\end{definition}
The complement to $S$ represents the Not-All-Equal relation and 
the relations $\tau_n$ allow for the encoding of the complement of \emph{Not-All-Equal 3-Satisfiability} (where $\alpha \setminus \beta$ is $0$ and $\beta \setminus \alpha$ is $1$). Thus, if one has polynomially computable (in $n$) pp-definitions of $\tau_n$, then it is clear that $\QCSP(\Gamma)$ is co-NP-hard \cite{MFCS2017}.  In light of this observation,  it seemed that only a small step remained to proving the actual Chen Conjecture, at least with co-NP-hard in place of PSpace-complete. 

%Zhuk proves in \cite{ZhukGap2015} that if an idempotent algebra $\mathbb{A}$ has not PGP, then it has EGP and can define using just $\{\exists,\wedge,=\}$, \mbox{i.e.} \emph{primitive-positive} (pp) \emph{define}, all relations $\tau_n$ for some suitably chosen $\alpha$ and $\beta$. If Pol($\Gamma$) admits polynomially computable (in $n$) pp-definitions of $\tau_n$ then it is clear from the method of \cite{MFCS2017} that $\QCSP(\Gamma)$ would be co-NP-hard.

In this paper we refute the Chen Conjecture in a strong way while giving a long-desired classification for $\QCSP(\Gamma)$ where $\Gamma$ is a finite $3$-element constraint language with constants. Not only do we find $\Gamma$ so that $\QCSP(\Gamma)$ is co-NP-complete, but also we find $\Gamma$ so that $\Pol(\Gamma)$ has EGP yet $\QCSP(\Gamma)$ is in P. In these latter cases we can further prove that all pp-definitions of $\tau_n$ in $\Gamma$ are of size exponential in $n$. 
Additionally, we show that on a 4-element domain 
there exists a constraint language $\Gamma$ such that 
$\QCSP(\Gamma)$ is DP-complete (from Boolean Hierarchy), and on a 10-element domain 
there exists a constraint language giving the complexity class
$\Theta_{2}^{P}$.
Our main result for QCSP can be given as follows.
\begin{thm}
Let $\Gamma$ be a finite constraint language on $3$ elements which includes all constants. Then $\QCSP(\Gamma)$ is either in P, NP-complete, co-NP-complete or PSpace-complete.
\end{thm}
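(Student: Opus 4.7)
The plan is to do a clone-theoretic case analysis of all 3-element constraint languages $\Gamma$ with constants, using the PGP/EGP dichotomy \cite{ZhukGap2015} as the first split. On the PGP side, Theorem~\ref{thm:PGP-in-NP} puts $\QCSP(\Gamma)$ in NP immediately; the CSP dichotomy (Theorem~\ref{thm:csp}), applied to the same $\Gamma$, then tells us whether $\QCSP(\Gamma)$ drops to P (when a WNU polymorphism exists) or is NP-complete. This accounts for the first two of the four classes without any new work on 3 elements.

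The substantive work is the EGP side. On $D=\{0,1,2\}$ there are only finitely many idempotent clones, and the plan is to enumerate those that yield EGP and handle each up to pp-interdefinability. The lower bounds should come from pp-definability of the $\tau_n$ relations of Definition~\ref{def:tau}: if one can produce polynomial-size pp-definitions, then encoding $\neg$Not-All-Equal-$3$-SAT via the $\Pi_2$-fragment gives co-NP-hardness, and adding further structure that lets universal variables control genuinely interacting CSP subproblems lifts this to PSpace-hardness. The dividing line we have to identify, concretely in terms of the polymorphisms available on $\{0,1,2\}$, is between clones whose $\tau_n$ are pp-definable in polynomial size (candidates for co-NP-complete or PSpace-complete) and clones where every pp-definition of $\tau_n$ has exponential size. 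Matching upper bounds require, on the one hand, a co-NP algorithm that guesses a certificate for non-satisfiability and then verifies in P by invoking CSP solvers for the tractable sub-CSPs, and, on the other hand, a P-algorithm in the surprising EGP-in-P cases: the idea there is that the EGP explosion is "virtual" in the sense that while $D^n$ cannot be generated from polynomially many tuples, the quantifier structure only ever forces us to consult polynomially many \emph{relevant} $\CSP(\Gamma)$-instances because the exponential blow-up in any pp-definition of $\tau_n$ blocks the usual hardness gadgetry.

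The main obstacle I expect is the EGP-in-P portion, since it is exactly where the Chen Conjecture fails and so no prior template applies: one has to design an algorithm that exploits a refined analysis of the relational structure on $3$ elements, effectively trading the naive universal-variable enumeration for a polynomial-size decomposition coming from the clone's structure, and simultaneously prove the matching exponential lower bound on pp-definitions of $\tau_n$ to explain why the expected co-NP-hardness does \emph{not} materialise. Once this delicate case is completed, combining it with the PGP-case analysis above and with the standard PSpace upper bound for $\QCSP$ yields exactly the four complexity classes P, NP-complete, co-NP-complete, and PSpace-complete.
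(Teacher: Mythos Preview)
Your overall architecture---split on PGP/EGP first, then case-analyse the EGP side---matches the paper, and your treatment of the PGP branch is correct.  Two points on the EGP branch, however, need fixing.

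First, there are \emph{not} finitely many idempotent clones on a three-element set (the lattice is uncountable), so you cannot ``enumerate those that yield EGP''.  The paper never enumerates clones.  Instead it shows that EGP together with a WNU forces a semilattice polymorphism $s_c$ (so $\Pol(\Gamma)$ is $\{0,2\}\{1,2\}$-projective up to renaming), and then the case split is driven by a short list of concrete binary/ternary operations: the $0$-stable and $1$-stable operations, and specifically $s_{a,c}$, $g_{a,c}$, $f_{a,c}$.  PSpace-hardness is triggered precisely when $\Pol(\Gamma)$ has \emph{no} $0$-stable and no $1$-stable operation, via explicit ternary relations $\sigma_0,\sigma_1,\sigma_0',\sigma_1'$ (Lemma~\ref{PSpaceHardness} and Lemma~\ref{lem:Pspace-definitions}); the EGP+no-WNU case is handled separately (Lemma~\ref{NOWNUComplexity}).

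Second, and more important, the co-NP-hardness does \emph{not} go through polynomial-size pp-definitions of $\tau_n$.  You are right that the EGP-in-P languages have only exponential $\tau_n$-definitions, but the paper never establishes polynomial $\tau_n$-definitions in the co-NP-complete cases either; your proposed dividing line is not the one that works.  Hardness is obtained instead by extracting AND-type and OR-type relations (Theorems~\ref{ANDTypeForStable}, \ref{ORTypeForStable}, Lemma~\ref{ANDORHardness}) or the relation $(x\neq 0)\vee(y=z)$ (Lemma~\ref{NoStrange2Hardness}), and these are produced from the finiteness of $\Gamma$ via a near-unanimity-style argument (Lemma~\ref{almostNU}).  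On the upper-bound side, the presence of a $0$-stable polymorphism yields a reduction of $\QCSP(\Gamma)$ to its $\Pi_2$-fragment $\QCSP^2(\Gamma)$ (Lemma~\ref{ReductionToQCSP2}); this reduction is the step that makes both the co-NP membership (Lemma~\ref{InCONPForStable}) and the two polynomial-time algorithms (Lemmas~\ref{ComplexityStrange1}, \ref{ComplexityStrange2}) possible, and it is not visible in your plan.
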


Meanwhile, we prove the Chen Conjecture is true for the class of finite conservative languages (these are those that have available all unary relations). One might see this as among the larger natural classes on which the Chen Conjecture holds. Another form of ``conservative QCSP'', in which relativization of the universal quantifier is permitted, has been given by Bodirsky and Chen \cite{BodirskyC09}. They uncovered a dichotomy between P and PSpace-complete, whereas the QCSP for finite conservative languages bequeaths the following trichotomy.

\begin{thm}[Conservative QCSP]
Let $\Gamma$ be a finite constraint language with all unary relations. If $\Pol(\Gamma)$ has PGP, then $\QCSP(\Gamma)$ is in NP. If $\Gamma$ further admits a WNU polymorphism, then $\QCSP(\Gamma)$ is in P, else it is NP-complete. Otherwise, $\Pol(\Gamma)$ has EGP and $\QCSP(\Gamma)$ is PSpace-complete.
\label{thm:conservative}
\end{thm}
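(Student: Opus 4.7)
The plan divides by the PGP/EGP dichotomy.

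\emph{PGP case.} Since conservativity supplies all constants, Theorem~\ref{thm:PGP-in-NP} applies to give a polynomial-time reduction of $\QCSP(\Gamma)$ to a polynomial-sized conjunction of $\CSP(\Gamma)$ instances, placing $\QCSP(\Gamma)$ in NP. Theorem~\ref{thm:csp} then separates the two subcases: if $\Gamma$ admits a WNU polymorphism, each $\CSP(\Gamma)$ call is in P, hence $\QCSP(\Gamma)\in\mathrm{P}$; otherwise $\CSP(\Gamma)$ is NP-complete and, since $\CSP(\Gamma)$ trivially reduces to $\QCSP(\Gamma)$, so is $\QCSP(\Gamma)$.

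\emph{EGP case, setup.} The PSpace upper bound is standard for QCSP on any finite language, so I focus on PSpace-hardness and plan to reduce from Quantified NAE-$3$-Satisfiability on two elements, which is PSpace-complete. By Zhuk's gap theorem together with Definition~\ref{def:tau}, the EGP hypothesis yields $\alpha,\beta\subseteq D$ with $\alpha\cup\beta=D$, $\alpha\neq D\neq\beta$, and a pp-definition of the ternary relation $S$ in $\Gamma$. Pick $a\in\alpha\setminus\beta$ and $b\in\beta\setminus\alpha$: conservativity makes $\{a,b\}$ a subuniverse and $S\cap\{a,b\}^3$ coincides with Boolean NAE, while the unary singletons $\{a\}$ and $\{b\}$ are present in $\Gamma$. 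So $\Gamma$ pp-defines the entire language of QNAE-$3$-SAT on $\{a,b\}$, and the task reduces to encoding such an instance inside $\QCSP(\Gamma)$. Existential variables are relativised to $\{a,b\}$ by a unary constraint; the delicate issue is universal quantification, which ranges over $D$ and cannot be pinned to $\{a,b\}$ by a conjunctive constraint.

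\emph{Representative gadget and main obstacle.} The idea is to introduce, immediately after each universal $x$, an existential representative $r$ obeying a pp-definable relation $R(x,r)$ that forces $r=x$ whenever $x\in\{a,b\}$ and otherwise lets $r$ range freely in $\{a,b\}$, and to place the atomic NAE-clauses of the source instance on the representatives; any adversarial play $x\notin\{a,b\}$ then merely widens the existential player's choice of $r$, so the $\{a,b\}$-game is faithfully simulated. The principal obstacle is that the naive $R$ is not preserved by $\Pol(\Gamma)$ in general: a conservative polymorphism whose restriction to $\{a,b\}$ is the projection to some coordinate $j$ may, on tuples containing arguments outside $\{a,b\}$, output values in $\{a,b\}$ that fail the coherence $R$ demands. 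My plan is to replace the single representative by a polynomial-size pp-gadget built from $S$, the unary relations $\{a\},\{b\},D\setminus\{a,b\}$, and several auxiliary existentials per universal variable, bound together by $S$-constraints and unary guards, so that preservation reduces---via the EGP-plus-conservativity consequence that every polymorphism restricts to a projection on $\{a,b\}$---to a finite case analysis on the chosen projection coordinate. Assembling this gadget with an explicitly polynomial pp-definition is the technical core of the argument; once it is in place the reduction from QNAE-$3$-SAT delivers PSpace-hardness and completes the trichotomy.
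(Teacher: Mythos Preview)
Your PGP case matches the paper. For EGP you take a different route and leave a gap exactly where you flag it: the representative gadget is promised but not built. Two remarks.

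A minor correction: Lemma~\ref{tauRelationsExistence} gives pp-definability of $\tau_n$, hence of $\overline S=\tau_1$, not of $S$. In the conservative EGP setting $S$ (and so $\mathrm{NAE}$ on $\{a,b\}$) is nonetheless pp-definable, because every polymorphism is conservative and $\alpha\beta$-projective and therefore restricts to a projection on $\{a,b\}$; but your stated justification does not deliver it. More importantly, the gadget you seek is a one-liner, not the multi-existential assembly you sketch. The relation $\sigma=\alpha^2\cup\beta^2$ is preserved by every $\alpha\beta$-projective operation (check directly, or observe $\sigma(x,r)=\tau_1(x,r,r)$), so it is pp-definable from $\Gamma$; set $\sigma'(x,r):=\sigma(x,r)\wedge r\in\{a,b\}$. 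Then $x\in\alpha\setminus\beta$ forces $r=a$, $x\in\beta\setminus\alpha$ forces $r=b$, and $x\in\alpha\cap\beta$ leaves $r$ free in $\{a,b\}$. This is not your $R$ (it pins $r$ even when $x\in(\alpha\setminus\beta)\setminus\{a\}$), but for the QNAE3SAT reduction that is harmless---indeed cleaner---and your correctness argument goes through unchanged.

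The paper proceeds differently: it reduces from co-QNAE3SAT using $\tau_k$ directly, with no representative variables at all. The crucial step is Lemma~\ref{lem:cons}: the unary relation on two elements, available only by conservativity, lets one glue two copies of $\tau_k$ into $\tau_{2(k-1)}$ through a single fresh variable restricted to that two-element set, yielding a polynomial-size pp-definition of $\tau_k$ from $\tau_3$. The reduction then restricts only the \emph{existential} variables to $(\alpha\setminus\beta)\cup(\beta\setminus\alpha)$ and leaves universals unconstrained; any universal landing in $\alpha\cap\beta$ makes $\tau_k$ hold trivially, so the domain mismatch dissolves without a gadget. Both routes ultimately exploit the same conservative ingredient (a binary unary relation), but the paper's sidesteps the gadget question entirely, while yours---once $\sigma'$ is identified---gives an arguably more transparent reduction on the primal side.
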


It is hard to exaggerate how surprising our discovery of multitudinous complexities above P for the QCSP is. In Table~\ref{tab:model-cheking}  (reproduced but updated from \cite{MadelaineM18}), all syntactic fragments of first-order logic built from subsets of $\{\forall,\exists,\wedge,\vee,\neg,=\}$ are considered. It can be seen that they all give model-checking problems with simple, structured complexity-theoretical classifications (the classifications are simple but not necessarily the proofs), except the QCSP ($\{\forall,\exists,\wedge\}$, with or without $=$), and its dual  ($\{\forall,\exists,\vee\}$, with or without $\neq$), whose complexity classification is in any case a mirror of that for the QCSP. This holds for complexity classes of P and above (the classification of CSP complexities within P is quite rich).
\begin{table}[]
    \centering
     \begin{tabular}{|l|l|p{9cm}|} 
    \hline
    Fragment & Dual & Classification \\
    \hline
    $\{ \exists, \vee\}$            & 
    $\{ \forall, \wedge\}$          &
    {\multirow{3}{*}{$\mathrm{LogSpace}$}}    \\
    $\{ \exists, \vee, =\}$        & 
    $\{ \forall, \wedge, \neq \}$   &
    \\
    $\{ \exists, \vee, \neq \}$        & 
    $\{ \forall, \wedge, = \}$   &
    \\
    \hline
    $\{ \exists, \wedge, \vee \}$ & 
    $\{ \forall, \wedge, \vee \}$ &
    $\mathrm{LogSpace}$\ if there is some element $a$ so that all relations contain 
    \\
    $\{ \exists, \wedge, \vee, = \}$    & 
    $\{ \forall, \wedge, \vee, \neq \}$ & 
    the tuple $(a,\ldots,a)$, and $\mathrm{NP}$-complete otherwise
    \\
    $\{ \exists, \wedge, \vee, \neq \}$    & 
    $\{ \forall, \wedge, \vee, = \}$ & 
    \\
    \hline
    $\{ \exists, \wedge\}$     & 
    $\{ \forall, \vee\}$       &
   CSP dichotomy: $\mathrm{P}$ if language admits some WNU polymorphism,\\
   $\{ \exists, \wedge, = \}$ & 
    $\{ \forall, \vee, \neq \}$& and $\mathrm{NP}$-complete otherwise.
    \\
    \hline
    $\{ \exists, \wedge, \neq \}$ & 
    $\{ \forall, \vee, = \}$      &
    $\mathrm{NP}$-complete for domain at least $3$, reduces to Schaefer classes
    otherwise.
    \\
    \hline
    $\{ \exists, \forall, \wedge\}$     & 
    $\{ \exists, \forall, \vee\}$       &
    \multirow{2}{*}{QCSP polychotomy?} \\
   $\{ \exists, \forall, \wedge, = \}$ & 
    $\{ \exists, \forall, \vee,  \neq\}$    &
    \\
    \hline
    $\{ \exists, \forall, \wedge, \neq \}$ & 
    $\{ \exists, \forall, \vee,  =\}$    &
    $\mathrm{PSpace}$-complete for domain at least $3$, reduces to Schaefer classes
    for Quantified Sat otherwise.\\
    \hline
        \multicolumn{2}{|c|}{$\{ \forall, \exists, \wedge, \vee\}$} &  Tetrachotomy: $\mathrm{P}$, $\mathrm{NP}$-complete, $\mathrm{co-NP}$-complete or  $\mathrm{PSpace}$-complete\\
    \hline
    \multicolumn{2}{|l|}{
      $\{ \forall, \exists, \wedge, \vee, =\}$  
      $\{ \forall, \exists, \wedge, \vee, \neq \}$
    }%  
    &
    \multirow{2}{*}{$\mathrm{LogSpace}$ when domain at most $1$, $\mathrm{PSpace}$-complete otherwise
    }
    \\
    \multicolumn{2}{|c|}{$\{ \neg,\exists, \forall, \wedge, \vee, = \}$} &  \\
    \hline
    \multicolumn{2}{|c|}{$\{ \neg,\exists, \forall, \wedge, \vee   \}$} &  
    $\mathrm{LogSpace}$ when language contains only empty or full relations,
    $\mathrm{PSpace}$-complete otherwise\\ 
    \hline
  \end{tabular} 
    \caption{Complexity classifications for model-checking syntactic fragments of first-order logic. The ``Schaefer classes'' are the domain $2$ CSPs that have a polynomial algorithm, as given in the seminal work \cite{Schaefer}.}
    \label{tab:model-cheking}
\end{table}

\subsection{Related Work}

%\section{Related Work}

In \cite{MFCS2017}, we have proved a variant of the Chen Conjecture using infinite relational languages encoded in quantifier-free logic with constants and equality.
%\footnote{For technical reasons one usually insists further upon the use of Disjunctive Normal Form (DNF). This would only be necessary to distinguish more finely in the PGP case between P and NP. It is not necessary for the theorem as stated.}
An algebra consists of a finite domain and a set of operations on that domain. A polymorphism clone is an excellent example of an algebra which additionally satisfies certain properties of closure.
\begin{thm}[Revised Chen Conjecture \cite{MFCS2017}]
\label{thm:revised}
Let $\mathbb{A}$ be an idempotent algebra on a finite domain $A$ where we encode relations in $\mathrm{Inv}(\mathbb{A})$ in quantifier-free logic with constants and equality. If $\mathbb{A}$ satisfies PGP, then $\QCSP(\mathrm{Inv}(\mathbb{A}))$ is in NP. Otherwise, $\QCSP(\mathrm{Inv}(\mathbb{A}))$ is co-NP-hard.
\end{thm}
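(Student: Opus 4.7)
The plan is to prove the two implications separately, leveraging the PGP/EGP dichotomy of \cite{ZhukGap2015}. For the PGP case I would invoke (the proof of) Theorem~\ref{thm:PGP-in-NP}. By the switchability characterization of PGP for idempotent algebras from \cite{ZhukGap2015}, for each $n$ one can compute in polynomial time a generating set $\Sigma_n \subseteq A^n$ of size $p(n)$ with $\langle \Sigma_n \rangle_{\Pol(\mathrm{Inv}(\mathbb{A}))} = A^n$. Given a QCSP input $\forall x_1 \exists y_1 \cdots \forall x_n \exists y_n\,\Phi$ where $\Phi$ is a conjunction of quantifier-free formulas (with constants and equality) built from relations in $\mathrm{Inv}(\mathbb{A})$, one replaces the universal block by the $|\Sigma_n|$ tuples of $\Sigma_n$, obtaining a polynomial conjunction of instantiated formulas in which only existential variables remain. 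Because quantifier-free formulas with constants and equality can be evaluated on any assignment in polynomial time in the formula size, one guesses a witness for the existentials and verifies each instantiation, showing membership in \NP.

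For the EGP case, I would show co-NP-hardness by encoding the complement of Not-All-Equal 3-Satisfiability. The key input from \cite{ZhukGap2015} is that an idempotent finite algebra lacking PGP admits the structural witness of Definition~\ref{def:tau}: there exist $\alpha,\beta \subseteq A$ with $\alpha\cup\beta=A$, neither equal to $A$, such that the ternary relation $S$ is an invariant of $\mathbb{A}$. Because relations in the input are encoded by quantifier-free formulas, $\overline{S} \in \mathrm{Inv}(\mathbb{A})$ is directly accessible as an atomic predicate, and consequently the relation $\tau_n = \bigvee_{i\in[n]}\overline{S}(x_i,y_i,z_i)$ is itself a quantifier-free formula of size linear in $n$, obviating the need for a pp-definition.

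The reduction then proceeds as follows. Fix $0 \in \alpha\setminus\beta$ and $1\in\beta\setminus\alpha$; these exist since $\alpha,\beta \neq A$ while $\alpha\cup\beta=A$. Given a NAE-3SAT instance $\psi$ with clauses $C_1,\dots,C_n$ over variables $v_1,\dots,v_m$, introduce for each clause $C_i = \mathrm{NAE}(\ell_{i,1},\ell_{i,2},\ell_{i,3})$ universally quantified triples $(x_i,y_i,z_i)$ and use constants and equality to tie each coordinate of the triple to the associated literal of $\psi$ (interpreting a negation by the standard swap between the constants $0$ and $1$). The target QCSP sentence universally quantifies the triple variables and asserts $\tau_n$ together with the gadget constraints. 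The crucial observation is that $S$ is vacuous whenever any coordinate lies in $\alpha\cap\beta$ or outside $\{0,1\}$, so the only "nontrivial" universal assignments are those restricted to $\{0,1\}$ via the gadget; on such assignments $\overline{S}(x_i,y_i,z_i)$ holds iff the corresponding clause is \emph{not} satisfied by the NAE condition. Hence the QCSP instance is true iff for every $\{0,1\}$-assignment at least one clause fails NAE, i.e.\ iff $\psi$ is not NAE-satisfiable, yielding a polynomial-time reduction from co-NAE-3SAT.

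The main obstacle I anticipate is the last step: packaging the gadgets so that the universal block correctly isolates $\{0,1\}$-assignments while the remaining "garbage" universal tuples are absorbed by the trivializing behaviour of $S$ on coordinates in $\alpha\cap\beta$. This has to be done purely with quantifier-free atoms over $\mathrm{Inv}(\mathbb{A})$, but the availability of constants, equality and the explicit invariants $S,\overline{S}$ makes it routine; the heavy lifting is already done by the EGP-structure theorem of \cite{ZhukGap2015} which hands us the pair $(\alpha,\beta)$ and therefore $S$ itself, not merely a pp-definition of $\tau_n$.
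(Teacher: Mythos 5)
Your overall strategy matches the one this paper attributes to \cite{MFCS2017}: for PGP you lean on (the proof of) Theorem~\ref{thm:PGP-in-NP}, observing that the reduction to a polynomial family of CSP-like checks carries over to the infinite language because quantifier-free formulas with constants and equality are evaluable in polynomial time; and for EGP you use the crucial observation that with this encoding $\tau_n = \bigvee_{i\in[n]}\overline S(x_i,y_i,z_i)$ is directly a linear-size atom of $\mathrm{Inv}(\mathbb{A})$, so no pp-definition is needed, and you reduce from the complement of NAE-3SAT. That is the right route.

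However, the gadget you build for the EGP reduction is broken. You introduce a fresh universally quantified triple $(x_i,y_i,z_i)$ for each clause and propose to ``use constants and equality to tie each coordinate of the triple to the associated literal''. An equality atom in a conjunction does not \emph{restrict} a universally quantified variable: a sentence of the form $\forall \bar{x}\,(\Psi \wedge x_i = v)$ is false as soon as the adversary sets $x_i \neq v$. Since every $x_i,y_i,z_i$ in your construction is universal, the adversary can always falsify one of the ties, so your target sentence is trivially false regardless of the NAE-SAT instance. The fix is to drop both the fresh variables and the gadgets: universally quantify one variable per NAE-SAT variable and plug these variables directly into the single atom $\tau_k$ at the positions dictated by the clauses (exactly as in the proof of Theorem~\ref{thm:conservative}), so that any universal assignment with a coordinate in $\alpha\cap\beta$ is absorbed by $\overline S$ automatically, with no gadget needed. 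Two further corrections: reduce from \emph{monotone} NAE-3SAT (still NP-complete), because ``the standard swap between the constants $0$ and $1$'' is not expressible by equality atoms and no Boolean-negation relation is guaranteed to lie in $\mathrm{Inv}(\mathbb{A})$; and the trivialisation of $S$ is governed by the $\alpha$/$\beta$ partition rather than by literal membership in $\{0,1\}$ --- any element of $\alpha\setminus\beta$ plays the role of $0$, any element of $\beta\setminus\alpha$ the role of $1$, and $S$ is certainly not vacuous on an element of $\alpha\setminus\beta$ merely because it differs from $0$.
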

In this theorem it was known that co-NP-hardness could not be improved to PSpace-completeness, because $\QCSP(\mathrm{Inv}(\mathbb{A}))$ is co-NP-complete when, e.g., $\mathbb{A}=\mathrm{Pol}(\{0,1,2\};0,1,2,$ $\tau_1,\tau_2,\ldots\})$ where $\alpha=\{0,2\}$ and $\beta=\{1,2\}$. However, $\mathrm{Inv}(\mathbb{A})$  is not finitely related, that is, generated from a finite set of relations under pp-closure. It was not thought possible that there could be finite $\Gamma$ such that $\QCSP(\Gamma)$ is co-NP-complete. If we take the tuple-listing encoding of relations instead of quantifier-free logic with constants and equality, Theorem \ref{thm:revised} is known to fail \cite{MFCS2017}.

The systematic complexity-theoretic study of QCSPs dates to the early versions of \cite{BBCJK} (the earliest is a technical report from 2002). By the time of the journal version \cite{BBCJK}, the significance of the semilattice-without-unit $s=s_c$ (definition at opening of Section~\ref{HardnessSection}) had become apparent in a series of papers of Chen \cite{Chen04,hubie-sicomp,AU-Chen-PGP}. Although CSP$(\mathrm{Inv}(\{s\}))$ is in P it is proved in \cite{BBCJK} that QCSP$(\mathrm{Inv}(\{s\}))$ is PSpace-complete (even for some finite sublanguage of $\mathrm{Inv}(\{s\})$). We were unable to use the proof from that paper to expand the PSpace-complete classification in the $3$-element case, but we have expanded it nonetheless.

Finally, the study of which sequences of relations $R_i$, of arity $i$, have polynomial-sized (in $i$) pp-definitions in a finite constraint language $\Gamma$, has been addressed in \cite{LagerkvistW17}. Of course, this question for our relations $\tau_i$ plays a central role in this paper.

\subsection{Structure of the paper}

The paper is organized as follows (see Figure~\ref{fig:sec-dep}).
In Section 2 we formulate the main results of the paper.
We start with the classification of the complexity of $\QCSP(\Gamma)$ 
for constraint languages $\Gamma$ on a 3-element domain containing all constants.
Then we show how we can combine 
two constraint languages in one constraint language and explain 
how this idea gives exotic complexity classes such as $\DP = \NP\wedge \coNP$.
%In Section 3 we explain an earlier result that 
%Chen's conjecture holds if we encode relations of 
%$\Inv(\mathbb A)$ in quantifier-free logic with constants and equalities.

In Sections 3-5 we give necessary definitions,
prove Chen's conjecture for the conservative case,
and prove how to combine two constraint languages in one constraint 
language to generate new complexity classes.
In Section 6
we prove hardness results we need for the 3-element case and 
show how $\QCSP(\Gamma)$ 
can be reduced to 
$\QCSP^{2}(\Gamma)$ (the $\Pi_2$ restriction of $\QCSP(\Gamma)$).
In Sections 7 and 8 
we define two families of constraint languages $\Gamma$ such that 
$\Pol(\Gamma)$ has the EGP property 
but $\QCSP(\Gamma)$ can be solved in polynomial time. 
Additionally, we show that 
$\tau_{n}$ can be pp-defined from $\Gamma$ only by 
a formula of exponential size, which explains why 
the proof of co-NP-hardness does not work.

In Sections 9-10 we prove the classification of the complexity 
of $\QCSP(\Gamma)$ for constraint languages on a 3-element domain 
containing all constants.
In Section 9 we derive the existence of 
necessary relations in $\Gamma$ from the fact that $\Pol(\Gamma)$ 
has the EGP property. 
In Section 10 we complete the proof of the 3-element classification.

\begin{figure}
\[
 \xymatrix{
 & \fbox{\circled{4} Conservative Case} \ar[rrd] & & \\
        \fbox{\circled{3} Preliminaries} \ar[r]    & \fbox{\circled{5} QCSP Monsters} \ar[rr]  & &  \fbox{\circled{2} Main Results}  \\
 &  \fbox{\parbox{3.2cm}{\circled{6} Reductions \&  \\ Hardness \mbox{3-elements}}} \ar[rd] & & \\
 & \fbox{\circled{7} Strange Structure 1} \ar[r] & \fbox{\parbox{1.9cm}{\circled{9} EGP \\ and WNU \\ \mbox{3-elements}}} \ar[r] &\fbox{\parbox{1.9cm}{\circled{10} Main \\ result \\ \mbox{3-elements}}} \ar[uu] \\
  & \fbox{\circled{8} Strange Structure 2} \ar[ru] & &
    }
\]
\caption{Section dependency diagram for this article. 
%Except for Main Results, the order of the sections reads top-down and left-right.
}
\label{fig:sec-dep}
\end{figure}
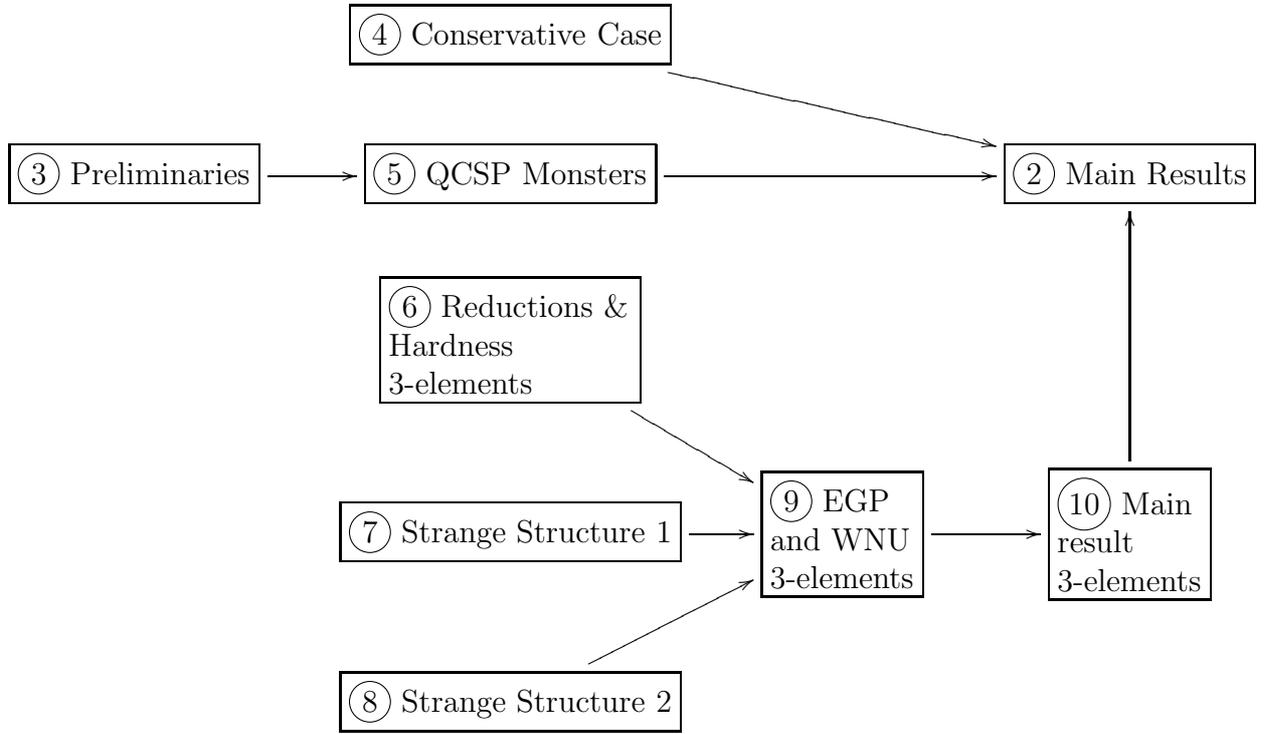

\section{Main Results}\label{MainResults}
In this section we formulate two main results of the paper:
classification of the complexity of $\QCSP(\Gamma)$ 
for all constraint languages $\Gamma$ on a 3-element domain containing 
all constants, 
and a theorem showing how we can combine constraint languages to 
obtain exotic complexity classes.

\subsection{QCSP on a 3-element domain}

Let $a$ and $c$ be constants of our domain $\{0,1,2\}$.

$f_{a,c}(x,y,z) = 
\begin{cases}
x,& \text{if $x=y$ or $y=z=a$}\\
c, &\text{otherwise.}
\end{cases}$

$s_{a,c}(x,y) = \begin{cases}
x, & \text{if $x=y$ or $y=a$}\\
c, & \text{otherwise.}
\end{cases}$

$g_{a,c}(x,y) = \begin{cases}
x, & \text{if $x=a$ or $y\neq c$}\\
c, & \text{otherwise.}
\end{cases}$

$s_{c}(x,y) = \begin{cases}
x, & \text{if $x=y$}\\
c, & \text{otherwise.}
\end{cases}$

\noindent If $\{a,b,c\}=\{0,1,2\}$ then call any idempotent operation $f$, such that 
$f(x,a,b)=x$ and $f(x,c,c) = c$, \emph{$ab$-stable}. Note that $f(x,y,z) = s_{a,c}(x,y)$
is an $ab$-stable operation.
We get the following characterization of the complexity of $\QCSP(\Gamma)$ on a 3-element domain.

\begin{thm}\label{ThreeElementClassification}
Suppose 
$\Gamma$ is a finite constraint language on $\{0,1,2\}$ with constants. Then $\QCSP(\Gamma)$ is 
\begin{enumerate}
\item in P, if $\Pol(\Gamma)$ has the PGP property and has a WNU operation.
\item NP-complete, if $\Pol(\Gamma)$ has the PGP property and has no WNU operation.
\item PSpace-complete, if $\Pol(\Gamma)$ has the EGP property and
has no WNU operation.
%\item PSpace-complete, if $\Pol(\Gamma)$ has the EGP property and $\Pol(\Gamma)$ does not contain $f$ such that 
%$f(x,a)=x$ and $f(x,c) = c$, where $a,c\in\{0,1,2\}$.
\item PSpace-complete, if $\Pol(\Gamma)$ has the EGP property and $\Pol(\Gamma)$ does not contain an $ab$-stable operation.
\item in P, if $\Pol(\Gamma)$ contains $s_{a,c}$ and $g_{a,c}$ for some $a,c\in\{0,1,2\}$, $a\neq c$.
\item in P, if $\Pol(\Gamma)$ contains $f_{a,c}$ for some $a,c\in\{0,1,2\}$, $a\neq c$.
\item co-NP-complete otherwise (in which case $\Pol(\Gamma)$ additionally has an ab-stable operation).
\end{enumerate}
\end{thm}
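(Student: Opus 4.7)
The plan is to reduce the classification to two separate dichotomies driven by the PGP/EGP dichotomy of \cite{ZhukGap2015}, together with a fine analysis of polymorphism clones on the three-element domain.

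First I would split into PGP versus EGP cases. In the PGP case, Theorem~\ref{thm:PGP-in-NP} already reduces $\QCSP(\Gamma)$ to a polynomial number of instances of $\CSP(\Gamma)$. So NP-membership is automatic, and a straightforward invocation of the CSP Dichotomy (Theorem~\ref{thm:csp}) separates P (with a WNU polymorphism) from NP-complete (without one). This handles Cases~1 and~2. Since $\Gamma$ contains all constants, the CSP dichotomy applies directly, and NP-hardness in Case~2 transfers via the embedding of $\CSP(\Gamma)$ into $\QCSP(\Gamma)$.

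For the EGP case, the strategy is an exhaustive analysis of what binary and ternary polymorphisms $\Pol(\Gamma)$ can contain. On a three-element domain, an EGP polymorphism clone is severely restricted by the characterisation in \cite{ZhukGap2015}, which forces pp-definability of relations of the form $\tau_n$ (Definition~\ref{def:tau}) for some partition of $\{0,1,2\}$ into $\alpha,\beta$. I would first peel off the two PSpace-complete subcases. For Case~3, absence of a WNU should allow the use of the hardness machinery from Section~6 (extending the PSpace-completeness proof for $\mathrm{Inv}(\{s\})$ of \cite{BBCJK}). For Case~4, I would argue that absence of any polymorphism $f(x,a)=x, f(x,c)=c$ blocks every candidate algorithm and forces enough combinatorial freedom to implement the QBF-encoding, again via a Section~6 reduction.

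Next I would attack the two surprising polynomial-time cases. In Case~6, where $f_{a,c}\in\Pol(\Gamma)$, the operation $f_{a,c}$ acts like a conditional majority around the ``sink'' element $c$; I expect the algorithm from Section~7 to exploit this by propagating the value $c$ along the universally quantified variables whose forced equalities fail and thus collapsing the problem into a polynomial number of CSP instances that can be solved by local consistency. In Case~5, the pair $(s_{a,c}, g_{a,c})$ jointly implements a two-sided semilattice-like absorption toward $c$, and I would appeal to the algorithm of Section~8. Critically, for both of these cases I would show that every pp-definition of $\tau_n$ must have size exponential in $n$, thereby explaining why the standard co-NP-hardness reduction through $\tau_n$ fails here.

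Finally, for the residual Case~7 the goal is co-NP-completeness. Hardness follows by constructing polynomial-sized pp-definitions of the $\tau_n$ from whatever polymorphisms remain; the ruling-out of $f_{a,c}$, $s_{a,c}$, $g_{a,c}$ should leave precisely enough slack to obtain a succinct encoding of the complement of Not-All-Equal 3-SAT, as explained after Definition~\ref{def:tau}. For co-NP membership I would use the reduction from $\QCSP(\Gamma)$ to $\QCSP^2(\Gamma)$ promised in Section~6: the $\Pi_2$ restriction is always in co-NP when the inner CSP is in P, and the assumed polymorphisms in this case must give enough structure to solve the inner CSP in polynomial time (via the CSP Dichotomy, since the surviving polymorphisms form a WNU-bearing clone once we remove the EGP-witnessing relations).

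The main obstacle will be Case~7 together with the two polynomial-time cases: verifying that the polymorphism list $\{f_{a,c}, s_{a,c}, g_{a,c}\}$ together with the conditions in Cases~3 and~4 genuinely partitions the EGP-clones on three elements, and that nothing subtler intervenes. Equivalently, I expect the hardest part to be showing that any three-element idempotent clone with EGP that avoids all of $f_{a,c}, s_{a,c}, g_{a,c}$ and the Case~4 condition admits \emph{polynomial-sized} pp-definitions of $\tau_n$; this requires an explicit combinatorial construction tailored to whatever residual polymorphisms survive, and it is the place where standard universal-algebraic tools from the CSP literature give out and one must work directly with the three-element lattice of clones.
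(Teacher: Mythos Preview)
Your high-level decomposition into PGP versus EGP, with the PGP branch handled by Theorem~\ref{thm:PGP-in-NP} plus the CSP Dichotomy, matches the paper exactly. The EGP branch, however, has two structural gaps that the paper closes explicitly and that your outline does not.

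First, you are missing the bridge from ``EGP with a WNU'' to the concrete algebraic setting in which all of Sections~7--9 operate. The paper argues as follows: EGP forces every polymorphism to be $\alpha\beta$-projective; if $\alpha\cap\beta=\varnothing$ there can be no WNU, so w.l.o.g.\ $\alpha=\{0,2\}$, $\beta=\{1,2\}$; then for any WNU $w$, the binary operation $w(x,\ldots,x,y)$ must be $\alpha\beta$-projective in \emph{both} coordinates, which forces it to be the semilattice $s_2$. This single observation is what licenses all the downstream analysis (Theorem~\ref{SpecialClassification} and the whole of Section~9 assume $s_2\in\Pol(\Gamma)$). Your proposal never derives the semilattice, so none of the later machinery is actually available to you.

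Second, you treat the condition in Case~4 as a standalone obstruction to algorithms, but the paper identifies it concretely: once $c=2$ is forced, ``there is no $f$ with $f(x,a)=x$, $f(x,c)=c$'' is exactly ``there is no $0$-stable or $1$-stable operation''. This is what drives both directions. PSpace-hardness for Case~4 comes from Lemma~\ref{lem:Pspace-definitions} and Lemma~\ref{PSpaceHardness} (pp-defining $\sigma_0,\sigma_1$, etc.), not from a generic QBF encoding. Conversely, co-NP membership in Case~7 comes from the existence of a $0$-stable (or $1$-stable) operation via Lemma~\ref{InCONPForStable}; your justification (``remove the EGP-witnessing relations so the clone becomes WNU-bearing'') is not how the argument goes and would not work as stated. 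For co-NP-hardness in Case~7 the paper does \emph{not} build polynomial pp-definitions of $\tau_n$; instead it shows (Theorems~\ref{ANDTypeForStable}, \ref{ORTypeForStable} and Lemmas~\ref{ANDORHardness}, \ref{NoStrange2Hardness}) that a $0$-stable operation together with the finiteness of $\Gamma$ and $\alpha\beta$-projectivity force the existence of AND-type and OR-type (or the $\delta$) relations, from which one encodes the complement of NAE-3SAT or $1$-in-$3$-SAT directly. Finally, a minor point: you have Sections~7 and~8 swapped --- Section~7 handles $\{s_{a,c},g_{a,c}\}$ (Case~5) and Section~8 handles $f_{a,c}$ (Case~6).
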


Note that the semilattice $s_c$ can be derived 
from each of the operations 
$f_{a,c}$, $s_{a,c}$.
%Let $s$ be the semilattice operation on 
%$\{0,1,2\}$ such that 
%$s(a,b)=2$ whenever $a\neq b$. There are two other similar semilattices based on the other two bipartitions of $\{0,1,2\}$. Since they will not play a role in this paper, we resist naming them. 
As we know from \cite{BBCJK}, 
the problem 
$\QCSP(\Inv(s_2))$ is PSpace-complete.
Figure \ref{fig:EGP-WNU}
demonstrates how 
adding new operations 
makes the constraint language weaker and the 
quantified constraint satisfaction problem easier.
Note that all the languages in Figure \ref{fig:EGP-WNU} have the EGP property.
\begin{figure}
\begin{center}
%\begin{scope}[scale = 1,transform canvas={scale=0.8}
\begin{tikzpicture}[scale=0.88]%[sibling distance=20em,
%  every node/.style = {shape=ellipse, rounded corners,
%    draw, align=center,
%    top color=white, bottom color=blue!20}]]
 \node[shape=ellipse, rounded corners,
    draw, align=center,
    top color=white, bottom color=blue!20]  (s) at (-2,6) {$\{s_2\}$};
 \node[shape=ellipse, rounded corners,
    draw, align=center,
    top color=white, bottom color=blue!20]  (s02) at (-8,4) {$\{s_{0,2}\}$};
 \node[shape=ellipse, rounded corners,
    draw, align=center,
    top color=white, bottom color=blue!20]  (g02s) at (-4,4) {$\{g_{0,2},s_2\}$};
 \node[shape=ellipse, rounded corners,
    draw, align=center,
    top color=white, bottom color=blue!20]  (f02) at (-8,2) {$\{f_{0,2}\}$};
  \node[shape=ellipse, rounded corners,
    draw, align=center,
    top color=white, bottom color=blue!20]  (g02s02) at (-4,2) {$\{g_{0,2},s_{0,2}\}$};
 \node[shape=ellipse, rounded corners,
    draw, align=center,
    top color=white, bottom color=blue!20]  (g12s) at (0,4) {$\{g_{1,2},s_2\}$};
 \node[shape=ellipse, rounded corners,
    draw, align=center,
    top color=white, bottom color=blue!20]  (s12) at (4,4) {$\{s_{1,2}\}$};
 \node[shape=ellipse, rounded corners,
    draw, align=center,
    top color=white, bottom color=blue!20]  (g12s12) at (0,2) {$\{g_{1,2},s_{1,2}\}$};
  \node[shape=ellipse, rounded corners,
    draw, align=center,
    top color=white, bottom color=blue!20]  (f12) at (4,2) {$\{f_{1,2}\}$};

\node[scale=1.5] at (-2,7) {PSpace};
\node[scale=1.5] at (-8,6) {co-NP};
\node[scale=1.5] at (-8,1) {P};
\node[scale=1.5] at (-4,1) {P};
\node[scale=1.5] at (0,1) {P};
\node[scale=1.5] at (4,1) {P};

\path[->] (s) edge  (s02);
\path[->] (s) edge  (g02s);
\path[->] (s02) edge  (f02);
\path[->] (s02) edge  (g02s02);
\path[->] (g02s) edge  (g02s02);
\path[->] (s) edge  (s12);
\path[->] (s) edge  (g12s);
\path[->] (s12) edge  (f12);
\path[->] (s12) edge  (g12s12);
\path[->] (g12s) edge  (g12s12);

\draw (-7.1,7) .. controls (-7,1.2) and (3,1.2)  .. (3.1,7);
\draw (-6.2,1) .. controls (-6,3.4) and (-2,3.4)  .. (-1.8,1);
\draw (-9.6,1) .. controls (-9.8,3.4) and (-6.2,3.4)  .. (-6.4,1);
\draw (-2.2,1) .. controls (-2,3.4) and (2,3.4)  .. (2.2,1);
\draw (2.4,1) .. controls (2.2,3.4) and (5.8,3.4)  .. (5.6,1);

\end{tikzpicture}
\end{center}
%\caption{A part of the clone lattice, where clones are given by generating sets, in the case of EGP with a WNU (there may be clones omitted in this drawing).}
\caption{Constraint languages defined as
invariants of sets of operations and their complexity.}
\label{fig:EGP-WNU}
\end{figure}
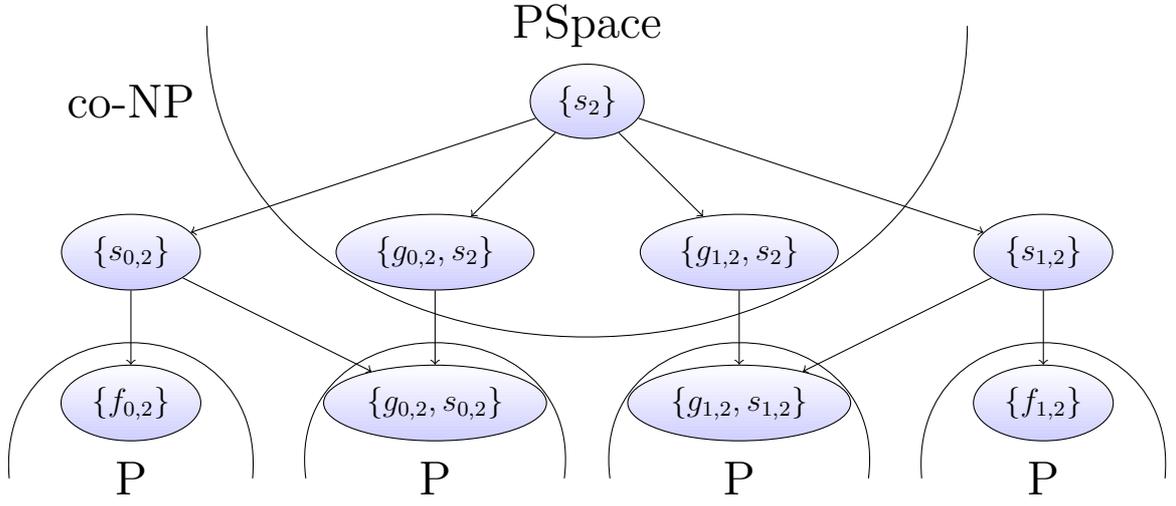

\subsection{Examples of Constraint Languages}

In this section we will show an example of a constraint language 
on $\{0,1,2\}$ for every case of Theorem~\ref{ThreeElementClassification} and explain  informally why it gives the respective complexity class.

Case (1). $\Pol(\Gamma)$ has the PGP property and has a WNU operation, 
and $\QCSP(\Gamma)$ is in P. 
We can build a constraint language $\Gamma$ with a single ternary relation $x-y+z=1 \bmod 3$. That $\Pol(\Gamma)$ has the PGP property and has a WNU operation as witnessed by the operation $x-y+z=1 \bmod 3$. We can solve $\QCSP(\Gamma)$ by reducing it to a polynomial number of instances of $\CSP(\Gamma)$, as detailed in Theorem~\ref{thm:PGP-in-NP}, then solving these by Gaussian elimination.

Case (2). 
$\Pol(\Gamma)$ has the PGP property but has no WNU operation, 
and $\QCSP(\Gamma)$ is NP-complete.
We can  take a single ternary relation $\{(1,0,0),(0,1,0),(0,0,1)\}$ that does not involve $2$. As a result universal quantifiers do not play any role and can be omitted, therefore 
the problem is equivalent to the CSP over the relation $\{(1,0,0),(0,1,0),(0,0,1)\}$ on the domain $\{0,1\}$ which is known to be NP-complete \cite{Schaefer}.

Case (3). 
$\Pol(\Gamma)$ has the EGP property and
has no WNU operation, 
and $\QCSP(\Gamma)$ is PSpace-complete.
We can take the closely related single ternary relation $$\{(x,0,0),(0,x,0),(0,0,x) : x \in \{1,2\} \}.$$ The QCSP over this language is equivalent to the QCSP over 
the relation $\{(1,0,0),(0,1,0),(0,0,1)\}$ on the domain $\{0,1\}$ which is known to be PSpace-complete \cite{BBCJK}.

Case (4). $\Pol(\Gamma)$ has the EGP property and $\Pol(\Gamma)$ does not contain an $ab$-stable operation, 
and $\QCSP(\Gamma)$ is PSpace-complete.
Unlike case (3), $\Gamma$ may have a WNU polymorphism.
Let $\tau$ be the ternary relation on $\{0,1,2\}$
consisting of all tuples 
$(a,b,c)$ such that $\{a,b,c\}\neq \{0,1\}$.
Then the complement to $\tau$ is 
equal to $\mathrm{NAE}_{3}$, where 
$\mathrm{NAE}_{3} = \{0,1\}^{3}\setminus \{(0,0,0),(1,1,1)\}$.
Put $$\sigma(x_{1},x_{2},x_{3},y_{1},y_{2}) =
(y_{1},y_{2}\in\{0,2\})\wedge (\tau(x_{1},x_{2},x_{3})\vee (y_{1}=y_{2})).$$
We claim that $\Gamma = \{\sigma,x=0,x=1,x=2\}$ satisfies the case (4) and $\QCSP(\Gamma)$ is PSpace-complete
(see Section~\ref{HardnessSection} for the detailed explanation).
%Note that $\Gamma$ is preserved by the semilattice operation 
%$s_{2}$, where $s_{2}(x,y)=2$ whenever $x\neq y$.
The reduction will be from the complement of (monotone) \emph{Quantified Not-All-Equal 3-Satisfiability} (co-QNAE3SAT) which is co-PSpace-hard (see \cite{papa}) and consequently also PSpace-hard (as PSpace is closed under complement). 
First, we define a predicate 
$p(x_{1},\dots,x_{n},y,y')$ 
such that 
for 
$y\neq y'$, $y,y'\in\{0,2\}$,  and all 
$x_{1},\dots,x_{n}\in\{0,1\}$
%the predicate $p(x_{1},\dots,x_{n},y,y')$
it
equals 
the quantifier-free part of the instance of co-QNAE3SAT, 
and $p$ always holds if $x_{i}=2$ for some $i$ or 
$y=y'\in\{0,2\}$.
Put 
$p(x_{1},\dots,x_{n},y_0,y_{k})=\exists y_{1}\dots\exists y_{k-1}\bigwedge_{i=1}^{k}\sigma(z_{i}^1,z_{i}^2,z_{i}^3,y_{i-1},y_{i}),$
which expresses 
 the quantifier-free part of the instance of co-QNAE3SAT
 $$\Phi:= \neg (\mathrm{NAE}_3(z^1_1,z^2_1,z^3_1) \wedge \ldots \wedge \mathrm{NAE}_3(z^1_k,z^2_k,z^3_k)).$$
It remains to show how to add quantifiers to $\Phi$:
\begin{itemize}
    \item 
    %To apply the universal quantifier with $x_{i}$ we just write 
    %$\forall x_{i}\; p(x_{1},\dots,x_{n},y,y')$.
    %We can easily check that 
    The formula $\forall x_{1}\; p(x_{1},\dots,x_{n},y,y')$
    obviously expresses $\forall x_{1} \;\Phi$.
    \item It is more complicated with the existential quantifier because we need $x_{1}$ to be from $\{0,1\}$
    and we cannot just write $\exists x_{1}\; p(x_{1},\dots,x_{n},y,y')$ to express $\exists x_{1}\;\Phi$. 
    Nevertheless, the following trick suggested by Miroslav Ol\u{s}\'ak solves the problem.
    To express $\exists x_{1}\;\Phi$ we define
    $$ 
    p'(x_{2},\dots,x_{n},y,y') = \exists z \forall x_{1} \exists z' 
    \left(p(x_{1},\dots,x_{n},z,z') \wedge \sigma(x_{1},0,0,y,z') \wedge \sigma(x_{1},1,1,y',z')\right).$$
    Thus, instead of adding the existential quantifier to $x_{1}$ 
    we add the universal quantifier to $x_{1}$,
    two existential quantifiers, and two additional constraints. 
    %Note that in the new formula  
    %the variables $y_{-1}$ and $y_{k+1}$ play the role of 
    %$y_{0}$ and $y_{k}$ in $\Phi$.
In fact, 
if $z$ were chosen to be equal to $y$ then
to satisfy the formula for $x_{1}=0$ 
we send $z'$ to $y'$, which, assuming $y\neq y'$, implies that 
$\Phi$ holds on $x_{1}=0$.
Otherwise, if $z\neq y$ then
to satisfy the formula for $x_{1}=1$ 
we send $z'$ to $y$, which, assuming $y\neq y'$, implies that 
$\Phi$ holds on $x_{1}=1$.
Thus, we simulated adding the existential quantifier to $\Phi$.
\end{itemize}

Cases (5)-(6).
Two concrete constraint languages are presented in Sections~\ref{StrangeOneSection} and 
\ref{StrangeTwoSection}.
These constraint languages 
have the EGP property but the $\QCSP$ over these languages can be solved in polynomial time. 
The polynomial algorithm is based on the following
two ideas.
\begin{itemize}
    \item The existence of the polymorphism 
    $s_{a,c}$ guarantees that 
    it is sufficient (see Lemma \ref{ReductionToQCSP2}) to find a polynomial algorithm for sentences of the 
    form 
    $\forall x_{1}\dots\forall x_{n}
    \exists y_{1} \dots \exists y_{m}\; \Phi$.
    \item 
Since $\Pol(\Gamma)$ has the EGP property, the usual reduction from 
$\QCSP$ to $\CSP$ (without looking at the formula) cannot help because 
theoretically we need to check exponentially many tuples 
$(x_{1},\dots,x_{n})$.
The key idea here is to calculate the list of tuples
$(x_{1},\dots,x_{n})$ we need to check to be sure that 
the formula holds for all the tuples. 
To calculate them we look at the formula
$\Phi$, substitute necessary values to some variables, 
and solve polynomially many CSP instances solvable in polynomial time (we have the semilattice polymorphism).
See the algorithms in Sections~\ref{StrangeOneSection} and \ref{StrangeTwoSection}
for more details.
\end{itemize}

Case (7). The first and probably easiest example of
a constraint language $\Gamma$ such that 
$\QCSP(\Gamma)$ is co-NP-complete is as follows. 
Let $R_{AND}=\begin{pmatrix}
0&0&1&1&2&\cdot\\
0&1&0&1&\cdot&2\\
0&0&0&1&\cdot&\cdot
\end{pmatrix}$ 
and $R_{OR}=\begin{pmatrix}
0&0&1&1&2&\cdot\\
0&1&0&1&\cdot&2\\
0&1&1&1&\cdot&\cdot
\end{pmatrix}$ where columns are tuples of the relations and $\cdot$ means any element of $\{0,1,2\}$, that is $R_{AND}$
and $R_{OR}$ are row-wise the truth table of AND
and OR respectively. 
Put $\Gamma = \{R_{AND}, R_{OR}\}$. The proof 
of co-NP-completeness can be divided into two steps
(see Section~\ref{HardnessSection} for the details):
\begin{itemize}
\item Using existential quantifiers we can compose these relations 
in the same way as we compose the operations 
$AND$ and $OR$ on $\{0,1\}$.
Using this idea we efficiently  define 
(see Lemma~\ref{ANDORHardness}) 
the complement to Not-All-Equal 3-Satisfability.
Hence, adding universal quantifiers makes the 
problem co-NP-hard.
\item To show that the problem is in co-NP we need to explain a polynomial strategy for the Existential player in a (Hintikka) game corresponding to the truth of the instance of the QCSP. It turns out that to calculate the best move Existential
can assume that all the later choices of the Universal
player are 0. It is also important that the Existential player
should play 2 if several moves are possible. Since the CSP over this language is solvable in polynomial time, after fixing the choice of the Universal player this can be done in polynomial time.

\end{itemize}

\subsection{QCSP Monsters}

The following theorem shows how we can combine constraint languages to obtain QCSPs with different complexities.

\begin{thm}\label{QCSPMonsters}
Suppose $\Gamma_{1}$, $\Gamma_{2}$, and $\Gamma_{3}$ are finite constraint languages on sets $A_1$, $A_{2}$, and $A_{3}$, respectively, and $\Gamma_{1}$ contains a constant relation 
$(x=a)$. Then there exist
constraint languages 
$\Delta_{1}$, 
$\Delta_{2}$,
$\Delta_{3}$,
$\Delta_{4}$,
on the domains of size
$|A_{1}|+1$, 
$|A_{2}|\cdot|A_{3}|+|A_{2}|+|A_{3}|$, 
$2\cdot|A_{2}|+|A_{3}|+2$,
and 
$|A_{2}|\cdot|A_{3}|+|A_{2}|+|A_{3}|+2$,
respectively, such that 
$\QCSP(\Delta_{i})$ is polynomially equivalent to the 
following problem:
\begin{enumerate}
\item[i=1] Given an instance of $\QCSP(\Gamma_{1})$
and instance of an NP-complete problem;
decide whether both of them hold,
i.e. $\QCSP(\Gamma_{1})\wedge \mathit{NP}$.
\item[i=2] Given an instance of $\QCSP(\Gamma_{2})$ and 
an instance of $\QCSP(\Gamma_{3})$;
decide whether both of them hold,
i.e. $\QCSP(\Gamma_2)\wedge \QCSP(\Gamma_3)$.
\item[i=3] Given $n>0$, 
instances $I_{1},\ldots,I_{n}$ of $\QCSP(\Gamma_{2})$,
and
instances $J_{1},\ldots,J_{n}$ of $\CSP(\Gamma_{3})$;
decide whether
$(I_{1}\vee J_{1})\wedge \dots\wedge (I_{n}\vee J_{n})$
holds,
i.e.
$(\QCSP(\Gamma_{2})\vee \CSP(\Gamma_{3}))
\wedge
\dots
\wedge
(\QCSP(\Gamma_{2})\vee \CSP(\Gamma_{3}))$.

\item[i=4] Given $n>0$, instances  
$I_{1},\ldots,I_{n}$ of $\QCSP(\Gamma_{2})$,
and
instances $J_{1},\ldots,J_{n}$ of $\QCSP(\Gamma_{3})$;
decide whether
$(I_{1}\vee J_{1})\wedge \dots\wedge (I_{n}\vee J_{n})$
holds,
i.e.
$(\QCSP(\Gamma_{2})\vee \QCSP(\Gamma_{3}))
\wedge
\dots
\wedge
(\QCSP(\Gamma_{2})\vee \QCSP(\Gamma_{3}))$.
\end{enumerate}
\end{thm}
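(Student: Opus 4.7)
The plan for all four parts is to build $\Delta_i$ by adjoining a small number of auxiliary ``gadget'' elements to the original domains and designing relations that simulate the desired logical combination of the component problems. In each case, one direction of the equivalence is a syntactic gadget construction that encodes the combined problem into an instance of $\QCSP(\Delta_i)$; the other direction requires showing that an arbitrary $\QCSP(\Delta_i)$ instance decomposes, after a polynomial-time analysis of how universal variables interact with the gadget elements, into a conjunction (or $n$-fold disjunction/conjunction) of instances of the component problems.

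For Part 1 ($\QCSP(\Gamma_1)\wedge\NP$) I would add a single new element $\star$ to $A_1$, and include in $\Delta_1$ (i) for each $R\in\Gamma_1$ a ``lifted'' relation $R'$ consisting of every tuple of $R$ together with every tuple having at least one coordinate equal to $\star$ (so $\star$ is an ``idle'' value for $\Gamma_1$-type constraints), (ii) the constant $\star$, and (iii) one relation rich enough to encode $\NP$-hardness via the value $\star$ and the constant $a$ (for instance, an analogue of $1$-in-$3$-SAT on $\{a,\star\}$). The nontrivial direction uses that for every universal variable $x$ the case $x\in A_1$ and the case $x=\star$ can be separated: the first case becomes a $\QCSP(\Gamma_1)$ instance and the second becomes existential, i.e.\ an $\NP$ instance.

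For Part 2 ($\QCSP(\Gamma_2)\wedge\QCSP(\Gamma_3)$) the natural domain is the disjoint union $A_2\sqcup A_3\sqcup(A_2\times A_3)$, accounting for the size $|A_2||A_3|+|A_2|+|A_3|$. I would include two ``projection'' relations forcing every element of $A_2\times A_3$ to project to its $A_2$- and $A_3$-coordinates, lifted versions of the relations of $\Gamma_2$ and $\Gamma_3$ that behave as the original on $A_2$ (respectively $A_3$) and are total on the rest, and relations that identify each universal variable with an element of $A_2\times A_3$. A single universal quantifier then simultaneously ranges over both factors, so the same prefix evaluates both component QCSPs in parallel. For Parts 3 and 4 the domain is further enlarged by two extra ``flag'' elements $0_L,0_R$ that signal which disjunct of each $I_k\vee J_k$ is satisfied; the relations are designed so that inside a single disjunct a variable either takes a value relevant to $I_k$ (leaving the $J_k$-part vacuous via $0_R$) or to $J_k$ (leaving the $I_k$-part vacuous via $0_L$). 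The distinction between Parts 3 and 4 is that in Part 3 the $J$-side is only a CSP, so the $A_3$-component needs no universal quantifiers and one can avoid the product construction on that side, saving elements in the domain count.

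The main obstacle, in every part, is the reverse reduction: given an arbitrary formula in $\QCSP(\Delta_i)$, one must show it can be rewritten in polynomial time as the advertised combination of component problems. The forward encodings are syntactic gadget reductions that should be routine once the relations are set up. The reverse direction requires carefully controlling what happens when universal variables take the auxiliary values: the lifted relations must be precisely ``idle enough'' that the $\Gamma_j$-constraints become trivial on gadget inputs, yet ``tight enough'' that on non-gadget inputs they recover exactly $\QCSP(\Gamma_j)$. In Part 4 especially, where both sides of each disjunct carry their own quantifier alternation, one must guarantee that the flag-based choice between $I_k$ and $J_k$ cannot be subverted by a universally quantified adversary; this is where the two extra flag elements, rather than one, are essential.
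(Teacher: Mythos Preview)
The construction for Part~1 has a concrete flaw. Your lifted relation $R'=R\cup\{\text{tuples containing }\star\}$ makes $\star$ a universal satisfier: existential can set every one of her own variables to $\star$ and thereby satisfy every $\Gamma_1$-type constraint, regardless of whether the original $\QCSP(\Gamma_1)$ instance $I$ holds. Concretely, take $A_1=\{0,1\}$, $R=\{(0,1)\}$, and $I=\forall x\,\exists y\;R(x,y)$. Then $I$ is false, but in your $I'$ existential answers any universal move with $y=\star$ and wins; hence $I'\not\Leftrightarrow I\wedge J$. Your reverse-direction sketch (``the case $x=\star$ becomes existential'') does not repair this, because the problem already appears in the forward encoding. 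The paper avoids this by making the new element a \emph{clone} of the existing constant $a$ rather than a wildcard: one defines $\phi(a')=a$, $\phi|_{A_1}=\mathrm{id}$, and sets $R'=\phi^{-1}(R)$, so $a'$ is indistinguishable from $a$ inside every $\Gamma_1$-constraint and no cheating is possible. The NP side is then a single $\mathrm{NAE}_3$ relation on the two-element set $\{a,a'\}$; the constant relation $x=a$ in $\Gamma_1$ is what makes the reverse decomposition go through.

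The same issue threatens your Part~2: if the lifts of the $\Gamma_2$- and $\Gamma_3$-relations are ``total on the rest'', existential can again escape into the complement of $A_2$ (resp.\ $A_3$) and trivialize those constraints. The paper does \emph{not} lift for the conjunction: it takes $\Delta_2=\Gamma_2\cup\Gamma_3\cup\{\sigma_1,\sigma_2\}$ with each $\Gamma_j$-relation viewed literally as a subset of $A_j^k\subseteq A^k$, so that any variable occurring in such a constraint is forced into $A_j$. Your Parts~3 and~4 are much closer to the paper: there one does lift, but only by adjoining the single constant tuple $(a_2,\dots,a_2)$ (resp.\ $(a_1,\dots,a_1)$), which is exactly your ``vacuous via $0_R$'' idea, and a further binary relation $\sigma$ linking the two sides forces existential to commit globally to one disjunct per connected component. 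The two flag elements, the product domain for simultaneous universal play, and the simplification in Part~3 (replacing the product by a mere copy of $A_2$ since the CSP side needs no universals) are all as in the paper.
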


\begin{proof}
The proof for $i=1$, $i=2$, $i=3$, and $i=4$ 
follows from Lemmas~\ref{ConstraintConjunctionOne},
\ref{ConstraintConjunctionTwo}, \ref{ConstraintDisjunctionFirst}, and \ref{ConstraintDisjunction},
respectively.
\end{proof}

\begin{cor}\label{DPon4elements}
There exists a finite constraint language 
$\Gamma$ on a 4-element domain such that 
$\QCSP(\Gamma)$ is DP-complete (where 
$\mathit{DP} = \mathit{NP}
\wedge \mathit{co}\mbox{-}\mathit{NP}$ from the Boolean hierarchy).
\end{cor}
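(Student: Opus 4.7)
The plan is to combine a 3-element co-NP-complete QCSP with an NP-complete problem via case $i=1$ of Theorem~\ref{QCSPMonsters}, producing a 4-element language whose QCSP is exactly a DP-complete conjunction.

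First I would invoke case~7 of Theorem~\ref{ThreeElementClassification} to fix a finite constraint language $\Gamma_{1}$ on $\{0,1,2\}$ (with all constants) for which $\QCSP(\Gamma_{1})$ is co-NP-complete. Since $\Gamma_{1}$ contains the constant relation $(x=a)$ for every $a\in\{0,1,2\}$, the hypothesis of Theorem~\ref{QCSPMonsters} is satisfied. I would then apply Theorem~\ref{QCSPMonsters} with $i=1$ to this $\Gamma_{1}$, producing a constraint language $\Delta_{1}$ on a domain of size $|A_{1}|+1=4$ such that $\QCSP(\Delta_{1})$ is polynomially equivalent to the problem of deciding, given an instance of $\QCSP(\Gamma_{1})$ together with an instance of a chosen NP-complete problem (say SAT), whether both are yes-instances. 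Setting $\Gamma := \Delta_{1}$ gives the required 4-element language.

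It then remains to verify that this combined conjunction is DP-complete. Membership in $\DP = \NP \wedge \coNP$ is immediate: the SAT component lies in NP and the $\QCSP(\Gamma_{1})$ component lies in co-NP. DP-hardness is the standard conjunction argument: any $L\in\DP$ decomposes as $L=L_{1}\cap L_{2}$ with $L_{1}\in\NP$ and $L_{2}\in\coNP$, and one can many-one reduce $L_{1}$ to SAT (by NP-completeness of SAT) and $L_{2}$ to $\QCSP(\Gamma_{1})$ (by co-NP-completeness of $\QCSP(\Gamma_{1})$); pairing the two reductions yields the required reduction of $L$ into the combined problem.

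The main obstacle is really not located in this corollary but in its two inputs: one needs the three-element classification to supply a co-NP-complete $\QCSP$ on a domain as small as $3$ (the genuinely surprising content delivered by Theorem~\ref{ThreeElementClassification}), and one needs the $i=1$ gadget from Theorem~\ref{QCSPMonsters}, which only costs one extra domain element. Once both ingredients are in hand, the corollary is a direct instantiation, and the key arithmetic $3+1=4$ is exactly why a four-element domain suffices.
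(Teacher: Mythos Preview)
Your proposal is correct and follows essentially the same approach as the paper's own proof: invoke the co-NP-complete case of Theorem~\ref{ThreeElementClassification} to get a suitable $\Gamma_{1}$ on three elements, then apply the $i=1$ case of Theorem~\ref{QCSPMonsters} to lift to a four-element language. You spell out the DP-completeness verification (membership and hardness) in more detail than the paper, which simply asserts polynomial equivalence to DP, but the argument is identical in substance.
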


\begin{proof}
By Theorem~\ref{ThreeElementClassification},
there exists a constraint language $\Gamma_{1}$ on a 3-element domain with constants such that $\QCSP(\Gamma_{1})$ is co-NP-complete.
Applying Theorem~\ref{QCSPMonsters} with $i=1$
to $\Gamma_{1}$ we obtain a constraint language 
$\Gamma$ on a 4-element domain such that $\QCSP(\Gamma)$ is polynomially equivalent 
to DP.
\end{proof}

The complexity class $\Theta^{\mathrm{P}}_2$ (see \cite{LukasiewiczM17} and references therein) admits various definitions, one of which is that it allows a Turing machine polynomial time with a logarithmic number of calls to an NP\,oracle. A condition proved equivalent to this, through Theorems 4 and 7 of \cite{BussH91}, is as follows. 
In this theorem $i \le p(|x|)$ indicates $i$ is a  positive integer smaller than $p(|x|)$, where $x$ is a string of length $|x|$.
\begin{thm}[\cite{BussH91}]
Every predicate in $\Theta^{\mathrm{P}}_2$ can be defined by a formula of the form
$\exists i \le p(|x|) \ A(i,x) \wedge \neg B(i,x)$ as well as by a formula of the form $\forall i \le p'(|x|) \ A'(i,x) \vee \neg B'(i,x)$, where $A, B, A'$ and $B'$ are NP-predicates and $p$ and $p'$ are polynomials.
\label{thm:BH}
\end{thm}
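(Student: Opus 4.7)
The plan is to prove the two stated normal forms simultaneously using De Morgan, so I focus on the existential form: a predicate lies in $\Theta^{\mathrm{P}}_2$ iff it is expressible as $\exists i \le p(|x|)\; A(i,x) \wedge \neg B(i,x)$ with $A,B \in \NP$. The universal form then follows by negating this characterization and using closure of $\Theta^{\mathrm{P}}_2$ under complementation.

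For the containment of the normal form in $\Theta^{\mathrm{P}}_2$, I would use the standard counting trick. Let $N_1(x) = |\{i \le p(|x|) : A(i,x)\}|$ and $N_2(x) = |\{i \le p(|x|) : A(i,x) \wedge B(i,x)\}|$. Then $\exists i\; A(i,x) \wedge \neg B(i,x)$ holds iff $N_1(x) > N_2(x)$. Each count can be pinned down by binary search in $O(\log p(|x|))$ adaptive NP-oracle calls, since for any threshold $k$ the question ``is $N_j(x) \ge k$?'' is itself in $\NP$ (guess $k$ witnessing indices together with NP-certificates for $A$ and/or $B$). Comparing $N_1$ and $N_2$ then decides the predicate within the logarithmic query budget.

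For the converse, the first step is to invoke the equivalence $\mathrm{P}^{\NP[\log]} = \mathrm{P}^{\NP}_{||}$, so the $\Theta^{\mathrm{P}}_2$ predicate is computed by a deterministic polynomial-time machine $M$ that makes a polynomial-sized batch $q_1(x),\ldots,q_m(x)$ of parallel NP queries and then decides from the Boolean answer vector. Enumerate the answer vectors $r \in \{0,1\}^m$ on which $M$ accepts and index them by $i$; for the $i$th accepting vector $r^{(i)}$, let $A(i,x)$ assert that every $1$-coordinate of $r^{(i)}$ is a genuine yes-answer (a conjunction of polynomially many NP predicates, hence still in $\NP$), and let $B(i,x)$ assert that at least one $0$-coordinate of $r^{(i)}$ is actually a yes-answer (guess one such coordinate and an NP witness, again in $\NP$). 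Then $A(i,x) \wedge \neg B(i,x)$ holds exactly when the $i$th accepting pattern is realized, and the existential quantifier over $i$ captures acceptance of $M$.

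The main obstacle will be the bridging lemma $\mathrm{P}^{\NP[\log]} = \mathrm{P}^{\NP}_{||}$ underlying the second direction. The standard proof converts adaptive to parallel by noting that an $O(\log n)$-query adaptive machine has a polynomial-size query tree, so one may non-adaptively query every node of that tree in parallel; the converse is precisely the counting argument of the first paragraph applied to the yes-votes of the parallel batch. Once this equivalence is in hand, the remaining construction above is essentially bookkeeping over the exponential-in-$\log$, hence polynomial, index set of accepting answer patterns.
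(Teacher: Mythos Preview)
The paper does not prove this theorem at all; it is quoted from Buss--Hay \cite{BussH91} and used as a black box in the proof of Corollary~\ref{cor:emil}. So there is nothing in the paper to compare against, and your proposal should be judged on its own merits.

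Your first direction (membership of the normal form in $\Theta^{\mathrm{P}}_2$ via binary search on the two counts $N_1,N_2$) is correct and standard.

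Your second direction has a genuine gap. You pass to $\mathrm{P}^{\NP}_{||}$ with a polynomial number $m=m(|x|)$ of parallel queries, and then propose to let $i$ range over the accepting answer patterns $r\in\{0,1\}^m$. But there are $2^{m}$ such patterns, which is exponential in $|x|$, so $i$ cannot be bounded by any polynomial $p(|x|)$ as the statement requires. Your closing remark about an ``exponential-in-$\log$, hence polynomial'' index set conflates the adaptive $O(\log n)$-query model (where the answer sequences, i.e.\ the leaves of the query tree, are indeed polynomially many) with the parallel polynomial-query model you had just set up.

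There are two standard repairs. Either stay with the adaptive $\mathrm{P}^{\NP[\log]}$ model and let $i$ index the polynomially many leaves of the query tree, defining $A$ and $B$ exactly as you did relative to the answers along the root-to-leaf path. Or, in the parallel model, use the census trick that Buss--Hay actually employ: let $i\in\{0,\ldots,m\}$ be the \emph{number} of yes-answers, set $A(i,x)$ to assert ``there is a size-$i$ set $S$ of queries with yes-witnesses such that $M$ accepts when $S$ is treated as yes and its complement as no'', and set $B(i,x)$ to assert ``some $i{+}1$ queries have yes-witnesses''. Then $\neg B(i,x)$ forces $i$ to equal the true yes-count, which in turn forces the guessed $S$ in $A$ to be exactly the true yes-set, so the simulation of $M$ is faithful. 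Either route yields the polynomial bound on $i$ that your current construction lacks.
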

\noindent The second (universal) characterization will play the key role in the following observation.
\begin{cor}
There exists a finite constraint language 
$\Gamma$ on a 10-element domain such that 
$\QCSP(\Gamma)$ is $\Theta^{\mathrm{P}}_2$-complete.
\label{cor:emil}
\end{cor}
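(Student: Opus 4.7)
The plan is to apply Theorem~\ref{QCSPMonsters} with $i=3$ and to choose the two ingredient languages so that the produced combinator $(I_1\vee J_1)\wedge\cdots\wedge(I_n\vee J_n)$ exactly matches the universal characterization of $\Theta^{\mathrm{P}}_2$ given by Theorem~\ref{thm:BH}. Pick a finite 3-element constraint language $\Gamma_2$ (with constants) for which $\QCSP(\Gamma_2)$ is co-NP-complete — case 7 of Theorem~\ref{ThreeElementClassification} guarantees one exists, and this is the novel input from the 3-element classification — and pick a finite 2-element constraint language $\Gamma_3$ for which $\CSP(\Gamma_3)$ is NP-complete (for instance, NAE-3-SAT or positive 1-in-3-SAT). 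Applying Theorem~\ref{QCSPMonsters} with these choices yields a finite constraint language $\Delta_3$ on a domain of size $2|A_2|+|A_3|+2=6+2+2=10$ such that $\QCSP(\Delta_3)$ is polynomially equivalent to the conjunction problem $\bigwedge_{k=1}^{n}(I_k\vee J_k)$ with each $I_k$ an instance of $\QCSP(\Gamma_2)$ and each $J_k$ an instance of $\CSP(\Gamma_3)$.

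For membership in $\Theta^{\mathrm{P}}_2$, given an instance of this conjunction problem, let $A(k,\mathrm{input})$ be the NP-predicate ``$J_k$ is satisfiable'' and let $B(k,\mathrm{input})$ be the NP-predicate ``$I_k$ is false'' (the latter is NP because $\QCSP(\Gamma_2)\in\coNP$). The instance holds precisely when $\forall k\le n\ A(k,\mathrm{input})\vee \neg B(k,\mathrm{input})$, and $n$ is bounded by $|\mathrm{input}|$, so Theorem~\ref{thm:BH} places the problem in $\Theta^{\mathrm{P}}_2$. For $\Theta^{\mathrm{P}}_2$-hardness, take any $\Theta^{\mathrm{P}}_2$-predicate and use Theorem~\ref{thm:BH} to rewrite it as $\forall i\le p'(|x|)\ A'(i,x)\vee \neg B'(i,x)$ with $A',B'$ in NP. NP-completeness of $\CSP(\Gamma_3)$ gives a polynomial-time many-one reduction encoding $A'(i,x)$ as satisfiability of some $J_{i,x}\in\CSP(\Gamma_3)$, and co-NP-completeness of $\QCSP(\Gamma_2)$ gives a polynomial-time many-one reduction encoding $\neg B'(i,x)$ as truth of some $I_{i,x}\in\QCSP(\Gamma_2)$. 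The conjunction $\bigwedge_{i=1}^{p'(|x|)}(I_{i,x}\vee J_{i,x})$ is then a polynomial-size instance of the conjunction problem that Theorem~\ref{QCSPMonsters} identifies with $\QCSP(\Delta_3)$, completing the reduction.

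The only conceptual step is the syntactic matching between the ``$\bigwedge$ of NP\,$\vee$\,co-NP disjuncts'' shape produced by Theorem~\ref{QCSPMonsters}$(i=3)$ and the universal form $\forall i\le p'(|x|)\ A'(i,x)\vee\neg B'(i,x)$ from Theorem~\ref{thm:BH}; everything else is direct substitution. The domain size $10$ is forced by the formula $2|A_2|+|A_3|+2$ once we insist that $\Gamma_2$ live on $3$ elements (the smallest domain on which $\QCSP$ can be co-NP-complete in our classification) and that $\Gamma_3$ live on $2$.
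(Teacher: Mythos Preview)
Your proposal is correct and follows essentially the same approach as the paper: both pick a 3-element co-NP-complete $\QCSP$ language and a 2-element NP-complete $\CSP$ language, apply item~3 of Theorem~\ref{QCSPMonsters} to get a 10-element language, and then match the resulting $\bigwedge(\mathrm{coNP}\vee\mathrm{NP})$ shape against the universal characterization in Theorem~\ref{thm:BH} for both membership and hardness. The only cosmetic difference is that the paper explicitly notes one should set the predicates $A'$ and $\neg B'$ to be trivially true when the index $i$ falls outside $\{1,\ldots,n\}$, which you leave implicit.
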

\begin{proof}
By Theorem~\ref{ThreeElementClassification},
there exists a constraint language $\Gamma_{1}$ on a 3-element domain with constants such that $\QCSP(\Gamma_{1})$ is co-NP-complete.
Choose a constraint language $\Gamma_{2}$ on a 2-element domain such that 
$\CSP(\Gamma_{2})$ is NP-complete.
Using item 3 of Theorem~\ref{QCSPMonsters},
we construct a constraint language $\Gamma$ so that $\QCSP(\Gamma)$ is equivalent to the truth of $(I_{1}\vee J_{1})\wedge \dots\wedge (I_{n}\vee J_{n})$, where $I_1,\ldots,I_n$ are instances of $\QCSP(\Gamma_{1})$ and $J_1,\ldots,J_n$ are instances of $\CSP(\Gamma_{2})$. 

To prove membership of $\QCSP(\Gamma)$ in $\Theta^{\mathrm{P}}_2$, we use the second characterization of Theorem \ref{thm:BH} together with $A'(i,x)$ indicating that $J_i$ is a yes-instance of $\CSP(\Gamma_{2})$ and $\neg B'(i,x)$ indicating that $I_i$ is a yes-instance (or $B'(i,x)$ indicating $I_i$ is a no-instance) of $\QCSP(\Gamma_{1})$. Thus, we want $i$ to range over numbers from $1$ to $n$, so in the predicates $A'(i,x)$ and $\neg B'(i,x)$ we should in particular set these to be true if $i$ is not a number from $1$ to $n$.

To prove that $\QCSP(\Gamma)$ is $\Theta^{\mathrm{P}}_2$-complete, we use again the second formulation of characterization of Theorem \ref{thm:BH}, but this time break the universal quantification into a conjunction of length $p'(|x|)$.
\end{proof}

%\subsection{Conservative QCSP}

%\input{RelatedWork.tex}

\section{Preliminaries}

Let $[n]=\{1,\ldots,n\}$. We identify a constraint language $\Gamma$ with a set of relations over a fixed finite domain $D$. We may also think of this as a first-order relational structure. If $\Phi$ is a first-order formula including $x_1,\ldots,x_n$ among its free variables and not containing $y_1,\ldots,y_n$ in any capacity, then $\Phi^{x_1,\ldots,x_n}_{y_1,\ldots,y_n}$ is the result of substituting the free occurrences of $x_1,\ldots,x_n$ by $y_1,\ldots,y_n$, respectively. If $I$ is an instance of $\QCSP(\Gamma)$, then $\mathrm{Var}(I)$ refers to the variables mentioned in $I$.

We always may assume that an instance of $\QCSP(\Gamma)$ is of the prenex form $$\forall x_1\exists y_1 \forall x_2 \exists y_2 \dots \forall x_{n} \exists y_{n} \Phi,$$ since if it is not it may readily be brought into such a form in polynomial time. Then a solution is a sequence of (Skolem) functions 
$f_{1},\ldots, f_{n}$ such that 
$$(x_1,f(x_1),x_2,f_2(x_1,x_2),\ldots,x_{n},f_n(x_1,\ldots,x_n))$$ is a solution of $\Phi$ for all 
$x_{1},\ldots,x_{n}$ (i.e. $y_{i} = f_{i}(x_{1},\ldots,x_{i})$). This belies a (Hintikka) game semantics for the truth of a QCSP instance in which a player called Universal plays the universal variables and a player called Existential plays the existential variables, one after another, from the outside in. The Skolem functions above give a strategy for Existential. In our proofs we may occasionally revert to a game-theoretical parlance.

%Sometimes it is convenient for us to assume that a constraint language contains 
%the equality relation. 
%We can do this because the equalities can be propagated out in polynomial time
%and therefore adding the equality relation to a constraint language doesn't change %the complexity.
%Another important fact is that adding the equality relation to a constraint language does not affect the set of polymorphisms $\Pol(\Gamma)$.
%Let us show how to propagate out the equality relation. Assume that we have an equality $x_{1} = x_{2}$ and first we quantify $x_{1}$ and then we quantify $x_{2}$.
%If $x_{2}$ is universally quantified, 
%then this is a false instance.
%Otherwise, we remove 
%existential quantification of $x_{2}$ and replace $x_{2}$ by $x_{1}$ everywhere.

%Here I put definitions I use in the paper:

An \emph{algebra} $\mathbb{A}$ consists of domain and a set of operations defined on that domain. The most important type of algebra in this paper is a clone. Let $\Clo(G)$ be the \emph{clone} generated by the set of operations $G$, that is the closure of $G$ under the addition of projections and composition, where the composition of a $k$-ary operation $f$ and $m$-ary operations $g_1,\dots,g_{k}$ is the $m$-ary operation defined by $f(g_1,\dots,g_{k})$.

In general with our operators, if the argument is a singleton set, we omit the curly brackets. A \emph{subalgebra} of $\mathbb{A}$ consists of a subset $D$ of the domain of $\mathbb{A}$, that is preserved by all the operations of $G$, together with all the operations of $\mathbb{A}$ restricted to $D$. A \emph{congruence} on an algebra $\mathbb{A}$ is an equivalence relation $\sim$ on its domain so that, for each $k$-ary operation $f$ in $\mathbb{A}$, $f(x_1,\ldots,x_k) \sim f(y_1,\ldots,y_k)$ whenever $x_1\sim y_1$, \ldots, $x_k \sim y_k$. We can quotient $\mathbb{A}$ by $\sim$ in the obvious way to obtain a new algebra that we describe as a \emph{homomorphic image} of $\mathbb{A}$. A \emph{factor} of $\mathbb{A}$ is a homomorphic image of a subalgebra of $\mathbb{A}$.

A formula of the form 
$\exists y_{1}\dots\exists y_{n} \Phi$,
where $\Phi$ is a  conjunction of relations from $\Gamma$ is called 
\emph{a positive primitive formula (pp-formula) over $\Gamma$}.
If $R(x_{1},\dots,x_{n}) = \exists y_{1}\dots\exists y_{n} \Phi$, 
then we say that $R$ is \emph{pp-defined} by 
$\exists y_{1}\dots\exists y_{n} \Phi$, 
and $\exists y_{1}\dots\exists y_{n} \Phi$ is called 
\emph{a pp-definition}.
Note that if a relation $R$ is pp-definable over $\Gamma$ then it is preserved by any operation $f\in\Pol(\Gamma)$ \cite{bond,geiger1968closed}.

In a pp-formula we allow always, except for Section 5, the use of constants from the domain. 
Note that using constants is equivalent to having all unary relations $x=c$ in our constraint language.
On the algebraic side, this corresponds to assuming all polymorphism operations are idempotent. 
For a conjunctive formula $\Phi$
by $\Phi(x_{1},\dots,x_{n})$ we denote the
$n$-ary relation defined by 
a pp-formula 
where all variables except $x_{1},\dots,x_{n}$ are existentially quantified.
We do not require the variables
$x_{1},\dots,x_{n}$ to be different; for instance, $\Phi(x_1,x_1)$ defines pairs of equal elements.  
Equivalently, 
$\Phi(x_{1},\dots,x_{n})$ is the set of all tuples 
$(a_{1},\dots,a_{n})$ such that 
$\Phi$ has a solution with 
$(x_{1},\dots,x_{n}) = (a_{1},\dots,a_{n})$.

For a $k$-ary relation $R$ and a set of coordinates $B \subset [k]$, define $\proj_B(R)$ to be the $|B|$-ary relation obtained from $R$ by projecting to $B$, or equivalently, existentially quantifying variables at positions $[k] \setminus B$.

For a tuple 
$\alpha$ by $\alpha(n)$ we denote the $n$-th element of $\alpha$.
%We may assume \mbox{w.l.o.g.} that pp-formulas can contain equalities. These can be propagated out in polynomial time so this does not affect our perception of  complexity.
%For a relation $R$ and an algebra $A$
%by $\langle R\rangle_{A}$ we denote the relation 
%generated from $R$ using $A$. That is, the closure of $R$ under $A$.
We define relations by matrices where the columns list the tuples.

Let $\alpha$ and $\beta$ be strict subsets of $D$ so that $\alpha \cup \beta = D$. The most interesting cases arise when $\alpha \cap \beta \neq \varnothing$ but we will not insist on this at this point. An $n$-ary operation $f$ is \emph{$\alpha\beta$-projective} if there exists $i \in [n]$ so that $f(x_1,\ldots,x_n) \in \alpha$, if $x_i \in \alpha$, and $f(x_1,\ldots,x_n) \in \beta$, if $x_i \in \beta$. In this case, we may say that $f$ is $\alpha\beta$-projective to coordinate $i$. It is now known that an idempotent algebra $\mathbb{A}$ over domain $D$ has EGP iff there exists $\alpha$ and $\beta$, strict subsets of $D$, so that all operations of $\mathbb{A}$ are $\alpha\beta$-projective \cite{ZhukGap2015}.

\section{The conservative case}

In this section we prove Theorem 4 describing the complexity 
of $\QCSP(\Gamma)$ for conservative constraint languages $\Gamma$, i.e. 
languages containing all unary relations.
As it was mentioned in the introduction, 
if $\Pol(\Gamma)$ has the PGP property then we can reduce $\QCSP(\Gamma)$ to several copies 
of $\CSP$. 
Thus, the only open question was the complexity for the EGP case.  
Here we will use the following fact from \cite{MFCS2017}.

\begin{lem}[\cite{MFCS2017}]
\label{tauRelationsExistence}
Suppose $\Gamma$ is a constraint language on domain $D$ with constants, such that
$\Pol(\Gamma)$ has the EGP property.
Then there exist $\alpha,\beta\subsetneq D$ such that 
$\alpha\cup\beta = D$ and $\tau_{n}$ (as in Definition \ref{def:tau}) is pp-definable from $\Gamma$
for every $n\ge 1$.
\end{lem}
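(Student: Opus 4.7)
The plan is to reduce the pp-definability question to a preservation question via the Galois correspondence, and then extract a very rigid structural description of $\Pol(\Gamma)$ from the EGP hypothesis that lets the preservation check run.

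First, I would invoke Zhuk's gap dichotomy from \cite{ZhukGap2015}: for an idempotent clone on a finite domain, either the clone has PGP (witnessed in fact by switchability) or it has EGP, in which case there exist proper subsets $\alpha,\beta \subsetneq D$ with $\alpha \cup \beta = D$ such that every operation in the clone is $\alpha\beta$-projective. Since $\Gamma$ contains all constants, $\Pol(\Gamma)$ consists of idempotent operations, so the EGP hypothesis of the lemma supplies exactly such $\alpha,\beta$. This is the conceptual heart of the argument and the step that imports the deepest external input.

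Next, I would verify directly that any $\alpha\beta$-projective operation preserves $\tau_{n}$ for every $n$. Fix a $k$-ary $f \in \Pol(\Gamma)$ with projective index $i^{*} \in [k]$, and take $k$ tuples $\vec t_{1},\dots,\vec t_{k} \in \tau_{n}$. By definition of $\tau_{n}$, each $\vec t_{j}$ admits some $i_{j} \in [n]$ with $(x_{i_j}^{j},y_{i_j}^{j},z_{i_j}^{j}) \in \overline{S}$. The candidate witness for the image tuple is the index $i := i_{i^{*}}$, i.e.\ the witness supplied by the $i^{*}$-th input. One then does a case split on the triple $(x_{i}^{i^{*}},y_{i}^{i^{*}},z_{i}^{i^{*}}) \in \overline{S}$: either some component lies in $\alpha \cap \beta$, or all three components lie in $\alpha^{3}$, or all three lie in $\beta^{3}$. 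In each case, the $\alpha\beta$-projectivity of $f$ at position $i^{*}$ (together with the observation $\alpha \cap \beta \subseteq \alpha \cap \beta$ to handle the intersection case) forces the corresponding component of $f$'s output to remain in the same distinguished set, so the image triple still falls into $\overline{S}$.

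Finally, preservation of $\tau_{n}$ by every $f \in \Pol(\Gamma)$, combined with the Inv--Pol Galois correspondence for finite idempotent clones (available because $\Gamma$ has all constants), yields that $\tau_{n} \in \langle \Gamma \rangle_{\mathrm{pp}}$ for each $n \ge 1$. The main obstacle is the first step: extracting the $\alpha\beta$-projective description of $\Pol(\Gamma)$ from the raw EGP hypothesis requires the nontrivial machinery of \cite{ZhukGap2015}. Once that structural input is in hand, both the preservation check and the conversion to pp-definability are routine.
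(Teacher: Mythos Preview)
The paper does not supply its own proof of this lemma; it is quoted from \cite{MFCS2017} (and the underlying structural fact is attributed in the introduction to \cite{ZhukGap2015}). So there is no in-paper argument to compare against.

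That said, your sketch is correct and is exactly the expected derivation. The substantive step is the one you flag: \cite{ZhukGap2015} shows that an idempotent clone with EGP on a finite domain consists entirely of $\alpha\beta$-projective operations for suitable $\alpha,\beta\subsetneq D$ with $\alpha\cup\beta=D$. Your preservation check of $\tau_n$ under an $\alpha\beta$-projective $f$ with distinguished coordinate $i^{*}$ is clean: picking the witness block $i:=i_{i^{*}}$ of the $i^{*}$-th input and splitting into the three cases (a coordinate in $\alpha\cap\beta$; the triple in $\alpha^{3}$; the triple in $\beta^{3}$) indeed forces the output triple at block $i$ to land back in $\overline{S}$. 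The final passage from ``preserved by $\Pol(\Gamma)$'' to ``pp-definable from $\Gamma$'' via the $\Inv$--$\Pol$ Galois connection is unproblematic on a finite domain; idempotency is not needed for that step, only for invoking the \cite{ZhukGap2015} characterisation.
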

%Let $\alpha,\beta$ and $\tau$ be as in Definition \ref{def:tau}. 
%Without loss of generality, assume $0 \in \alpha \setminus \beta$ and $1 \in \beta \setminus \alpha$. 
\noindent It turns out 
that if $\Gamma$ contains all unary relations then 
two copies of $\tau_{k}$ can be composed to define 
the relation $\tau_{2(k-1)}$ as follows.
Choose $0 \in \alpha \setminus \beta$ and $1 \in \beta \setminus \alpha$, 
then
\begin{align*} \tau_{2(k-1)}(x_1,y_1,z_1\ldots,x_{2(k-1)}, y_{2(k-1)}&,z_{2(k-1)})=
\exists w \; \tau_{k}(x_1,y_1,z_1\ldots,x_{k-1},y_{k-1},z_{k-1},0,0,w) \wedge \\
\tau_{k}&(x_{k},y_{k},z_{k},\ldots,x_{2(k-1)},y_{2(k-1)},z_{2(k-1)},1,1,w) \wedge w \in \{0,1\}.
\end{align*}
\noindent Identifying variables in $\tau_k$ we can derive
$\tau_{k-1}$, therefore $\tau_k$ is pp-definable from $\tau_j$ and unary relations whenever $k\ge j \ge 3$. 
%In the following lemma we choose a base case of $\tau_4$ purely for practical reasons.
\begin{lem}
There is a polynomially (in $k$) computable pp-definition of $\tau_k$ from $\tau_3$ and unary relations.
\label{lem:cons}
\end{lem}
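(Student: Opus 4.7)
The plan is to compose the two constructions that appear in the paragraph immediately before the lemma: the \emph{doubling} step, which builds $\tau_{2(k-1)}$ from two copies of $\tau_k$ together with the unary relation $w \in \{0,1\}$ and the constants $0,1$; and the \emph{shrinking} step, which obtains $\tau_{k-1}$ from $\tau_k$ by identifying the triple of variables $(x_k,y_k,z_k)$ with $(x_{k-1},y_{k-1},z_{k-1})$, using that $\overline{S}(x,y,z) \vee \overline{S}(x,y,z) \equiv \overline{S}(x,y,z)$. Writing $s(k)$ for the number of atomic constraints in the pp-definition of $\tau_k$ built so far, the doubling step satisfies the recursion $s(2k-2) \le 2 s(k) + c$ for an absolute constant $c$ that accounts for the extra existential variable $w$, the unary constraint $w \in \{0,1\}$, and the constant arguments $0,0$ and $1,1$ substituted into the two copies of $\tau_k$. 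Shrinking contributes only equalities, which are propagated out in polynomial time and thus do not affect the overall size.

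Next I would unfold the doubling along a suitable numerical sequence. Set $a_0 = 3$ and $a_{i+1} = 2 a_i - 2$; a direct induction gives $a_i = 2^i + 2$ for $i \ge 1$. Starting from the one-atom definition of $\tau_3$ (namely, $\tau_3$ itself) and iterating the doubling step $i$ times yields a pp-definition of $\tau_{a_i}$ from $\tau_3$ and unary relations of size $O(2^i) = O(a_i)$. For an arbitrary $k \ge 3$, I take the minimal $i$ with $a_i \ge k$, which satisfies $2^i \le 2(k-2)$, so the resulting definition of $\tau_{a_i}$ has size $O(k)$. Finally, $a_i - k$ applications of the shrinking step reduce $\tau_{a_i}$ to $\tau_k$ without adding any new atomic constraints, giving an $O(k)$-sized pp-definition of $\tau_k$. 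The construction is evidently computable in polynomial time in $k$.

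I do not see any substantive obstacle: the two identities are essentially already verified in the excerpt, and the complexity analysis is a routine unrolling of a geometric recurrence. The only points one needs to double-check are that the displayed doubling identity holds as an equality of relations (both inclusions, under the assumption $0 \in \alpha \setminus \beta$ and $1 \in \beta \setminus \alpha$), and that repeated identification of triples in $\tau_n$ never collapses more than is intended; both checks follow immediately from the shape of $\overline{S}$ and of $\tau_n$ as a disjunction of independent ternary atoms.
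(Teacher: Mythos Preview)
Your proposal is correct and follows essentially the same approach as the paper: iterate the doubling step $\tau_k \rightsquigarrow \tau_{2(k-1)}$ starting from $\tau_3$, then shrink back down to the desired $\tau_k$ via variable identification. The paper's proof is much terser---it simply observes that in the recursive definition each unquantified variable appears once and each quantified variable appears three times, with most variables unquantified---whereas you unfold the recursion explicitly through the sequence $a_i = 2^i + 2$ and give an $O(k)$ bound on the number of atoms; both arguments certify polynomiality of the same construction.
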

\begin{proof}
As above we can define $\tau_{2(k-1)}$ in a recursive fashion using  two copies of $\tau_k$
plus a single new existential quantifier whose variable is restricted to being on domain $\{0,1\}$.  
Note that in the recursive pp-definition of $\tau_{k}$ over $\tau_3$ every 
variable that is not quantified appears just once, 
each quantified variable appears three times, and 
most variables are not quantified.
Therefore, our recursive scheme gives a polynomially computable pp-definition of $\tau_{k}$.
\end{proof}
\noindent 
%Indeed, $\tau_3$ is pp-definable from $\tau_2$, so one could make this substitution in the previous lemma. 
We are now in a position to prove  Theorem \ref{thm:conservative}, whose statement we recall.

\begin{THMconservative}
Let $\Gamma$ be a finite constraint language with all unary relations. If $\Pol(\Gamma)$ has PGP, then $\QCSP(\Gamma)$ is in NP. If $\Gamma$ further admits a WNU polymorphism, then $\QCSP(\Gamma)$ is in P, else it is NP-complete. Otherwise, $\Pol(\Gamma)$ has EGP and $\QCSP(\Gamma)$ is PSpace-complete.
\end{THMconservative}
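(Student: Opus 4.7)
The classification has four branches, three of which reduce quickly to results already available in the excerpt. The plan is to dispatch the PGP cases first and then concentrate on the EGP case, which is the main point.

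For the PGP branch, membership of $\QCSP(\Gamma)$ in NP is immediate from Theorem~\ref{thm:PGP-in-NP}. To split this NP upper bound between P and NP-completeness, note that a conservative language contains every singleton unary relation $\{c\}$, hence all constants, so Theorem~\ref{thm:csp} applies to $\CSP(\Gamma)$: if $\Gamma$ admits a WNU polymorphism then $\CSP(\Gamma)$ is in P, and the $\CSP$-to-CSP reduction inside QCSP (embedding a CSP instance as a purely existential QCSP instance) lifts P-tractability to $\QCSP(\Gamma)$; if not, then $\CSP(\Gamma)$ is already NP-hard and, being a syntactic restriction of $\QCSP(\Gamma)$, it forces NP-hardness of $\QCSP(\Gamma)$, matching the NP upper bound.

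For the EGP branch, the goal is to reduce QBF (quantified Boolean satisfiability) polynomially to $\QCSP(\Gamma)$, which yields PSpace-hardness; membership in PSpace holds for any finite $\Gamma$. By Lemma~\ref{tauRelationsExistence} there exist $\alpha,\beta\subsetneq D$ with $\alpha\cup\beta=D$ and some $0\in\alpha\setminus\beta$, $1\in\beta\setminus\alpha$, such that every $\tau_n$ is pp-definable from $\Gamma$; in particular $\tau_3$ is. Since $\Gamma$ contains all unary relations, Lemma~\ref{lem:cons} then gives a uniformly polynomial-size pp-definition of $\tau_k$ from $\Gamma$ for every $k$. Now observe the key Boolean behaviour: the unary relation $U(x)\equiv x\in\{0,1\}$ is in $\Gamma$, and on $\{0,1\}^3$ the relation $S$ from Definition~\ref{def:tau} coincides with the usual Not-All-Equal relation, so $\overline{S}$ restricted to $\{0,1\}$ is exactly the all-equal relation $\{(0,0,0),(1,1,1)\}$. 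Fixing $y_i{=}z_i{=}1$ in the $i$-th clause of $\tau_n$ thus expresses the positive literal $x_i{=}1$, while fixing $y_i{=}z_i{=}0$ expresses the negative literal $x_i{=}0$; both fixings are realised by the constant unary relations available in $\Gamma$.

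Given a QBF $Q_1 x_1 \cdots Q_n x_n \bigwedge_{j=1}^{m} C_j$ with each clause $C_j$ of width at most $n$, build the $\QCSP(\Gamma)$ instance with the same quantifier prefix, each variable restricted by $U$, and replace each clause $C_j$ by a pp-defined copy of $\tau_n$ in which the $i$-th triple is fixed to encode the $i$-th literal of $C_j$ (or both fixings to $1$ to obtain a trivially true triple for padding). By Lemma~\ref{lem:cons} each $\tau_n$-copy is polynomial, so the whole translation is polynomial, and by the above observation the resulting sentence holds precisely when the QBF does. Hence $\QCSP(\Gamma)$ is PSpace-hard, completing the EGP case.

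The main obstacle is the EGP case, and within it the delicate point is to check that the pp-definition of $\tau_n$ behaves correctly once all the unary ``Boolean'' restrictions and constant fixings are imposed; this reduces to verifying that $S\cap\{0,1\}^3$ really is NAE and that the existentially quantified auxiliary variables appearing inside the recursive definition from Lemma~\ref{lem:cons} do not interact destructively with the Boolean restriction, both of which follow from the choice of $0\in\alpha\setminus\beta$ and $1\in\beta\setminus\alpha$.
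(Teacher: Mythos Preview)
Your PGP treatment is correct and matches the paper. The EGP argument, however, has a genuine gap.

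You propose to restrict every variable $x_i$ by the unary relation $U=\{0,1\}$. In QCSP semantics a unary constraint sits in the quantifier-free matrix, so for a \emph{universally} quantified $x_i$ the translated sentence reads $\forall x_i\,(\cdots\wedge x_i\in\{0,1\}\wedge\cdots)$. Whenever $|D|\ge 3$ (and in the EGP case $D\setminus\{0,1\}$ is nonempty, since $0\in\alpha\setminus\beta$ and $1\in\beta\setminus\alpha$ are only two elements of $D$), Universal plays some $c\in D\setminus\{0,1\}$ on the first universal variable, the matrix is falsified, and your translated instance is a no-instance regardless of the input QBF. The reduction therefore collapses on every QBF that has at least one universal quantifier.

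The paper sidesteps this by reducing from \emph{co}-QNAE3SAT rather than QBF. Pushing the negation through the prefix swaps $\forall$ and $\exists$, so the variables that need to be pinned to the ``Boolean'' set $\alpha\setminus\beta\cup\beta\setminus\alpha$ end up \emph{existential}, where a unary restriction is harmless. The universal variables are left unrestricted; the crucial observation is that if Universal ever plays an element of $\alpha\cap\beta$, some disjunct $\overline{S}(\cdot,\cdot,\cdot)$ of $\tau_k$ is automatically satisfied (such a tuple lies outside $(D\setminus(\alpha\cap\beta))^3$, hence outside $S$), so Existential wins outright, while a play in $\alpha\setminus\beta$ or $\beta\setminus\alpha$ behaves exactly like $0$ or $1$. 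Your route can be repaired by restricting only existential variables and supplying this case analysis for universal plays, but that is precisely the idea your write-up is missing.

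A secondary point: your padding is inverted. A ``trivially true'' disjunct---fixing a triple to $(1,1,1)\in\overline{S}$---makes the entire $\tau_n$ true and hence the encoded clause vacuous. Padding requires the disjunct to be trivially \emph{false}, i.e.\ the triple should lie in $S$; for instance $(0,1,1)$ works.
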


\begin{proof}
Assume $\Gamma$ is a finite constraint language with all unary relations. Suppose Pol$(\Gamma)$ has PGP. Then we know from Theorem \ref{thm:PGP-in-NP} that QCSP$(\Gamma)$ reduces to a polynomial number of instances of CSP$(\Gamma)$. It follows from Theorem \ref{thm:csp} that if $\Gamma$ admits a WNU then QCSP$(\Gamma)$ is in P, otherwise QCSP$(\Gamma)$ is NP-complete.

Suppose now Pol$(\Gamma)$ has EGP. By Lemma~\ref{tauRelationsExistence} there exist $\alpha,\beta$ as in  Definition \ref{def:tau} such that 
$\tau_{3}$ is pp-definable from $\Gamma$.
Combining this with Lemma~\ref{lem:cons} we conclude that 
there are polynomially (in $k$) computable pp-definitions of $\tau_k$ in $\Gamma$. 
We will reduce from the complement of \emph{Quantified Not-All-Equal 3-Satisfiability} (QNAE3SAT) which is known to be PSpace-complete (see, e.g., \cite{papa}). From an instance $\phi=\neg \forall x_1 \exists y_1 \ldots \forall x_n \exists y_n  \ \Phi$ of co-QNAE3SAT, where $\Phi=\mathrm{NAE}_3(z^1_1,z^2_1,z^3_1) \wedge \ldots \wedge \mathrm{NAE}_3(z^1_k,z^2_k,z^3_k)$ and $z^1_1,z^2_1,z^3_1,\ldots,z^1_k,z^2_k,z^3_k \in \{x_1,y_1,\ldots,x_n,y_n\}$, we build an instance $\phi'$ of QCSP$(\Gamma)$ as follows. Consider $\phi$ to be $\exists x_1 \forall y_1 \ldots$ $ \exists x_n \forall y_n  \ \neg \Phi$ and set $\phi'=$
\[ \exists x_1 \forall y_1 \ldots \exists x_n \forall y_n  \ x_1,\ldots,x_n \in (\alpha \setminus \beta \cup \beta \setminus \alpha) \ \wedge \ \tau_k(z^1_1,z^2_1,z^3_1,\ldots,z^1_k,z^2_k,z^3_k).\]
The idea is that the set $\alpha \setminus \beta$ plays the role of $0$ and $\beta \setminus \alpha$ plays the role of $1$.

($\phi \in \mathrm{co\mbox{-}QNAE3SAT}$ implies $\phi' \in \mathrm{QCSP}(\Gamma)$.) 
%Suppose $\phi \in \mathrm{co\mbox{-}QNAE3SAT}$.
%and consider $\phi$ to be $\exists x_1 \forall y_1 \ldots \exists x_n \forall y_n  \ \neg \Phi$. 
Let the universal variables be evaluated in $\phi'$ and match them in $\phi$ according to $\alpha \setminus \beta$ being $0$ and $\beta \setminus \alpha$ being $1$. If a universal variable in $\phi'$ is evaluated in $\alpha \cap \beta$, then we can match it in $\phi$ \mbox{w.l.o.g.} to $0$. Now, read a valuation of the existential variables of $\phi'$ from those in $\phi$ according to $0$ becoming any fixed $d_0 \in \alpha \setminus \beta$ and $1$ becoming any fixed $d_1 \in \beta \setminus \alpha$. By construction we have $\phi' \in \mathrm{QCSP}(\Gamma)$.

($\phi' \in \mathrm{QCSP}(\Gamma)$ implies $\phi \in \mathrm{co\mbox{-}QNAE3SAT}$.) Suppose $\phi' \in \mathrm{QCSP}(\Gamma)$. We will prove $\phi \in \mathrm{co\mbox{-}QNAE3SAT}$ again using the form of $\phi$ being $\exists x_1 \forall y_1 \ldots \exists x_n \forall y_n  \ \neg \Phi$. Let the universal variables be evaluated in $\phi$ and match them in $\phi'$  according to $0$ becoming any fixed $d_0 \in \alpha \setminus \beta$ and $1$ becoming any fixed $d_1 \in \beta \setminus \alpha$. Now, read a valuation of the existential variables of $\phi$ from $\phi'$ according to  to $\alpha \setminus \beta$ being $0$ and $\beta \setminus \alpha$ being $1$. By construction we have $\phi \in \mathrm{co\mbox{-}QNAE3SAT}$.
\end{proof}

\section{QCSP Monsters}\label{MonstersProof}

This section explains the building of monsters with greater than a $3$-element domain. It has no bearing on the $3$-element classification.

\begin{lem}\label{ConstraintConjunctionOne}
Suppose $\Gamma$ is a finite constraint language on a set $A$ where $|A|>1$
containing the unary relation $x=a$. 
Then there exists a constraint language $\Gamma'$ 
on a domain of size $|A|+1$
such that $\QCSP(\Gamma')$ is polynomially equivalent to
$\QCSP(\Gamma)\wedge \mathit{NP}$, 
that is the following decision problem: 
given an instance of $\QCSP(\Gamma)$ and 
an instance of some NP-complete problem;
decide whether both of them hold.
\end{lem}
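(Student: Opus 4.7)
The plan is to construct $\Gamma'$ on the domain $A' = A \cup \{*\}$, with $*$ a fresh element, by taking: for each $k$-ary $R \in \Gamma$, the ``$*$-extended'' relation $R^* := R \cup \{\vec b \in (A')^k : * \in \vec b\}$; the unary $U_A := A$; a ternary relation $N := \{(x,y,z) \in \{a,*\}^3 : \{x,y,z\} = \{a,*\}\}$ expressing not-all-equal on $\{a,*\}$; and all singleton unary constants on $A'$. The new domain has size $|A|+1$ as required.

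For the forward reduction $\QCSP(\Gamma)\wedge\NP \le_p \QCSP(\Gamma')$, I fix NAE-3SAT as the NP-complete companion. Given an instance pair $(\Phi_Q, \Phi_N)$, I construct $\Phi'$ by keeping the quantifier prefix of $\Phi_Q$ verbatim, appending fresh existentials $\exists z_1\cdots\exists z_m$ for the Boolean variables of $\Phi_N$, replacing each atom $R(\vec v)$ of $\Phi_Q$ by $R^*(\vec v)$, adding $U_A(y)$ for every existential $y$ of $\Phi_Q$, and encoding each NAE clause as an $N$-atom under the coding $0\leftrightarrow a$, $1\leftrightarrow *$. Correctness is then established as follows: if $\Phi_Q$ and $\Phi_N$ are both true, then for any universal assignment $\sigma:x_i\mapsto A'$ I collapse every $*$ to $a$ to get $\sigma' \in A^n$ and invoke $\Phi_Q$'s winning strategy to find $y_j \in A$ (which automatically satisfy $U_A$); each $R^*$-atom is then satisfied either trivially (because a $*$ appears at some coordinate) or as the original $R$-atom, and the $z_k$'s come from a NAE model of $\Phi_N$. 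Conversely, if $\Phi'$ holds, restricting universals to $A$ and using the $U_A$-constrained witnesses (automatically in $A$, so $R^*$ reduces to $R$) yields a winning $\Phi_Q$-strategy, while reading the $z_k$-witnesses (forced into $\{a,*\}$ by $N$) produces a NAE model of $\Phi_N$.

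For the backward reduction $\QCSP(\Gamma') \le_p \QCSP(\Gamma)\wedge\NP$, given any $\Phi'$, I first check syntactically whether any universal variable appears inside an $N$-atom: when $|A|\ge 2$, assigning it to some $b \in A\setminus\{a\}$ falsifies $N$, so $\Phi'$ is unsatisfiable and I output a canonical no-instance pair. Otherwise I preprocess existentials by propagating forced values: those with both $U_A$ and an $N$-occurrence are forced to the unique element of $A\cap\{a,*\}$, namely $a$, and are replaced by the constant; and those appearing in neither $U_A$ nor $N$ may be assigned $*$ without changing the truth value of $\Phi'$, since $*$ trivially satisfies every $R^*$-atom containing it. The remaining existentials split into the $A$-valued ones (with $U_A$) and the $\{a,*\}$-valued ones (in $N$). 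I then define $\Phi_Q$ to be the $\QCSP(\Gamma)$-instance with universals and existentials over $A$, with $R^*$ replaced by $R$ and the NP part deleted, and $\Phi_N$ to be the NAE-3SAT instance extracted from the remaining $N$-atoms.

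The main obstacle lies in the backward direction, specifically the handling of mixed existentials appearing in both $N$-atoms and $R^*$-atoms but without a $U_A$ restriction. The key insight that should resolve this is the trivializing role of $*$: whenever such a variable takes the value $*$, all its $R^*$-atoms become true for free, which uncouples the $\QCSP(\Gamma)$-part from the NP-part; whenever it takes the value $a$, the Boolean side is pinned to $0$ and the NP-part is coupled to the QCSP-part in a controlled way that respects the $R$-atoms already present. Making this rigorous will require a careful case analysis exploiting the $*$-extension of $R^*$ to absorb the coupling into one side or the other, and I expect it to be the technically most intricate step of the argument.
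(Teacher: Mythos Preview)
Your construction differs from the paper's in the choice of extended relation, and this difference is the source of the gap you identify in the backward direction. Rather than $R^* = R \cup \{\vec b : * \in \vec b\}$, the paper uses $R' := \phi^{-1}(R) = \{\vec b \in (A')^k : (\phi(b_1),\dots,\phi(b_k)) \in R\}$, where $\phi:A'\to A$ is the retraction sending the new element $a'$ to $a$. The crucial property of $R'$ is that $a$ and $a'$ are interchangeable in every coordinate, so any variable occurring in a $\mathrm{NAE}_3$-atom (hence valued in $\{a,a'\}$) contributes the fixed value $a$ under $\phi$ to every $R'$-atom regardless of which of the two it takes. This makes the backward reduction immediate: given $I'$, replace each $R'$ by $R$ and each $\mathrm{NAE}_3$-atom by the conjunction $u=v=w=a$ over its arguments to obtain the $\QCSP(\Gamma)$-instance $I$; let $J$ be the $\CSP$ consisting of the $\mathrm{NAE}_3$-atoms alone; then $I' \Leftrightarrow I \wedge J$. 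No $U_A$ is needed and there is no case analysis.

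Your $R^*$ lacks this interchangeability: $(c,*)\in R^*$ always, but $(c,a)\in R^*$ only when $(c,a)\in R$. As a result, the value of a mixed existential $z\in\{a,*\}$ can be forced to depend on earlier universal variables, and then no single $\QCSP(\Gamma)$-instance captures the $R^*$-part. Concretely, let $A=\{a,b\}$, $R_1=\{(a,a)\}$, $R_2=\{(b,a)\}$, and
\[
\Phi' \;=\; \forall x\,\exists z_1\,\exists z_2\ R_1^*(x,z_1)\wedge R_2^*(x,z_2)\wedge N(z_1,z_2,z_2).
\]
The $N$-atom forces $z_1\neq z_2$ in $\{a,*\}$. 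At $x=a$ one must have $z_2=*$ (as $(a,a)\notin R_2$), hence $(z_1,z_2)=(a,*)$; at $x=b$ one must have $z_1=*$ (as $(b,a)\notin R_1$), hence $(z_1,z_2)=(*,a)$. So $\Phi'$ is true, but only through Skolem functions making the $z_i$ depend on $x$; neither constant choice $(a,*)$ nor $(*,a)$ succeeds for all $x\in A$, and your $\Phi_Q$ (with $R^*$ replaced by $R$ over $A$) is false already at $x=a$. The ``careful case analysis'' you anticipate cannot repair this: the coupling is not a finite case split but a genuine $\forall x\,\exists\nu$ versus $\exists\nu\,\forall x$ quantifier alternation, so a decomposition into one $\QCSP(\Gamma)$-instance conjoined with one $\NP$-instance need not exist. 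The remedy is to adopt the paper's $R'$; then the mixed-variable issue disappears and the backward reduction is a two-line argument.
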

\begin{proof}
Let $a\in A$ be as in the statement of the lemma and let $a'$ be some element not in $A$.
Put $A' = A\cup\{a'\}$.

%We define the constraint language $\Gamma'$ on $A'$ as follows.
Put $\phi(x) = \begin{cases} 
x, &\text{if $x\in A$}\\
a, &\text{if $x= a'$}
\end{cases}.$ 
We assign a relation $R'$ on the set $A'$ to every $R\in \Gamma$ as follows:
$R' = \{(a_{1},\ldots,a_{h})\mid 
(\phi(a_{1}),\ldots,\phi(a_h))\in R\}$.
Let $\mathrm{NAE}_3\subseteq \{a,a'\}^{3}$ be the ternary relation 
containing all tuples on $\{a,a'\}$ except for
$(a,a,a),(a',a',a')$.
Let $\Gamma'=\{R'\mid R\in \Gamma\}\cup \{\mathrm{NAE}_3\}$.

Suppose $I$ is an instance of $\QCSP(\Gamma)$ and 
$J$ is an instance of $\CSP(\{\mathrm{NAE}_3\})$, which is an NP-complete problem.
If we replace every relation $R$ from $\Gamma$ by the corresponding relation $R'$, we get an instance $I'$ that is equivalent to $I$.
Then the instance $I'\wedge J$ can be viewed as an instance of $\QCSP(\Gamma')$ that is equivalent to $I\wedge J$.

Suppose $I'$ is an instance of $\QCSP(\Gamma')$. \mbox{W.l.o.g.} we will assume that no variable appearing in an $\mathrm{NAE}_3$ relation is universally quantified, else, since $|A|>1$, this is a no-instance of $\QCSP(\Gamma')$ and can be reduced to a fixed no-instance (e.g.) $J$ of $\CSP(\{\mathrm{NAE}_3\})$. Now, we define an instance $I$ of $\QCSP(\Gamma)$
and an instance $J$ of $\CSP(\{\mathrm{NAE}_3\})$ as follows.
$I$ is obtained from $I'$ by replacement of 
all relations $R'$ by the corresponding $R$ and 
$\mathrm{NAE}_3$ by $\{(a,a,a)\}$.
Since $\Gamma$ contains $x=a$, $I$ is an instance of 
$\QCSP(\Gamma)$.
The instance $J$ consists of the $\mathrm{NAE}_3$-part of $I'$ which is a CSP as we already assumed it contains no universal variables. Now, to see $I'\in \QCSP(\Gamma')$ iff $I\in \QCSP(\Gamma)$ and $J \in \CSP(\{\mathrm{NAE}_3\})$ it is enough to observe that $\QCSP(\Gamma)$ and $\QCSP(\Gamma'\setminus \{\mathrm{NAE}_3\})$ are equivalent on all % equality-free 
instances. 
%(indeed, $\Gamma$ and $\Gamma'\setminus \{\mathrm{NAE}_3\}$ agree on all sentences of equality-free first-order logic).
\end{proof}

\begin{lem}\label{ConstraintDisjunction}
Suppose $\Gamma_{1}$ and $\Gamma_{2}$ are finite constraint languages on sets $A_1$ and $A_{2}$ respectively. 
Then there exists a constraint language $\Gamma$ 
on a domain of size $|A_{1}|\cdot|A_{2}|+|A_{1}|+|A_{2}|+2$
such that $\QCSP(\Gamma)$ is polynomially equivalent to 
$(\QCSP(\Gamma_{1})\vee \QCSP(\Gamma_{2}))
\wedge
\dots
\wedge
(\QCSP(\Gamma_{1})\vee \QCSP(\Gamma_{2}))$, 
i.e. the following decision problem:
given $n$,
instances 
$I_{1},\ldots,I_{n}$ of $\QCSP(\Gamma_{1})$,
and
instances $J_{1},\ldots,J_{n}$ of $\QCSP(\Gamma_{2})$;
decide whether
$(I_{1}\vee J_{1})\wedge \dots\wedge (I_{n}\vee J_{n})$
holds.
\end{lem}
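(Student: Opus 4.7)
My plan is to design $\Gamma$ on the domain $D = (A_1 \times A_2) \sqcup A_1^{\bullet} \sqcup A_2^{\bullet} \sqcup \{e_0, e_1\}$, so that $|D| = |A_1|\cdot|A_2|+|A_1|+|A_2|+2$. The first three ``data'' blocks are the same as in the conjunction companion Lemma~\ref{ConstraintConjunctionTwo}: the product block lets a single universal simultaneously range over $A_1$ and over $A_2$ via its two coordinates, while the disjoint copies $A_1^{\bullet}, A_2^{\bullet}$ carry single-sided data. The two extra elements $e_0, e_1$ are mode bits used to toggle, per disjunct index $k$, between asserting $I_k$ and asserting $J_k$: the encoded instance will carry two outermost existential variables $m_k^1, m_k^2 \in \{e_0, e_1\}$ together with a small ternary relation enforcing $(m_k^1, m_k^2) \neq (e_0, e_0)$ so that at least one side of each $I_k \vee J_k$ is active.

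For the forward reduction I would interleave the quantifier prefixes of $I_1, J_1, \dots, I_n, J_n$ into a single prenex (the variables are disjoint between instances, so this is harmless), and replace each $\Gamma_1$-atom of $I_k$ by a mode-guarded combined relation which is trivially satisfied when $m_k^1 = e_0$ and which, when $m_k^1 = e_1$, forces its arguments into $A_1^{\bullet}$ and requires the original $R_1$; symmetrically for $\Gamma_2$-atoms of $J_k$ via $m_k^2$ and $A_2^{\bullet}$. To simulate universal quantification over $A_i$ honestly from universal quantification over $D$, I would pair each universal variable $w$ with an immediately-following existential ``image'' $w^{\#}$ forced into the relevant copy, together with a link relation demanding $w^{\#} = w$ whenever $w$ itself lies in that copy; the combined relations operate on images rather than directly on the underlying universals. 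The game-theoretic correctness reads: if $I_k \vee J_k$ holds, the existential player sets $m_k^{i} = e_1$ on whichever side is a yes-instance and plays its winning $\QCSP(\Gamma_i)$-strategy through the images; conversely, any winning $\QCSP(\Gamma)$-strategy projects down to a winning strategy for whichever side has $m_k^{i} = e_1$.

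For the reverse reduction I would exploit the rigid block structure of $\Gamma$: every relation has a fixed pattern saying, as a function of the mode bits it constrains, into which block each coordinate must fall, so a case analysis over the polynomially many outermost mode assignments decomposes an arbitrary $\QCSP(\Gamma)$-instance into a polynomial-sized conjunction of disjunctions of $\QCSP(\Gamma_1)$- and $\QCSP(\Gamma_2)$-subinstances. The main obstacle I expect is honestly handling out-of-block universal moves: the image-plus-mode gadget must ensure that a universal assigned outside its intended copy yields a vacuous $\Gamma_i$-constraint yet does not let the existential evade real play when the corresponding mode is $e_1$, while conversely preventing the existential from satisfying $\Gamma_i$-atoms trivially by exploiting the universal's out-of-block move. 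Pinning down the precise definitions of the combined relations so that the induced $\QCSP(\Gamma)$-game coincides exactly with the interleaved $I_k$ and $J_k$ games on their own domains is the delicate step of the argument.
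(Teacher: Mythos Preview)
Your architecture matches the paper's: the domain is $(A_1\times A_2)\cup A_1\cup A_2\cup\{a_1,a_2\}$, the product block is used so that a single universal move simultaneously supplies an $A_1$-value and an $A_2$-value, and the two extra elements govern which side of the disjunction is live. Where you diverge is in the mode mechanism. The paper does not use explicit mode bits $m_k^1,m_k^2$. Instead it adds the tuple $(a_2,\ldots,a_2)$ to every $\Gamma_1$-relation and $(a_1,\ldots,a_1)$ to every $\Gamma_2$-relation, and it introduces a binary relation $\sigma=(A_1\times\{a_1\})\cup(\{a_2\}\times A_2)$ linking every $I$-variable to every $J$-variable within a disjunct. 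The mode is then implicit in the value of the first existential variable: either it lies in $A_1$, which via $\sigma$ collapses all $J$-variables to $a_1$, or it equals $a_2$, which collapses all $I$-variables to $a_2$. This absorption trick is what makes the reverse reduction tractable: an arbitrary instance splits into connected components of its existential Gaifman graph, and in each component the first existential value determines the mode, yielding exactly one $\QCSP(\Gamma_1)\vee\QCSP(\Gamma_2)$ conjunct per component.

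The genuine gap in your proposal is the link relation for universals. You write that the link ``demands $w^{\#}=w$ whenever $w$ itself lies in that copy'', but the interesting universal moves are precisely those into the product block $A_1\times A_2$, not into the copies, and your link says nothing in that case. As stated, when the universal plays $(a,b)\in A_1\times A_2$ the existential is free to choose $w^{\#}$ to be any element of $A_1^{\bullet}$, so the simulated universal quantification over $A_1$ is vacuous. The paper's relation $\sigma_1=\{(a,(a,b)):a\in A_1,b\in A_2\}\cup(\{a_2\}\times A)\cup(A_1\times(A_1\cup A_2\cup\{a_1,a_2\}))$ is engineered exactly to extract the first coordinate from product elements while remaining trivially satisfiable (via $a_2$) when the mode is off. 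Your sketch of the reverse reduction also does not hold up: ``polynomially many outermost mode assignments'' is not what happens for an arbitrary $\QCSP(\Gamma)$-instance, since mode-typed variables need not be outermost and their number is unbounded. The paper's argument instead uses the connected-component decomposition together with a typing analysis (every variable is forced into $A_1\cup\{a_2\}$, $A_2\cup\{a_1\}$, or $A$) and several normalisations to show each component yields a single disjunction.
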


\begin{proof}
Assume that $A_{1}\cap A_{2} = \varnothing$, 
$a_{1},a_{2}\notin A_{1}\cup A_{2}$.
Let $$A = (A_{1}\times A_{2})\cup A_{1} \cup A_{2}\cup \{a_{1},a_{2}\},$$
$$\sigma= (A_{1} \times \{a_{1}\})\cup 
(\{a_{2}\}\times A_{2}),$$
$$\sigma_{1} = 
\{(a,(a,b))\mid a\in A_{1}, b\in A_{2}\}
\cup 
(\{a_{2}\}\times A)\cup 
(A_{1}\times (A_{1}\cup A_{2}\cup\{a_{1},a_{2}\})),$$
$$\sigma_{2} = 
\{(b,(a,b))\mid a\in A_{1}, b\in A_{2}\}
\cup 
(\{a_{1}\}\times A)\cup 
(A_{2}\times (A_{1}\cup A_{2}\cup\{a_{1},a_{2}\})),$$
$$\Gamma = \{R\cup\{(a_{2},\ldots,a_{2})\}\mid 
R\in \Gamma_{1}\}
\cup 
\{R\cup\{(a_{1},\ldots,a_{1})\}\mid 
R\in \Gamma_{2}\}
\cup\{\sigma_{1},\sigma_{2},\sigma\}.$$

We would like to assume, w.l.o.g., that $\Gamma$ contains constants $a_1$ and $a_2$. This would follow from our definitions, so long as both $\Gamma_1$ and $\Gamma_2$ contain a unary empty relation. We may assume this, through definitions of the form $\forall x_1,\ldots,x_k R(x_1,\ldots,x_k)$, so long as not every relation in either $\Gamma_1$ or $\Gamma_2$ contains all tuples. Suppose, w.l.o.g., that every relation in $\Gamma_1$ contains all tuples, then we take $\Gamma$ to be $\Gamma_1$ and we are done. Thus we may assume w.l.o.g., that $\Gamma$ contains constants $a_1$ and $a_2$.

Suppose we have an instance $I_{1}$ of $\QCSP(\Gamma_{1})$ 
and an instance $I_{2}$ of $\QCSP(\Gamma_{2})$. W.l.o.g. we will assume that neither $I_1$ nor $I_2$ is empty and that they are variable disjoint. We will explain how to build an instance $J$ of  $\QCSP(\Gamma)$.
Let $x_{1},\ldots,x_{n}$ be all universally quantified variables 
of $I_{1}$. 

We replace every atomic relation $R$ of $I_{1}$ by
$R\cup\{(a_{2},\ldots,a_{2})\}$ 
and add relational constraints 
$\sigma_{1}(x_{i},y_{i})$ for every $i\in[n]$ (where the variables $y_i$ are new).
Also we replace $\forall x_{i}$ by $\forall y_{i}\exists x_{i}$ 
for every $i\in[n]$.
The result we denote by $I_{1}'$.
Similarly, but with $a_{1}$ instead of $a_{2}$ and $\sigma_{2}$ instead of $\sigma_{1}$ we define $I_{2}'$.
We claim that the sentence $J$  defined by 
$$I_{1}'\wedge I_{2}'\wedge\bigwedge\limits_{u\in\Var(I_{1}), v\in\Var(I_2)}\sigma(u,v)$$ 
(we move all the quantifiers to the left part after joining)
holds if and only if $I_{1}$ holds or $I_{2}$ holds. W.l.o.g. we will henceforth assume the first variable in $J$ is existential (if necessary we could enforce this by a dummy existential variable at the beginning of $I_1$).

Let us guide our proof with an example. Suppose that $I_1$ is $\exists x_1 \forall x_2 \exists x_3 \forall x_4 \Phi_1(x_1,x_2,x_3,x_4)$ and $I_2$ is $\forall x'_1 \exists x'_2 \Phi_2(x'_1,x'_2)$ then $J$ has the form
\[
\begin{array}{ll}
\exists x_1 \forall y_2 \exists x_2 \exists x_3 \forall y_4 \exists x_4 \forall y'_1 \exists x'_1 \exists x'_2 & \Phi_1(x_1,x_2,x_3,x_4) \wedge  \Phi_2(x'_1,x'_2) \wedge \\
& \bigwedge_{i} \sigma_1(x_i,y_i) \wedge \bigwedge_{j} \sigma_2(x'_j,y'_j) \wedge  \bigwedge_{i,j} \sigma(x_i,x'_j). 
\end{array}
\]
($I_1 \in \QCSP(\Gamma_1) \vee I_2 \in \QCSP(\Gamma_2)$ implies $J \in \QCSP(\Gamma)$.) W.l.o.g. $I_1 \in \QCSP(\Gamma_1)$. Let us give a winning strategy for Existential in $J$ based on the winning strategy Existential enjoys on $I_1$. Existential will evaluate all (existential) variables of $J$ coming from $I_2$ as $a_1$. It follows that all of the atoms in $J$ arising purely from $I_2$, in our example $\Phi_2(x'_1,x'_2)$ and $\bigwedge_{j} \sigma_2(x'_j,y'_j)$, will be satisfied. There is no longer a need to worry about how Universal plays in $J$ on variables $y'_i$ coming from $I_2$. Now, consider Universal's play in $J$ on a variable $y_i$ coming from $I_1$. If he plays some pair $(a,b)$, with $a \in A_1$ and $b \in A_2$, then let Existential respond with $x_i$ set to $a$. If he plays some element $a \in A_1 \cup A_2 \cup \{a_1,a_2\}$, then let Existential respond with $x_i$ set to any arbitrary $a \in A_1$ (we could imagine this corresponding to Universal instead having played some $(a,b)$ for $y_i$). Finally, Existential plays the remaining (existential) variables of $J$ matching her winning strategy in the game on $I_1$, supposing Universal plays $x_i$ in $I_1$ whatever we just described Existential playing for $x_i$ (associated with $y_i$ in $J$). This is a winning strategy on $J$ by construction of $\Gamma$.

($J \in \QCSP(\Gamma)$ implies $I_1 \in \QCSP(\Gamma_1) \vee I_2 \in \QCSP(\Gamma_2)$.) Consider a winning strategy for Existential in $J \in \QCSP(\Gamma)$ where Universal only played on elements of the form $(a,b)$ where $a \in A_1$, $b \in A_2$. The first variable $x$ of $J$ is existential and indeed is associated with $I_1$. This must be played by Existential in $A_1$ or as $a_2$. If $x$ is evaluated in $A_1$ then the $\sigma$ constraints force all variables associated with $I_2$ to now be $a_1$ and thus all variables associated with $I_1$ to be in $A_1$. Existential can now witness $I_1 \in \QCSP(\Gamma_1)$ by considering that Universal in $I_1$ plays $a$, where Universal in $J$ played $(a,b)$. If $x$ is evaluated to $a_2$, then  the $\sigma$ constraints force all variables associated with $I_2$ to now be in $A_2$ and thus all variables associated with $I_1$ to be $a_2$. Existential can now witness $I_2 \in \QCSP(\Gamma_1)$ by considering Universal in $I_2$ plays $b$, where Universal in $J$ played $(a,b)$.

We can reduce a more complicated set of instances $I_1,\ldots,I_n$ of $\QCSP(\Gamma_1)$ and $J_1,\ldots,J_n$ of $\QCSP(\Gamma_2)$ to $K$ in $\QCSP(\Gamma)$, in such a way that $K \in \QCSP(\Gamma)$ iff $(I_1 \in \QCSP(\Gamma_1) \vee J_1 \in \QCSP(\Gamma_2)) \wedge \ldots \wedge (I_n \in \QCSP(\Gamma_1) \vee J_n \in \QCSP(\Gamma_2))$ by taking the conjunction of our given reduction over each pair $I_i$ and $J_i$.

Now, let us prove that any instance of 
$\QCSP(\Gamma)$ can be reduced to some conjunction of instances of $\QCSP(\Gamma_{1})\vee \QCSP(\Gamma_{2})$. Call an instance $K$ of $\QCSP(\Gamma)$ \emph{connected} if the Gaifman graph of the existential variables of $K$ is connected. 
%The relations that play a role in this graph are just those coming from $\Gamma$. 
The number of connected components of $K$ will give the number of conjuncts $\QCSP(\Gamma_{1})\vee \QCSP(\Gamma_{2})$. Let us assume now w.l.o.g. that $K$ is connected otherwise we can split $K$ into a conjunction of its connected instances where each connected instance contains all universal variables but only the atoms containing instances of (universal variables and) its existential variables.  
%1
%Let us further assume $K$ has no equalities (if there were some they may be propagated out). 
Notice that all variables in $K$ are typed, in that any variable in a relation from $\Gamma$ 
either takes on values ranging across:
$A_{1}\cup \{a_{2}\}$; or 
$A_{2}\cup \{a_{1}\}$; or
$A$. If a variable appears with more than one type but the types are consistent (i.e. one type is $A$ and the other is one from $A_{1}\cup \{a_{2}\}$ or  $A_{2}\cup \{a_{1}\}$) then this is because the variable appears in some $\sigma_i$ in the second position. But now we could remove this $\sigma_i$ constraint because the other existing type restriction to one of  $A_{1}\cup \{a_{2}\}$ or  $A_{2}\cup \{a_{1}\}$ means $\sigma_i$ will always be satisfied. Furthermore, if some variable has inconsistent types or a fixed element constant appears in a position where it is forbidden due to type, then we know the instance is false. This would also be the case if a universal variable appears in any type other than $A$. We will now assume none of these situations occurs and we term such an input \emph{reduced}.

%2
We would like now to assume that $K$ has no existential variables $x$ in the second position in a $\sigma_i$. First we must argue that if $K$ is reduced then Existential can witness the truth of $K$ while never playing outside of $A_1 \cup A_2 \cup \{a_1,a_2\}$. Suppose Existential ever played outside of this set, then any element in the set could be chosen as a legitimate alternative. Indeed, Existential could only win by playing an element of the form $(a,b)$ in the second position of some $\sigma_i$ and in this circumstance the atom would be equally satisfied by any choice from $A_1 \cup A_2 \cup \{a_1,a_2\}$. Now we can make the assumption that $K$ has no existential variables $x$ in the second position in a $\sigma_i$ because any choice among $A_{1}\cup \{a_{2}\}$ or  $A_{2}\cup \{a_{1}\}$ satisfies this.

%5
Let us remark that Universal has a winning strategy in $K$ iff he has this winning strategy only playing elements of $A_1 \times A_2$. Indeed, we already used this property in the simpler case above. Suppose we have in $K$ some $\sigma_{1}(x_1,y)\wedge \sigma_{1}(x_2,y)$, where $y$ is universally quantified before $x_{1}$ and $x_{2}$. Then adding the constraint $x_{1}=x_{2}$ doesn't change the result (since $y$ will be played in $A_1 \times A_2$). Let us do this and propagate out the innermost of $x_1$ and $x_2$.

Let us process likewise similar instances in $K$ of the form $\sigma_{2}(x_1,y)\wedge \sigma_{2}(x_2,y)$.

%6

Suppose we have in $K$ some $\sigma_{1}(x_{1},y)\wedge \sigma_{2}(x_{2},y)$, where $y$ is universally quantified before $x_{1}$ and $x_{2}$. Note that, since we applied already previous rewriting rules, these are the only occurrences of $y$ in the whole sentence. Owing to this, and since $\forall y \exists x_1 \exists x_2 \sigma_{1}(x_{1},y)\wedge \sigma_{2}(x_{2},y)$ is logically equivalent to $\forall y_1 \forall y_2 \exists x_1 \exists x_2 \sigma_{1}(x_{1},y_1)\wedge \sigma_{2}(x_{2},y_2)$, we may substitute the quantification and atoms in our sentence of the first form for those of the latter form.

%3
Finally, if $\sigma_{1}(x,y)$ appears in the instance $K$ with
$x$ is quantified before $y$ then it is equivalent to 
the substitution $x = a_2$. Similarly, for $\sigma_{2}(x,y)$ with
$x$ is quantified before $y$ then it is equivalent to
the substitution $x = a_1$. In the former case, $K_2$ becomes redundant, and in the latter case, $K_1$ becomes redundant.

We are now in a position to build an instance $K_1 \vee K_2$ of $\QCSP(\Gamma_{1})\vee \QCSP(\Gamma_{2})$. We can now split $K$ into $K_1$ and $K_2$ based on the types of the existential variables using the following additional rule. If $y$ is quantified before $x$ (recall it must be universally quantified) then we may consider this enforces in $K_1$ universal quantification of $x$ but restricted to $A_{1}$. Similarly, with $\sigma_2(x,y)$, and $K_2$ and $A_2$. 

We claim $K \in \QCSP(\Gamma)$ iff $K_1 \in \QCSP(\Gamma_{1})$ or $K_2 \in \QCSP(\Gamma_{2})$.

%% BARNY. I don't think we may assume that $K$ has an outermost existential variable $x$. 
%Instead I suggest applying winning strategy for universals for both problems. By choosing a value for the first existentially quantified variable we will choose where to play.
%%Fixed by Dima
(Forward.) 
Assume the converse, then there exist
winning strategies for Universal players 
for $K_{1}$ and $K_{2}$.
We need to build a winning strategy for $K$. If one of $K_1$ or $K_2$ is empty, then we may play all existential variables to $a_2$ or $a_1$, respectively.
To do this we apply both strategies (choose different strategies for 
different variables)
until the moment when the first existential variable (let it be $x$) is evaluated.
Recall we assume  existential variable $x$ is either of type  $A_1 \cup \{a_2\}$ or of type  $A_2 \cup \{a_1\}$. W.l.o.g. let it be the former. If $x$ is evaluated in $A_1$ then the Universal player of $K$ uses
the strategy of $K_{1}$, if it is evaluated as $a_2$ then we use the strategy for $K_{2}$. Since $K$ is connected, if $x$ is evaluated in $A_1$ then all variables of type $A_1 \cup \{a_2\}$ must be evaluated in $A_1$, while if $x$ is evaluated as $a_2$ then all variables of type $A_2 \cup \{a_1\}$ must be evaluated from $A_2$ (because $a_1$ can not appear).
Thus the strategy we built is a winning strategy for the Universal player in $K$.

(Backwards.) W.l.o.g. assume $K_1 \in \QCSP(\Gamma_{1})$. Evaluate all variables in $K$ of type $A_2 \cup \{a_1\}$ to $a_1$.  Evaluate all variables in $K$ of type $A_1 \cup \{a_2\}$ in $A_1$ according to the winning strategy for $K_1 \in \QCSP(\Gamma_{1})$.

\end{proof}

Similarly we can prove the following two lemmas.

\begin{lem}\label{ConstraintDisjunctionFirst}
Suppose $\Gamma_{1}$ and $\Gamma_{2}$ are finite constraint languages on sets $A_1$ and $A_{2}$ respectively. 
Then there exists a constraint language $\Gamma$ 
on a domain of size $2\cdot|A_{1}|+|A_{2}|+2$
such that $\QCSP(\Gamma)$ is polynomially equivalent to 
$(\QCSP(\Gamma_{1})\vee \CSP(\Gamma_{2}))
\wedge
\dots
\wedge
(\QCSP(\Gamma_{1})\vee \CSP(\Gamma_{2}))$, 
i.e. the following decision problem:
given $n$,
instances $I_{1},\ldots,I_{n}$ of $\QCSP(\Gamma_{1})$,
and
instances $J_{1},\ldots,J_{n}$
of $\CSP(\Gamma_{2})$;
decide whether
$(I_{1}\vee J_{1})\wedge \dots\wedge (I_{n}\vee J_{n})$
holds.
\end{lem}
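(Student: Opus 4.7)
The plan is to follow the construction of Lemma~\ref{ConstraintDisjunction} closely, exploiting that $\CSP(\Gamma_{2})$ has no universal quantifiers so that the reflection gadget is needed only on the $\QCSP(\Gamma_{1})$ side. Consequently the product component $A_{1} \times A_{2}$ used previously is not required; a single disjoint copy $A_{1}' = \{a^{*} : a \in A_{1}\}$ of $A_{1}$ suffices to host the reflected universal variables. I take the domain $A = A_{1} \sqcup A_{1}' \sqcup A_{2} \sqcup \{a_{1}, a_{2}\}$, of size $2|A_{1}|+|A_{2}|+2$, keep $\sigma = (A_{1}\times\{a_{1}\})\cup(\{a_{2}\}\times A_{2})$ exactly as before, and define $\sigma_{1}(x,x')$ so that $x = a$ is forced when $x' = a^{*}\in A_{1}'$, $x = a_{2}$ is permitted for every $x'$, and $x \in A_{1}$ is permitted whenever $x'\notin A_{1}'$. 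The new language is $\Gamma = \{R\cup\{(a_{2},\ldots,a_{2})\} : R\in\Gamma_{1}\}\cup\{R\cup\{(a_{1},\ldots,a_{1})\} : R\in\Gamma_{2}\}\cup\{\sigma,\sigma_{1}\}$.

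For the forward reduction, given $n$ pairs $(I_{i},J_{i})$ with $I_{i}\in\QCSP(\Gamma_{1})$ and $J_{i}\in\CSP(\Gamma_{2})$, I apply the bridging construction independently per block: extend every $\Gamma_{1}$-relation in $I_{i}$ by the tuple $(a_{2},\ldots,a_{2})$, replace every $\forall x$ in $I_{i}$ by $\forall x'\,\exists x\;\sigma_{1}(x,x')$, extend every $\Gamma_{2}$-relation in $J_{i}$ by $(a_{1},\ldots,a_{1})$, and add a $\sigma$-bridge between every pair of variables of $I_{i}$ and $J_{i}$. The $n$ blocks are placed in parallel on disjoint variables (with the quantifier prefixes amalgamated) to form an instance $K$ of $\QCSP(\Gamma)$. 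Within each block the semantics mirror Lemma~\ref{ConstraintDisjunction}: existential's first move picks which side of the disjunction is being witnessed, and the extended constant tuples make the other side trivially satisfied.

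For the backward reduction I take $K\in\QCSP(\Gamma)$, propagate equalities, and put $K$ into the same typed, reduced, connected-component normal form as in the earlier proof. Each connected component then contributes one conjunct of $\QCSP(\Gamma_{1})\vee\CSP(\Gamma_{2})$, read off from the first existential variable of the component: if this variable is played in $A_{1}$, collapse each $\forall x'\,\exists x\,\sigma_{1}(x,x')$ gadget back to $\forall x$ (sound because of the forced correspondence $a^{*}\leftrightarrow a$), recovering an instance of $\QCSP(\Gamma_{1})$ that holds; if it is played as $a_{2}$, the $\sigma$-bridges force all $\Gamma_{2}$-variables into $A_{2}$, yielding a satisfiable instance of $\CSP(\Gamma_{2})$. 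Conjoining the components gives the required reduction to $\bigwedge_{i}(\QCSP(\Gamma_{1})\vee\CSP(\Gamma_{2}))$.

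The main obstacle, as in Lemma~\ref{ConstraintDisjunction}, is the backward direction, in particular justifying that w.l.o.g.\ the universal player plays only on $A_{1}'\cup\{a_{2}\}$. This is absorbed by the design of $\sigma_{1}$: any universal play outside $A_{1}'\cup\{a_{2}\}$ is matched by existential choosing $x = a_{2}$ in the corresponding gadget, which through the $\sigma$-bridges pushes that whole component into the $\CSP(\Gamma_{2})$ branch. All remaining steps -- the typing analysis, removal of redundant $\sigma$- and $\sigma_{1}$-constraints, and the component-to-conjunct decomposition -- transfer almost verbatim from Lemma~\ref{ConstraintDisjunction} and yield a polynomial reduction in both directions.
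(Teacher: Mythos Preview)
Your construction coincides exactly with the paper's: the paper gives only the definitions of $A$, $\sigma$, $\sigma_{1}$, and $\Gamma$ and defers the argument by ``Similarly we can prove'' to Lemma~\ref{ConstraintDisjunction}, and your $\sigma_{1}$ is precisely the paper's $\{(a,a')\mid a\in A_{1}\}\cup(\{a_{2}\}\times A)\cup(A_{1}\times(A_{1}\cup A_{2}\cup\{a_{1},a_{2}\}))$. Your forward and backward reductions follow the intended strategy; the only minor imprecision is that when universal plays $x'\notin A_{1}'$, existential need not play $x=a_{2}$ but rather gets a free choice of $x\in A_{1}$, so universal may w.l.o.g.\ be restricted to $A_{1}'$ alone---this does not affect correctness.
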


\begin{proof}
It is sufficient to define a new language as follows.
Let $A_{1}'$ be a copy of $A_{1}$. For any $a\in A_{1}$ by $a'$ we denote the corresponding element of $A_{1}'$.
Let $$A = A_{1}'\cup A_{1} \cup A_{2}\cup \{a_{1},a_{2}\},$$
$$\sigma= (A_{1} \times \{a_{1}\})\cup 
(\{a_{2}\}\times A_{2}),$$
$$\sigma_{1} = 
\{(a,a')\mid a\in A_{1}\}
\cup 
(\{a_{2}\}\times A)\cup 
(A_{1}\times (A_{1}\cup A_{2}\cup\{a_{1},a_{2}\})),$$
$$\Gamma = \{R\cup\{(a_{2},\ldots,a_{2})\}\mid 
R\in \Gamma_{1}\}
\cup 
\{R\cup\{(a_{1},\ldots,a_{1})\}\mid 
R\in \Gamma_{2}\}
\cup\{\sigma_{1},\sigma\}.$$
\end{proof}

\begin{lem}\label{ConstraintConjunctionTwo}
Suppose $\Gamma_{1}$ and $\Gamma_{2}$ are finite constraint languages on sets $A_1$ and $A_{2}$ respectively. 
Then there exists a constraint language $\Gamma$ 
on a domain of size 
$|A_{1}|\cdot|A_{2}|+|A_{1}|+|A_{2}|$
such that $\QCSP(\Gamma)$ is polynomially equivalent to 
$(\QCSP(\Gamma_{1})\wedge \QCSP(\Gamma_{2}))$,
i.e. the following decision problem:
given an instance $I$ of $\QCSP(\Gamma_{1})$ and 
an instance $J$ of $\QCSP(\Gamma_{2}))$;
decide whether
$I\wedge J$ holds.
\end{lem}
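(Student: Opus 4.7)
The plan is to mirror the construction of Lemma~\ref{ConstraintDisjunction}, but stripped of the trivialiser elements $a_1$ and $a_2$, since in the conjunctive setting there is no need to let the existential ``escape'' one of the two sub-instances.

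Concretely, I take the domain $A = (A_1 \times A_2) \cup A_1 \cup A_2$ (disjoint union), of cardinality $|A_1|\cdot|A_2|+|A_1|+|A_2|$ as claimed. I define two binary ``projection'' relations: $\sigma_1$ consists of the pairs $((a,b), a)$ for $a \in A_1, b \in A_2$ together with every pair $(w,x)$ with $w \in A_1 \cup A_2$ and $x \in A_1$, and symmetrically $\sigma_2$. I set $\Gamma = \Gamma_1 \cup \Gamma_2 \cup \{\sigma_1, \sigma_2\}$, where relations of $\Gamma_i$ are viewed as subsets of $A$ (their tuples automatically lying in $A_i$ componentwise). The key property of $\sigma_i$ is that for $w \in A_1 \times A_2$ it deterministically extracts the $i$-th coordinate, while for $w \in A_1 \cup A_2$ it merely constrains the second argument to lie in $A_i$.

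For the forward reduction, given instances $I \in \QCSP(\Gamma_1)$ and $J \in \QCSP(\Gamma_2)$, I would put both in prenex form and pair up their universal quantifier prefixes (padding the shorter one with dummy universal variables not appearing in any atom). For each matched pair of universals $(z, z')$ I replace them in the combined sentence by a single fresh universal $w$ followed by two fresh existentials $x, u$, add the conjuncts $\sigma_1(w, x) \wedge \sigma_2(w, u)$, and substitute $x$ for $z$ in $\Phi_1$ and $u$ for $z'$ in $\Phi_2$; the body is $\Phi_1 \wedge \Phi_2$. Correctness in the forward direction uses that any adversarial play $w$ projects to virtual universal plays for $I$ and $J$ (the first coordinate if $w \in A_1 \times A_2$; $w$ itself if it is of the relevant type; an arbitrary fixed default otherwise), against which the supposed winning strategies for $I$ and $J$ succeed. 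Conversely, if the combined sentence $K$ holds then restricting the adversary to $A_1 \times A_2$ and reading off first and second coordinates recovers winning strategies for $I$ and $J$ separately.

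The reverse reduction follows the typing analysis of Lemma~\ref{ConstraintDisjunction}: each variable of an input $K$ of $\QCSP(\Gamma)$ is typed as lying in $A_1$, in $A_2$, or in the whole of $A$, depending on which relations of $\Gamma$ mention it; inconsistent typings or forbidden constant occurrences make $K$ trivially false; and after propagating out equalities the remaining atoms split cleanly into a $\Gamma_1$-part $I$ and a $\Gamma_2$-part $J$, with shared universals $w$ of type $A$ being converted via their $\sigma_i$-partners into matching universal quantifiers in each sub-instance. The main obstacle will be, as in Lemma~\ref{ConstraintDisjunction}, the bookkeeping around $\sigma_i$ atoms whose universal partner appears in an awkward quantifier position relative to its existential partner; however, the absence of trivialiser elements $a_1, a_2$ should make this strictly easier, since there is no ``$I$-escaped versus $J$-escaped'' dichotomy to track.
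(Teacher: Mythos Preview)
Your construction is essentially identical to the paper's: the paper also takes $A=(A_1\times A_2)\cup A_1\cup A_2$ and $\Gamma=\Gamma_1\cup\Gamma_2\cup\{\sigma_1,\sigma_2\}$, the only cosmetic difference being that the paper writes its $\sigma_i$ with the two arguments swapped (first argument in $A_i$, second in $A$). The paper's proof gives only the construction and leaves the verification implicit by analogy with Lemma~\ref{ConstraintDisjunction}, so your more explicit sketch of both reductions is if anything more complete.
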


\begin{proof}
It is sufficient to define a new language as follows.
Let $$A = (A_{1}\times A_{2})\cup A_{1} \cup A_{2},$$
$$\sigma_{1} = 
\{(a,(a,b))\mid a\in A_{1}, b\in A_{2}\}
\cup 
(A_{1}\times (A_{1}\cup A_{2})),$$
$$\sigma_{2} = 
\{(b,(a,b))\mid a\in A_{1}, b\in A_{2}\}
\cup 
(A_{2}\times (A_{1}\cup A_{2})),$$
$$\Gamma = \Gamma_1\cup\Gamma_2
\cup\{\sigma_{1},\sigma_{2}\},$$
where we consider any relation from $\Gamma_1\cup\Gamma_2$ as a relation on $A$.
\end{proof}

\section{Reductions and hardness results for the 3-element domain}\label{HardnessSection}

In this section we consider the domain $A= \{0,1,2\}$
and prove all the hardness results we need for the 3-element domain.

\begin{lem}
Suppose $b\in\{0,1\}$,  
$$\begin{array}{c}
\sigma_0 = \{ (a_1,a_2,a_3) : a_1 \in A, a_2,a_3 \in \{b,2\}, (a_1 \in \{0,2\} \vee a_2=a_3) \},\\
\sigma_1 = \{ (a_1,a_2,a_3) : a_1 \in A, a_2,a_3 \in \{b,2\}, (a_1 \in \{1,2\} \vee a_2=a_3) \}. 
\end{array}$$
Then $\QCSP(\{\sigma_0,\sigma_1,\{b\},\{2\}\})$ 
is PSpace-hard.
\label{PSpaceHardness}
\end{lem}

\begin{proof}
 The reduction will be from the complement of (monotone) \emph{Quantified Not-All-Equal 3-Satisfiability} (co-QNAE3SAT) which is co-PSpace-hard (see \cite{papa}) and consequently also PSpace-hard (as PSpace is closed under complement). Consider an instance $I$ of co-QNAE3SAT of the form
 $$Q_{1} x_{1} Q_{1} x_2\dots Q_{n}x_{n}\;
 \mathrm{AE}_{3}(x_{a_1},x_{b_1},x_{c_1})\vee\dots\vee\mathrm{AE}_{3}(x_{a_s},x_{b_s},x_{c_s}),$$
 where 
 $Q_{1},Q_{2},\dots,Q_{n}\in\{\forall,\exists\}$ and
$\mathrm{AE}_3 = \{(0,0,0),(1,1,1)\}$.

By $p_{k}(x_1,\dots,x_{k})$ we denote the predicate 
(relation) on $\{0,1\}$ defined by 
$$Q_{k+1} x_{k+1} Q_{k+2} x_{k+2}\dots Q_{n}x_{n}\;
 \mathrm{AE}_{3}(x_{a_1},x_{b_1},x_{c_1})\vee\dots\vee\mathrm{AE}_{3}(x_{a_s},x_{b_s},x_{c_s}).$$

We inductively define a predicate 
$\delta_{k}(x_1,\dots,x_{k}, y,y')$
for $k=n,n-1,\dots,1,0$
%such that 
%$\delta(x_1,\dots,x_{k}, y,y') = \Phi_{i}$ 
satisfying the following conditions
\begin{enumerate}
    \item 
    if $y,y'\in\{b,2\}$, $y\neq y'$,
    $x_1,\dots,x_{k}\in\{0,1\}$ then 
    $\delta_{k}(x_1,\dots,x_{k}, y,y') = p_{k}(x_1,\dots,x_{k})$;
    \item If $y=y'\in \{b,2\}$ then $\delta_{k}(x_1,\dots,x_{k}, y,y')$ holds;
    
    \item If $\delta_{k}(x_1,\dots,x_{k}, y,y')$ holds, then it holds if we replace some of the values $x_{1},\dots,x_{k}$ by 2;
    \item $\delta_{k}$ is definable by a quantified formula 
    over $\{\sigma_0,\sigma_1,\{b\},\{2\}\}$ that can be efficiently computed.
\end{enumerate}
Put 
\begin{align*}
    \delta_{n}(x_1,\dots,x_{n}&,y_0,y_{2s}) = \exists y_{1}\dots\exists y_{2s-1} \\ &\bigwedge_{1 \le i \le s}
    \begin{array}{c}
    (\sigma_0(x_{a_i},y_{i-1},y_i)\wedge 
    \sigma_0(x_{b_i},y_{i-1},y_i) \wedge \sigma_0(x_{c_i},y_{i-1},y_i)) \\
    (\sigma_1(x_{s+a_i},y_{s+i-1},y_{s+i})\wedge
    \sigma_1(x_{s+b_i},y_{s+i-1},y_{s+i}) \wedge \sigma_1(x_{s+c_i},y_{s+i-1},y_{s+i}))
    \end{array}
\end{align*}

If $Q_{k}=\forall$ then we put 
$\delta_{k-1}(x_1,\dots,x_{k-1},y,y')=
\forall x_{k}\; \delta_{k}(x_1,\dots,x_{k},y,y')$.

If $Q_{k}=\exists$ then we put 
$$\delta_{k-1}(x_1,\dots,x_{k-1},y,y')=
\exists u 
\forall x_{k} \exists u' \; \delta_{k}(x_1,\dots,x_{k},u,u')
\wedge\sigma_0(x_{k},y,u') \wedge \sigma_1(x_{k},y',u').$$

Let us check that $\delta_{n}$ satisfies the above properties (1)-(4). 
If
$p_{n}(x_1,\dots,x_n)$ holds, then there exists 
$i$ such that 
$x_{a_i} = x_{b_i} =x_{c_i}=d$.
If $d =0$ then 
to satisfy $\delta_{n}(x_1,\dots,x_{n}, y_0,y_{2s})$
we 
send $y_{1},y_{2},\dots,y_{i-1}$ to $y_{0}$, and 
$y_{i},y_{i+1},\dots,y_{2s-1}$ to $y_{2s}$.
If $d =1$ then we 
send $y_{1},y_{2},\dots,y_{s+i-1}$ to $y_{0}$, and $y_{s+i},y_{s+i+1},\dots,y_{2s-1}$ to $y_{2s}$.
Suppose $\delta_{n}(x_1,\dots,x_{n}, y_{0},y_{2s})$
holds on $y,y'\in\{b,2\}$, $y\neq y'$,
$x_1,\dots,x_{k}\in\{0,1\}$. Then there should be $i$ such that 
$y_{i-1}\neq y_{i}$.
If $i\le s$  then 
$x_{a_i} = x_{b_i} =x_{c_i}=0$, 
if $i>s$ then 
$x_{a_{i-s}} = x_{b_{i-s}} =x_{c_{i-s}}=1$.
Thus, in both cases 
$p_{n}(x_1,\dots,x_n)$ holds, which completes the proof of (1).
To prove (2)
it is sufficient to send $y_1,\dots,y_{2s-1}$
to $y=y'$.
(3) and (4) follow from the definition 
of $\delta_{n}$.

Let us prove by induction on $k$ that $\delta_{n-1},\dots,\delta_{0}$
satisfy the above properties (1)-(4).
Suppose they hold for $\delta_{k}$.
(4) for $\delta_{k-1}$ follows from the definition and
(4) for $\delta_{k}$.
(3) for $\delta_{k-1}$ follows from the definition and 
the properties of $\sigma_{0}$ and $\sigma_{1}$.
If $Q_{k}=\forall$ then  
(1) for $\delta_{k-1}$ follows 
from (1) and (3)
for $\delta_{k}$.

Suppose $Q_{k} = \exists$.
Notice that 
the Universal player in the definition of 
$\delta_{k-1}$ 
should always play 
$x_{k}=0$ if $u=y$ and 
$x_{k}=1$ if $u=y'$, since otherwise a 
winning strategy for the Existential player is to play $u' = u$.
Thus, the Existential player controls the choice of $x_{k}$, 
and we have
\begin{align*}
    u=y\wedge(\forall x_{k} \exists u' \;  \delta_{k}(x_1,\dots,x_{k},u,u')
\wedge\sigma_0(x_{k},y,u') \wedge
\sigma_1(x_{k},y',u'))
=&\\
\exists u' \delta_{k}(x_1,\dots,x_{k-1},0,y,u')
\wedge\sigma_0(0,y,u') &\wedge
\sigma_1(0,y',u')=\\
&\delta_{k}(x_1,\dots,x_{k-1},0,y,y')
\\u=y'\wedge(\forall x_{k} \exists u' \;  \delta_{k}(x_1,\dots,x_{k},u,u')
\wedge\sigma_0(x_{k},y,u') \wedge
\sigma_1(x_{k},y',u'))
=&\\
\exists u' \delta_{k}(x_1,\dots,x_{k-1},1,y',u')
\wedge\sigma_0(1,y,u') &\wedge
\sigma_1(1,y',u')=\\
&\delta_{k}(x_1,\dots,x_{k-1},1,y',y)
\end{align*}
From the above equations and (1) for $\delta_{k}$ we derive (1)
for $\delta_{k-1}$.
To prove 
(2) for $\delta_{k-1}$ we just check that if
$y=y'$ then 
the Existential player can play $u=u'=y=y'$.
Thus, we proved the properties 
(1)-(4) for $\delta_{k-1}$, 
and by the inductive assumption 
we conclude that they hold for $\delta_{0}$.
Since 
the quantified formula 
    over $\{\sigma_0,\sigma_1,\{b\},\{2\}\}$
    defining $\delta_{0}$ 
is efficiently computable, it remains to check that 
$\exists y\exists y'\;\;\delta_{0}(y,y')\wedge y=b \wedge y=2$
holds if and only if  $I$ holds.
\end{proof}

Recall that an operation $f$ is 01-stable if it is idempotent, $f(x,0,1)=x$, and $f(x,2,2) = 2$.
\begin{lem}\label{ReductionToQCSP2}
Suppose $\Gamma$ is preserved by $s_2$ and a 01-stable operation $h_{01}$.
Then an instance 
$$\forall x_1\exists y_1 \forall x_2 \exists y_2 \dots \forall x_{n} \exists y_{n} \Phi$$
of $\QCSP(\Gamma)$ is equivalent to 
$$\forall x_1\forall x_2 \dots \forall x_{n}
\exists \exists  
((\exists' \exists' \Phi_{1}^0\wedge \Phi_{1}^1)\wedge 
(\exists' \exists' \Phi_2^0 \wedge \Phi_2^1) \wedge \dots \wedge (\exists' \exists' \Phi_n^0\wedge \Phi_n^1)
),$$
where 
$$
\Phi_{i}^0 = \Phi^{x_{i+1},\ldots,x_{n}, y_{i+1},\ldots,y_{n}}_{x_{i+1}^0,\ldots,x_{n}^0,y_{i+1}^0,\ldots,y_{n}^0} \wedge x_{i+1}^0 = 0 \wedge \dots\wedge x_{n}^0 = 0,
$$
$$
\Phi_{i}^1 = \Phi^{x_{i+1},\ldots,x_{n}, y_{i+1},\ldots,y_{n}}_{x_{i+1}^1,\ldots,x_{n}^1,y_{i+1}^1,\ldots,y_{n}^1} \wedge x_{i+1}^1 = 1 \wedge \dots\wedge x_{n}^1 = 1,
$$
(note that $\Phi_{n}^0 =\Phi_{n}^1 = \Phi$) and by 
$\exists \exists$ and $\exists' \exists'$ we mean that we add all necessary existential quantifiers for $y$-variables without a superscript and with a superscript, respectively.
\end{lem}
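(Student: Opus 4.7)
The plan is to prove the two directions of the equivalence separately. The forward direction ($\QCSP$ instance $\Rightarrow$ $\Pi_2$ formula) is routine: given a winning Skolem strategy $f_1,\dots,f_n$ for the $\QCSP$ instance, witness each outer $\exists y_i$ of the $\Pi_2$ formula by $f_i(a_1,\dots,a_i)$ and, for each $i$, witness the primed $\exists' y_j'$ inside $\Phi_i$ by $f_j(a_1,\dots,a_i,0,\dots,0)$. Each $\Phi_i$ is then satisfied because it records the strategy's response to the legal $\QCSP$ play where the universal variables after position~$i$ have been set to $0$.

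The reverse direction is the substantive part. Assume the $\Pi_2$ formula holds and fix, for each $\bar a\in A^n$, a $\Pi_2$ witness consisting of outer values $y_1^{\bar a},\dots,y_n^{\bar a}$ and, for each $i$, primed extensions certifying that the prefix $(a_1,y_1^{\bar a},\dots,a_i,y_i^{\bar a})$ extends to a $\Phi$-solution when the remaining universals are set to $0$. Build a Skolem strategy $\hat y_1,\dots,\hat y_n$ for the original $\QCSP$ iteratively in $i$ by combining these $\Pi_2$ witnesses with the operations $h_0$ and $s_2$. The key algebraic observation is that, since $h_0$ is $0$-stable, idempotent (because $\Gamma$ contains constants), and preserves $\Gamma$, whenever $\bar u,\bar v\in\Phi$ have universal entries $(a_1,\dots,a_n)$ and $(a_1,0,\dots,0)$ respectively, the coordinatewise combination $h_0(\bar u,\bar v)\in\Phi$ keeps the universal entries of $\bar u$ intact --- by $h_0(a_1,a_1)=a_1$ in position~$1$ and $h_0(a_j,0)=a_j$ in positions $j>1$ --- while transforming the $j$th existential coordinate into $h_0(u_j,v_j)$. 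The $\Pi_2$ extension certificates are precisely the needed zero-tail tuples, sharing their prefix-$y$ values with the main witnesses. Iterating this $h_0$-combination across all universal continuations of a fixed prefix, and using $s_2$ to consolidate extension witnesses when several must be merged at once, produces $\hat y_i(a_1,\dots,a_i)$ that lies in the projection $\{y_i : \exists y_{i+1},\dots,y_n\ \Phi(a_1,\hat y_1,\dots,\hat y_{i-1},a_i,y_i,a_{i+1},y_{i+1},\dots,a_n,y_n)\}$ for every continuation $(a_{i+1},\dots,a_n)$.

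The main obstacle is that $h_0$ is neither commutative nor associative in general, so the order of the iterative combinations matters when one tries to maintain the double invariant that $\hat y_i$ belongs to every relevant slice \emph{and} that the extension certificates required at step $i+1$ remain available after the substitution of $\hat y_1,\dots,\hat y_i$. This is handled by a nested iteration scheme in which, at each step, fresh zero-tail tuples with the updated $y$-value are produced as $h_0$- and $s_2$-combinations of the original $\Pi_2$-extension witnesses, so that at every previously considered continuation the required tuple with the updated $y$-value is still available. Idempotency together with the $0$-stability axioms $h_0(x,0)=x$ and $h_0(x,2)=2$ are what keep the universal positions frozen throughout these combinations, and what ultimately make the whole iterative construction go through.
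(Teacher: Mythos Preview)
Your proposal uses the same ingredients as the paper: the forward direction is identical, and for the backward direction you correctly identify that the $\Phi_i$-certificates supply the zero-tail tuples, that $s_2$ merges witnesses over continuations, and that the $0$-stable $h_0$ (together with idempotency) then overwrites an existential coordinate while freezing all universals and all earlier existentials.

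The paper organises the backward direction in a way that sidesteps the bookkeeping you flag as the main obstacle. Rather than building $\hat y_1,\ldots,\hat y_n$ forward and maintaining updated extension certificates at every step, it takes an arbitrary solution $(f_1,\ldots,f_n)$ of the $\Pi_2$ instance, lets $N$ be the first index at which $f_N$ still depends on some later $x_j$, and shows this $N$ can be strictly increased. The point is that $f_1,\ldots,f_{N-1}$ already depend only on their correct prefixes, so when one $s_2$-merges the $\Phi_N$-certificates over all values of $x_{N+1},\ldots,x_n$ to obtain a single tuple $\alpha$, the coordinates $y_1,\ldots,y_{N-1}$ of $\alpha$ are exactly $f_1,\ldots,f_{N-1}$ unchanged, while the $y_N$-coordinate becomes $e_N(x_1,\ldots,x_N)$ (equal to the common value of $f_N$ if $f_N$ happens to be constant over continuations, and $2$ otherwise). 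A \emph{single} application of $h_0$ to the main tuple and $\alpha$ then yields a new solution with $f_N$ replaced by $e_N$, since $h_0(f_N,e_N)=e_N$ in either case; the new first bad index exceeds $N$, and iterating at most $n$ times gives a genuine Skolem strategy. Your scheme is the explicit unrolling of this and is correct, but attacking the first bad index directly exploits that the earlier $f_j$ are already synchronised across all certificates, so no ``nested iteration scheme'' or certificate-updating is needed.
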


\begin{proof}

(Forwards/ downwards.)
If we have a solution $(f_1,\ldots,f_{n})$ of the original instance then it is 
also a solution of the new instance with the additional assignments $x_{j}^0 = 0$, $x_{j}^1 = 1$, $y_{j}^0 = f_{j}(x_{1},\ldots,x_{i},0,\ldots,0)$,
and 
$y_{j}^1 = f_{j}(x_{1},\ldots,x_{i},1,\ldots,1)$
in the definition of $\Phi_{i}^0\wedge \Phi_{i}^1$ for every $j> i$.
%To get $\Phi_{i}$ form the original instance 
%we just assign $x_{j} = 0$ for every $j>i$, 
%which is obviously weaker than the original instance.

(Backwards/ upwards.)
First, introduce several notations. 
By $\vee$ denote $s_{2}$, 
by $\le$ denote the partial order on $\{0,1,2\}$ such that 
$0\le 2$ and $1\le 2$ but $0$ and $1$ are
incomparable.
For an operation 
$f(x_1,\dots,x_s)$ 
by $f(x_1,\dots,x_t)$, where $t<s$ we denote 
$\bigvee\limits_{(a_{t+1},\dots,a_{s})}
f(x_1,\dots,x_t,a_{t+1},\dots,a_{s})$.
By $h_{01}'$ we denote the operation defined 
by $h_{01}'(x,y,z) = s_{2}(x,h_{01}(x,y,z))$.
Notice that $h_{01}'$ is also a $01$-stable operation, but it satisfies the 
property 
$h_{01}'(x,y,z)\ge x$.

Let us show how to build a solution to the original instance from 
a solution of the new instance. 
Let a solution of the new instance be defined by 
$y_{i} = f_{i}(x_{1},\ldots,x_{n})$, 
$y_{i}^{0} = g_{j,i}^{0}(x_{1},\ldots,x_{n})$
and 
$y_{i}^{1} = g_{j,i}^{1}(x_{1},\ldots,x_{n})$
for 
the variables $y_{i}^{0}$ and $y_{i}^{1}$
from $\Phi_{j}^{0}\wedge \Phi_{j}^{1}$.

First, let us show that 
$$(x_1,\dots,x_{j}, c,\dots,c,
f_{1}(x_1,\dots,x_{j}),\dots,
f_{j}(x_1,\dots,x_{j}),
g_{j,j+1}^{c}(x_1,\dots,x_{j}),
\dots,
g_{j,n}^{c}(x_1,\dots,x_{j}))$$
is a solution of $\Phi$ 
for every $j\in[n]$ and $c\in\{0,1\}$.
In fact, from $\Phi_{j}^{c}$ 
the tuple 
$$(x_1,\dots,x_{j}, c,\dots,c,
f_{1}(x_1,\dots,x_{n}),\dots,
f_{j}(x_1,\dots,x_{n}),
g_{j,j+1}^{c}(x_1,\dots,x_{n}),
\dots,
g_{j,n}^{c}(x_1,\dots,x_{n}))$$
is a solution of $\Phi$. 
Consider all the evaluations of the variables 
$x_{j+1},\ldots,x_{n}$ to obtain $3^{n-j}$ solutions of $\Phi$, 
and apply the semilattice $s_2$ to them. 
As a result we obtain the required tuple, which implies that this 
tuple satisfies $\Phi$.

We prove by induction on $N=0,1,\dots,n$ that 
there exist operations $q_{1},\dots,q_{n}$
%Second, consider the maximal $N$ 
%such that there exist operations 
%$s_{1},\dots,s_{n}$,
%$s_{N+1}^0,\dots,s_{n}^0$,
%$s_{N+1}^1,\dots,s_{n}^1$
satisfying 
\begin{enumerate}
    \item 
    $(x_1,\dots,x_{n},
q_{1}(x_1,\dots,x_n),\dots,q_{n}(x_1,\dots,x_{n}))$ is a solution of $\Phi$, % for all $x_1,\dots,x_{n}$ 
%    \item 
%    $(x_1,\dots,x_{N}, 1,\dots,1,
%q_{1}(x_1),q_{2}(x_1,x_2),\dots,
%q_{N}(x_1,x_2,\dots,x_{N}),
%q_{N+1}^1(x_1,x_2,\dots,x_{n}),
%\dots,
%q_{n}^1(x_1,x_2,\dots,x_{n}))$ is a solution of $\Phi$ for all 
%$x_1,\dots,x_{N}$, and
\item 
$q_{i}(x_{1},\dots,x_{n})\ge f_{i}(x_{1},\dots,x_{n})$
for every $i\in[n]$, 
\item 
$q_{i}(x_{1},\dots,x_{n})$ depends fictitiously on 
$x_{i+1},\dots,x_{n}$ for every $i\in[N]$,
\item 
$q_{i}(x_{1},\dots,x_{n})$ is uniquely determined 
by 
$x_1,\dots,x_{i}$ and $f_{i}(x_{1},\dots,x_{n})$
for every $i\in[n]$.
\end{enumerate}

For $N=0$ we can 
put $q_{i}(x_1,\dots,x_{n}) = f_{i}(x_1,\dots,x_{n})$ for every $i\in[n]$.
Let us prove the induction step.
Assume that 
$q_1,\dots,q_n$ satisfy conditions (1)-(4) for some $N$. 
%We will build 
%$q_1',\dots,q_{n}'$ satisfying (1)-(4) for $N:= N+1$.
%Assume that $N<n$.
Apply the 01-stable operation $h_{01}'$ to 
the three solutions of $\Phi$
\begin{align*}
(x_1,\dots,x_{n},
q_{1}(x_1,\dots,x_n),\dots,
q_{n}(x_1,\dots,x_{n}&))\\
(x_1,\dots,x_{N+1}, 0,\dots,0,
f_{1}(x_1,\dots,x_{N+1}),\dots,
&f_{N+1}(x_1,\dots,x_{N+1}),\\
&g_{N+1,N+2}^{0}(x_1,\dots,x_{N+1}),
\dots,
g_{N+1,n}^{0}(x_1,\dots,x_{N+1}))\\
(x_1,\dots,x_{N+1}, 1,\dots,1,
f_{1}(x_1,\dots,x_{N+1}),\dots,
&f_{N+1}(x_1,\dots,x_{N+1}),\\
&g_{N+1,N+2}^{1}(x_1,\dots,x_{N+1}),
\dots,
g_{N+1,n}^{1}(x_1,\dots,x_{N+1}))    
\end{align*}

The first 
$n$ coordinates of the obtained tuple are
$(x_1,\dots,x_n)$ and we 
denote this tuple  
by 
$(x_1,\dots,x_{n},
q_{1}'(x_1,\dots,x_n),\dots,
q_{n}'(x_1,\dots,x_{n}))$.
Let us show that 
$q_1',\dots,q_{n}'$ satisfy (1)-(4) for $N:= N+1$.
Condition (1) follows from the fact that $h_{01}'$ preserves $\Gamma$.
Condition (2) follows from the 
fact that $q_{1},\dots,q_{n}$ satisfy (2) and $h_{01}'(x,y,z)\ge x$.

%Let us calculate the next $N$ coordinates.
For $i\in [N]$ we have
$$q_{i}'(x_1,\dots,x_{n}) = h_{01}'(
q_{i}(x_1,\dots,x_{n}),
f_{i}(x_1,\dots,x_{N+1}),
f_{i}(x_1,\dots,x_{N+1})).$$
Let us show that 
$q_{i}'$
 depends essentially only on 
$x_1,\dots,x_{i}$.
Since 
$q_{i}$ depends fictitiously on $x_{i+1},\dots,x_n$
and 
$q_{i}(x_1,\dots,x_n)\ge f_{i}(x_1,\dots,x_n)$, 
we have 
$q_{i}(x_1,\dots,x_{n})=q_{i}(x_1,\dots,x_{i})\ge f_{i}(x_1,\dots,x_{i})$.
If 
$f_{i}(x_1,\dots,x_{i})=2$ then 
$q_{i}(x_1,\dots,x_{i})=2$ 
and 
$q_{i}'(x_1,\dots,x_{n})=2$.
If $f_{i}(x_1,\dots,x_{i})=d\neq 2$, then 
$f_{i}(x_1,\dots,x_{N+1})=d$ and
$q_{i}(x_1,\dots,x_{i})\in\{d,2\}$.
Therefore, 
$q_{i}'(x_1,\dots,x_{n}) = q_{i}(x_1,\dots,x_{i})$
and 
$q_{i}'$ depends fictitiously on $x_{i+1},\dots,x_{n}$
and satisfies (3) and (4).
%Additionally we can check that 
%$q_{i}'(x_1,\dots,x_{n})\ge q_{i}(x_1,\dots,x_{i})\ge 
%f_{i}(x_1,\dots,x_{i})$.

%Let us calculate the $(n+N+1)$-th coordinate.
We have  $$q_{N+1}'(x_1,\dots,x_{n}) = h_{01}'(
q_{N+1}(x_1,\dots,x_{n}),
f_{N+1}(x_1,\dots,x_{N+1}),
f_{N+1}(x_1,\dots,x_{N+1})).$$
If 
$f_{N+1}(x_1,\dots,x_{N+1})=2$ then  
$q_{N+1}'(x_1,\dots,x_{n})$ equals 2, otherwise it is uniquely determined by $x_{1},\dots,x_{N+1}$, and 
$f_{N+1}(x_1,\dots,x_{n})= f_{N+1}(x_1,\dots,x_{N+1})$. 
That is, in all the cases $q_{N+1}'$ is uniquely determined by 
$x_1,\dots,x_{N+1}$
and satisfies (3) and (4).
%and the variables $x_{N+2},\dots,x_{n}$ are fictitious. 

We have $q_{i}'(x_1,\dots,x_{n}) = h_{01}'(
q_{i}(x_1,\dots,x_{n}),
g_{N+1,i}^0(x_1,\dots,x_{N+1}),
g_{N+1,i}^0(x_1,\dots,x_{N+1}))$
for $i>{N+1}$. 
Since $q_{i}$ is uniquely determined 
by $x_1,\dots,x_{i}$ and $f_{i}(x_1,\dots,x_{n})$, 
the same is true for $q_{i}'$, that is, $q_{i}'$ satisfies (4).
Thus, we defined operations
$q_{1}',\dots,q_{n}'$ satisfying conditions (1)-(4) with $N:=N+1$. 

Hence, by the induction 
there exist operations 
$q_1,\dots,q_n$ satisfying conditions (1)-(4) for $N =n$.
%If $N = n$ 
We may check that 
$(x_1,\dots,x_{n},q_{1}(x_1,\dots,x_n),\dots,
q_{n}(x_1,\dots,x_{n}))$ is a solution of the original instance, 
which completes the proof.
\end{proof}

The next lemma follows from Lemma~\ref{ReductionToQCSP2} and the fact 
that if $\Gamma$ is preserved by a semilattice then $\CSP(\Gamma,0,1)$ can be solved in polynomial time.
Nevertheless, to explain how a $01$-stable operation can be used 
in an algorithm we give an alternative proof.

\begin{lem}\label{InCONPForStable}
Suppose $\Gamma$ is preserved by $s_2$ and a 01-stable operation $g$.
Then 
$\QCSP(\Gamma)$ is in co-NP.
\end{lem}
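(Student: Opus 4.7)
\emph{Plan.} The natural route is to combine Lemma~\ref{ReductionToQCSP2} with the tractability of $\CSP(\Gamma)$. Given the input $I = \forall x_1 \exists y_1 \cdots \forall x_n \exists y_n \, \Phi$, first apply Lemma~\ref{ReductionToQCSP2} to produce an equivalent $\QCSP^2(\Gamma)$ instance
\[ I' \;=\; \forall x_1 \cdots \forall x_n \; \exists \vec{z} \; \Psi, \]
where $\Psi$ is a conjunction of relations from $\Gamma$ (one copy of $\Phi_i$ per universal variable, plus primed copies and equality constraints to constants), and the size of $I'$ is polynomial in the size of $I$. Since the semilattice $s_2 \in \Pol(\Gamma)$ ensures that $\CSP(\Gamma)$ is in P, it is now enough to observe that the complement of $I'$ is in \NP: guess a universal assignment $(a_1,\ldots,a_n) \in \{0,1,2\}^n$ as a polynomial-size certificate and run the polynomial-time $\CSP(\Gamma)$ solver on $\Psi[x_i \mapsto a_i]$, accepting iff the solver reports unsatisfiability. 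This places $\QCSP(\Gamma)$ in \coNP.

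The alternative presentation promised by the paper is to make the role of $h_{0}$ explicit algorithmically rather than hide it behind Lemma~\ref{ReductionToQCSP2}. Concretely, the plan is to prove directly the equivalence
\[ I \text{ is a yes-instance} \iff \forall\,(a_1,\ldots,a_n) \in \{0,1,2\}^n,\ \Phi[x_i \mapsto a_i] \text{ is satisfiable}, \]
and then invoke the polynomial-time $\CSP(\Gamma)$ algorithm once per guessed universal assignment. The forward direction is immediate from any Skolem strategy. For the harder backwards direction, construct the required Skolem functions $f_i(x_1,\ldots,x_i)$ inductively: for each prefix $(x_1,\ldots,x_i)$, start from per-assignment solutions indexed by the $3^{n-i}$ choices of $(x_{i+1},\ldots,x_n)$, aggregate their $y$-components using the semilattice $s_2$ to obtain a single tuple independent of $(x_{i+1},\ldots,x_n)$, and then apply the $0$-stable operation $h_0$ to ``glue'' this aggregate with the per-assignment solution at $(x_{i+1},\ldots,x_n)=(0,\ldots,0)$. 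Since $h_0(x,0)=x$ and $h_0(x,2)=2$, together with $s_2$ being absorbing at $2$, this coordinatewise construction yields Skolem functions with the correct dependency pattern; the output is a polymorphic combination of valid solutions and so still satisfies $\Phi$.

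The main obstacle is verifying that the inductive gluing actually produces a legitimate Skolem strategy respecting the alternation of quantifiers: that the $y_i$ output in fact depends only on $x_1,\ldots,x_i$ and that all constraints of $\Phi$ remain satisfied. This is exactly the content of the backwards direction of Lemma~\ref{ReductionToQCSP2}, and the identities governing $h_0$ and $s_2$ are what make the induction go through. Once this equivalence is established, co-NP membership is immediate from the polynomial-time solvability of the individual induced CSPs.
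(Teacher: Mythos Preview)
Your proposal is correct. Your first route---reduce via Lemma~\ref{ReductionToQCSP2} to a $\Pi_2$ instance and then observe that the complement is in \NP\ because $\CSP(\Gamma)$ is in P under the semilattice $s_2$---is exactly what the paper itself notes just before the lemma as the immediate derivation.

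The paper's written proof, however, is \emph{not} what you sketch as the ``alternative presentation''. Your second paragraph is really just a direct re-proof of the backwards direction of Lemma~\ref{ReductionToQCSP2} (aggregate over the $3^{n-i}$ tails with $s_2$, then glue with $h_0$). The paper's alternative is instead an \emph{online} co-NP algorithm that processes the prefix left to right: the oracle guesses $x_1=a_1$; then one solves the CSP instance $\Phi\wedge x_1{=}a_1\wedge x_2{=}\cdots{=}x_n{=}0$ three times to determine the set $Y$ of feasible values for $y_1$; if $|Y|=1$ one fixes $y_1$ to that value, and if $|Y|>1$ then $2\in Y$ by the semilattice and one fixes $y_1=2$. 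The point of $h_0$ here is local: applying $h_0$ to an arbitrary Skolem solution and the particular CSP solution with $y_1=2$ and $x_{\ge 2}=0$ shows that fixing $y_1=2$ cannot destroy a yes-instance. One then lets the oracle guess $x_2$ and repeats. So the paper's alternative avoids blowing the instance up by a factor of $n$ and instead uses $h_0$ step-by-step to justify a deterministic existential strategy; your approach buries the same use of $h_0$ inside the global reduction of Lemma~\ref{ReductionToQCSP2}. Both are valid; the paper's version is more explicitly algorithmic, while yours is cleaner as a one-line corollary.
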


\begin{proof}
Suppose we have an instance 
$\forall x_1\exists y_1 \forall x_2 \exists y_2 \dots \forall x_{n} \exists y_{n} \Phi$.
We can use an oracle to choose an appropriate value for $x_{1}$ (let this value be $a_{1}$).
Then we need to find an appropriate value for $y_{1}$, such that we can use an oracle for $x_2$ and continue.
We want to be sure that if the instance holds then it holds after fixing $y_{1}$. 

To find out how to fix $y_{1}$ we check the satisfability of the two instances 
%\exists y_1 \exists x_2 \exists y_2 \dots \exists x_{n} \exists y_{n} 
$$\Phi\wedge x_{1} = a_{1} \wedge x_{2} = x_{3}=\dots=x_{n} = 0,$$
%\exists y_1 \exists x_2 \exists y_2 \dots \exists x_{n} \exists y_{n} 
$$\Phi\wedge x_{1} = a_{1} \wedge x_{2} = x_{3}=\dots=x_{n} = 1.$$
These are CSP instances, which can be solved in polynomial time 
because the semilattice preserves $(\Gamma,0,1)$.
We check whether both have a solution with $y_{1}=0$, $y_{1} =1$, $y_{1}=2$ (we solve six instances).
Let $Y$ be the set of possible values for $y_{1}$ such that both instances have solutions.
If $Y$ is empty, then the QCSP instance does not hold and we are done.
If $|Y|=1$, then we fix $y_{1}$ with the only value in $Y$. Obviously, 
the fixing of $y_{1}$ cannot transform  
the QCSP instance that holds into an instance that does not hold.
If $|Y|>1$ then $2\in Y$ due to the semilattice polymorphism. 
Let the solutions of the CSP instances with $y_{1} = 2$ be 
$(x_{1},y_{1},x_{2},y_{2},\dots,x_{n},y_{n}) = 
(a_{1},2, 0, c_{2},\dots,0, c_{n})$
and $(x_{1},y_{1},x_{2},y_{2},\dots,x_{n},y_{n}) = 
(a_{1},2, 1, d_{2},\dots,1, d_{n})$.
Assume that the QCSP instance has a solution 
$(a_{1},f_{1} (a_{1}), x_{2},f_{2}(x_{1},x_{2}), \ldots,x_{n},f_{n}(
x_{1},\ldots,x_n))$.
Then by applying the operation $g$ to this solution,
$(a_{1},2, 0, c_{2},\dots,0, c_{n})$, and 
$(a_{1},2, 1, d_{2},\dots,1, d_{n})$,
we get a (partial) solution of the $\QCSP(\Gamma)$
with $y_{1}=2$.

We proceed this way through the quantifier prefix, using an oracle to choose values for  $x_{2}, \ldots,x_{n}$,
while we solve CSP instances to 
choose appropriate values for 
$y_{2},y_{3},\ldots,y_{n}$.
\end{proof}

We say that a ternary relation $R$ is \emph{an AND-type} relation if 
$R\cap (\{0,1\}\times\{0,1\}\times\{0,1,2\})
=\begin{pmatrix}
0&0&1&1\\
0&1&0&1\\
0&0&0&1
\end{pmatrix}$, that is, it is row-wise the truth table of AND.
If $s_2$ preserves $R$, 
then it also contains $(2,2,0)$ and $(2,2,2)$.
We say that a ternary relation $R$ is \emph{an OR-type} relation if 
$R\cap (\{0,1\}\times\{0,1\}\times\{0,1,2\})
=\begin{pmatrix}
0&0&1&1\\
0&1&0&1\\
0&1&1&1
\end{pmatrix}$, that is, it is row-wise the truth table of OR.
Again, if $s_2$ preserves $R$, 
then it also contains $(2,2,1)$ and $(2,2,2)$. One can readily imagine AND/OR-type relations of arity $k+1$ higher than three built from AND/OR operations of arity $k$. 

\begin{lem}\label{ANDORHardness}
Suppose $\Gamma$ contains an AND-type relation and an OR-type relation,
and $\Gamma$ is preserved by $s_2$.
Then $\QCSP(\Gamma)$ is co-NP-hard.
\end{lem}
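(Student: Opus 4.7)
The plan is to reduce the co-NP-complete problem of complementary positive NAE-$3$-SAT to $\QCSP(\Gamma)$ by constructing, in polynomial time, a pp-definition of the relation $\tau_n$ from Definition~\ref{def:tau} with $\alpha=\{0,2\}$ and $\beta=\{1,2\}$. Once $\tau_n$ has a pp-definition of size polynomial in $n$, co-NP-hardness follows as sketched in the introduction: given a positive NAE-$3$-SAT instance with clauses $C_1,\ldots,C_m$ on variables $x_1,\ldots,x_n$, the QCSP instance $\forall x_1\ldots\forall x_n\;\tau_m(x_{a_1},x_{b_1},x_{c_1},\ldots,x_{a_m},x_{b_m},x_{c_m})$ is a yes-instance iff for every Boolean $\vec{x}\in\{0,1\}^n$ some clause is all-equal, because any play assigning a universal variable the value $2$ yields some triple in $\overline{S}$ and hence trivially satisfies $\tau_m$.

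The first step is to pp-define the unary building blocks $\alpha(v)\equiv R_{\wedge}(v,v,0)$ and $\beta(v)\equiv R_{\vee}(v,v,1)$. The intersection condition defining AND-type relations forbids $(1,1,0)\in R_{\wedge}$, so $\alpha(v)$ excludes $v=1$, while $(0,0,0)\in R_{\wedge}$ and the $s_2$-forced tuple $(2,2,0)\in R_{\wedge}$ together give exactly $v\in\{0,2\}$. Symmetrically, $\beta(v)$ yields $v\in\{1,2\}$.

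The heart of the proof is a polynomial-sized construction of $\tau_n$. I would adapt the recursive doubling scheme of Lemma~\ref{lem:cons} for the conservative case, in which $\tau_{2(k-1)}$ is built from two copies of $\tau_k$ by appending the triples $(0,0,w)$ and $(1,1,w)$ and constraining the auxiliary variable $w$ to lie in $\{0,1\}$. The key difficulty is that the unary relation $\{0,1\}$ is not preserved by $s_2$ (since $s_2(0,1)=2$) and hence is not pp-definable in our non-conservative setting. The plan is to replace the single variable $w$ by a pair $(w_0,w_1)$ with $\alpha(w_0)$ and $\beta(w_1)$ together with cross-linking constraints, built from $R_{\wedge}$, $R_{\vee}$ and the constants, that force $(w_0,w_1)$ into configurations encoding the two intended branches $w=0$ and $w=1$ without permitting the ``double $2$'' escape.

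The main obstacle is designing these cross-linking constraints so that the existential player cannot exploit the value $2$ to satisfy both halves of the doubling trivially (which would collapse the encoding and make $\tau_{2(k-1)}$ true even when no $\vec{x}_i$ is in $\overline{S}$). A careful analysis must also account for the fact that $R_{\wedge}$ and $R_{\vee}$ may contain additional tuples beyond the minimal $s_2$-closure of the AND/OR bases; these extras grant existential extra flexibility, so the link constraints must be chosen to remain robust whatever extras are present, relying only on the intersection conditions on $\{0,1\}^2\times\{0,1,2\}$ and on the $s_2$-forced behavior at $(2,2)$.
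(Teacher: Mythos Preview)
Your proposal has a genuine gap: you correctly identify that the recursive doubling trick of Lemma~\ref{lem:cons} fails because $\{0,1\}$ is not $s_2$-closed, but you never actually produce the ``cross-linking constraints'' that would repair it. Saying ``the main obstacle is designing these cross-linking constraints so that the existential player cannot exploit the value $2$'' is not a proof step; it is the statement of the remaining problem. Without an explicit gadget that provably blocks the escape $(w_0,w_1)=(2,2)$ while still allowing both intended branches, and without verifying this against \emph{arbitrary} AND-/OR-type relations (which may contain extra tuples outside $\{0,1\}^2\times\{0,1,2\}$), the reduction is not established.

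The paper avoids this difficulty entirely by \emph{not} building $\tau_n$ at all. Instead it uses $R_{\wedge}$ and $R_{\vee}$ as Boolean gates and writes a polynomial-size circuit for the predicate ``some triple is all-equal'': for each clause compute $u_i=R_{\wedge,3}(x_i,y_i,z_i)$ and $v_i=R_{\vee,3}(x_i,y_i,z_i)$, then set $u=R_{\vee,n}(u_1,\ldots,u_n)$ and $v=R_{\wedge,n}(v_1,\ldots,v_n)$ and impose $R_{\wedge,2}(u,v,v)$, which on $\{0,1\}$ forces $u=1$ or $v=0$, i.e.\ some triple is constantly $1$ or some triple is constantly $0$. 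The passage from $\{0,1\}$ to $\{0,1,2\}$ is handled by a one-line closure argument: since $s_2$ preserves the resulting relation $R$ and $s_2$ generates $A^t$ from $\{0,1\}^t$, the containment $\{0,1\}^t\subseteq R$ implies $A^t\subseteq R$, so universal quantification over the full domain is automatically correct. This sidesteps the $\{0,1\}$-unary issue completely and is where your attempted route and the paper's route diverge.
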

\begin{proof}

By $R_{and,2}$ and 
$R_{or,2}$ we 
denote an AND-type and an OR-type relation, respectively.
For $n=2,3,4,\dots$ put 
$$R_{and,n+1}(x_{1},\ldots,x_{n},x_{n+1},y) = 
\exists z \;R_{and,n}(x_{1},\ldots,x_{n},z)
\wedge R_{and,2}(x_{n+1},z,y),$$
$$R_{or,n+1}(x_{1},\ldots,x_{n},x_{n+1},y) = 
\exists z \; R_{or,n}(x_{1},\ldots,x_{n},z)
\wedge R_{or,2}(x_{n+1},z,y).$$

Let us define a relation $\xi_{n}$ for every $n$ by
\begin{align*}
\xi_{n}(x_{1},y_{1},z_{1},\dots,
x_{n},y_{n},z_{n})&=
\exists u\exists u_{1}\dots\exists u_{n}
\exists v\exists v_{1}\dots\exists v_{n}\;
R_{and,2}(u,v,v) \wedge \\
R_{and,3}(x_{1},y_{1},z_{1},u_{1})&\wedge
\dots\wedge
R_{and,3}(x_{n},y_{n},z_{n},u_{n})\wedge
R_{or,n}(u_1,\dots,u_{n},u)\wedge\\
R_{or,3}(x_{1},y_{1},z_{1},v_{1})&\wedge
\dots\wedge
R_{or,3}(x_{n},y_{n},z_{n},v_{n})\wedge
R_{and,n}(v_1,\dots,v_{n},v).
\end{align*}

It follows from the definition 
that 
$\xi_{n}\cap \{0,1\}^{3n}$ is defined by
$\mathrm{AE}_3(x_{1},y_{1},z_{1})\vee
\mathrm{AE}_3(x_{2},y_{2},z_{2})\vee
\dots
\vee
\mathrm{AE}_3(x_{n},y_{n},z_{n})$,
where 
$\mathrm{AE}_3 = \{(0,0,0),(1,1,1)\}$.

Suppose 
$R(y_1,\dots,y_{t}) = 
\xi_{n}(y_{i_1},y_{i_{2}},\dots,y_{i_{3n}})$, 
where 
$i_{1},\dots,i_{3n}\in\{1,2,\dots,t\}$.
Let us show that 
$\forall y_{1}\dots\forall y_{t}\;
\xi_{n}(y_{i_1},y_{i_{2}},\dots,y_{i_3n})$
holds if 
$\xi_{n}(y_{i_1},y_{i_{2}},\dots,y_{i_3n})$
holds for all $y_{1},\ldots,y_{t}\in\{0,1\}$.
%Suppose $\xi_{n}(y_{i_1},y_{i_{2}},\dots,y_{i_3n})$
%defines a relation $R$.
We need to prove that if $\{0,1\}^{t}\subseteq R$ then 
$A^{t}\subseteq R$, which follows from the fact that the semilattice $s_2$ preserves $R$.

Now we may encode the complement of \emph{Not-All-Equal 3-Satisfiability} using $\Gamma$.
This complement can be expressed by
a formula of the following form:
$$\forall y_{1}\dots\forall y_{t} \ 
\mathrm{AE}_3(y_{i_1},y_{i_{2}},y_{i_3})\vee
\dots
\vee
\mathrm{AE}_3(y_{i_{3n-2}},y_{i_{3n-1}},y_{i_{3n}}),$$
which is equivalent to 
$$\forall y_{1}\dots\forall y_{t}\;
\xi_{n}(y_{i_1},y_{i_{2}},\dots,y_{i_{3n}}).$$
Thus, we reduced a co-NP-complete problem to $\QCSP(\Gamma)$, which completes the proof.
\end{proof}

\begin{lem}\label{NoStrange2Hardness}
Suppose $\Gamma\subseteq\Inv(s_2)$, 
$\Gamma$
contains the constant $0$, 
$\delta(x,y,z) = (x\neq 0)\vee (y=z)$ 
and an AND-type relation $R_{2}$. 
Then $\QCSP(\Gamma)$ is co-NP-hard.
\end{lem}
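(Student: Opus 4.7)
The plan is to reduce the complement of Not-All-Equal $3$-Satisfiability (co-NAE3SAT) to $\QCSP(\Gamma)$ by mimicking the proof of Lemma~\ref{ANDORHardness}. That lemma required both an AND-type and an OR-type relation, but here we are only given the AND-type $R_2$; the missing OR-ingredient must be manufactured out of $\delta$ and constants.

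Concretely, I would set
\[ \Omega(x,y,z) := \exists w\; \delta(x,0,w)\wedge \delta(y,w,z)\wedge \delta(z,x,0)\wedge \delta(z,y,0). \]
The first two conjuncts form a short chain that propagates $z=0$ whenever $x=y=0$: since $\delta(a,b,c)$ forces $b=c$ when $a=0$, the conjunct $\delta(x,0,w)$ forces $w=0$ when $x=0$, and then $\delta(y,w,z)$ forces $z=w=0$ when additionally $y=0$. The last two conjuncts, each of the form $\delta(z,\cdot,0)=(z\neq 0)\vee (\cdot=0)$, enforce $x\le z$ and $y\le z$ in the Boolean order on $\{0,1\}$. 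A short case analysis gives $\Omega\cap\{0,1\}^{3}=\{(0,0,0),(0,1,1),(1,0,1),(1,1,1)\}$, the binary OR. Since $\Omega$ is pp-defined from $\delta\in\Inv(s)$ with constants, it is itself preserved by $s$.

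With $\Omega$ playing the role of an OR-type relation on $\{0,1\}$ and $R_{2}$ the role of the AND-type, I would iterate them exactly as in Lemma~\ref{ANDORHardness} to obtain $R_{and,n}$ and $R_{or,n}$, and then define $\xi_{n}$ by the same formula. A straightforward induction shows that $R_{and,n}\cap\{0,1\}^{n+1}$ and $R_{or,n}\cap\{0,1\}^{n+1}$ are the $n$-ary AND and OR, so $\xi_{n}\cap\{0,1\}^{3n}=\bigvee_{i=1}^{n}\mathrm{AE}_3(x_i,y_i,z_i)$.

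Finally, given an instance $\forall y_{1}\dots\forall y_{t}\;\bigvee_{j=1}^{n}\mathrm{AE}_3(y_{a_j},y_{b_j},y_{c_j})$ of co-NAE3SAT, I would form the QCSP instance $\forall y_{1}\dots\forall y_{t}\;\xi_{n}(y_{a_1},y_{b_1},y_{c_1},\dots,y_{a_n},y_{b_n},y_{c_n})$. The $t$-ary relation $R$ on the universal variables defined by this pp-formula is preserved by $s$; since $s(0,1)=2$, the semilattice generates $A=\{0,1,2\}$ from $\{0,1\}$ coordinatewise, so $\{0,1\}^{t}\subseteq R$ forces $A^{t}\subseteq R$. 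Hence the QCSP instance holds iff the co-NAE3SAT instance does, giving co-NP-hardness. The main obstacle is constructing $\Omega$; after that, the argument is a direct transcription of Lemma~\ref{ANDORHardness}, with the minor subtlety that $\Omega$ need not meet the strict OR-type definition on $\{0,1\}\times\{0,1\}\times\{0,1,2\}$, but only the $\{0,1\}^{3}$-restriction is ever used, with $s$-preservation handling the extension to $A$.
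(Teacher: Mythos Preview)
Your construction of $\Omega$ correctly computes OR on $\{0,1\}^3$, but the claim that ``only the $\{0,1\}^{3}$-restriction is ever used'' is where the argument breaks. Because $(0,1,2),(1,0,2),(1,1,2)\in\Omega$, the auxiliary variables $v_i$ produced by $R_{or,3}(x_i,y_i,z_i,v_i)$ are \emph{not} pinned to $\{0,1\}$: whenever $x_i\vee y_i\vee z_i=1$ one may take $v_i=2$. The AND-type hypothesis on $R_2$ says nothing about tuples with a $2$ among the first two coordinates, so $R_{and,n}(v_1,\dots,v_n,v)$ is uncontrolled on such inputs. Concretely, take the legitimate AND-type relation
\[
R_2=\{(0,0,0),(0,1,0),(1,0,0),(1,1,1)\}\cup\{(a,b,c):2\in\{a,b\}\},
\]
which is preserved by $s$. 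Then $(0,2,2)\in R_2$ and $(2,\dots,2,2)\in R_{and,n}$, so on any input where no triple is all-equal (hence $u=0$ is forced), setting every $v_i=2$ and $v=2$ satisfies $R_{and,2}(u,v,v)=R_{and,2}(0,2,2)$ and all other constraints of $\xi_n$. Your reduction therefore maps every co-NAE3SAT instance to a yes-instance.

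The paper sidesteps this obstacle entirely: rather than trying to synthesise an OR-type relation (which need not be pp-definable here---Lemma~\ref{NoStrange2Hardness} is applied precisely in the situation where no $1$-stable polymorphism, and hence by Theorem~\ref{ORTypeForStable} no OR-type relation, is guaranteed), it reduces from the complement of $1$-in-$3$ SAT using only $\delta$ and the AND-type $R_2$. The key is that the gadget $\delta_2$ built from $\delta$ satisfies $\delta_2(a,b,c,d)\Rightarrow d=1$ for every $1$-in-$3$ tuple $(a,b,c)\in\{0,1\}^3$, with $d=2$ genuinely excluded; this keeps the intermediate variables in $\{0,1\}$ and lets a single AND-chain do the job.
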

\begin{proof}
Here we will define a reduction from the complement of 
$\CSP(\{1\mathrm{IN}3\})$, which is known  to be NP-complete \cite{Schaefer},
where 
$1\mathrm{IN}3 = \begin{pmatrix}
1&0&0\\
0&1&0\\
0&0&1
\end{pmatrix}$.
First, 
put 
$$\delta_1(x_1,x_2,x_3,x_4) = \exists t\;\delta(x_1,x_{3},t) \wedge \delta(x_{2},t,x_4).$$
It is not hard to see that 
$\delta_1(x_1,x_2,x_3,x_{4}) = 
(x_{1}\neq 0)\vee 
(x_{2}\neq 0)\vee
(x_{3}=x_{4}).$
Put 
$$\delta_{2}(x_{1},x_{2},x_{3},x_{4}) = 
\delta_{1}(x_{1},x_{2},x_{3},x_{4})
\wedge 
\delta_{1}(x_{1},x_{3},x_{2},x_{4})
\wedge 
\delta_{1}(x_{2},x_{3},x_{1},x_{4}).
$$
By the definition, the first three variables of
$\delta_{2}$ are symmetric.
Suppose $(a,b,c,d)\in\delta_{2}$. 
It is not hard to see that
$a=b=c=0$ implies $d=0$.
Also 
$a=1, b=c=0$ implies $d=1$.
If there are at least two 1s in $(a,b,c)$ then 
$d$ can be arbitrary.
Put $R'(x,y) = R_{2}(x,x,y)$.
By the definition of an AND-type relation, 
$R'$ contains $(0,0),(1,1),(2,2),(2,0)$
but doesn't contain $(0,1),(0,2),(1,0),(1,2)$.
Put
$$\delta_{3}(x_{1},x_{2},x_{3},x_{4}) = 
\exists x_{1}'\exists x_{2}'\exists x_{3}'\;
\delta_{2}(x_{1}',x_{2}',x_{3}',x_{4})
\wedge 
R'(x_{1},x_1')
\wedge 
R'(x_{2},x_2')
\wedge 
R'(x_{3},x_3').$$
Then 
$(a,b,c,0)\in\delta_{3}$
for all tuples $(a,b,c)$ but the tuples from $1\mathrm{IN}3$.

%Let $R_2$ be the AND-type relation.
Again, we apply the recursive 
formula to build an $n$-ary ``and''
$$R_{n+1}(x_{1},\ldots,x_{n+1},z) = 
\exists t\; R_{n}(x_{1},\ldots,x_{n},t)\wedge
R_{2}(t,x_{n+1},z).$$

Define 
$\zeta_{n}(x_{1},\ldots,x_{3n})$ by
$$\exists z_{1}\dots\exists z_{n}\exists t\;
R_{n}(z_{1},\ldots,z_{n},t)
\wedge 
\delta_{3}(x_{1},x_{2},x_{3},z_{1})\wedge
\dots\wedge \delta_{3}(x_{3n-2},x_{3n-1},x_{3n},z_{n})
\wedge (t=0).
$$

Now we may encode the complement of $\CSP(\{1\mathrm{IN}3\})$ using $\Gamma$. If $\overline{1\mathrm{IN}3}$ is the complement of $1\mathrm{IN}3$ with respect to $\{0,1\}^3$, then we use
a formula of the following form:
$$\forall y_{1}\dots\forall y_{t} \ 
\overline
{1\mathrm{IN}3}(y_{i_1},y_{i_{2}},y_{i_3})\vee
\dots
\vee
\overline{1\mathrm{IN}3}(y_{i_{3n-2}},y_{i_{3n-1}},y_{i_{3n}}),$$
which is equivalent to 
$$\forall y_{1}\dots\forall y_{t}\;
\xi_{n}(y_{i_1},y_{i_{2}},\dots,y_{i_{3n}}).$$
Thus, we reduced a co-NP-complete problem to $\QCSP(\Gamma)$, which completes the proof.
\end{proof}

\begin{lem}
Suppose $\sigma = \{0,2\}^2 \cup \{1,2\}^{2}$,  
$$\begin{array}{c}
\sigma_0 = \{ (a_1,a_2,a_3) : a_1 \in A, a_2,a_3 \in \{1,2\}, (a_1 \in \{0,2\} \vee a_2=a_3) \},\\
\sigma_1 = \{ (a_1,a_2,a_3) : a_1 \in A, a_2,a_3 \in \{0,2\}, (a_1 \in \{1,2\} \vee a_2=a_3) \}. 
\end{array}$$
Then $\QCSP(\{\sigma, \sigma_0,\sigma_1,\{0\},\{1\}\})$ 
is co-NP-hard.
\label{lem:coNP-hardness-oldPSpace}
\end{lem}

\begin{proof}
Let us encode the complement of \emph{Not-All-Equal 3-Satisfiability} using $\{\sigma, \sigma_0,\sigma_1,\{0\},\{1\}\}$.
An instance $\mathcal I$ of this complement can be expressed by
a formula of the following form:
$$\forall x_{1}\dots\forall x_{t} \ 
\mathrm{AE}_3(x_{a_1},x_{b_{1}},x_{c_1})\vee
\dots
\vee
\mathrm{AE}_3(x_{a_s},x_{b_s},x_{c_s}),$$
where $\mathrm{AE}_3 = \{(0,0,0),(1,1,1)\}$.
Let
\begin{align*}
    \Phi_0= & \bigwedge_{1 \le i \le s} (\sigma_0(x_{a_i},y_{i-1},y_i)\wedge
    \sigma_0(x_{b_i},y_{i-1},y_i) \wedge \sigma_0(x_{c_i},y_{i-1},y_i))\\
    \Phi_1= & \bigwedge_{1 \le i \le s} (\sigma_1(x_{a_i},z_{i-1},z_i)\wedge
    \sigma_1(x_{b_i},z_{i-1},z_i) \wedge \sigma_1(x_{c_i},z_{i-1},z_i))\\
\Psi =& \forall x_{1}\dots\forall x_{t}\;
\exists y_{0}\exists y_1\dots\exists y_{s}\;
\exists z_{0}\exists y_1\dots\exists z_{s}\;
(y_{0} = 1\wedge \Phi_{0}\wedge z_{0} = 0\wedge \Phi_1 \wedge \sigma(y_{s},z_{s})).
\end{align*}
Let us show that $\mathcal I$ is equivalent to $\Psi$.

($I\in \mathrm{co\mbox{-}NAE3SAT}$ implies $\Psi \in \QCSP(\{\sigma,\sigma_0,\sigma_1\})$).
%Firstly, let us consider the case in which Universal plays some $x_i$ to $2$. Consider a clause in which $x_i$ sits, without loss of generality as $x_{a_i}$.
Suppose in $\Psi$ that Universal plays some sequence of $0$s, $1$s and $2$s for the $x_i$s. Let this be duplicated by Universal in $I$ where we additionally consider $2$ in $\Psi$ as $1$ in $I$. We may assume that the resulting instance of  $\mathrm{co\mbox{-}NAE3SAT}$ has a clause $j$ that is either $(0,0,0)$ or $(1,1,1)$. Suppose first it is $(1,1,1)$, which arises from a setting in $\Psi$ of $(x_{a_j},x_{b_j},x_{c_j})\in \{1,2\}^3$.  Existential plays in $\Psi$ as follows. $y_0,\ldots,y_{j-1},y_j,\ldots,y_s$ are set to $1$, $z_0,\ldots,z_{j-1}$ are set to $0$ and $z_j,\ldots,z_s$ are set to $2$. Let us look at the interesting conjunction in $\Phi_1$, which is $\sigma_1(x_{a_j},z_{j-1},z_j)\wedge
    \sigma_1(x_{b_j},z_{j-1},z_j) \wedge \sigma_1(x_{c_j},z_{j-1},z_j)$, and note that this is satisfied trivially as $(x_{a_j},x_{b_j},x_{c_j})\in \{1,2\}^3$. The fact that the other parts are satisfied is straightforward since $y_{i}=y_{i-1}$ and $z_{i}=z_{i-1}$. Suppose now that it is $(0,0,0)$, which  arises only from a setting in $\Psi$ of $(x_{a_j},x_{b_j},x_{c_j})=(0,0,0)$.  Existential plays in $\Psi$ as follows. $z_0,\ldots,z_{j-1},z_j,\ldots,z_s$ are set to $0$, $y_0,\ldots,y_{j-1}$ are set to $1$ and $y_j,\ldots,y_s$ are set to $2$. This case concludes as the previous.

($\Psi \in \QCSP(\{\sigma,\sigma_0,\sigma_1\})$ implies $I\in \mathrm{co\mbox{-}NAE3SAT}$). Let us consider Universal playing in $I$ and we duplicate this in $\Psi$ (in other words, we only consider Universal plays in $\Psi$ on $\{0,1\}$). We cannot have $y_0,\ldots,y_s=1$ and $z_0,\ldots,z_s=0$ because we violate $\sigma(y_s,z_s)$. Thus, at least at some point in one of these sequences, we make a transition to $2$. First, let us assume that it is in the $y_i$ at the point $y_{j-1}=1$ and $y_j=2$. Then the conjunction $\sigma_0(x_{a_j},y_{j-1},y_j)\wedge
    \sigma_0(x_{b_j},y_{j-1},y_j) \wedge \sigma_0(x_{c_j},y_{j-1},y_j)$ enforces that $(x_{a_j},x_{b_j},x_{c_j})=(0,0,0)$. Now let us assume that it is in the $z_i$ at the point  $z_{j-1}=0$ and $z_j=2$. Then the conjunction $\sigma_1(x_{a_j},z_{j-1},z_j)\wedge
    \sigma_1(x_{b_j},z_{j-1},z_j) \wedge \sigma_1(x_{c_j},z_{j-1},z_j)$ enforces that $(x_{a_j},x_{b_j},x_{c_j})=(1,1,1)$. Thus, in both cases, we deduce that $I$ is a no-instance of NAE3SAT.

Thus, we reduced a co-NP-complete problem to $\QCSP(\{\sigma, \sigma_0,\sigma_2,\{0\},\{1\}\})$, which completes the proof.
\end{proof}
\section{Strange structure 1}\label{StrangeOneSection}

In this section we will define a constraint language $\Gamma$
consisting of just 2 relations and constants such that $\Pol(\Gamma)$ has the EGP 
property but every pp-definition of $\tau_{n}$ (see Definition~\ref{def:tau}) has at least 
$2^{n}$ existential quantifiers. Moreover, we will show that 
$\QCSP(\Gamma)$ can be solved in polynomial time.

Let $A=\{0,1,2\}$. 
Recall that 
$\tau_{n}$ for $\alpha =\{0,2\}$ and $\beta = \{1,2\}$
is
the $3n$-ary relation 
defined by 
$$\{(x_{1},y_{1},z_{1},x_{2},y_{2},z_{2},\dots,x_{n},y_{n},z_{n})\mid 
\exists i \colon \{0,1\}\not\subseteq \{x_{i},y_{i},z_{i}\}\}.$$
By $\sigma_{n}$  we denote the $2n$-ary relation 
defined by 
$$\{(x_{1},y_{1},x_{2},y_{2},\dots,x_{n},y_{n})\mid 
\exists i \colon \{x_{i},y_{i}\} \neq\{0,1\}\}.$$
Note $\tau_{n}$ can be pp-defined from $\sigma_{n}$ but the obvious definition 
is of size exponential in $n$ (see \cite{MFCS2017}).
At the same time, $\sigma_{n}$ can be pp-defined from $\tau_{n}$ 
by identification of variables.
Let 
$$R_{and,2} = 
\{(0,0,0),(1,0,0), (0,1,0),(1,1,1)\}
\cup \{(2,a,b),(a,2,b)\mid a,b\in A\},$$
$$\delta = \{(0,0),(1,0),(2,0),(1,2),(2,2)\},$$
$$\Gamma = \{R_{and,2},\delta,\{0\},\{1\},\{2\}\}.$$

The relation $\rho$ of arity $2n$ omitting just one tuple
$1^{n}0^{n}$ can be pp-defined over $\Gamma$ as follows.
First, as usual, we define an $n$-ary ``and'' by the following recursive formula
$$R_{and,n+1}(x_{1},\ldots,x_{n},x_{n+1},y) = 
\exists z R_{and,n}(x_{1},\ldots,x_{n},z)
\wedge R_{and,2}(x_{n+1},z,y),$$
Then $\rho$ can be defined by 
\begin{multline*}
\rho(x_1,\ldots,x_n,y_1,\ldots,y_n) = 
\exists y_{1}'\dots\exists y_{n}'\exists z\exists t\;
R_{and,n}(x_{1},\dots,x_{n},z)
\wedge \\
\delta(y_1,y'_1)
\wedge 
\dots
\wedge 
\delta(y_n,y'_n)
\wedge
R_{and,n}(y_{1}',\dots,y_{n}',t)
\wedge R_{and,2}(z,z,t).
\end{multline*}

As a conjunction of $2^{n}$ 
copies of $\rho$
with permuted variables we can define the relation $\sigma_{n}$ but this definition will be of exponential size. 
Then, we know from \cite{ZhukGap2015} that 
$\Pol(\Gamma)$ has the EGP property, 
and from \cite{MFCS2017}
that $\tau_{n}$ can be pp-defined from $\Gamma$.
Below we will prove 
that any pp-definition of $\sigma_{n}$ and $\tau_{n}$ is of exponential size,
as well as the fact that 
$\QCSP(\Gamma)$ can be solved in polynomial time. In the following $<$ is the coordinatewise partial  order on $\{0,1\}^n$ built from $0<1$.
Recall that for a conjunctive formula $\Phi$
by 
$\Phi(x_{1},\dots,x_{n})$ we denote the set of all tuples 
$(a_{1},\dots,a_{n})$ such that 
$\Phi$ has a solution with 
$(x_{1},\dots,x_{n}) = (a_{1},\dots,a_{n})$.

\begin{lem}\label{StrangeStructureMainLemma}
Suppose 
$R=\Phi(x_{1},\ldots,x_{n})$, where $\Phi$ is a conjunctive formula over $\Gamma$,
$\alpha\in \{0,1\}^{n}\setminus R$,
there exists $\beta\in \{0,1\}^{n}\cap R$ such that 
$\beta<\alpha$ and 
there exists $\beta\in \{0,1\}^{n}\cap R$ such that 
$\beta>\alpha$.
Then there exists a variable $y$ in $\Phi$, such that 
for $R'= \Phi(x_{1},\ldots,x_{n},y)$ 
we have the following property
$$\beta\in \{0,1\}^{n}\wedge (\beta<\alpha)\wedge 
\beta d\in R' \Rightarrow  d = 0,$$
$$\beta\in \{0,1\}^{n}\wedge (\beta>\alpha)\wedge 
\beta d\in R' \Rightarrow  d = 1.$$
\end{lem}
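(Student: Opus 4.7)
My plan is to locate a variable $y$ of $\Phi$ that is forced to $0$ on every satisfying extension of each $\beta \in L := R \cap \{0,1\}^n \cap \{\beta : \beta < \alpha\}$ and forced to $1$ on every satisfying extension of each $\beta \in H := R \cap \{0,1\}^n \cap \{\beta : \beta > \alpha\}$. The main structural tool is the semilattice operation $s_2$: a short case check shows that $s_2$ (defined by $s_2(a,a)=a$ and $s_2(a,b)=2$ otherwise) preserves both $R_{and,2}$ and $\delta$, so $s_2 \in \Pol(\Gamma)$.

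Using this, for every $\beta \in R \cap \{0,1\}^n$ and every variable $v$ of $\Phi$ the set
\[
V_\beta(v) := \{\sigma(v) : \sigma \models \Phi,\ \sigma(x_i) = \beta_i \text{ for all } i\}
\]
is closed under $s_2$ and hence is either a singleton in $\{\{0\},\{1\},\{2\}\}$ or contains the value $2$. Call $v$ \emph{$c$-forced at $\beta$} when $V_\beta(v) = \{c\}$, so the lemma amounts to finding $y$ that is $0$-forced on every $\beta \in L$ and $1$-forced on every $\beta \in H$. A direct inspection of $\delta$ shows that whenever $\delta(u,v)$ occurs in $\Phi$, the variable $v$ is never $1$-forced and $v$ is $0$-forced at $\beta$ iff $u$ is. Inspection of $R_{and,2}$ shows that in $R_{and,2}(u,v,w)$, the variable $w$ is $1$-forced at $\beta$ iff both $u$ and $v$ are $1$-forced at $\beta$. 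These rules isolate $1$-forcedness as a monotone phenomenon that propagates only through the $R_{and,2}$-atoms, starting from the constant $1$-atoms and the $x_i$ with $\beta_i = 1$.

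I would then argue by contradiction: suppose no such $y$ exists. Then for every variable $y$ of $\Phi$ we can find either some $\beta' \in L$ with $V_{\beta'}(y) \neq \{0\}$ or some $\beta'' \in H$ with $V_{\beta''}(y) \neq \{1\}$. The idea is to combine satisfying extensions for well-chosen tuples in $L$ and $H$, using $s_2$ componentwise, to synthesise a satisfying extension whose free variables agree with $\alpha$, contradicting $\alpha \notin R$. Concretely, walk along a chain $\beta_1 = \gamma_0 < \gamma_1 < \cdots < \gamma_k = \beta_2$ in $\{0,1\}^n$ that passes through $\alpha$, and use the propagation rules to track how the ``$1$-forced set'' grows and the ``$0$-forced set'' shrinks as we move up. At the step where the chain crosses $\alpha$ the sets of forced variables must change in an incompatible way given the hypothesis that $\alpha$ is missing from $R$; this incompatibility exhibits the desired $y$.

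The main obstacle is controlling the free-variable coordinates under $s_2$-combinations: by default these get lifted to $2$ wherever the combined tuples disagree, and the constants $\{0\}$, $\{1\}$ in $\Gamma$ are themselves applied coordinate-by-coordinate, so one cannot simply ``pull back'' a $2$ to the Boolean value required by $\alpha$. The trick, as sketched above, is to do the combination along the chain one flip at a time rather than globally, so that at each intermediate tuple one stays in $\{0,1\}^n$. This requires exploiting the asymmetry between $R_{and,2}$ (the only source of $1$-forcedness) and $\delta$ (which can only push values towards $\{0,2\}$), and I expect most of the technical work to go into making this delicate case analysis airtight.
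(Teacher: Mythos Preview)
Your initial observations are sound and essentially match the paper's setup: the sets $V_\beta(v)$ are closed under $s_2$, and your propagation rules for $\delta$ and $R_{and,2}$ are correct. But the contradiction argument you outline has a genuine gap. You want to combine satisfying assignments along a chain through $\alpha$ to manufacture a solution over $\alpha$, and you correctly note that $s_2$ sends disagreeing Boolean coordinates to $2$. Your proposed fix---walk the chain one flip at a time---does not actually repair this: the chain \emph{must} pass through $\alpha$, and at that vertex there is no solution to combine with. So at best your argument would show that ``the construction breaks somewhere near $\alpha$,'' but you never explain how that breakage singles out a specific variable $y$ with the required two-sided forcing property. The vague phrase ``the sets of forced variables must change in an incompatible way'' is doing all the work, and it is not a proof.

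The paper's argument sidesteps this obstacle entirely by \emph{not} trying to build a solution at $\alpha$. Instead it aggregates over \emph{all} $\beta<\alpha$ and \emph{all} $\beta>\alpha$ simultaneously: for each variable $y$ set
\[
C_y=\bigcup_{\beta<\alpha}V_\beta(y),\qquad D_y=\bigcup_{\beta>\alpha}V_\beta(y),
\]
and define a single valuation $v$ by $v(y)=0$ if $C_y=\{0\}$; else $v(y)=1$ if $C_y\subseteq\{0,1\}$ and $D_y=\{1\}$; else $v(y)=2$. One checks directly that $v(x_i)=\alpha(i)$, so $v$ cannot satisfy $\Phi$, and then a short case analysis on the broken atom (constant, $\delta$, or $R_{and,2}$) shows that the only possible failure is an atom $R_{and,2}(y_1,y_2,y_3)$ with $v(y_1)=v(y_2)=1$ and $C_{y_3}=\{0\}$, $D_{y_3}=\{1\}$. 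That $y_3$ is the variable you want. This is much more direct: there is no chain, no combination of solutions, and the desired $y$ drops out of the case analysis rather than from tracking how forcing sets evolve along a path.
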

\noindent 
%Note, according to our conventions, that $y$ is implicitly existentially quantified in $R$. 
Informally speaking, this lemma says that 
whenever we have a tuple outside of a relation 
there should 
be a variable in its pp-definition distinguishing between smaller and greater tuples of the relation.
%Thus, the lemma states that for any tuple outside of $\sigma_{n}$ 
%there should be the corresponding variable of $\Phi$.

\begin{proof}
For every variable $y$ of $\Phi$ 
let $C_{y}$ be the set of all elements $d$ such that 
there exists $\beta\in \{0,1\}^{n}\cap R$, $\beta<\alpha$
and $\Phi$ has a solution 
with $y=d$ and $(x_{1},\ldots,x_{n}) = \beta$.
Similarly, let $D_{y}$ be the set of all elements $d$ such that 
there exists $\beta\in \{0,1\}^{n}\cap R$, $\beta>\alpha$
and $\Phi$ has a solution 
with $y=d$ and $(x_{1},\ldots,x_{n}) = \beta$.
Thus, we need to prove that there exists a variable $y$ such that $C_{y} = \{0\}$ and $D_{y} = \{1\}$.

Then we assign a value $v(y)$ to every variable $y$ in the following way:
if $C_{y} = \{0\}$ then put $v(y):=0$;
otherwise, if $C_{y} \subseteq \{0,1\}$ 
and $D_{y} = \{1\}$ then put $v(y):=1$;
otherwise put $v(y):=2$.

If $\alpha(i) = 0$ then $C_{x_{i}} =\{0\}$ and 
$v(x_{i}) = 0$.
If $\alpha(i) = 1$ and then $C_{x_{i}} \subseteq\{0,1\}$  
$D_{x_{i}} = \{1\}$. 
If $C_{x_{i}} = \{0\}$ then we found the required variable 
and we are done;
otherwise we have
$v(x_{i}) = 1$.
Since $\alpha\notin R$, 
$v$ cannot be a solution of $\Phi$, therefore
$v$ breaks at least one of the relations in $\Phi$.
We consider several cases:
\begin{enumerate}
    \item The corresponding relation is $y=a$ for some $a$.
    If $a =0$ then $C_{y}=\{0\}$ and $v(y) = 0$,
    if $a =1$ then $C_{y}=D_{y} = \{1\}$ and $v(y) = 1$, 
    if $a =2$ then $C_{y}=\{2\}$ and $v(y) = 2$.
    Thus, the evaluation $v$ cannot break the relation 
    $y=a$.
    \item The corresponding relation is 
    $R_{and,2}(y_1,y_2,y_3)$.
    Assume that $v(y_{1}) = 0$ and $v(y_{2})\in \{0,1\}$.
    Then $C_{y_{1}}=\{0\}$ and $C_{y_{2}}\subseteq\{0,1\}$,
    which means that on all tuples $\beta<\alpha$ the 
    value of $y_{3}$ should be equal to 0. Hence 
    $C_{y_{3}} =\{0\}$ and $v(y_3) = 0$.
    If $v(y_{1}) = 2$ or $v(y_{2})=2$,
    then we cannot break the relation $R_{and,2}$.
    The only remaining case is when 
    $v(y_1) = v(y_2) = 1$, which means that 
    $C_{y_{1}},C_{y_{2}}\subseteq\{0,1\}$
    and 
    $D_{y_{1}}= D_{y_{2}} = \{1\}$.
    This implies that $C_{y_{3}}\subseteq\{0,1\}$
    and 
    $D_{y_{3}} = \{1\}$. 
    If $C_{y_{3}} = \{0\}$, 
    then $y_{3}$ is the variable we were looking for.
    Otherwise, the evaluation of $y_{3}$ is 1, which agrees with the definition of $R_{and,2}$.
    \item The corresponding relation 
    is $\delta(y_1,y_{2})$.
    If $v(y_{1})=0$ then $C_{y_{1}} = \{0\}$,
    and by the definition of $\delta$ we have $C_{y_{2}} = \{0\}$,
    which means that $v(y_{2})=0$.
    If $v(y_{1})\neq 0$,
    it follows from the fact that 
    $v(y_2)$ cannot be outside of $\{0,2\}$.
\end{enumerate}
\end{proof}

We may check that 
operations $s_{0,2}$ and $g_{0,2}$ are polymorphisms 
of $\Gamma$. Moreover, we will show later that 
$\Pol(\Gamma) = \Clo(\{s_{0,2},g_{0,2}\})$.
Put 
$$h_{0,2}(x,y,z) = \begin{cases}
x, & \text{if $x=z = 0$}\\
x, & \text{if $x=1, y =z\in\{0,1\}$}\\
2, & \text{otherwise.}
\end{cases}.$$

Since, 
$s_2(x,y) = s_{0,2}(y,s_{0,2}(x,y))$ and
$h_{0,2}(x,y,z) = 
g_{0,2}(s_{0,2}(x,z),s_2(y,z))$,
the operations $s_{2}$ and $h_{0,2}$ are also polymorphisms 
of $\Gamma$.

The following lemma and corollary do not play a role in our main result but we include them for their intrinsic intriguingness as well as by way of a sanity check.
\begin{lem}
Any pp-definition of $\sigma_{n}$ over $\Gamma$, where $n\ge 3$,  has at least $2^n$ variables.
\end{lem}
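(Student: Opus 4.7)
My plan is to apply Lemma~\ref{StrangeStructureMainLemma} to each of the $2^n$ tuples in $\{0,1\}^{2n}\setminus\sigma_n$ and then argue that the resulting witness variables can be chosen pairwise distinct. Parametrize the forbidden tuples by $b\in\{0,1\}^n$, writing $\alpha_b$ for the tuple with pair $i$ equal to $(b_i,1-b_i)$. For a given $\alpha_b$, the hypotheses of Lemma~\ref{StrangeStructureMainLemma} are easy to verify: flipping any one coordinate of $\alpha_b$ from $1$ to $0$ creates a $(0,0)$ pair and hence yields $\beta<\alpha_b$ in $\sigma_n\cap\{0,1\}^{2n}$, while flipping a $0$ to $1$ yields $\beta>\alpha_b$ in $\sigma_n\cap\{0,1\}^{2n}$ via a $(1,1)$ pair. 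The lemma therefore produces a variable $y_b\in\mathrm{Var}(\Phi)$ whose value in any solution is forced to $0$ when $\bar x<\alpha_b$ (inside $\sigma_n\cap\{0,1\}^{2n}$) and to $1$ when $\bar x>\alpha_b$.

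To turn $2^n$ witnesses into $2^n$ distinct variables, I would iterate the lemma by successive exposure. Fix any order $\alpha_{b_1},\alpha_{b_2},\ldots$ on the forbidden tuples. At stage $k$, view $\Phi$ as a pp-definition of the augmented $(2n{+}k{-}1)$-ary relation $R_{k-1}=\Phi(\bar x, y_{b_1},\ldots,y_{b_{k-1}})$, in which the previously selected witnesses are now free variables. Apply Lemma~\ref{StrangeStructureMainLemma} inside $R_{k-1}$ to an augmented forbidden tuple of the form $(\alpha_{b_k},d_1,\ldots,d_{k-1})\in\{0,1\}^{2n+k-1}$, where each $d_j\in\{0,1\}$ is chosen so that (i) the lemma's hypothesis is met---using componentwise meets and joins of $\alpha_{b_k}$ with earlier $\alpha_{b_j}$'s (which land in $\sigma_n$ thanks to the $(0,0)$ or $(1,1)$ pairs produced at positions where the $\alpha$'s differ), extended consistently by the forced values of the $y_{b_j}$'s on those bounding tuples---and (ii) every previously exposed $y_{b_j}$ is ruled out as the new witness, because its value on some tuple $\le$ or $\ge$ the augmented forbidden tuple conflicts with what the lemma would require. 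The produced witness $y_{b_k}$ is therefore a variable of $\Phi$ distinct from $y_{b_1},\ldots,y_{b_{k-1}}$, and iterating to $k=2^n$ gives $|\mathrm{Var}(\Phi)|\ge 2^n$.

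The principal obstacle is step (ii): the forbidden tuples $\alpha_b$ form an antichain in $\{0,1\}^{2n}$, so no direct comparability argument between two different $\alpha_b,\alpha_{b'}$ forces $y_b\ne y_{b'}$. The whole point of the iterative exposure is to break this symmetry: each exposed $y_{b_j}$ pins down coordinate $2n{+}j$ to $0$ throughout the downset of $\alpha_{b_j}$ and to $1$ throughout its upset, and the $d_j$'s in the augmented forbidden tuple are tuned so that a reused variable would be required by the lemma to take both the value it is pinned to and the opposite value simultaneously. Verifying that the $d_j$'s can always be organized coherently across all stages is the main technical work; the polymorphisms $s_2$ and $g_{0,2}$ of $\Gamma$ (which preserve $\sigma_n$ and hence every pp-formula over $\Gamma$) play a supporting role, in particular to close meet/join tuples under polymorphism and to propagate the forced $y$-values needed for hypothesis (i).
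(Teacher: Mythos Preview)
Your plan has a real gap at exactly the point you flag as ``the main technical work''. Once you expose $y_{b_1},\ldots,y_{b_{k-1}}$ and try to apply Lemma~\ref{StrangeStructureMainLemma} to the augmented tuple $(\alpha_{b_k},d_1,\ldots,d_{k-1})$, nothing you have written rules out that the lemma hands you back one of the already-exposed $y_{b_j}$. Concretely, to exclude $y_{b_j}$ you would need, say when $d_j=1$, a tuple $\beta<(\alpha_{b_k},d_1,\ldots,d_{k-1})$ in $R_{k-1}$ whose $(2n{+}j)$-th coordinate equals $1$. But all you know about $y_{b_j}$ is that it is forced to $0$ on the strict downset of $\alpha_{b_j}$ and to $1$ on its strict upset; since the $\alpha_b$'s form an antichain, no $\bar\gamma<\alpha_{b_k}$ sits above $\alpha_{b_j}$, so you have no guarantee that $y_{b_j}=1$ is attainable on any such $\bar\gamma$. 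The same obstruction arises symmetrically for $d_j=0$. Your appeal to $s_2$ and $g_{0,2}$ is too vague here; you would need a concrete polymorphism identity that manufactures the required solution, and you do not provide one.

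The paper sidesteps the whole iterative-exposure machinery. It applies Lemma~\ref{StrangeStructureMainLemma} once per forbidden tuple $\alpha\in\{0,1\}^{2n}\setminus\sigma_n$ and then shows \emph{directly} that the resulting witness variables are pairwise distinct, using the ternary polymorphism $h_{0,2}(x,y,z)=g_{0,2}(s_{0,2}(x,z),s_2(y,z))$ of $\Gamma$. For two distinct forbidden tuples $\alpha_1,\alpha_2$ it constructs, in two short cases (depending on whether $\alpha_1$ and $\alpha_2$ agree in some coordinate), explicit tuples $\beta_1,\beta_2,\beta_3\in\sigma_n$ with $h_{0,2}(\beta_1,\beta_2,\beta_3)=\beta_1$, where the forced $y$-values on $\beta_1,\beta_2,\beta_3$ (coming from the lemma for $\alpha_1$ and $\alpha_2$) are $1,1,0$; since $h_{0,2}(1,1,0)=2$, this contradicts the forced value $1$ on $\beta_1$. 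This is the missing ingredient: the specific operation $h_{0,2}$ (built from $s_{0,2}$, not only $s_2$ and $g_{0,2}$) and the explicit choice of the $\beta$'s. Without something of this strength, your inductive scheme does not close.
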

\begin{proof}
Let the pp-definition be given by a conjunctive formula $\Phi$ such that $\sigma_{n} = 
\Phi(x_{1},\dots,x_{2n})$.
By Lemma~\ref{StrangeStructureMainLemma}
for any $\alpha\in\{0,1\}^{2n}\setminus \sigma_{n}$ there should be a variable $y$ 
such that if we define the relation
$R' = \Phi(x_{1},\ldots,x_{2n},y)$, 
then 
for every $\beta<\alpha$ (we consider only tuples from 
$\{0,1\}^{2n}$) we have 
$\beta d\in R'\Rightarrow d =0$ 
and 
for every $\beta>\alpha$
we have
$\beta d\in R'\Rightarrow d =1$.

Assume that one variable $y$ can be used for two different tuples
$\alpha_{1},\alpha_{2}\in\{0,1\}^{2n}\setminus \sigma_{n}$.
We consider two cases.

Case 1. Assume that there is $i$ such that 
$\alpha_{1}(i) = \alpha_{2}(i)$. 
Without loss of generality we assume that 
$\alpha_{1}(1) = \alpha_{2}(1)$
and $\alpha_{1}(2n) \neq \alpha_{2}(2n)$.
Let us define tuples 
$\beta_{1},\beta_{2},\beta_{3}\in \sigma_{n}$.

Put 
$\beta_{1}(i) = 
\begin{cases}
1, & \text{if $i\in\{1,2\}$}\\
\alpha_{1}(i), & \text{otherwise}
\end{cases},$
$\beta_{2}(i) = 
\begin{cases}
1, & \text{if $i\in\{3,4\}$}\\
\alpha_{2}(i), & \text{otherwise}
\end{cases},$

$\beta_{3}(i) = 
\begin{cases}
\alpha_{1}(i), & \text{if $\alpha_{1}(i) = \alpha_{2}(i)$ or $i\le 4$.}\\
0, & \text{otherwise}
\end{cases}.$

Let us show that 
$h_{0,2}(\beta_{1},\beta_{2},\beta_{3}) = \beta_{1}$.
In fact, for the first two rows, reading down through the $2n$ rows of $h_{0,2}(\beta_{1},\beta_{2},\beta_{3})$, we have  
$h_{0,2}(1,0,0) = h_{0,2}(1,1,1)=1$.
For the next two rows 
we have 
$h_{0,2}(0,1,0) = 0$ and 
$h_{0,2}(1,1,1)=1$.
For the remaining rows 
we either use
$h_{0,2}(0,0,0) = 0$ and 
$h_{0,2}(1,1,1) = 1$, 
or 
$h_{0,2}(0,1,0) = 0$ and 
$h_{0,2}(1,0,0) = 1$.

Since $\beta_{1}>\alpha_{1}$, 
$\beta_{2}>\alpha_{2}$, 
$\beta_{3}<\alpha_{1}$,
by Lemma~\ref{StrangeStructureMainLemma}, 
$y$ should be equal to $1$ 
in any solution of $\Phi$ such that 
$(x_{1},\ldots,x_{2n}) \in\{\beta_{1},\beta_{2}\}$, 
and it should be equal to $0$ 
in any solution of $\Phi$ such that 
$(x_{1},\ldots,x_{2n})=\beta_{3}$.
Since
$h_{0,2}(\beta_{1},\beta_{2},\beta_{3}) = \beta_{1}$, $h_{0,2}(1,1,0) = 2$ and $h_{0,2}$ is a polymorphism of $\Gamma$, we get a contradiction to the uniqueness of $d$ in Lemma~\ref{StrangeStructureMainLemma}.
Case 2.
Assume that $\alpha_{1}(i)\neq \alpha_{2}(i)$ for every $i$.
Put 
\begin{align*}
\beta_{1}(i) &= 
\begin{cases}
1, & \text{if $i\in\{1,2\}$}\\
\alpha_{1}(i), & \text{otherwise}
\end{cases},
\beta_{2}(i) = 
\begin{cases}
1, & \text{if $i\in\{1,2\}$}\\
\alpha_{2}(i), & \text{otherwise}
\end{cases},
\beta_{3}(i) = 
\begin{cases}
1, & \text{if $i\in\{1,2\}$}\\
0, & \text{otherwise}
\end{cases},\\
\beta_{4}(i) &= 
\begin{cases}
1, & \text{if $i\in\{3,4\}$}\\
\alpha_{1}(i), & \text{otherwise}
\end{cases},
\beta_{5}(i) = 
\begin{cases}
\alpha_{1}(i), & \text{if $i\in\{1,2\}$}\\
0, & \text{otherwise}
\end{cases}.
\end{align*}

Since 
$h_{0,2}(1,1,1) =h_{0,2}(1,0,0) = 1$, 
$h_{0,2}(0,1,0) =0$, we have
$h_{0,2}(\beta_{1},\beta_{2},\beta_{3}) =\beta_{1}$.
Since $\beta_{1}>\alpha_{1}$
and
$\beta_{2}>\alpha_{2}$, 
by Lemma~\ref{StrangeStructureMainLemma}, 
$y$ should be equal to $1$ 
in any solution of $\Phi$ such that 
$(x_{1},\ldots,x_{2n}) \in\{\beta_{1},\beta_{2}\}$.
Since $h_{0,2}(1,1,a) = 1$ only if $a=1$,
$y$ should be equal to 1 on $\beta_{3}$
(in any solution of $\Phi$ 
such that $(x_{1},\ldots,x_{2n})=\beta_{3}$).
Since 
$h_{0,2}(\beta_{3},\beta_{4},\beta_{5}) = 
\beta_{3}$, and $y$ should be equal to $1$ on $\beta_{4}$ and equal to $0$ on $\beta_{5}$, 
we would obtain 
$h_{0,2}(1,1,0) =1$. However, in fact $h_{0,2}(1,1,0)=2$, which, since $h_{0,2}$ is a polymorphism of $\Gamma$, contradicts the uniqueness of $d$ in Lemma~\ref{StrangeStructureMainLemma}.

Thus, for every tuple $\alpha\in\{0,1\}^{2n}\setminus \sigma_{n}$ 
there exists a unique variable $y$, which completes the proof.
\end{proof}

Since $\sigma_{n}$ can be obtained from $\tau_n$ by identification of variables, 
we have the following corollary. 

\begin{cor}
Any pp-definition of $\tau_{n}$ over $\Gamma$
has at least $2^{n}$ variables.
\end{cor}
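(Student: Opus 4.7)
The plan is to derive this as an immediate consequence of the preceding lemma. The previous lemma established that any pp-definition of $\sigma_n$ over $\Gamma$ requires at least $2^n$ variables, and the relation $\sigma_n$ is obtained from $\tau_n$ by a simple variable identification. Concretely, observe that setting $z_i := y_i$ in $\tau_n(x_1,y_1,z_1,\dots,x_n,y_n,z_n)$ turns the condition $\{x_i,y_i,z_i\}\neq\{0,1\}$ into $\{x_i,y_i\}\neq\{0,1\}$, which is exactly the defining condition of $\sigma_n(x_1,y_1,\dots,x_n,y_n)$.

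Given this, I would argue by contrapositive. Suppose $\Phi(x_1,y_1,z_1,\dots,x_n,y_n,z_n)$ is any pp-definition of $\tau_n$ over $\Gamma$ using $N$ variables. Form the new formula $\Phi'$ obtained from $\Phi$ by syntactic substitution $z_i\mapsto y_i$ for all $i\in[n]$. This substitution does not introduce any new variables; it either leaves the number of variables unchanged or decreases it (if some $z_i$ was itself a quantified variable, it disappears). By the observation above, $\Phi'$ is a pp-definition of $\sigma_n$ over $\Gamma$ with at most $N$ variables. Applying the previous lemma, we conclude $N\geq 2^n$.

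There is essentially no obstacle here: the only small care needed is to verify that the substitution is a legitimate syntactic operation on pp-formulas (including handling the case where some $z_i$ appears as an existentially quantified variable rather than a free variable, but in the statement of the lemma the $z_i$ are free variables of the pp-definition of $\tau_n$, so substitution just identifies free variables and one checks this does not increase the total variable count). The proof is therefore effectively a two-line reduction.
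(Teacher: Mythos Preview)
Your proposal is correct and is essentially identical to the paper's own argument: the paper simply remarks that $\sigma_n$ is obtained from $\tau_n$ by identification of variables and invokes the preceding lemma, which is exactly the reduction you carry out.
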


Let us characterize the clone $\Pol(\Gamma)$.
First, let us define relations on the set $A$.
Put
$\epsilon_{2}(x_1,x_2) = (x_1=0)\vee (x_1=1\wedge x_2\neq 2)$,
$\zeta_{2}(x_1,x_2) = (x_1=0)\vee (x_1=1\wedge x_2=1)$,
$\epsilon_{n+1}(x_{1},\ldots,x_{n}, x_{n+1}) = 
(x_1=0) \vee (x_{1}= 1\wedge \epsilon_{n}(x_{2},\ldots,x_{n+1}))$,
$\zeta_{n+1}(x_{1},\ldots,x_{n}, x_{n+1}) = 
(x_1=0) \vee (x_{1}= 1\wedge \zeta_{n}(x_{2},\ldots,x_{n+1}))$.
Put $\Delta=\{\{1\}, \{0,1\}, 
\epsilon_{2},\epsilon_{3},\epsilon_{4},\ldots, 
\zeta_{2},\zeta_{3},\zeta_{4},\ldots\}$.

Suppose $B\subseteq \{1,2,\ldots,n\}$.
Suppose $\rho\subseteq A^{n}$ can be defined as a conjunction 
of relations from $\Delta$
and
 $\proj_{B} \rho\subseteq \{0,1\}^{|B|}$.
Then we define an operation 
$f_{B,\rho}(x_{0},x_{1},\ldots,x_{n})$ as follows.
If
$x_{i}=0$ for every $i\in\{0\}\cup B$ then it returns 0.
If 
$x_{0} = 1$ and $(x_{1},\ldots,x_{n})\in\rho$ then it returns 1.
In all other cases it returns 2.

By $\mathcal C$ we denote the set of all operations
that can be obtained from $f_{B,\rho}$ for all $B$ and $\rho$ by a permutation of variables (it is sufficient to move just the first variable).
We now begin our journey towards proving Theorem~\ref{StrangeCloneDefinition}, which says that
$\Pol(\Gamma) =\Clo(\{g_{0,2},s_{0,2}\})$.
\begin{lem}\label{ConjunctionDelta}
Suppose $\rho\subseteq A^{n}$, 
$(1,1,\ldots,1)\in\rho$, and
in every $(m+1)\times n$-matrix
, whose first $m$ rows are from $\rho$ and 
all columns are from $R_{and,m}$,
the last row is also from $\rho$.
Then $\rho$ can be represented as a conjunction of relations 
from $\Delta$.
\end{lem}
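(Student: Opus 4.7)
The plan is to show that for every $\alpha \in A^n \setminus \rho$ there exist a relation $R \in \Delta$ and a choice of variables from $x_1, \ldots, x_n$ such that $R(x_{i_1}, \ldots, x_{i_k})$ is satisfied by every tuple of $\rho$ but not by $\alpha$. Taking the conjunction of such separators over all $\alpha \notin \rho$ then represents $\rho$ as a conjunction of relations from $\Delta$.

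First I would distill two operational consequences from the closure hypothesis. With $m = 2$ and two $\{0,1\}$-valued tuples, the last row is their pointwise AND, so $\rho \cap \{0,1\}^n$ is closed under AND and is a meet-semilattice with top $(1,\ldots,1)$. With several copies of a single $v \in \rho$ having $v(i) = 2$, the column $(2,\ldots,2)$ allows any output in $R_{and,m}$, so $v[i \to a] \in \rho$ for every $a \in A$; iterating, each $2$ in a tuple of $\rho$ may be replaced by any value. The easy cases are then dispatched by unary separators: $\{0,1\}(x_i)$ when $\alpha(i) = 2$ and $\proj_i(\rho) \subseteq \{0,1\}$, and $\{1\}(x_i)$ when $\alpha(i) \ne 1$ and $\proj_i(\rho) = \{1\}$.

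Assume neither unary case applies. Let $S = \{i : \alpha(i) = 1\}$ and form $\hat\alpha := \bigwedge\{\beta \in \rho \cap \{0,1\}^n : \beta|_S = 1\}$. This lies in $\rho$, equals $1$ on $S$, and must differ from $\alpha$ at some coordinate $j \notin S$ with $\hat\alpha(j) = 1$ while $\alpha(j) \in \{0,2\}$; otherwise $\alpha \le \hat\alpha \in \rho$ would already force $\alpha \in \rho$, after using the $2$-replacement tool at the $\alpha$-coordinates equal to $2$. The $2$-replacement tool further gives that every $\beta \in \rho$ with $\beta|_S \in \{1,2\}^{|S|}$ satisfies $\beta(j) = 1$. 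The candidate separator is $\zeta_{|S'|+1}(x_{i_1}, \ldots, x_{i_{|S'|}}, x_j)$ when $\alpha(j) = 0$, and $\epsilon_{|S'|+1}(x_{i_1}, \ldots, x_{i_{|S'|}}, x_j)$ when $\alpha(j) = 2$, for a subset $S' \subseteq S$ and an ordering of $S'$ to be chosen so that no $\beta \in \rho$ exhibits a $2$ at a coordinate of $S'$ whose predecessors in the ordering are all $1$'s.

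The heart of the argument, and the main obstacle, is constructing $S'$ together with its ordering so that the $\zeta$ or $\epsilon$ relation is satisfied by all of $\rho$ while still excluding $\alpha$. I would build $S'$ greedily, adding coordinates of $S$ one at a time; an obstruction to adding coordinate $i$ would be a $\beta \in \rho$ equal to $1$ on the already-ordered prefix and to $2$ at $i$, but applying the $2$-replacement tool to such $\beta$ produces a $\{0,1\}$-tuple in $\rho$ that either lies above $\hat\alpha$ (contradicting its minimality at $j$) or furnishes a new unary separator, reducing the case to one already handled. The full strength of the $R_{and,m}$ closure for $m \ge 3$ enters here, since combining $\rho$-tuples whose $S$-patterns interleave $0$'s and $2$'s via columns of mixed form yields free-choice outputs that supply the comparison witnesses needed to close the argument.
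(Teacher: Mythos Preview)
Your two preliminary tools are correct: $\rho\cap\{0,1\}^n$ is closed under pointwise AND (take $m=2$), and if $v\in\rho$ has $v(i)=2$ then $v[i\to a]\in\rho$ for every $a$ (again $m=2$, since the column $(2,2,a)\in R_{and,2}$ while constant $\{0,1\}$ columns force their own value). The overall separation strategy is also the paper's. The gap is in your selection of the ``last coordinate'' $j$.

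Take $\rho=\epsilon_2=\{(0,0),(0,1),(0,2),(1,0),(1,1)\}$ and $\alpha=(1,2)\notin\rho$. Neither unary case applies: $\proj_2(\rho)=\{0,1,2\}$ and no projection equals $\{1\}$. Here $S=\{1\}$ and $\hat\alpha=(1,0)\wedge(1,1)=(1,0)$, so there is \emph{no} $j\notin S$ with $\hat\alpha(j)=1$. Your fallback ``otherwise $\alpha\le\hat\alpha$ forces $\alpha\in\rho$ via $2$-replacement'' cannot work: the $2$-replacement tool goes from a $2$ to an arbitrary value, not the other way, and $\hat\alpha$ contains no $2$'s to replace. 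Thus your mechanism for locating $j$ breaks exactly in the case where $\alpha$ has a $2$ at a coordinate whose projection is all of $A$; yet the correct separator here is $\epsilon_2(x_1,x_2)$ itself, with $j=2$. The subsequent greedy construction of $S'$ inherits this problem and is in any case too sketchy: the assertion that an obstruction $\beta$ (with $\beta=1$ on the prefix and $\beta(i)=2$) yields, after $2$-replacement, a tuple ``above $\hat\alpha$'' contradicting minimality at $j$ presupposes that such a $j$ has been found.

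The paper avoids this by induction on the arity. For each $\alpha\notin\rho$ it passes to a minimal coordinate set $I$ with $\proj_I(\alpha)\notin\proj_I(\rho)$; if $I$ is proper the inductive hypothesis applies to $\proj_I(\rho)$. When $I=\{1,\dots,n\}$, neighbours $\beta_j\in\rho$ differing from $\alpha$ only at $j$ exist, and a short $R_{and,2}$ argument forces (up to permutation) $\alpha=(1,\dots,1,a_n)$ with $a_n\in\{0,2\}$. A further matrix argument (using $R_{and,n}$ with one row per coordinate) then produces some $i<n$ with $\proj_i(\rho)\subseteq\{0,1\}$; one sets $x_i=1$ and recurses on the resulting $(n-1)$-ary relation, upgrading each $\Delta$-constraint in its representation by prefixing $x_i$ (so $\{0,1\}$ becomes $\epsilon_2$, $\epsilon_s$ becomes $\epsilon_{s+1}$, etc.). The first of these two matrix steps is what your argument is missing: it is what guarantees that $\alpha$ has exactly one non-$1$ coordinate, after which that coordinate is the only possible $j$ and no $\hat\alpha$ computation is needed.
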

\begin{proof}
We prove by induction on the arity of $\rho$.
If $n =1$, then 
$\rho\in\{\{1\},\{0,1\},\{0,1,2\}\}$.
Thus, either $\rho\in \Delta$, or $\rho$ is full.
Assume that $n\ge 2$.

We want to build a conjunction of relations, which we denote by $\Phi$. 
We start with $\Phi=\varnothing$.
For every tuple $\alpha = (a_{1},\ldots,a_{n})\notin\rho$
we add the corresponding constraints to $\Phi$ to exclude this tuple. If we can exclude every tuple then $\Phi$ defines $\rho$.

Let $I\subseteq\{1,2,\ldots,n\}$ be a minimal subset such that 
$\proj_{I}(\alpha)\notin \proj_{I}(\rho)$.
It is not hard to see that 
$\proj_{I}(\rho)$ satisfies all the assumptions of this lemma. Therefore, if 
$I\neq \{1,2,\dots,n\}$, then 
by the inductive assumption
$\proj_{I}(\rho)$ can be represented as a conjunction 
of relations from $\Delta$.
Then, we may add this representation to $\Phi$ to 
exclude $\alpha$.
Thus, we may assume that 
$I=\{1,2,\dots,n\}$.

Since $I$ is minimal, 
there exist
$b_{1},\ldots,b_{n}$ such that 
$\beta_{j} = (a_{1},\ldots,a_{j-1},b_{j},a_{j+1},\ldots,a_{n})\in \rho$ for every $j$.

There should be at most one $i$ 
such that 
$(a_{i},b_{i},a_{i})\in R_{and,2}$.
Otherwise (if we have $i_{1}$ and $i_{2}$), we build a matrix 
whose rows are $\beta_{i_{1}},\beta_{i_{2}},\alpha$ for different $i_{1},i_{2}$.
It is not hard to see that every column of the matrix is from $R_{and,2}$, which means that 
$\alpha\in\rho$. Contradiction.
Hence, 
there exists at most one $i$ such that $(a_{i},b_{i},a_{i})\in R_{and,2}$.

Notice that $(a,b,a)\in R_{and,2}$ if 
$a\in \{0,2\}$ or $b=2$.
Then there should be exactly one $i$ such that 
$(a_{i},b_{i},a_{i})\in R_{and,2}$ (otherwise $a_{i} =1$ for every $i$, which contradicts the fact that 
$(1,1,\ldots,1)\in\rho$).
Without loss of generality, let this $i$ be equal to $n$, 
then $\alpha = (1,1,\ldots,1, a_{n})$, where $a_{n}\in\{0,2\}$.

Assume that 
$\proj_{i}(\rho)\not\subseteq\{0,1\}$ for every $i\in\{1,2,\dots,n-1\}$,
then we consider tuples
$\mathbf{b}^{1},\dots,\mathbf{b}^{n-1}\in \rho$ such that 
the $i$-th element of $\mathbf{b}^{i}$ is $2$,
and build a matrix whose rows are
$\mathbf{b}^{1},\dots,\mathbf{b}^{n-1}, \beta_{1}, \alpha$.
We can check that every column of this matrix is from $R_{and,n}$, 
therefore 
$\alpha\in\rho$, which contradicts our assumptions.

Thus, 
there exists 
$i\in\{1,2,\dots,n-1\}$ such that 
$\proj_{i}(\rho)\subseteq\{0,1\}$.
Without loss of generality we assume that 
$i=1$.
Put 
$\delta(x_{2},\ldots,x_{n}) = 
\rho(1,x_2,\ldots,x_{n})$.
Since any tuple from $R_{and,m}$ starting with $m$ 1s
should end with 1, 
$\delta$ satisfies all the assumptions of 
this lemma.
By the inductive assumption, 
$\delta$ can be represented as a conjunction of 
relations from $\Delta$.
In this representation we replace 
$\epsilon_{s}(x_{i_1},\dots,x_{i_s})$
by 
$\epsilon_{s+1}(x_{1},x_{i_1},\dots,x_{i_s})$,
$\zeta_{s}(x_{i_1},\dots,x_{i_s})$
by 
$\zeta_{s+1}(x_{1},x_{i_1},\dots,x_{i_s})$,
$x_{j}\in\{0,1\}$
by 
$\epsilon_{2}(x_{1},x_{j})$,
$x_{j}\in\{1\}$
by 
$\zeta_{2}(x_{1},x_{j})$
and add the obtained constraints to $\Phi$.
Since $a_{1} = 1$, 
we excluded the tuple $\alpha$ from the solution set by 
adding these constraints.
Note that we maintain the property that 
every tuple $\beta\in\rho$ satisfies $\Phi$.
Thus, we can exclude every tuple $\alpha\in A^{n}\setminus \rho$, which means that 
$\Phi$ defines $\rho$.
\end{proof}

\begin{lem}\label{StrangeCharLemma}
$\Pol(\Gamma) \subseteq  \mathcal C$.
\end{lem}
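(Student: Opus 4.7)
The plan is to take an arbitrary $n$-ary polymorphism $f$ of $\Gamma$ and show that, after a permutation of its arguments, it coincides with some $f_{B,\rho}$. Preservation of the three constant unary relations forces $f$ to be idempotent. Since $f$ preserves $R_{and,2}$ and the tuple $(1,1,2)$ is absent from $R_{and,2}$, the set $f^{-1}(1)\cap\{0,1\}^n$ is closed under coordinate-wise meet, so it has a minimum element $\vec m_1$; this element is nonzero because $f(0,\dots,0)=0$. The key first step is to prove that $\vec m_1$ has exactly one coordinate equal to $1$: let $i_0$ denote that coordinate, permute it to the first position, and rename $x_{i_0}$ to $x_0$.

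I expect the uniqueness of $i_0$ to be the main obstacle. I would prove it by contradiction: if two coordinates of $\vec m_1$ were equal to $1$, one combines $R_{and,2}$-preservation with $\delta$-preservation -- exploiting the $0/2$ ``shadows'' allowed by $(1,0),(1,2),(2,0),(2,2)\in\delta$ -- to manufacture a tuple strictly below $\vec m_1$ on which $f$ still takes value $1$, contradicting minimality. The delicate point is to choose shadow tuples so that both $R_{and,2}$ and $\delta$ columns are satisfied simultaneously while also producing a smaller witness of $f^{-1}(1)$.

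Once $x_0$ is isolated, I would define
\[
\rho := \{(x_1,\dots,x_n)\in A^n : f(1,x_1,\dots,x_n)=1\}
\]
and
\[
B := \{\, i \in \{1,\dots,n\} : \text{there exists } \vec x \text{ with } x_0=0,\ x_i\neq 0,\ f(\vec x)=2 \,\}.
\]
I would then verify on the full domain $A^{n+1}$ that $f(\vec x)=1$ forces $x_0=1$, and that $f(\vec x)=0$ forces $x_0=0$ together with $x_i=0$ for every $i\in B$; both facts lift the corresponding Boolean-cube statements (from the structure of $\vec m_1$ and the dual argument for $f^{-1}(0)$) to $A^{n+1}$ via $\delta$-preservation against the $0/2$-shadow of the tuple in question. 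A simple case split then shows $f$ must return $2$ on every other configuration, so $f=f_{B,\rho}$ after the permutation.

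To finish I would invoke Lemma~\ref{ConjunctionDelta} applied to $\rho$: the hypothesis $(1,\dots,1)\in\rho$ holds by idempotency, and the ``matrix-from-$R_{and,m}$'' closure of $\rho$ is immediate because $f$ preserves each $R_{and,m}$ and the value $x_0=1$ propagates through an AND configuration. This produces $\rho$ as a conjunction of relations from $\Delta$. Moreover, $\proj_B \rho \subseteq \{0,1\}^{|B|}$ follows from $\delta$-preservation: a tuple of $\rho$ with a $2$ in a coordinate $i\in B$, paired with its $0/2$-shadow through $\delta$, would force $f$ to take value $2$ on a configuration that contradicts the very definition of $B$. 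Combining these facts shows $f\in\mathcal{C}$; since $f$ was arbitrary, $\Pol(\Gamma)\subseteq\mathcal{C}$.
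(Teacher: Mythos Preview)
Your plan has a concrete error at the very first step. You claim that the minimum $\vec m_1$ of $f^{-1}(1)\cap\{0,1\}^n$ has exactly one coordinate equal to $1$, but this is false already for the semilattice $s_2\in\Pol(\Gamma)$: here $s_2^{-1}(1)\cap\{0,1\}^2=\{(1,1)\}$, so $\vec m_1=(1,1)$ has two $1$'s. Since $s_2$ is symmetric, no argument using only $R_{and,2}$ and $\delta$ can single out one of its coordinates as ``the'' projection coordinate; your proposed contradiction-by-shadow-tuples therefore cannot succeed. More generally, for any $f_{B,\rho}\in\mathcal C$ the minimum $\vec m_1$ has a $1$ in position $0$ \emph{and} in every position $i$ on which $\rho$ forces the value $1$ (e.g.\ via the conjunct $x_i\in\{1\}$ from $\Delta$).

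The paper bypasses this issue entirely. It first recalls that $\sigma_m$ is pp-definable from $\Gamma$ for every $m$ (established earlier in the section) and invokes the characterization of \cite{ZhukGap2015} to conclude that every $f\in\Pol(\Gamma)$ is $\{0,2\}\{1,2\}$-projective; this is what identifies the special coordinate. With that in hand, the paper defines $B=\{i:f(\vec a)=0\Rightarrow a_i=0\}$ (note this differs from your definition) and uses a single auxiliary relation $\omega$, built from $R_{and,s+1}$ together with $\delta$, to prove in one stroke that $f(\vec x)=0$ whenever $x_i=0$ for all $i\in\{0\}\cup B$ on the full domain $A^{n+1}$. The verification of $\proj_B\rho\subseteq\{0,1\}^{|B|}$ is done via $R_{and,2}$, not $\delta$. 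The missing ingredient in your outline is precisely the $\{0,2\}\{1,2\}$-projectivity supplied by the $\sigma_m$; once you add that, the remainder of your plan (defining $\rho$, checking the $R_{and,m}$-closure property, and invoking Lemma~\ref{ConjunctionDelta}) does converge to the paper's argument.
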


\begin{proof}
Suppose $f\in\Pol(\Gamma)$ is an operation of arity 
$n+1$.
Since $\sigma_{m}$ can be pp-defined from $\Gamma$
for every $m$, 
$f$ is $\{0,2\}\{1,2\}$-projective \cite{ZhukGap2015}. 
Without loss of generality we assume that it is an $\{0,2\}\{1,2\}$-projection to the first variable.
%Also, WLOG we assume that $f$ doesn't have dummy variables.

Let $B$ be the set of all $i\in\{1,2,\ldots,n\}$ such that 
$f(a_{0},\ldots,a_{n})=0$ implies $a_{i}=0$.
%Let $h(x)$ be the mapping such that 
%$h(0) = 0, h(1) = h(2) = 2$.
%It is easy to see that 
%$(a,h(a)) \in \delta$ for every $a$. 
Let $\mathbf a^{1},\ldots,\mathbf a^{s}$
be all the tuples such that $f(\mathbf a) = 0$.
Let $\omega(x_{1},\ldots,x_{s},z)$ be defined by 
$$\exists x_{1}'\dots\exists x_{s}'\; 
R_{and,s+1}(x_{1}',\ldots,x_{s}',z)\wedge \delta(x_{1},x_{1}')\wedge \dots\wedge \delta(x_{s},x_{s}').$$
Let 
$\mathbf b = (0,b_{1},\ldots,b_{n})$ be a tuple such that
$b_{i} = 0$ if $i\in B$.
Let us build a matrix whose rows are tuples 
$\mathbf a^{1},\ldots,\mathbf a^{s},\mathbf b$.
By the definition of $B$, every column of this matrix is from $\omega$.
Since $f$ preserves $\omega$, the result of 
applying $f$ to the matrix (that is $(f(\mathbf a^{1}),\dots,f(\mathbf a^{s}),f(\mathbf b))$) should be from $\omega$.
But the first $s$ elements of the result equal 0, therefore $f(\mathbf b)=0$.
Thus, we proved that if
$x_{i}=0$ for every $i\in\{0\}\cup B$ then 
$f(x_{0},\ldots,x_{n})=0$.

Let $\rho$ be the set of all tuples $(a_{1},\ldots,a_{n})$ such that 
$f(1,a_{1},\ldots,a_{n}) = 1$.
Since $f$ preserves $\{1\}$, 
we have $(1,1,\ldots,1)\in \rho$.

Assume that for some tuple 
$\mathbf a = (a_{1},\ldots,a_{n})\in \rho$ and $i\in B$ we have $a_{i} = 2$.
Let $\mathbf c=(c_{1},\ldots,c_{n})$ be the tuple 
such that 
$c_{j} = 0$ if $j\in B\setminus\{i\}$ and $c_{j} = 2$ otherwise.
Let $\mathbf d=(d_{1},\ldots,d_{n})$ be the tuple 
such that 
$d_{j} = 0$ if $j\in B$ and $d_{j} = 2$ otherwise.
It is not hard to see that 
the rows $\mathbf a, \mathbf d, \mathbf c$ form a matrix whose columns are from $R_{and,2}$.
Therefore
$(f(1 \mathbf a), 
f(0 \mathbf d), 
f(0 \mathbf c))\in R_{and,2}$.
Since
$f(1 \mathbf a) =1$ and $f(0 \mathbf d) =0$,
we have
$f(0 \mathbf c) = 0$, which contradicts the definition of $B$. Thus, $\proj_{B}(\rho)\subseteq \{0,1\}^{|B|}$.

It remains to prove that $\rho$ can be represented as a conjunction of relations 
from $\Delta$.
Let us show that $\rho$ satisfies the assumptions 
of Lemma~\ref{ConjunctionDelta}.
Consider a matrix whose columns are from 
$R_{and,m}$ and whose first $m$ rows are from $\rho$.
Add a column with 1s in the beginning of the matrix
and apply $f$.
Since $f$ preserves $R_{and,m}$, 
the result should be from $R_{and,m}$.
Since the first $m$ elements are equal to 1, the last element 
should be equal to 1. Therefore, the last row of the matrix 
is from $\rho$.
It remains to apply Lemma~\ref{ConjunctionDelta} to complete the proof.
\end{proof}

%Below we will prove that 
%$\Pol(\Gamma)$ is generated by 
%$g_{0,2}$ and $s_{0,2}$ (see the definition in %Section~\ref{MainResults}).

\begin{lem}\label{StrangeBasis}
$\mathcal C\subseteq \Clo(\{g_{0,2},s_{0,2}\})$.
\end{lem}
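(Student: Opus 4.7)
The plan is to show that every $f_{B,\rho} \in \mathcal{C}$ arises as a term over $\{g_{0,2}, s_{0,2}\}$ by starting from the projection onto $x_0$ and incrementally encoding the conjuncts of $\rho$ and then the elements of $B$. Since $s_2$ and $h_{0,2}$ already lie in $\Clo(\{g_{0,2}, s_{0,2}\})$ (noted just before Lemma~\ref{StrangeCharLemma}), I shall freely use them as derived primitives.

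The base case is $f_{\emptyset, A^n}(x_0, x_1, \ldots, x_n) = x_0$, since the definition returns $x_0$ when $x_0 \in \{0, 1\}$ and $2$ when $x_0 = 2$, which is exactly the projection. The induction engine is the observation that if the term $F$ currently represents $f_{B', \rho'}$ and $\phi$ is any other term, then $g_{0,2}(F, \phi)$ returns $F$ if $F = 0$, returns $F$ if $F = 1$ and $\phi \neq 2$, and returns $2$ otherwise. Thus $g_{0,2}(F, \phi)$ exactly adjoins the constraint ``$\phi \neq 2$'' to the $x_0 = 1$ branch of $F$, i.e.\ to $\rho'$, without touching $B'$. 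One then designs $\phi$ for each kind of atom in $\Delta$: for $\{0,1\}(x_i)$ take $\phi = x_i$; for $\{1\}(x_i)$ take $\phi = h_{0,2}(x_0, x_i, x_0)$, which equals $2$ exactly when $x_0 = 2$ or $(x_0 = 1 \wedge x_i \neq 1)$; for $\epsilon_n$ and $\zeta_n$ use recursive gadgets defined by $\phi^{\epsilon}_2(y_1, y_2) = g_{0,2}(y_1, y_2)$, $\phi^{\zeta}_2(y_1, y_2) = h_{0,2}(y_1, y_2, y_1)$, and $\phi^{\star}_{k+1}(y_1, \ldots, y_{k+1}) = g_{0,2}(y_1, \phi^{\star}_k(y_2, \ldots, y_{k+1}))$ for $\star \in \{\epsilon, \zeta\}$. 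Using the identity that $g_{0,2}(y, z) \neq 2$ iff $(y = 0) \vee (y = 1 \wedge z \neq 2)$, a short induction mirroring the recursive definitions of $\epsilon_n$ and $\zeta_n$ gives $\phi^{\star}_n(\vec{y}) \neq 2 \Leftrightarrow \star_n(\vec{y})$.

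After fully constructing $\rho$, I incorporate each $i \in B$ by composing with $s_{0,2}(F, x_i)$. A direct case-check shows $s_{0,2}(F, x_i) = 0$ iff $F = 0$ and $x_i = 0$, equals $1$ iff $F = 1$ and $x_i \in \{0, 1\}$, and equals $2$ otherwise; this places $i$ into $B$ while (redundantly) forcing $x_i \in \{0, 1\}$ in $\rho$. The redundancy is harmless because the standing hypothesis $\proj_B \rho \subseteq \{0, 1\}^{|B|}$ guarantees $\rho$ already implies $x_i \in \{0, 1\}$ for every $i \in B$. Finally, since clones are closed under permutation of arguments, all of $\mathcal{C}$ is then recovered. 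The main delicate point I anticipate is the induction verifying $\phi^{\star}_n \neq 2 \Leftrightarrow \star_n$: pinning down the base cases requires small truth tables (especially for $h_{0,2}(y_1, y_2, y_1)$ against $\zeta_2$), but the inductive step reads off the $g_{0,2}$ definition at a glance and matches $\epsilon_{k+1}(y_1, \ldots) = (y_1 = 0) \vee (y_1 = 1 \wedge \epsilon_k(\ldots))$, and similarly for $\zeta$.
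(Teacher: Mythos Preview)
Your argument is correct and follows essentially the same strategy as the paper: build $f_{\emptyset,\rho}$ from the atomic constraints in $\Delta$ and then adjoin the $B$-part, using that $\proj_B\rho\subseteq\{0,1\}^{|B|}$ makes the extra $x_i\in\{0,1\}$ constraints redundant.

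The only organizational difference is in how the pieces are combined. The paper constructs each $f_{\emptyset,\delta}$ for $\delta\in\Delta$ as a standalone term and then merges them via the semilattice, using $f_{\emptyset,\rho_1\cap\rho_2}=s_2(f_{\emptyset,\rho_1},f_{\emptyset,\rho_2})$; likewise it builds $f_{B,\psi(B)}=\bigvee_{i\in B}s_{0,2}(x_0,x_i)$ separately and sets $f_{B,\rho}=s_2(f_{\emptyset,\rho},f_{B,\psi(B)})$. You instead thread a single accumulator $F$ through the computation, refining it one atom at a time via $F\mapsto g_{0,2}(F,\phi)$ and one index of $B$ at a time via $F\mapsto s_{0,2}(F,x_i)$. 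Both rest on the same observation that $g_{0,2}(u,v)\neq 2$ iff $u=0$ or $(u=1\wedge v\neq 2)$, which mirrors the recursive shape of $\epsilon_n$ and $\zeta_n$; your gadgets $\phi^\epsilon_n,\phi^\zeta_n$ are exactly the inner terms the paper uses inside its explicit formulas for $f_{\emptyset,\epsilon_n}$ and $f_{\emptyset,\zeta_n}$. The paper's modular version makes the ``conjunction via $s_2$'' identity explicit, while your iterative version is marginally more economical (you never need to build $f_{B,\psi(B)}$ as a separate object). Either way the content is the same.
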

\begin{proof}
We need to prove that 
every operation from $\mathcal C$ can be built 
from $g_{0,2}$ and $s_{0,2}$.
For $B\subseteq \{1,2,\ldots,n\}$
by $\psi(B)$ we denote the set of all tuples $\alpha\in A^{n}$
such that $\proj_{B}(\alpha) \in\{0,1\}^{|B|}$.
The following formulas show how to 
generate 
$f_{\varnothing,\epsilon_{n}}$, 
$f_{\varnothing,\zeta_{n}}$,
and 
$f_{B,\psi(B)}$
for all $n\ge 2$ and $B\neq\varnothing$:
\begin{align*}
f_{\varnothing, \{0,1\}}(x_0,x_1) =&g_{0,2}(x_0,x_1),\\
f_{\varnothing, \{1\}}(x_0,x_1) =& g_{0,2}(x_0,s_2(x_0,x_1)), \\
f_{\varnothing,\epsilon_{2}}(x_{0},x_{1},x_{2}) =& g_{0,2}(x_{0},g_{0,2}(x_{1},x_{2})),\\
f_{\varnothing,\epsilon_{n+1}}(x_{0},\ldots,x_{n+1}) =& g_{0,2}(x_{0},f_{\varnothing,\epsilon_{n}}(x_{1},\ldots,x_{n+1})),\\
f_{\varnothing,\zeta_{2}}(x_{0},x_{1},x_{2}) =& g_{0,2}(x_{0},g_{0,2}(x_{1},s_{0,2}(x_{2},x_{1}))),\\
f_{\varnothing,\zeta_{n+1}}(x_{0},\ldots,x_{n+1}) =& g_{0,2}(x_{0},f_{\varnothing,\zeta_{n}}(x_{1},\ldots,x_{n+1})),\\
f_{\varnothing,\psi(\varnothing)}(x_{0},x_{1},\ldots,x_{n})=&x_0,\\
f_{B,\psi(B)}(x_{0},x_{1},\ldots,x_{n})=&\bigvee_{i\in B} s_{0,2}(x_0,x_{i}),
\end{align*}
where by $\bigvee$ we mean the semilattice operation $s_2$.

It remains to show how to combine such operations.
For two relations 
$\rho_{1},\rho_{2}\subseteq A^{n}$, 
the following equality holds
$$f_{\varnothing,\rho_{1}\cap\rho_{2}}(x_{0},x_{1},\ldots,x_{n}) = s_2(f_{\varnothing,\rho_{1}}(x_{0},x_1,\ldots,x_{n}),f_{\varnothing,\rho_{2}}(x_{0},x_{1},\ldots,x_{n})).$$
Note that adding dummy variables to a relation $\rho$ is equivalent to adding dummy variables to the operation $f_{\varnothing,\rho}$.

To finish the proof it is sufficient to show that $f_{B,\rho}\in\Clo(\{g_{0,2},s_{0,2}\})$
for any $B\subseteq \{1,2,\ldots,n\}$
and $\rho\subseteq A^{n}$ such that 
$\proj_{B}(\rho)\subseteq\{0,1\}^{|B|}$
and $\rho$ is a conjunctions of relations 
from $\Delta$.
Since for every $\delta\in\Delta$ we showed how to generate 
$f_{\varnothing,\delta}$ and how to define conjunction (intersection), 
$f_{\varnothing,\rho}$ can be generated from $g_{0,2}$ and $s_{0,2}$.
Then 
$$f_{B,\rho}(x_{0},x_{1},\ldots,x_{n})) = s_2(f_{\varnothing,\rho}(x_0,x_{1},\ldots,x_{n}), 
f_{B,\psi(B)}(x_{0},x_{1},\ldots,x_{n})).$$%
\end{proof}

\begin{thm}\label{StrangeCloneDefinition}
$\Pol(\Gamma) = \mathcal C=\Clo(\{g_{0,2},s_{0,2}\})$.
\end{thm}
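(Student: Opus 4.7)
The plan is to observe that Lemmas~\ref{StrangeCharLemma} and \ref{StrangeBasis} already give the two inclusions
\[ \Pol(\Gamma) \subseteq \mathcal C \subseteq \Clo(\{g_{0,2},s_{0,2}\}), \]
so the theorem follows as soon as we establish the remaining inclusion $\Clo(\{g_{0,2},s_{0,2}\}) \subseteq \Pol(\Gamma)$. Since $\Pol(\Gamma)$ is itself a clone (closed under composition and containing all projections), it suffices to verify that each of the two generators $g_{0,2}$ and $s_{0,2}$ belongs to $\Pol(\Gamma)$, i.e.\ preserves every relation in $\Gamma = \{R_{and,2},\delta,\{0\},\{1\},\{2\}\}$.

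The unary constant relations $\{0\},\{1\},\{2\}$ are preserved by any idempotent operation, and both $g_{0,2}$ and $s_{0,2}$ are idempotent directly from their definitions. It remains to check preservation of the two non-unary relations, which is a finite case analysis. For $\delta = \{(0,0),(1,0),(2,0),(1,2),(2,2)\}$, I would verify that applying $s_{0,2}$ or $g_{0,2}$ coordinatewise to any two tuples of $\delta$ again lands in $\delta$; the key feature is that the second coordinate is always in $\{0,2\}$, and both operations send pairs in $\{0,2\}\times\{0,2\}$ into $\{0,2\}$, while the first coordinate constraint is vacuous. For $R_{and,2}$ (which contains all tuples with a $2$ in the first or second position, together with the four Boolean conjunction tuples), I would similarly check the finitely many pairs of tuples from $R_{and,2}$ and confirm closure under $g_{0,2}$ and $s_{0,2}$.

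Putting the four inclusions together yields the cycle
\[ \Pol(\Gamma) \subseteq \mathcal C \subseteq \Clo(\{g_{0,2},s_{0,2}\}) \subseteq \Pol(\Gamma), \]
so all three sets coincide. No step in the plan is an obstacle: the heavy lifting was done in the preceding two lemmas (the ``upper bound'' via Lemma~\ref{StrangeCharLemma} and the generation via Lemma~\ref{StrangeBasis}); what remains is a routine verification of preservation, essentially a table check over the small relations of $\Gamma$.
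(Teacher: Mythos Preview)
Your proposal is correct and follows exactly the paper's approach: chain the inclusions from Lemmas~\ref{StrangeCharLemma} and~\ref{StrangeBasis} and close the cycle by verifying that $g_{0,2}$ and $s_{0,2}$ preserve the relations of $\Gamma$. One small caveat in your sketch: the first-coordinate constraint in $\delta$ is not vacuous (note $(0,2)\notin\delta$), so the coordinatewise check for $\delta$ is slightly less trivial than you indicate, though it remains a routine finite verification.
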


\begin{proof}
The claim follows from the following inclusions
$$\Pol(\Gamma) \subseteq \mathcal C\subseteq \Clo(\{g_{0,2},s_{0,2}\})\subseteq \Pol(\Gamma).$$
The first inclusion is by Lemma~\ref{StrangeCharLemma}, 
the next inclusion is by Lemma~\ref{StrangeBasis},
and the last one follows from the fact that 
$g_{0,2}$ and $s_{0,2}$ preserve $R_{and,2}$ and $\delta$.
\end{proof}

\newcommand{\Solve}{\mbox{\textsc{Solve}}}
\newcommand{\SolveCSP}{\mbox{\textsc{SolveCSP}}}
\newcommand{\Break}{\State \textbf{break} }
\newcommand{\Output}{\mbox{Output}}

Below we present an algorithm 
that solves
$\QCSP^{2}(\Gamma)$ in polynomial time (see the pseudocode).
For an input 
$\forall x_1 \dots\forall x_{n} \exists y_1\dots\exists y_s \Phi$, 
the function $\Solve_1$
first checks whether $\Phi$ holds 
on $\mathbf x= (0,\dots,0)$, 
$\mathbf x=(1,\dots,1)$,
and on each
tuple containing exactly one 1. 
Then for every variable $y_{j}$ 
and every variable $x_{i}$ 
it calculates the set $D_{i,j}$ of 
possible values for $y_{j}$ when 
$\mathbf x = 1^{i-1}01^{n-i}$.
Finally, it checks another $s$ tuples determined by 
$D_{i,j}$.
Note that it would be an exponential algorithm if it just checks all possible $\mathbf x$.
Moreover, since the relation $\rho$ omitting exactly one tuple is pp-definable over $\Gamma$, the tuples we need to check could not be 
found without looking into the formula $\Phi$. 
By $h$ we denote the operation defined on subsets of $A$ by 
$h(B) = \begin{cases}
0, & \text{if $B = \{1\}$}\\
1, & \text{otherwise}
\end{cases}$.
By $\SolveCSP$ we denote a polynomial algorithm, solving 
constraint satisfaction problem for a constraint language preserved by the semilattice operation $s_2$: it returns true if it has a solution, it returns false otherwise.

\begin{algorithm}
\begin{algorithmic}[1]
\Function{Solve$_1$}{$\Theta$}
  \State{\textbf{Input:} $\QCSP^{2}(\Gamma)$ instance $\Theta = \forall x_1 \dots\forall x_{n} \exists y_1\dots\exists y_s \Phi.$}
    \If{$\neg\SolveCSP(\mathbf x = (0,\ldots,0)\wedge \Phi)$}
        \Return{false} \Comment{$\mathbf x = (x_{1},\dots,x_{n})$}
    \EndIf
    \If{$\neg\SolveCSP(\mathbf x = (1,\ldots,1)\wedge \Phi)$}
        \Return{false} 
    \EndIf
    \For{$i:=1,\ldots,n$}
        \If{$\neg\SolveCSP(\mathbf x = (\underbrace{0,\ldots,0}_{i-1},1,0,\dots,0)\wedge \Phi)$}
        \Return{false}         
        \EndIf
    \EndFor
    \For{$j:=1,\ldots,s$}
        \For{$i:=1,\ldots,n$}
            \State{$D_{i,j}:=\varnothing$}                    
            \State{$\mathbf{c}:=(\underbrace{1,\ldots,1}_{i-1},0,1,\dots,1)$}
            \For{$a\in A$}
                \If{$\SolveCSP(\mathbf x = \mathbf c\wedge y_{j} = a\wedge\Phi)$}
                    \State{$D_{i,j}:=D_{i,j}\cup\{a\}$}                    
                \EndIf    
            \EndFor
            \If{$D_{i,j}=\varnothing$}
                \Return{false}        
            \EndIf
        \EndFor
        \If{$\neg\SolveCSP(\mathbf x = (h(D_{1,j}),\ldots,h(D_{n,j}))\wedge \Phi)$}
            \Return{false}
        \EndIf
    \EndFor
    \Return{true}
\EndFunction
\end{algorithmic}
\end{algorithm}

\begin{lem}\label{ComplexityStrange1}
Function $\Solve_1$ solves $\QCSP^{2}(\Gamma)$ in polynomial time.
\end{lem}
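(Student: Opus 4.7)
The plan is to verify both the polynomial running time and the correctness of $\Solve_1$.

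Polynomial running time is immediate. Each call to $\SolveCSP$ runs in polynomial time because $\Gamma$ is preserved by the semilattice $s_{2}$ (Theorem~\ref{StrangeCloneDefinition}), and the algorithm makes only $O(ns)$ such calls: two before the outer loop, and at most $3n+1$ inside each of the $s$ iterations.

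For the forward direction of correctness, if $\Theta$ holds, then for every $\mathbf{x} \in A^n$ there exists $\mathbf{y}$ satisfying $\Phi(\mathbf{x}, \mathbf{y})$. Each $\SolveCSP$ query in $\Solve_1$ asks about satisfiability of $\Phi$ after fixing $\mathbf{x}$, and at most one $y_j$, to explicit values; restricting any witness for $\Theta$ yields a witness for the query. Hence every query succeeds, every $D_i$ is non-empty, and the algorithm returns true.

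The backward direction is the technical core. Assume every $\SolveCSP$ query succeeds. Let
\[ T \;=\; \{\, \mathbf{x} \in A^n : \exists \mathbf{y}\ \Phi(\mathbf{x}, \mathbf{y}) \,\}, \]
which is an $n$-ary invariant of $\Pol(\Gamma)$ because it is a projection of the pp-definable relation defined by $\Phi$. By Theorem~\ref{StrangeCloneDefinition} we have $\Pol(\Gamma) = \Clo(\{g_{0,2}, s_{0,2}\})$. The successful queries certify membership in $T$ of the tuples $(0,\ldots,0)$, $(1,\ldots,1)$, each $\mathbf{c}_i = (1,\ldots,1,0,1,\ldots,1)$ with $0$ at coordinate $i$, and, for every $j \in \{1,\ldots,s\}$, the tuple $\mathbf{h}^{(j)} = (h(D_1^{(j)}), \ldots, h(D_n^{(j)})) \in \{0,1\}^n$, where $D_i^{(j)}$ denotes the set $D_i$ computed in iteration $j$. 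I would use these to build, for each $j$, a Skolem function $F_j \colon A^n \to A$ in $\Pol(\Gamma)$ such that $(\mathbf{x}, F_1(\mathbf{x}), \ldots, F_s(\mathbf{x}))$ satisfies $\Phi$ for every $\mathbf{x}$. Guided by Lemma~\ref{StrangeCharLemma}, I would take $F_j$ of the form $f_{B_j, \rho_j} \in \mathcal{C}$, where $B_j = \{ i : D_i^{(j)} = \{1\} \}$ and $\rho_j$ is extracted from the witness to $\mathbf{h}^{(j)} \in T$ together with the $D_i^{(j)}$-information. Since $\mathbf{h}^{(j)}$ has a $0$ exactly at the positions in $B_j$, the last check in iteration $j$ is precisely what certifies that $F_j$ is well defined on the ``$1$-forcing'' input $\mathbf{h}^{(j)}$.

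The hard part will be verifying that this Skolem strategy works globally: for every constraint $R$ of $\Phi$ and every $\mathbf{x} \in A^n$, the tuple induced on the variables of $R$ by $(x_1, \ldots, x_n, F_1(\mathbf{x}), \ldots, F_s(\mathbf{x}))$ lies in $R$. Since each $F_j$ is a polymorphism of $\Gamma$, this reduces to expressing an arbitrary $\mathbf{x}$ through applications of $g_{0,2}$, $s_{0,2}$, and the derived $s_2$ to the explicit tuples known to lie in $T$, and then invoking the polymorphism closure of $T$. The main case analysis splits on whether each entry of $\mathbf{x}$ is $0$, $1$, or $2$, mirroring the case structure in the definitions of $g_{0,2}$ and $s_{0,2}$ and exploiting Lemma~\ref{ConjunctionDelta} to control the shape of $\rho_j$. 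Carrying this through yields $T = A^n$, and hence $\Theta$ holds.
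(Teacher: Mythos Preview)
Your polynomial-time argument and the forward direction are essentially fine; the only imprecision is that the queries fixing $y_j=a$ do not all succeed---only at least one value of $a$ does---but that is exactly what makes each $D_i$ non-empty, so the conclusion stands.

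The backward direction has a genuine gap. Your plan ultimately reduces to ``expressing an arbitrary $\mathbf x$ through applications of $g_{0,2}$, $s_{0,2}$, and $s_2$ to the explicit tuples known to lie in $T$,'' i.e., to showing that the tuples certified by the algorithm generate $A^n$ under $\Pol(\Gamma)$. But $\Pol(\Gamma)=\Clo(\{g_{0,2},s_{0,2}\})$ is $\{0,2\}\{1,2\}$-projective and hence has the EGP property, so no polynomial-size subset of $A^n$ can generate all of $A^n$ (or even all of $\{0,1\}^n$). The algorithm certifies only $O(n+s)$ tuples, so closure under polymorphisms cannot give $T=A^n$ on its own. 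Your Skolem functions $F_j=f_{B_j,\rho_j}$ are also underdetermined: you never specify $\rho_j$, and there is an arity mismatch ($f_{B,\rho}$ has $n{+}1$ arguments, not $n$). Invoking Lemma~\ref{ConjunctionDelta} does not help here; that lemma is about the shape of $\rho$ in the description of $\mathcal C$, not about producing solutions of $\Phi$.

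The paper's argument is by contradiction and hinges on Lemma~\ref{StrangeStructureMainLemma}. Assuming the algorithm returns true but $T\neq A^n$, one picks a \emph{minimal} $\alpha\in\{0,1\}^n\setminus T$ (reduction to $\{0,1\}^n$ uses that $s_2$ generates $A^n$ from $\{0,1\}^n$). Lemma~\ref{StrangeStructureMainLemma} then produces a specific variable $y=y_j$ of $\Phi$ that is forced to $0$ on every solution with $\mathbf x=\beta<\alpha$ and to $1$ on every solution with $\mathbf x=\beta>\alpha$. A short direct computation with $s_{0,2}$ and $g_{0,2}$ shows that for this particular $j$ the algorithm computes $D_i^{(j)}=\{1\}$ exactly when $\alpha(i)=0$, whence $(h(D_1^{(j)}),\ldots,h(D_n^{(j)}))=\alpha$; since this tuple was checked and the algorithm returned true, $\alpha\in T$, a contradiction. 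The essential idea you are missing is that the tuples $\mathbf h^{(j)}$ are not a fixed small generating set---they are \emph{adaptively} tied to $\Phi$, and Lemma~\ref{StrangeStructureMainLemma} is precisely what guarantees that among them sits every minimal missing tuple.
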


\begin{proof}
First, let us show that the algorithm actually solves the problem.
If the answer is false, then we found an evaluation 
of $(x_1,\ldots,x_{n})$ such that the corresponding CSP has no solutions, which means that the answer is correct.

Assume that the answer is true.
Let $R(x_{1},\ldots,x_{n})$ be defined by the formula 
$\exists y_1\dots\exists y_s \Phi$. 
We need to prove that $R$ is a full relation.
Assume the converse. 
Using the semilattice operation $s_2$ we can generate
$A^{n}$ from $\{0,1\}^{n}$, hence 
$\{0,1\}^{n}\not\subseteq R$.
Then let $\alpha$ be a minimal tuple from $\{0,1\}^{n}\setminus R$.
Without loss of generality we assume that 
$\alpha = 1^{k}0^{n-k}$. 
For every $i$ we put $\alpha_{i} = 1^{i-1}0 1^{n-i}$.
Since $\alpha$ is minimal, 
all the tuples smaller than $\alpha$ should be in $R$.
We checked that 
$R$ contains $(0,0,\dots,0)$ and all tuples with 
exactly one 1, 
%$(0,\dots,0,1,0,\dots,0),(0,0,\dots,0)\in R$, 
hence $\alpha$ contains at least two 1s.
Then by Lemma~\ref{StrangeStructureMainLemma}
there should be a variable $y$ 
such that for any $\beta<\alpha$ we have
$\beta d \in R' \Rightarrow d=0$, 
for any $\beta>\alpha$ we have
$\beta d \in R' \Rightarrow d=1$,
where
$R' = \Phi(x_{1},\ldots,x_{n},y)$.
Since $D_{i,j}\neq \varnothing$ for all $i,j$, $\alpha_{i}\in R$ for every $i$,
and for every $i>k$ we have $\alpha_{i}d\in R'\Rightarrow d=1$.
Assume that $y=x_{i}$ for some $i$. If $\alpha(i) = 0$ then 
$\alpha_{i}>\alpha$ and $\alpha_{i}d\in R'\Rightarrow d=1$
contradicts $y=x_{i}$.
If $\alpha(i) = 1$, then there is a tuple $\alpha'<\alpha$ such that
$\alpha'(i) = 1$, which contradicts
$\alpha'd\in R'\Rightarrow d=0$.
Hence $y\neq x_{i}$ and $y = y_{j}$ for some $j$.

Let $\beta = 01^{k-1}0^{n-k}$.
Since $\beta<\alpha$, we have $\beta0\in R'$.
Assume that
$D_{1,j}$ is equal to $\{1\}$,
then 
$\alpha_{1}d\in R'\Rightarrow d=1$.
Put $\gamma_{0} = s_{0,2}(\beta,\alpha_{1})=01^{k-1}2^{n-k}$.
Since $s_{0,2}$ preserves $R'$ and $s_{0,2}(0,1) = 2$,
we have $\gamma_{0}2\in R'$.
Put $\gamma_{i} = g_{0,2}(\alpha_{k+i},\gamma_{i-1})$
for $i=1,2,\dots,n-k$.
Since $g_{0,2}(1,2)=2$, we have $\gamma_{i}2\in R'$
for every $i$.
Note that $\gamma_{n-k}\in\{0,1\}^{n}$.
We can check that
$g_{0,2}(\alpha_{1}1,\gamma_{n-k}2)
= \alpha_{1}2$, which contradicts the fact that $D_{1,j}=\{1\}$.
In this way we can show that 
$D_{1,j},\dots,D_{k,j}$ are not equal to $\{1\}$.
We also know that 
$D_{k+1,j},\dots,D_{n,j}$ are equal to $\{1\}$.
Hence, the tuple 
$(h(D_{1,j}),\dots,h(D_{n,j})) = \alpha$ 
was checked in the algorithm, 
which contradicts the fact that $\alpha\notin R$.

It remains to show that the algorithm works in polynomial time.
It follows from the fact that in the algorithm we just solve 
$3\cdot s\cdot n + s +n +2$ CSP instances over a language preserved by the semilattice operation $s_2$.
\end{proof}

\begin{cor}\label{CorComplexityStrange1Pr}
$\QCSP(\Gamma)$  is in P.
\end{cor}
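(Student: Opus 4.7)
The plan is to compose Lemma~\ref{ReductionToQCSP2} with Lemma~\ref{ComplexityStrange1}. First, I would verify the hypotheses of Lemma~\ref{ReductionToQCSP2}: we need $\Gamma$ to be preserved by $s_2$ and by some $0$-stable operation $h_0$ (meaning $h_0(x,0)=x$ and $h_0(x,2)=2$). By Theorem~\ref{StrangeCloneDefinition} we have $\Pol(\Gamma)=\Clo(\{g_{0,2},s_{0,2}\})$, and the identity $s_2(x,y)=s_{0,2}(y,s_{0,2}(x,y))$ already noted in Section~7 shows that $s_2\in\Pol(\Gamma)$.

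Next, I would observe that $s_{0,2}$ itself can serve as the required $0$-stable operation. Directly from the definition of $s_{a,c}$, the value $s_{0,2}(x,0)$ is $x$ (since the second argument equals $a=0$), and $s_{0,2}(x,2)=2$ for every $x$: if $x=2$ then the rule $x=y$ returns $x=2$, while if $x\neq 2$ the fallback value $c=2$ is returned. Hence $s_{0,2}$ is $0$-stable and lies in $\Pol(\Gamma)$, so the hypotheses of Lemma~\ref{ReductionToQCSP2} are met.

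Applying Lemma~\ref{ReductionToQCSP2} produces, from an arbitrary $\QCSP(\Gamma)$ instance, an equivalent $\Pi_2$ formula of polynomial size whose quantifier-free part is a conjunction of pp-formulas over $\Gamma$; this is an instance of $\QCSP^{2}(\Gamma)$. Lemma~\ref{ComplexityStrange1} then solves this $\Pi_2$ instance in polynomial time using $\Solve_1$, and composing the two polynomial-time steps gives a polynomial-time algorithm for $\QCSP(\Gamma)$. There is no real obstacle here — the corollary is essentially a one-line consequence of the two preceding lemmas together with the clone description of Theorem~\ref{StrangeCloneDefinition}; the only thing to check is the trivial algebraic fact that $s_{0,2}$ satisfies the $0$-stability identities.
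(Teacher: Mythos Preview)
Your proposal is correct and follows essentially the same approach as the paper: reduce $\QCSP(\Gamma)$ to $\QCSP^{2}(\Gamma)$ via Lemma~\ref{ReductionToQCSP2}, then solve the latter with $\Solve_1$ via Lemma~\ref{ComplexityStrange1}. You simply spell out in more detail why the hypotheses of Lemma~\ref{ReductionToQCSP2} hold (namely that $s_2$ and the $0$-stable operation $s_{0,2}$ lie in $\Pol(\Gamma)$), which the paper leaves implicit.
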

\begin{proof}
Since $f(x,y,z) = s_{0,2}(x,y)$ is a $01$-stable operation, by Lemma~\ref{ReductionToQCSP2} 
$\QCSP(\Gamma)$ can be polynomially reduced to $\QCSP^{2}(\Gamma)$, 
and 
$\QCSP^{2}(\Gamma)$ can be solved by the function $\Solve_{1}$.
\end{proof}

\begin{cor}\label{CorComplexityStrange1}
$\QCSP(\Gamma_0)$ is in P for every finite $\Gamma_0\subseteq \Inv(\{g_{0,2},s_{0,2}\})$.
\end{cor}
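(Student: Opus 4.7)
The plan is to reduce $\QCSP(\Gamma_0)$ to $\QCSP(\Gamma)$ via the Galois correspondence and then invoke Corollary~\ref{CorComplexityStrange1Pr}. First I would observe that by Theorem~\ref{StrangeCloneDefinition} we have $\Pol(\Gamma) = \Clo(\{g_{0,2},s_{0,2}\})$, so $\Inv(\{g_{0,2},s_{0,2}\}) = \Inv(\Pol(\Gamma))$. Since both $g_{0,2}$ and $s_{0,2}$ are idempotent (a direct check from their definitions) and $\Gamma$ contains all constants $\{0\},\{1\},\{2\}$, the standard Bodnarchuk--Geiger--Kaluzhnin--P\"oschel Galois correspondence in the idempotent setting tells us that every relation in $\Inv(\Pol(\Gamma))$ is primitive positive definable from $\Gamma$. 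Hence every $R \in \Gamma_0$ admits a pp-definition over $\Gamma$.

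Next I would turn this into a polynomial-time reduction. Since $\Gamma_0$ is finite, for each $R \in \Gamma_0$ we fix once and for all a pp-definition $R(\vec{v}) \equiv \exists \vec{z}\, \phi_R(\vec{v},\vec{z})$ of constant size over $\Gamma$. Given an input instance
\[ \forall x_1 \exists y_1 \cdots \forall x_n \exists y_n\; \bigwedge_{i} R_i(\vec{v}_i) \]
of $\QCSP(\Gamma_0)$, I substitute each atom $R_i(\vec{v}_i)$ by its conjunctive body $\phi_{R_i}(\vec{v}_i,\vec{z}_i)$ on fresh existentially quantified auxiliary variables $\vec{z}_i$. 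Because pp-definitions contain only existential quantifiers and the $\vec{z}_i$ are fresh, their quantifiers can safely be inserted at the innermost position (just after $\exists y_n$), yielding a logically equivalent instance of $\QCSP(\Gamma)$ whose size is at most a constant factor (depending on $\Gamma_0$) larger than the original.

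Finally, by Corollary~\ref{CorComplexityStrange1Pr}, $\QCSP(\Gamma)$ is decidable in polynomial time, so the composition is a polynomial-time algorithm for $\QCSP(\Gamma_0)$.

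I do not foresee a serious obstacle: the only subtlety is ensuring that the introduced existential quantifiers are placed at a position where all of their free variables $\vec{v}_i$ have already been quantified, and placing them innermost trivially achieves this. The finiteness of $\Gamma_0$ is essential for constant-size pp-definitions and thus for the polynomial bound on the reduction.
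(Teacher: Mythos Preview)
Your proposal is correct and follows essentially the same approach as the paper: use Theorem~\ref{StrangeCloneDefinition} to conclude that every relation in $\Gamma_0$ is pp-definable from $\Gamma$, replace each atom by its (constant-size) pp-definition to obtain a polynomial reduction to $\QCSP(\Gamma)$, and then apply Corollary~\ref{CorComplexityStrange1Pr}. The paper's proof is slightly terser, deriving pp-definability directly from the inclusion $\Pol(\Gamma_0)\supseteq\Pol(\Gamma)$, but your more explicit treatment of the Galois correspondence and of where to insert the new existential quantifiers is a faithful expansion of the same argument.
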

\begin{proof}
By Theorem~\ref{StrangeCloneDefinition}, 
$\Pol(\Gamma) = \Clo(\{g_{0,2},s_{0,2}\})$.
Then $\Pol(\Gamma_0)\supseteq \Pol(\Gamma)$, 
which implies that 
each relation from $\Gamma_0$ can be pp-defined from $\Gamma$.
Therefore, $\QCSP(\Gamma_0)$ can be
polynomially reduced to $\QCSP(\Gamma)$ 
(we just replace every relation by its pp-definition).
Hence, 
by Corollary~\ref{CorComplexityStrange1Pr},
$\QCSP(\Gamma_0)$ is in P.
\end{proof}

% !TeX root = AWF_COMP.tex
\section{Strange structure 2}\label{StrangeTwoSection}
In this section we 
show that 
$\QCSP(\Gamma)$ is solvable in polynomial time 
if $\Gamma\subseteq \Inv(f_{0,2})$.
Note that 
$s_{0,2}(x,y) = f_{0,2}(x,y,y)$
and $s_2(x,y) = s_{0,2}(x,s_{0,2}(y,x))$.

Suppose $R = \Phi(x_{1},\ldots,x_{n})$,
where $\Phi$ is a conjunctive formula over a constraint language $\Gamma\subseteq\Inv(f_{0,2})$. 
For a variable $y$ of $\Phi$ we define a partial operation $F_{y}(x_{1},\ldots,x_{n})$ on $A$
%on $\{0,1\}$ 
as follows.
If $\alpha\in R$ and every solution of $\Phi$ with 
$(x_{1},\ldots,x_{n}) = \alpha$ has $y = c$,
where $c\in\{0,1\}$,
then $F_{y}(\alpha) = c$.
Otherwise we say that $F_{y}(\alpha)$ is not defined.
We say 
that $\alpha\in R\cap \{0,1\}^{n}$ is \emph{a minimal 1-set for a variable $y$} 
if $F_{y}(\alpha) = 1$, and $F_{y}(\beta) = 0$ for every $\beta<\alpha$ (every time we use $<$ or $\le$ we mean that both tuples are on $\{0,1\}$).

The following lemma proves that $F_{y}$ is monotonic.
\begin{lem}
Suppose $\alpha\le \beta$, $F_{y}(\alpha)$ and $F_{y}(\beta)$ are defined.
Then 
$F_{y}(\alpha)\le F_{y}(\beta)$.
\end{lem}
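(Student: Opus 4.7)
The plan is to argue by contradiction, using that $s_{0,2}\in\Pol(\Gamma)$ since $s_{0,2}(x,y)=f_{0,2}(x,y,y)$ and $\Gamma\subseteq\Inv(f_{0,2})$. Let $R'$ denote the $(n{+}1)$-ary relation $\Phi(x_1,\ldots,x_n,y)$; because $R'$ is itself pp-defined from $\Gamma$, it is preserved by $f_{0,2}$ and hence by $s_{0,2}$. Since $F_y$ only takes values in $\{0,1\}$, the conclusion $F_y(\alpha)\le F_y(\beta)$ can fail only if $F_y(\alpha)=1$ and $F_y(\beta)=0$; I will assume this and derive a contradiction.

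Unpacking the definitions gives two witnesses in $R'$: the tuple $(\alpha,1)$ (because every solution extending $\alpha$ has $y=1$, and there is at least one such solution) and the tuple $(\beta,0)$ (because every solution extending $\beta$ has $y=0$). Moreover the second fact is stronger: $(\beta,d)\in R'$ forces $d=0$.

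The key step is to apply $s_{0,2}$ coordinate-wise to the pair $(\beta,0),(\alpha,1)$ in that order, using that $\alpha\le\beta$ on $\{0,1\}^n$. For each index $i\le n$, either $\alpha(i)=\beta(i)$, in which case $s_{0,2}(\beta(i),\alpha(i))=\beta(i)$, or $\alpha(i)=0<1=\beta(i)$, in which case the second argument is $0$ and so $s_{0,2}(1,0)=1=\beta(i)$. On the last coordinate we have $s_{0,2}(0,1)$, and since $0\neq 1$ and the second argument is not $0$, this equals $2$. Therefore the image of $s_{0,2}$ applied to $(\beta,0),(\alpha,1)$ is $(\beta,2)$, which lies in $R'$ because $s_{0,2}$ preserves $R'$. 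This contradicts $F_y(\beta)=0$ and closes the argument.

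The only real point to be careful about is the evaluation $s_{0,2}(1,0)=1$ versus $s_{0,2}(0,1)=2$: these differ precisely because the defining clause ``$y=a$'' in $s_{a,c}$ refers to the second argument and $a=0$ here. Once this asymmetry is correctly used, the proof is a single polymorphism application and there is no substantive obstacle.
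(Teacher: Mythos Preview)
Your proof is correct and follows essentially the same approach as the paper: both assume $F_y(\alpha)=1$, $F_y(\beta)=0$, apply $s_{0,2}$ to the tuples $(\beta,0)$ and $(\alpha,1)$, and obtain $(\beta,2)\in R'$, contradicting that $F_y(\beta)$ is defined and equal to $0$. Your version is simply more explicit about the coordinate-wise verification that $s_{0,2}(\beta 0,\alpha 1)=\beta 2$.
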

\begin{proof}
Assume the contrary, 
then $F_{y}(\alpha) = 1$ and $F_{y}(\beta) = 0$.
We have
$s_{0,2}(\beta 0,\alpha 1) = \beta 2$, which means that 
there exists a solution of $\Phi$ with 
$(x_{1},\ldots,x_{n}) = \beta$ and $y = 2$,
hence $F_{y}(\beta)$ is not defined. Contradiction.
\end{proof}

\begin{lem}
There is at most one minimal 1-set for every variable $y$.
\end{lem}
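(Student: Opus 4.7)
The plan is to assume there exist two distinct minimal 1-sets $\alpha_1 \ne \alpha_2$ for the variable $y$, and derive a contradiction by applying the polymorphism $f_{0,2}$ componentwise to three carefully chosen full solutions of $\Phi$.

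First I would observe that $\alpha_1$ and $\alpha_2$ must be incomparable in the coordinate-wise order on $\{0,1\}^n$: if $\alpha_1 < \alpha_2$, the minimality clause for $\alpha_2$ forces $F_y(\alpha_1) = 0$, contradicting $F_y(\alpha_1) = 1$. Next, set $\alpha = \alpha_1 \wedge \alpha_2$ (coordinate-wise minimum on $\{0,1\}$). Then $\alpha < \alpha_1$ and $\alpha < \alpha_2$, so the minimality of each $\alpha_i$ gives $F_y(\alpha) = 0$; in particular $\Phi$ admits a full solution $s_\alpha$ with $(x_1,\dots,x_n) = \alpha$ and $y = 0$. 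Choose also full solutions $s_1$ and $s_2$ of $\Phi$ witnessing $F_y(\alpha_1) = 1$ and $F_y(\alpha_2) = 1$, both having $y = 1$.

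Now apply $f_{0,2}$ to the triple $(s_1, s_\alpha, s_2)$ variable by variable. Since $f_{0,2} \in \Pol(\Gamma)$ preserves every constraint of $\Phi$, the result is again a solution. On the $y$-coordinate one computes $f_{0,2}(1,0,1) = 2$ (since $x \ne y$ and the clause $y = z = 0$ fails). The one nontrivial step is the four-case verification on each coordinate $i$ that $f_{0,2}(\alpha_1(i), \alpha(i), \alpha_2(i)) = \alpha_1(i)$. The two ``flat'' cases $\alpha_1(i) = \alpha_2(i)$ are immediate from $x = y$; the case $\alpha_1(i)=0,\alpha_2(i)=1$ gives $f_{0,2}(0,0,1)=0$ again from $x=y$; and the critical case $\alpha_1(i)=1,\alpha_2(i)=0$ gives $f_{0,2}(1,0,0)=1$ precisely because $y = z = 0$, which is where the asymmetric definition of $f_{0,2}$ is used in an essential way. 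The resulting solution thus has $(x_1,\dots,x_n) = \alpha_1$ and $y = 2$, contradicting $F_y(\alpha_1) = 1$. The only real obstacle is this coordinate-wise case check, and it works only because of the specific rule $f_{a,c}(x,y,z) = x$ when $y = z = a$ in the definition of $f_{0,2}$.
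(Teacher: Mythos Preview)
Your proof is correct and follows essentially the same approach as the paper: both assume two incomparable minimal 1-sets $\alpha_1,\alpha_2$, set $\alpha=\alpha_1\wedge\alpha_2$, and apply $f_{0,2}$ componentwise to the triple $(\alpha_1 1,\alpha 0,\alpha_2 1)$ to produce $\alpha_1 2$, contradicting $F_y(\alpha_1)=1$. The paper's version simply compresses your four-case coordinate check into the single line $f_{0,2}(\alpha_1 1,\alpha 0,\alpha_2 1)=\alpha_1 2$.
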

\begin{proof}
Assume the contrary. Let $\alpha_{1}$ and $\alpha_{2}$ be two minimal 1-sets for $y$.
It follows from the definition that $\alpha_{1}$ and $\alpha_2$ should be incomparable. 
Let $\alpha = \alpha_1\wedge \alpha_2$
(by $\wedge$ we denote the conjunction on $\{0,1\}$).
Then $f_{0,2}(\alpha_{1}1,\alpha 0,\alpha_2 1) = \alpha_1 2$, which contradicts the fact that 
$F_{y}$ is defined on $\alpha_{1}$.
\end{proof}

\begin{lem}\label{MinimalSetExistence}
Suppose $\alpha\in\{0,1\}^{n}\setminus R$, $\alpha$ contains at least two $1$s, and 
$\beta\in R$ for every $\beta<\alpha$.
Then there exists 
a constraint $\rho(z_{1},\ldots,z_{l})$ in $\Phi$  
and $B\subseteq\{1,\ldots,l\}$ 
such that $\alpha = \bigvee_{i\in B} \alpha_{i}$, 
where $\alpha_{i}$ is the minimal 1-set for the variable $z_{i}$
(by $\vee$ we denote the disjunction on $\{0,1\}$).
\end{lem}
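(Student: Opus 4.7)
The plan is to construct a Boolean candidate $\mu \colon \Var(\Phi) \to \{0,1\}$ that agrees with $\alpha$ on $x_1,\ldots,x_n$, deduce it must violate some constraint of $\Phi$ because $\alpha \notin R$, and read off the required decomposition from the violated constraint.

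I set $\mu(y) = 1$ exactly when $y$ has a minimal 1-set $\alpha_y$ with $\alpha_y \leq \alpha$, and $\mu(y) = 0$ otherwise. For $y = x_j$ one has $F_{x_j}(\beta) = \beta(j)$ for every $\beta \in R$, so the minimal 1-set of $x_j$ is the unit tuple $e_j$; the hypothesis that $\alpha$ contains at least two $1$s ensures $e_j < \alpha$ whenever $\alpha(j) = 1$, hence $e_j \in R$ and $\mu(x_j) = \alpha(j)$. If $\mu$ satisfied every constraint of $\Phi$ it would be a satisfying assignment extending $\alpha$, contradicting $\alpha \notin R$. Therefore some constraint $\rho(z_1,\ldots,z_l)$ of $\Phi$ is violated by $\mu$. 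Put $B = \{i : \mu(z_i) = 1\}$; by construction, each $i \in B$ supplies a minimal 1-set $\alpha_{z_i} \leq \alpha$, so $\gamma := \bigvee_{i \in B} \alpha_{z_i}$ satisfies $\gamma \leq \alpha$.

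The remaining task is to prove $\gamma = \alpha$. Assuming for contradiction $\gamma < \alpha$, the hypothesis gives $\gamma \in R$, so I pick a solution $\pi$ of $\Phi$ with $\pi|_{x_1,\ldots,x_n} = \gamma$ together with, for each $i \in B$, a solution $\sigma_i$ extending $\alpha_{z_i}$ (satisfying $\sigma_i(z_i) = 1$ because $F_{z_i}(\alpha_{z_i}) = 1$). The aim is to combine $\pi$ with the $\sigma_i$ via repeated applications of $s_{0,2}$ and $f_{0,2}$, using the identities $s_{0,2}(a,b) = a$ for $b \leq a$ in $\{0,1\}$ (which keeps the $x$-assignment pinned to $\gamma$) and $f_{0,2}(x,y,z) = x$ for $y = z = 0$ (which allows controlled propagation of values from the $\sigma_i$), to produce a solution $\pi^*$ with $\pi^*|_{x_1,\ldots,x_n} = \gamma$ and $\pi^*(z_i) = \mu(z_i)$ for every $i$. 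Then $(\pi^*(z_1),\ldots,\pi^*(z_l)) \in \rho$ contradicts the choice of $\rho$, forcing $\gamma = \alpha$.

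The main obstacle lies in driving $\pi^*(z_i)$ to $0$ for $i \notin B$: monotonicity of $F_{z_i}$ together with $\mu(z_i) = 0$ yields $F_{z_i}(\beta) \neq 1$ for every $\beta \leq \alpha$ in $\{0,1\}^n$, but this still permits individual solutions at $\gamma$ to have $z_i = 1$. The construction must therefore select witness solutions at strict subtuples of $\gamma$ and iterate $f_{0,2}$-combinations to collapse these stray 1's to $0$ while preserving the $x$-assignment at $\gamma$ and the $z_i$-values for $i \in B$ at $1$. This collapsing step, which relies on the fine structure of $f_{0,2}$ and the availability of solutions at $\beta < \gamma$ from the lemma's hypothesis, is the delicate part of the proof.
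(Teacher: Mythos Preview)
Your proposal has a genuine gap precisely where you flag it: the ``collapsing step'' is not carried out, and in fact it cannot be carried out as stated. The trouble is that your Boolean-valued $\mu$ discards essential information. When $\mu(z_i)=0$, you correctly deduce $F_{z_i}(\beta)\neq 1$ for all $\beta<\alpha$, but this still allows $F_{z_i}(\gamma)$ to be \emph{undefined}---for instance, every solution at $\gamma$ could have $z_i=2$. In that situation no amount of combining solutions via $s_{0,2}$ or $f_{0,2}$ will produce $\pi^*(z_i)=0$, since neither operation can output $0$ unless a $0$ is already present among the relevant inputs. The same difficulty arises for $i\in B$: knowing $F_{z_i}(\alpha_{z_i})=1$ does not force $F_{z_i}(\gamma)=1$, because $F_{z_i}(\gamma)$ may again be undefined. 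So the solution $\pi^*$ you aim to build need not exist.

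The paper avoids this by using a \emph{ternary} assignment $v$ with $v(y)=0$ if $F_y(\beta)=0$ for all $\beta<\alpha$, $v(y)=1$ if $F_y$ is defined on all $\beta<\alpha$ and equals $1$ somewhere, and $v(y)=2$ otherwise. It then adds all projections of the constraints to $\Phi$ and picks a violated constraint $\rho(z_1,\ldots,z_l)$ of \emph{minimal arity}; this minimality supplies, for each coordinate $i$, a tuple in $\rho$ differing from $(v(z_1),\ldots,v(z_l))$ only at position $i$. With these near-miss tuples available, short polymorphism arguments using $s_2$ and $s_{0,2}$ rule out any $2$ among the $v(z_i)$. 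Once every $v(z_i)\in\{0,1\}$, one knows $F_{z_i}$ is defined on \emph{all} $\beta<\alpha$, so any single solution at $\gamma<\alpha$ automatically has $z_i=v(z_i)$ for every $i$---no combining is needed, and the contradiction is immediate. Your approach, by flattening $v=2$ into $\mu\in\{0,1\}$ and not selecting a minimal-arity constraint, gives up exactly the leverage that makes this work.
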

\begin{proof}
First, to every variable $y$ of $\Phi$ we 
assign a value $v(y)$ in the following way.
If $F_{y}(\beta) = 0$ for every $\beta<\alpha$ then we put $v(y):=0$.
Otherwise, if $F_{y}(\beta) \in \{0,1\}$ for every $\beta<\alpha$ then we put $v(y):=1$.
Otherwise, put $v(y):=2$.

If $\alpha(i) = 0$ then $F_{x_{i}}(\beta) = 0$ for every 
$\beta<\alpha$, which means that $v(x_{i}) = 0$.
If $\alpha(i) = 1$ then $F_{x_{i}}(\beta) \in\{0,1\}$ for every 
$\beta<\alpha$.
Since $\alpha$ has at least two 1, 
for some $\beta<\alpha$ we have 
$F_{x_{i}}(\beta) =1$,  
which means that $v(x_{i}) = 1$.
Thus we assigned the tuple $\alpha$ to 
$(x_{1},\dots,x_{n})$. 

Since $\alpha\notin R$ the evaluation $v$ cannot be a solution of 
$\Phi$, therefore it breaks at least one constraint from 
$\Phi$. Let us add to $\Phi$ all projections of 
all constraints we have in $\Phi$. 
Thus, for every constraint 
$C = \rho(z_{1},\ldots,z_{l})$ we add 
$\proj_{S}C$, where $S\subseteq \{z_{1},\ldots,z_{l}\}$.
Obviously, when we do this, we do not change 
the solution set of $\Phi$ and stay in $\Inv(f_{0,2})$.

Choose a constraint of the minimal arity $\rho(z_{1},\ldots,z_{l})$ that does not 
hold in the evaluation $v$, 
that is, 
$(v(z_{1}),\dots,v(z_{l}))\notin\rho$.
Let 
$(a_{1},\ldots,a_{l})=(v(z_{1}),\dots,v(z_{l}))$.
Since $\rho$ is a constraint of the minimal arity, 
the evaluation $v$ holds for every proper projection
of $\rho(z_{1},\ldots,z_{l})$, 
which means that 
for every $i$ there exists 
$b_{i}$ such that 
$(a_{1},\dots,a_{i-1},b_{i},a_{i+1},\dots,a_{l})\in\rho$.

Assume that $(a_1,\dots,a_l)$ has two 2, that is 
$a_{i} = a_{j} = 2$ for $i\neq j$.
Then the semilattice operation $s_2$ 
applied to 
$(a_{1},\ldots,a_{i-1},b_{i},a_{i+1},\dots,a_{l})$
and 
$(a_{1},\ldots,a_{j-1},b_{j},a_{j+1},\dots,a_{l})$
gives $(a_{1},\ldots,a_{l})$, which contradicts the fact that $s_2$ preserves $\rho$.

Assume that $a_{i}=2$ for some $i$. 
W.l.o.g. we assume that $a_{l}=2$.
By the definition, there should be a tuple $\beta<\alpha$ such that $F_{z_{l}}(\beta)$ is not defined.
Put $c_{i} = F_{z_{i}}(\beta)$ for every $i<l$, and $c_{l}=2$.
By the definition of $F_{z_{l}}(\beta)$, there should be a solution of $\Phi$ with 
$(x_{1},\ldots,x_{n}) = \beta$ and $z_{l} =2$, or 
two solutions of $\Phi$ with 
$(x_{1},\ldots,x_{n}) = \beta$ and $z_{l} =0,1$. 
Since $s_2$ preserves $\Gamma$, 
in both cases we have a solution of $\Phi$ with 
$(x_{1},\ldots,x_{n}) = \beta$ and $z_{l} =2$.
Note that 
$(z_{1},\ldots,z_{l}) = (c_{1},\ldots,c_{l})$ in this solution,
therefore $(c_{1},\ldots,c_{l})\in\rho$.
By the definition, $c_{i}\le a_{i}$ for every $i<l$.
We apply $s_{0,2}$ to the tuples 
$(a_{1},\ldots,a_{l-1},b_{l})$ and $(c_{1},\ldots,c_{l})$ to obtain the tuple $(a_{1},\ldots,a_{l})$, which is not from $\rho$. This contradicts the fact that $s_{0,2}$ preserves $\rho$.

Assume that $a_{i}\neq 2$ for every $i$. 
W.l.o.g. we assume that 
$a_{1} = \dots = a_{k} =1$ and 
$a_{k+1} = \dots = a_{l} =0$.
If $k=0$ and $(a_{1},\ldots,a_{l}) = (0,\dots,0)$,
then we consider a solution of $\Phi$ corresponding to 
$(x_{1},\ldots,x_{n}) = (0,\dots,0)$.
By the definition of $F_{z_{i}}$ we have 
$(z_{1},\ldots,z_{l})=(0,\dots,0)$ in this solution. 
Hence, $(0,\dots,0)\in\rho$, which contradicts our assumption.
Assume that $k\ge 1$.
For each $i\in\{1,2,\dots,k\}$
we define a tuple $\alpha_{i}$ as follows.
Since 
$F_{z_{i}}$ is defined on any tuple $\beta<\alpha$
and 
$F_{z_{i}}(\beta) = 1$ for some $\beta<\alpha$, 
there exists a minimal 1-set $\alpha_{i}\le\beta$ for $z_{i}$.
Assume that $\alpha' := \alpha_{1}\vee\dots\vee\alpha_{k}<\alpha$.
Consider a solution of $\Phi$ with 
$(x_{1},\ldots,x_{n}) = \alpha'$.
Since $F_{z_{i}}(\alpha')$ is defined,
$F_{z_{i}}(\alpha_{i})=1$ and 
$F_{z_{i}}$ is monotonic, 
we have 
$F_{z_{i}}(\alpha') = 1$ for every $i\in\{1,2,\dots,k\}$.
Therefore, $(z_{1},\ldots,z_{l}) = (a_{1},\ldots,a_{l})$
in this solution, which means that $(a_{1},\ldots,a_{l})\in\rho$ and contradicts the assumption.

Thus, $\alpha'\not<\alpha$. Since $\alpha_{i}\le\alpha$ for every $i$, we obtain 
$\alpha'\le\alpha$, and therefore 
$\alpha' = \alpha$, which completes the proof.
\end{proof}

\begin{lem}\label{OneCalculate}
Suppose $\alpha$ is a minimal 1-set for $y$,
$i\in\{1,2,\dots,n\}$, and 
$2^{i-1}02^{n-i}\in R$. 
Then 
$\alpha(i) = 1$ if and only if  $F_{y}(2^{i-1}02^{n-i})=0$.
\end{lem}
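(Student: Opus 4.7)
The plan is to treat the two implications separately, in each case deriving a contradiction by composing witness solutions with the polymorphisms $f_{0,2}$ and $s_{0,2}$, both of which preserve $\Gamma$ (since $\Gamma\subseteq\Inv(f_{0,2})$ and $s_{0,2}(x,y)=f_{0,2}(x,y,y)$). Write $\gamma=2^{i-1}02^{n-i}$. Fix a solution $\mathbf{u}_\alpha$ of $\Phi$ extending $\mathbf{x}=\alpha$ with $y=1$ (it exists since $F_y(\alpha)=1$), and introduce the auxiliary tuple $\beta\in\{0,1\}^n$ obtained from $\alpha$ by flipping its $i$-th coordinate to $0$. When $\alpha(i)=1$ we have $\beta<\alpha$, so the minimal $1$-set hypothesis gives $F_y(\beta)=0$ and hence a solution $\mathbf{u}_\beta$ extending $\mathbf{x}=\beta$ with $y=0$.

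For $(\Rightarrow)$, assuming $\alpha(i)=1$, I would show every solution at $\gamma$ has $y=0$; combined with $\gamma\in R$ this gives $F_y(\gamma)=0$. Suppose for contradiction a solution $\mathbf{u}_\gamma$ extends $\mathbf{x}=\gamma$ with $y$-value $d\in\{1,2\}$. Apply $f_{0,2}$ coordinatewise to $(\mathbf{u}_\alpha,\mathbf{u}_\beta,\mathbf{u}_\gamma)$: on $x_j$ with $j\neq i$ we obtain $f_{0,2}(\alpha_j,\alpha_j,2)=\alpha_j$; on $x_i$ we obtain $f_{0,2}(1,0,0)=1$ (the clause $y=z=0$ fires); and on $y$ we obtain $f_{0,2}(1,0,d)=2$, because $x\ne y$ and $d\ne 0$. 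The result is a solution extending $\mathbf{x}=\alpha$ with $y=2$, contradicting $F_y(\alpha)=1$.

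For $(\Leftarrow)$, assume $F_y(\gamma)=0$ and, for contradiction, $\alpha(i)=0$. Pick a solution $\mathbf{u}_\gamma$ extending $\mathbf{x}=\gamma$ with $y=0$. Apply $s_{0,2}$ coordinatewise to $(\mathbf{u}_\alpha,\mathbf{u}_\gamma)$: for $j\neq i$ we get $s_{0,2}(\alpha_j,2)=2$ (since $\alpha_j\in\{0,1\}$ differs from $2$ and $2\ne 0$); on the $i$-th coordinate we get $s_{0,2}(0,0)=0$; and on $y$ we get $s_{0,2}(1,0)=1$ (since the second argument is $0$). This yields a solution extending $\mathbf{x}=\gamma$ with $y=1$, contradicting $F_y(\gamma)=0$. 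I do not anticipate any substantial obstacle: once the right polymorphism compositions are written down, the case analyses are mechanical; the only bookkeeping point is that $\mathbf{u}_\alpha$, $\mathbf{u}_\beta$, $\mathbf{u}_\gamma$ are full solutions over all variables of $\Phi$ rather than just $(x_1,\ldots,x_n,y)$, but coordinatewise application of a polymorphism preserves every constraint regardless.
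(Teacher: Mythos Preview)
Your proof is correct and follows essentially the same approach as the paper's: both directions are handled by applying a polymorphism of $\Gamma$ coordinatewise to suitably chosen witness solutions to produce a solution that violates the definedness of $F_y$. The only cosmetic difference is in the $(\Leftarrow)$ direction, where the paper applies $s_2$ to $(\mathbf{u}_\gamma,\mathbf{u}_\alpha)$ to obtain a solution at $\gamma$ with $y=2$, whereas you apply $s_{0,2}$ to $(\mathbf{u}_\alpha,\mathbf{u}_\gamma)$ to obtain a solution at $\gamma$ with $y=1$; either contradicts $F_y(\gamma)=0$.
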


\begin{proof}
Assume that $\alpha(i) = 0$ and  $F_{y}(2^{i-1}02^{n-i})=0$.
We have $s_2(2^{i-1}02^{n-i}0,\alpha 1) = 2^{i-1}02^{n-i}2$,
which means that 
$\Phi$ has a solution 
with $(x_{1},\dots,x_{n}) = 2^{i-1}02^{n-i}$ and $y=2$.
This contradicts the fact that $F_{y}(2^{i-1}02^{n-i})=0$.

Assume that $\alpha(i) =1$ and 
$F_{y}(2^{i-1}02^{n-i})$ is not defined or equal to 1.
Then 
$\Phi$ has a solution 
with $(x_{1},\dots,x_{n}) = 2^{i-1}02^{n-i}$ and $y=c$, 
where $c\neq 0$.
Let $\beta<\alpha$ be the tuple that differs from $\alpha$ only in the $i$-th coordinate.
Since $f_{0,2}(\alpha, \beta,2^{i-1}02^{n-i}) = \alpha$
and $f_{0,2}(1,0,c)=2$, 
$\Phi$ should have a solution 
with $(x_{1},\dots,x_{n}) = \alpha$
and $y = 2$, which contradicts the definition of a minimal 1-set.
\end{proof}

\textbf{Example.}
Let 
$R_{and,2}' = 
R_{and,2}\setminus 
\{(0,2,1),(0,2,2),(2,0,1),(2,0,2)\}$
(see Section~\ref{StrangeOneSection}),
$\delta' = \{(0,1)\}\cup(\{1,2\}\times\{0,1,2\})$,
$\Gamma' = \{R_{and,2}',\delta',\{0\},\{1\},\{2\}\}$.
We can check that $\Gamma'$ is preserved by $f_{0,2}$.
It follows from the following 
lemma that $\Pol(\Gamma')$ has the EGP property.

As in the previous section here we use the notations 
$\tau_{n}$ and $\sigma_{n}$ for $\alpha =\{0,2\}$ and $\beta = \{1,2\}$.

\begin{lem}
$\sigma_{n}$ can be pp-defined from $\{R_{and,2}',\delta',\{0\}\}$.
\end{lem}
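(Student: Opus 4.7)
The plan is to imitate the construction from Section~\ref{StrangeOneSection}: first pp-define an auxiliary relation of arity $2n$ that omits a single tuple, and then obtain $\sigma_{n}$ as the conjunction over the $2^{n}$ sign patterns of suitably permuted copies of that relation. Two preliminary gadgets will be convenient. First, the unary relation $\{1\}$ is pp-definable from $\delta'$ and $\{0\}$, since $\{0\}(w')\wedge \delta'(w',w)$ pp-defines $\{1\}(w)$ --- the only pair in $\delta'$ with first entry $0$ is $(0,1)$. Second, define the iterated ``and'' relation
\[R_{and,n+1}'(x_{1},\dots,x_{n+1},z) := \exists z'\; R_{and,n}'(x_{1},\dots,x_{n},z') \wedge R_{and,2}'(x_{n+1},z',z).\]
Since any $0$ in the first two coordinates of $R_{and,2}'$ forces the third to $0$, a routine induction gives that $R_{and,n}'$ forces $z=0$ as soon as some $x_{i}=0$, forces $z=1$ when every $x_{i}=1$, and otherwise allows $z$ to take any value in $\{0,1,2\}$.

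Now set
\begin{align*}
\rho(u_{1},\dots,u_{n},v_{1},\dots,v_{n}) := \exists v'_{1},\dots,v'_{n},z,t,w\; & R_{and,n}'(u_{1},\dots,u_{n},z)\wedge \bigwedge_{i=1}^{n}\delta'(v_{i},v'_{i})\wedge \\
& R_{and,n}'(v'_{1},\dots,v'_{n},t)\wedge R_{and,2}'(z,t,w)\wedge \{0\}(w).
\end{align*}
I claim that $\rho$ omits exactly the tuple $(u,v)=(1^{n},0^{n})$. On this tuple $z=1$ is forced, each $\delta'(0,v'_{i})$ forces $v'_{i}=1$ hence $t=1$, and then $R_{and,2}'(1,1,w)$ forces $w=1$, contradicting $\{0\}(w)$. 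Conversely, for any other $(u,v)$, either some $u_{i}\neq 1$, in which case $z\neq 1$ is achievable --- either $z=0$ when $u_{i}=0$, or $z$ free when $u_{i}=2$ and no coordinate is $0$ --- or some $v_{j}\neq 0$, in which case we may set $v'_{j}=0$ and therefore force $t=0$. A short check confirms that $R_{and,2}'(z,t,0)$ holds for every pair $(z,t)\neq (1,1)$, so a satisfying assignment exists.

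Finally, for each $b\in\{0,1\}^{n}$ put $(u_{i}^{b},v_{i}^{b}) := (x_{i},y_{i})$ if $b(i)=1$ and $(u_{i}^{b},v_{i}^{b}) := (y_{i},x_{i})$ if $b(i)=0$, and let $\rho_{b}(x_{1},y_{1},\dots,x_{n},y_{n}) := \rho(u_{1}^{b},\dots,u_{n}^{b},v_{1}^{b},\dots,v_{n}^{b})$. Then $(u^{b},v^{b})=(1^{n},0^{n})$ if and only if $(x_{i},y_{i})=(b(i),1-b(i))$ for every $i$ --- exactly the bad tuple of $\sigma_{n}$ corresponding to the sign pattern $b$. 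Hence $\sigma_{n} = \bigwedge_{b\in\{0,1\}^{n}}\rho_{b}$, which is an exponential-size pp-definition (consistent with the section's theme). The main obstacle will be the case analysis establishing the exclusion claim for $\rho$: one must trace the admissible values of $z$, $t$ and the intermediate variables of $R_{and,n}'$ across the configurations of $(u,v)$, and verify that as soon as the forbidden pattern is broken at any single coordinate, a satisfying assignment can be chosen.
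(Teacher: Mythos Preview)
Your proof is correct and follows essentially the same strategy as the paper: build a pp-definable $2n$-ary relation omitting a single tuple, then obtain $\sigma_{n}$ as the conjunction of its $2^{n}$ variable-permuted copies. The only cosmetic difference is that the paper packs everything into one instance of $R_{and,2n}'$ (applying it to $x_{1},\dots,x_{n},u_{1},\dots,u_{n}$ with the $u_{i}$ constrained via $\delta'(y_{i},u_{i})$ and the output set to $0$), whereas you split this into two copies of $R_{and,n}'$ joined by a final $R_{and,2}'$; the two constructions are equivalent up to reassociating the recursive ``and''. Your preliminary observation that $\{1\}$ is pp-definable is correct but unused in the rest of your argument.
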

\begin{proof}

Recursively we define 
\begin{align*}
R_{and,n+1}'(x_{1},\ldots,x_{n},x_{n+1},y) &= 
\exists z \; R_{and,n}'(x_{1},\ldots,x_{n},z)
\wedge R_{and,2}'(x_{n+1},z,y),\\
\omega_{n}(x_{1},y_{1},x_{2},y_{2},\ldots,x_{n},y_{n})
&=\exists u_1\dots\exists u_{n}\exists z\;
R_{and,2n}'(x_{1},\ldots,x_{n},u_1,\ldots,u_{n},z)
\wedge \\
&\;\;\;\;\;\;\;\;\;\;\;\;\;\;\;\;\;\;\;\;\;\;\;\;\;\;\;\;\;\;\;\;\;\;\;\;\;\;\;\;\;\;\;\;\;\;\delta'(y_{1},u_{1})\wedge \dots \wedge \delta'(y_{n},u_{n})
\wedge z=0.
\end{align*}
It is not hard to see that $\omega_{n}$ contains all the tuples 
but $(1,0,1,0,\ldots,1,0)$. Then the relation 
$\sigma_{n}$ can be represented as a conjunction of $2^{n}$ relations such that each of them 
is obtained from $\omega_{n}$ by a permutation of variables.
\end{proof}

The following lemma and corollary do not play a role in our main result, but we present them for further curiosity and another sanity check.
\begin{lem}
Suppose $\Gamma\subseteq\Inv(f_{0,2})$, 
all relations in $\Gamma$ are of arity at most $k$.
Then any pp-definition of $\sigma_{n}$ over $\Gamma$, where $n\ge 2$,  has at least $2^n/2^k$ constraints
and at least $2^n/2^k$ variables.
\end{lem}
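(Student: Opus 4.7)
The plan is to parallel the argument for the analogous lemma in Section~\ref{StrangeOneSection}, but using Lemma~\ref{MinimalSetExistence} in place of Lemma~\ref{StrangeStructureMainLemma}, which yields a correspondence from forbidden tuples to constraints (rather than variables), and to count via the bounded arity of constraints.

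Let $\Phi$ be a pp-definition of $\sigma_n$ over $\Gamma$, so $\sigma_n = \Phi(x_1,y_1,\ldots,x_n,y_n)$. First I would identify the forbidden tuples: the set $\{0,1\}^{2n}\setminus\sigma_n$ consists exactly of those tuples in which each pair $(x_i,y_i)$ lies in $\{(0,1),(1,0)\}$, and so has cardinality $2^n$. For any such $\alpha$ I would verify the three hypotheses of Lemma~\ref{MinimalSetExistence}: $\alpha$ contains exactly $n\ge 2$ ones; $\alpha\notin\sigma_n$; and any $\beta<\alpha$ in $\{0,1\}^{2n}$ must flip some $1$ to $0$ inside a pair, creating a pair $(0,0)$ and hence landing in $\sigma_n$.

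Lemma~\ref{MinimalSetExistence} then supplies, for every such $\alpha$, a constraint $\rho_\alpha(z_1,\ldots,z_{l_\alpha})$ of $\Phi$ and a subset $B_\alpha\subseteq\{1,\ldots,l_\alpha\}$ satisfying $\alpha=\bigvee_{i\in B_\alpha}\alpha_i$, where $\alpha_i$ is the (uniquely determined, by the earlier lemma on uniqueness of minimal $1$-sets) minimal $1$-set for the variable $z_i$. The key observation is that the tuples $\alpha_i$ depend only on $\Phi$ and on the variables of the constraint, not on the particular $\alpha$ being recovered; thus, for a fixed constraint $\rho$, the assignment $\alpha\mapsto B_\alpha$ is injective on the set of $\alpha$'s mapped to $\rho$, since $\alpha$ is determined by $B_\alpha$ via the join formula.

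Counting finishes the proof: each constraint of $\Phi$ has arity at most $k$, so admits at most $2^k$ subsets $B$, and therefore accounts for at most $2^k$ of the forbidden tuples. With $2^n$ forbidden tuples in total, the number of constraints in $\Phi$ must be at least $2^n/2^k$. The only technical point to check carefully is that Lemma~\ref{MinimalSetExistence} really applies, in particular that every smaller Boolean tuple lies in $R=\sigma_n$ (the routine verification above), and that the minimal $1$-sets are well-defined and intrinsic to $\Phi$, which follows from the earlier monotonicity and uniqueness lemmas in this section. No obstacle beyond this bookkeeping is expected; the argument is a direct adaptation of the counting strategy already used successfully for Strange Structure~$1$.
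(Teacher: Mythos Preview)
Your proposal is correct and follows essentially the same approach as the paper: both apply Lemma~\ref{MinimalSetExistence} to each of the $2^n$ forbidden tuples, note that the resulting pair (constraint, subset $B$) determines $\alpha$ via the join of minimal $1$-sets, and conclude by the pigeonhole bound $2^k$ on the number of subsets per constraint. Your write-up is in fact slightly more careful than the paper's, since you explicitly verify that every $\beta<\alpha$ lies in $\sigma_n$ and spell out why the assignment $\alpha\mapsto(\rho_\alpha,B_\alpha)$ is injective.
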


\begin{proof}
Suppose $\sigma_n = \Phi(x_{1},\ldots,x_{n})$,
where $\Phi$ is a conjunctive formula over $\Gamma$.
There exist $2^n$ tuples 
from $A^{2n}\setminus \sigma_{n}$ and each of them has at least two 1s.
By Lemma~\ref{MinimalSetExistence},
for each $\alpha\in A^{2n}\setminus \sigma_{n}$ there should be a constraint 
$\rho(z_{1},\dots,z_{l})$ such that 
$\alpha = \bigvee_{i\in B} \alpha_{i}$ for some $B\subseteq \{1,2,\dots,l\}$.
Since every constraint of $\Phi$ is of arity at most k, 
there are  at most $2^{k}$ options to choose $B$.
Therefore, one constraint of $\Phi$ can cover at most $2^k$ tuples from $A^{2n}\setminus \sigma_{n}$.
Thus, $\Phi$ has at least $2^n/2^k$ constraints.

Similarly, if $V$ is the
set of all variables in $\Phi$, then 
the above argument gives an injection from the set
$A^{2n}\setminus \sigma_{n}$ of size $2^{n}$ to 
the set of all subsets of $V$ of size at most $k$.
This implies that $|V|^k\ge 2^n$ and $|V|\ge 2^{n-k}$.
\end{proof}

Thus, for a fixed $\Gamma$ we need exponentially many 
constraints to define $\sigma_n$.

\begin{cor}
Suppose $\Gamma\subseteq\Inv(f_{0,2})$, 
all relations in $\Gamma$ are of arity at most $k$.
Then any pp-definition of $\tau_{n}$ over $\Gamma$, where $n\ge 2$,  has at least $2^n/2^k$ constraints
and at least $2^n/2^k$ variables.
\end{cor}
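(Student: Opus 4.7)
The plan is to reduce the corollary directly to the preceding lemma for $\sigma_n$. The paper has already observed that $\sigma_n$ can be pp-defined from $\tau_n$ by identifying variables: specifically, $\sigma_n(x_1,y_1,\dots,x_n,y_n)$ is obtained from $\tau_n(x_1,y_1,z_1,\dots,x_n,y_n,z_n)$ by identifying $z_i$ with $x_i$ (or with $y_i$) for each $i$. This identification is an operation on variables only and introduces no new constraints.

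Concretely, suppose for contradiction that $\tau_n$ admits a pp-definition $\Phi(x_1,y_1,z_1,\dots,x_n,y_n,z_n)$ over $\Gamma$ with strictly fewer than $2^n/2^k$ constraints. Then the formula $\Psi(x_1,y_1,\dots,x_n,y_n)$ obtained from $\Phi$ by substituting $x_i$ for each free occurrence of $z_i$ (and leaving all existentially quantified variables alone, possibly renaming them to avoid clashes) is a pp-formula over $\Gamma$ with the same number of constraints as $\Phi$, and it defines exactly $\sigma_n$. This contradicts the previous lemma, which guarantees that any pp-definition of $\sigma_n$ over such $\Gamma$ has at least $2^n/2^k$ constraints.

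There is essentially no obstacle here: the only small point to verify is that substituting one free variable for another in a pp-formula preserves the constraint count and yields the expected relation, which is immediate since each constraint is of the form $\rho(z_{j_1},\dots,z_{j_l})$ and substitution simply changes which variables appear in the slots. Hence the bound of $2^n/2^k$ constraints transfers from $\sigma_n$ to $\tau_n$, completing the proof.
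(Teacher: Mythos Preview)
Your proof is correct and is exactly the argument the paper intends: the corollary is stated immediately after the lemma for $\sigma_n$, with no separate proof, relying on the earlier observation that $\sigma_n$ can be pp-defined from $\tau_n$ by identification of variables. Since this identification does not change the number of constraints, the lower bound for $\sigma_n$ transfers to $\tau_n$ precisely as you describe.
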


Below we present an algorithm that solves
$\QCSP^{2}(\Gamma)$ in polynomial time  for $\Gamma\subseteq \Inv(f_{0,2})$ (see the pseudocode of the function $\Solve_2$).
For an input 
$\forall x_1 \dots\forall x_{n} \exists y_1\dots\exists y_s \Phi$, 
the function $\Solve_2$
first checks whether $\Phi$ holds 
on $\mathbf x= (0,\dots,0)$ 
%$\mathbf x=(1,\dots,1)$,
and on each
tuple containing exactly one 1. 
Then for every variable $y_{j}$ 
it calculates the minimal 1-set using
Lemma \ref{OneCalculate}. 
Precisely, for every variable $y_{j}$ 
it calculates the set $D_{i,j}$ of 
possible values for $y_{j}$
when $\mathbf x = 2^{i-1}02^{n-i}$, 
and sets the i-th element of the 
1-set $\alpha_{j}$ to 1 whenever $D_{i,j}=\{0\}$.
Note that the minimal 1-set for a variable $x_{i}$ 
is $0^{i-1}10^{n-i}$.
Then, for each constraint 
$\rho(z_1,\dots,z_{l})$ and each subset $V$ of its variables the algorithm calculates the disjunction of 
the minimal 1-sets for the variables from $V$, and checks that $\Phi$ holds on the result.
Again, it would be an exponential algorithm if it just checks all possible $\mathbf x$. 
By $\SolveCSP$ we denote a polynomial algorithm, solving 
constraint satisfaction problem for a constraint language preserved by a semilattice operation: it returns true if it has a solution, it returns false otherwise.

\begin{algorithm}
\begin{algorithmic}[1]
\Function{Solve$_2$}{$\Theta$}
  \State{\textbf{Input:} $\QCSP^{2}(\Gamma)$ instance $\Theta = \forall x_1 \dots\forall x_{n} \exists y_1\dots\exists y_s \Phi.$}
    \If{$\neg\SolveCSP(\mathbf x = (0,\ldots,0)\wedge \Phi)$}
        \Return{false} \Comment{$\mathbf x = (x_{1},\dots,x_{n})$}
    \EndIf
    \For{$i:=1,\ldots,n$} \Comment{Check all tuples with just one 1}
        \State{$\mathbf{c}:=(\underbrace{0,\ldots,0}_{i-1},1,0,\dots,0)$}
        \If{$\neg\SolveCSP(\mathbf x = \mathbf c\wedge\Phi)$}
            \Return{false}        
        \EndIf    
    \EndFor

    \For{$j:=1,\ldots,s$} \Comment{Calculate the minimal 1-set for every $y_{j}$}
        \State{$\alpha_{j} := (0,\ldots,0)$}
        \For{$i:=1,\ldots,n$}
            \State{$D_{i,j}:=\varnothing$}                    
            \State{$\mathbf{c}:=(\underbrace{2,\ldots,2}_{i-1},0,2,\dots,2)$}
            \For{$a\in A$}
                \If{$\SolveCSP(\mathbf x = \mathbf c\wedge y_{j} = a\wedge\Phi)$}
                    \State{$D_{i,j}:=D_{i,j}\cup\{a\}$}                    
                \EndIf    
            \EndFor
            \If{$D_{i,j}=\varnothing$}
                \Return{false}        
            \EndIf
            \If{$D_{i,j}=\{0\}$}
                \State{$\alpha_{j}:= \alpha_{j}\vee (\underbrace{0,\ldots,0}_{i-1},1,0,\dots,0)$}
            \EndIf            
        \EndFor
    \EndFor
    \For{a constraint $\rho(z_{1},\dots,z_{l})$ of $\Phi$}\Comment{Check all constraints}
        \For{$V\subseteq \{1,2,\dots,l\}$} \Comment{Check all subsets of variables}
            \State{$\beta:=(0,\dots,0)$}
            \For{$j\in V$}
                \If{$z_{j} = x_{i}$ for some $i$}   
                    \State{$\beta:=\beta\vee (\underbrace{0,\ldots,0}_{i-1},1,0,\dots,0)$} \Comment{Add the minimal 1-set for $x_{i}$}
                \EndIf
                \If{$z_{j} = y_{i}$ for some $i$} 
                    \State{$\beta:=\beta\vee \alpha_{i}$} 
                    \Comment{Add the minimal 1-set for $y_{i}$}
                \EndIf                
            \EndFor
            \If{$\neg\SolveCSP(\mathbf x = \beta\wedge \Phi)$}
                \Return{false}
            \EndIf
        \EndFor
    \EndFor    
    \Return{true}
\EndFunction
\end{algorithmic}
\end{algorithm}

\begin{lem}\label{ComplexityStrange2}
Function $\Solve_2$ solves $\QCSP^{2}(\Gamma)$ in polynomial time
for a finite constraint language $\Gamma\subseteq \Inv(f_{0,2})$.
\end{lem}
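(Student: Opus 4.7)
The plan is to handle running time and correctness separately. For running time, observe that $s_2\in\Pol(\Gamma)$, since $s_2(x,y)=s_{0,2}(x,s_{0,2}(y,x))$ and $s_{0,2}(x,y)=f_{0,2}(x,y,y)$; hence $\CSP(\Gamma)$ is tractable and $\SolveCSP$ runs in polynomial time. The first three phases of $\Solve_2$ issue at most $1+n+3sn$ calls to $\SolveCSP$; the final phase iterates, for each constraint $\rho(z_1,\ldots,z_l)$ of $\Phi$, over the $2^l$ subsets $V\subseteq\{1,\ldots,l\}$, and since $\Gamma$ is finite the arity $l$ is bounded by a constant $k$, contributing at most $|\Phi|\cdot 2^k$ further calls. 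Thus the whole algorithm runs in polynomial time.

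For correctness, set $R(x_1,\ldots,x_n)=\exists y_1\dots\exists y_s\,\Phi$, so $\Theta$ holds iff $R=A^n$. If the algorithm returns false, it has in each branch exhibited a specific $\mathbf{c}\in A^n$ with $\mathbf{c}\notin R$, so $\Theta$ is indeed false. Suppose instead $\Solve_2$ returns true; I must show $R=A^n$. Since $s_2$ preserves $R$ and every element of $A$ is obtainable from $\{0,1\}$ under $s_2$, it suffices to prove $\{0,1\}^n\subseteq R$. Assume for contradiction a minimal $\alpha\in\{0,1\}^n\setminus R$; then every $\beta<\alpha$ in $\{0,1\}^n$ lies in $R$. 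If $\alpha=(0,\ldots,0)$ or $\alpha$ contains exactly one $1$, phase one or phase two would have returned false, contradiction; hence $\alpha$ has at least two $1$'s.

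The key step is now to apply Lemma~\ref{MinimalSetExistence}, which yields a constraint $\rho(z_1,\ldots,z_l)$ of $\Phi$ and a subset $B\subseteq\{1,\ldots,l\}$ with $\alpha=\bigvee_{j\in B}\alpha^*_j$, where $\alpha^*_j$ is the minimal $1$-set for $z_j$. I then verify that for each existential variable $z_j=y_{j'}$ appearing in the cover, the vector $\alpha_{j'}$ computed by phase three equals $\alpha^*_j$: since phase three did not reject, $2^{i-1}02^{n-i}\in R$ for every $i$, and Lemma~\ref{OneCalculate} identifies $\alpha^*_j(i)=1$ with the condition that every solution of $\Phi$ having $\mathbf{x}=2^{i-1}02^{n-i}$ forces $y_{j'}=0$, which is precisely $D_i=\{0\}$, the exact condition under which the algorithm sets $\alpha_{j'}(i):=1$. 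For a universal variable $z_j=x_i$, the minimal $1$-set is trivially the unit vector $e_i=0^{i-1}10^{n-i}$, matching the tuple the algorithm inserts. Consequently, when the final loop reaches the pair $(\rho,V=B)$, the assembled $\beta$ equals $\alpha$, and the call $\SolveCSP(\mathbf{x}=\alpha\wedge\Phi)$ returns false because $\alpha\notin R$, forcing $\Solve_2$ to return false, a contradiction.

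The main obstacle is this correspondence between the algorithmically computed $\alpha_{j'}$ and the true minimal $1$-set $\alpha^*_j$ demanded by Lemma~\ref{MinimalSetExistence}. Once Lemma~\ref{OneCalculate} is applied and its precondition $2^{i-1}02^{n-i}\in R$ is observed to be automatically enforced by the $D_i\neq\varnothing$ check in phase three, the remainder of the argument is a direct assembly of the monotonicity and uniqueness properties of $F_y$ already established earlier in this section.
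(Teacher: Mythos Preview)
Your proposal is correct and follows essentially the same approach as the paper's proof: both argue polynomial time via a bounded number of $\SolveCSP$ calls (using finiteness of $\Gamma$ to bound the arity), and both establish correctness by taking a minimal $\alpha\in\{0,1\}^n\setminus R$, invoking Lemma~\ref{MinimalSetExistence} to write $\alpha$ as a join of minimal $1$-sets over some constraint, and using Lemma~\ref{OneCalculate} to match the algorithm's computed $\alpha_{j'}$ to the true minimal $1$-sets. Your write-up is in fact slightly more explicit than the paper's in spelling out the $D_i=\{0\}\Leftrightarrow F_{y_{j'}}(2^{i-1}02^{n-i})=0$ correspondence and in handling the universal-variable case $z_j=x_i$ directly.
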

\begin{proof}
First, let us show that the algorithm actually solves the problem.
If the answer is false, then we found an evaluation 
of $(x_1,\ldots,x_{n})$ such that the corresponding CSP has no solutions, which means that the answer is correct.

Assume that the answer is true.
Let $R(x_{1},\ldots,x_{n})$ be defined by the formula 
$\exists y_1\dots\exists y_s \Phi$. 
We need to prove that $R$ is a full relation.
Assume the converse. 
Using the semilattice operation $s_2$ we can generate
$A^{n}$ from $\{0,1\}^{n}$, hence 
$\{0,1\}^{n}\not\subseteq R$.
Then let $\alpha$ be a minimal tuple from $\{0,1\}^{n}\setminus R$.
Since we checked that $(0,0,\dots,0)$ and all tuples having just one 1
are from $R$, $\alpha$ contains at least two 1.
Then, by Lemma~\ref{MinimalSetExistence}, there should be a constraint 
$\rho(z_{1},\dots,z_{l})$ of $\Phi$ and 
a subset $V\subseteq\{1,2,\dots,l\}$ such that 
$\alpha$
is a disjunction of the minimal 1-sets of 
$z_{i}$ for $i\in V$. Thus, it is sufficient to find the minimal 1-set
corresponding to each variable and check all the disjunctions.

By Lemma~\ref{OneCalculate}, if $\alpha_{j}$ is minimal 1-set for a variable 
$y_{j}$ then it was correctly found in lines 7-17 of the algorithm. 
Note that if $y_{j}$ does not have a minimal 1-set then 
we do not care what we found.
Then, in lines 18-25 we check all constraints of $\Phi$, 
check all subsets of variables $V$, and calculate the corresponding disjunction.
Here we use the fact that the minimal 1-set for $x_{i}$ is 
$0^{i-1}10^{n-i}$.
In line 26 we check whether $\Phi$ has a solution with
$(x_{1},\dots,x_{n})=\alpha$.
Thus, Lemma~\ref{MinimalSetExistence} guarantees that $\{0,1\}^{n}\subseteq R$,
and therefore $A^{n}\subseteq R$.

It remains to show that the algorithm works in polynomial time.
In the algorithm we just solve at most 
$1+n+s\cdot n\cdot 3+m\cdot 2^{r}$ CSP instances over a language preserved by the semilattice operation $s_2$,
where $m$ is the number of constraints in $\Phi$ and 
$r$ is the maximal arity of constraints in $\Phi$.
Since $\Gamma$ is finite, $r$ is a constant, hence the algorithm is polynomial.
\end{proof}

\begin{cor}\label{CorComplexityStrange2}
$\QCSP(\Gamma)$  is in P for every finite $\Gamma\subseteq \Inv(f_{0,2})$.
\end{cor}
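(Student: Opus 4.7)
The plan is to combine the polynomial-time algorithm from Lemma~\ref{ComplexityStrange2} with the reduction from general $\QCSP$ to its $\Pi_2$-restriction established in Lemma~\ref{ReductionToQCSP2}. Since Lemma~\ref{ComplexityStrange2} already handles $\QCSP^{2}(\Gamma)$ for any finite $\Gamma\subseteq\Inv(f_{0,2})$, the only step needed is to verify that $\Gamma$ satisfies the hypotheses of Lemma~\ref{ReductionToQCSP2}, namely that $\Gamma$ is preserved by $s_2$ and by some $0$-stable operation.

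First I would extract the required polymorphisms from $f_{0,2}$. The identities $s_{0,2}(x,y) = f_{0,2}(x,y,y)$ and $s_2(x,y) = s_{0,2}(x, s_{0,2}(y,x))$, already noted at the start of this section, show that both $s_{0,2}$ and $s_2$ lie in $\Clo(f_{0,2})$, so both preserve every relation in $\Gamma$. I would then verify directly from the definition of $s_{a,c}$ that $s_{0,2}$ is $0$-stable: the clause $y=a=0$ gives $s_{0,2}(x,0)=x$, while for $y=2$ we have $x=y$ only when $x=2$ (giving $x=2$) and otherwise the ``otherwise'' clause returns $c=2$, so $s_{0,2}(x,2)=2$ for all $x$. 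Thus the hypotheses of Lemma~\ref{ReductionToQCSP2} are met with $h_0 = s_{0,2}$.

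Having done that, Lemma~\ref{ReductionToQCSP2} provides a polynomial transformation of an arbitrary $\QCSP(\Gamma)$ instance into an equivalent $\QCSP^{2}(\Gamma)$ instance (a conjunction of a bounded number of auxiliary existentially-quantified formulas, all under a single universal block). Lemma~\ref{ComplexityStrange2} then solves the resulting $\Pi_2$ instance in polynomial time via Function $\Solve_2$. Composing the two steps yields a polynomial-time decision procedure for $\QCSP(\Gamma)$.

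There is essentially no obstacle here beyond the routine verification that $s_{0,2}$ is $0$-stable and lies in $\Clo(f_{0,2})$; the substantive work has already been carried out in the two preceding lemmas. One minor point to be careful about is that Lemma~\ref{ComplexityStrange2} relies on the maximum arity $r$ of relations in $\Gamma$ being a constant (it contributes a factor of $2^{r}$ to the running time), which is guaranteed by the assumption that $\Gamma$ is finite, so this transfer step causes no issue.
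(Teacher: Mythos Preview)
Your proposal is correct and follows exactly the same route as the paper: apply Lemma~\ref{ReductionToQCSP2} to reduce to $\QCSP^{2}(\Gamma)$, then invoke Lemma~\ref{ComplexityStrange2}. Your extra verification that $s_{0,2}\in\Clo(f_{0,2})$ is $0$-stable (and that $s_2\in\Clo(f_{0,2})$) makes explicit the hypothesis check that the paper's two-line proof leaves implicit via the identities stated at the opening of the section.
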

\begin{proof}
%{\color{red} Since $s_{0,2}$ is a $0$-stable operation, by Lemma~\ref{ReductionToQCSP2} }
Since $f(x,y,z) = s_{0,2}(x,y)$ is a $01$-stable operation, by Lemma~\ref{ReductionToQCSP2} 
$\QCSP(\Gamma)$ can be polynomially reduced to $\QCSP^{2}(\Gamma)$, 
and 
$\QCSP^{2}(\Gamma)$ can be solved by the function $\Solve_2$.
\end{proof}

\section{EGP and WNU on 3-element domain}

In this section we consider constraint languages on $\{0,1,2\}$
with constants
for which
$\Pol(\Gamma)$ has the EGP property and contains a WNU operation.
It can be shown (see the proof of Theorem~\ref{ThreeElementClassification})
that in this case $\Pol(\Gamma)$ contains a semilattice operation.
Since adding pp-definable relations to $\Gamma$ does not change the complexity of 
$\QCSP(\Gamma)$, we may assume that $k$ is the maximal arity of the relations in $\Gamma$ 
and $\Gamma$ contains all pp-definable relations of arity at most $k$.
Also, EGP implies that there exists $\alpha$ and $\beta$, neither equal to $A$ but so that $\alpha \cup \beta = A$, so that all operations of $\Pol(\Gamma)$ are $\alpha\beta$-projective \cite{ZhukGap2015}. If $\alpha \cap \beta = \emptyset$, then there can be no WNU. Hence, we may assume that 
$\alpha = \{0,2\}$ and $\beta=\{1,2\}$, which implies that 
$\{0,2\}$, $\{1,2\}$, and $\{0,2\}^{2}\cup \{1,2\}^{2}$ are in $\Gamma$.
%the assumption above of (3) entails that of (4) which itself entails (2), since these relations are preserved by all $\{0,2\}\{1,2\}$-projective operations.
 Thus, in this section we have the following assumptions:
\begin{enumerate}
    \item $\Gamma$ is a finite constraint language on $A = \{0,1,2\}$
    \item $\Gamma$ contains the relations $\{0\},\{1\},\{2\},\{0,2\},\{1,2\}$, 
    $\{0,2\}^{2}\cup \{1,2\}^{2}$,
    and the equality relation
    \item $\Gamma$ is preserved by the semilattice operation $s_2$
    \item $\Pol(\Gamma)$ is $\{0,2\}\{1,2\}$-projective
    
    \item all relations in $\Gamma$ are of arity at most $k\ge 4$
    \item $\Gamma$ contains all relations of arity at most $k$
    that can be pp-defined from $\Gamma$
\end{enumerate}
\noindent  These assumptions will hold up to, but not including, Theorem~\ref{SpecialClassification}. 
Similar to a $01$-stable operation, we define $0$-stable and $1$-stable operations, 
where $f$ is called \emph{$c$-stable}
if it is idempotent, $f(x,c) = x$, and $f(x,2) = 2$.
Note that by adding a dummy variable we can make it a $01$-stable operation.

\subsection{Finiteness of the language}
In this subsection we will derive some properties of 
$\Gamma$ based on the facts that 
$\Gamma$ is finite and 
$\Pol(\Gamma)$ is $\{0,2\}\{1,2\}$-projective.

For $B,C\subseteq A$ and $n\ge 3$ by $nu_{B,C}^{n}$ we denote 
the operation of arity $n$ defined on a tuple 
$(a_1,\ldots,a_n)$ as follows. 
If $a_{1} = \dots = a_n$ then it returns $a_1$.
If $a_{1} = \dots = a_{i-1} = a_{i+1} =\dots = a_n\in B$
and 
$a_{i}\in C$ for some $i$, then it returns $a_{j}$ for $j\neq i$.
Otherwise, it returns $2$.

\begin{lem}\label{almostNU}
Suppose $\rho$ is a relation of the minimal arity $N$ in $\Gamma$ 
that is not preserved by $nu_{B,C}^{2k}$.
%there exists a relation $\rho\in \Gamma\setminus \Inv(nu_{B,C}^{2k})$ of arity $N$ 
%such that 
Then, after some permutation of variables, $\rho$ satisfies one of the following conditions:
\begin{enumerate}
\item 
    There exist $\alpha_{1},\alpha_{2}\in\rho$ 
    such that 
    $nu_{B,C}^{2k}(\alpha_{1},\alpha_{2},\alpha_{2},\ldots,\alpha_{2}) = \beta\notin\rho$, 
    $\alpha_{2}(i)\in B$ and
    $\beta(i) = \alpha_{2}(i)\neq 2$ for $i\in\{1,\ldots,N-1\}$, 
    $\beta(N) = 2$, %$\alpha_{1}(N)\notin C$, 
    %$\alpha_{3}(N) \in B$,
    and
    $\alpha_{1}(i)\neq \alpha_{2}(i)$ for every $i$.
    Moreover, if $C = A$ then $N = 2$, $\alpha_{2}(N)\notin B$, and 
    $\alpha_{1} = (2, 2)$.
    \item There exist $\alpha_{1},\alpha_{2},\alpha_{3}\in\rho$ 
    such that 
    $nu_{B,C}^{2k}(\alpha_{1},\alpha_{2},\alpha_{3},\alpha_{3},\ldots,\alpha_{3}) = \beta\notin\rho$, 
    $\beta(i) = \alpha_{3}(i)\neq 2$ for $i\in\{1,\ldots,N-1\}$, 
    $\beta(N) = 2$, $\alpha_{1}(N),\alpha_{2}(N)\in C$, 
    $\alpha_{3}(i) \in B$ for every $i$, and 
    $\alpha_{1}(i)\neq \alpha_{3}(i)$ or $\alpha_{2}(i)\neq \alpha_{3}(i)$ for every $i$. 
    Moreover, if $2\in C$, then $\alpha_{1}(i),\alpha_{2}(i)\in\{\alpha_{3}(i),2\}$ for every $i$, 
    $\alpha_{1}(N) = \alpha_{2}(N) = 2$, and $N=3$.
        
\end{enumerate}
%Additionally, $\rho$ is a relation of the minimal arity in $\Gamma$ that is not preserved by $nu_{B,C}^{2k}$.
\end{lem}

\begin{proof}
%Let $\rho$ be a relation of the minimal arity in $\Gamma$ that is not preserved by $nu_{B,C}^{2k}$.
%Let $N$ be the arity of $\rho$. 
Note that the hypotheses imply that $N\geq 2$.
Consider tuples 
$\alpha_{1},\ldots,\alpha_{2k}\in\rho$ such that 
$nu_{B,C}^{2k}(\alpha_{1},\ldots,\alpha_{2k})=\beta\notin \rho$.
Since $\rho$ is of the minimal arity, 
$nu_{B,C}^{2k}$ preserves any proper projection of $\rho$.
For example, $nu_{B,C}^{2k}$ preserves $\proj_{1,\ldots,N-1}(\rho)$, 
therefore 
$$nu_{B,C}^{2k}(\proj_{1,\ldots,N-1}(\alpha_1),\ldots,\proj_{1,\ldots,N-1}(\alpha_{2k}))\in \proj_{1,\ldots,N-1}(\rho).$$
Hence, there exists a tuple $\beta_{N}\in\rho$ that differs from $\beta$ only in the $N$-th element. 
Similarly, for every $i$
there exists a tuple $\beta_{i}\in\rho$ that differs from $\beta$ only in the $i$-th element. 
%This means that $\beta$ is an essential tuple and %$\rho$ is an essential relation.

Let us consider the matrix $M$ 
whose columns are $\alpha_{1},\ldots,\alpha_{2k}$. 
Assume that 
for some $j$ we have 
$\alpha_{1}(j) = \dots = \alpha_{2k}(j)$.
Then if we substitute constant 
$\alpha_{1}(j)$ for the $j$-th variable of $\rho$ we get a relation of smaller arity that is not preserved by $nu_{B,C}^{2k}$, which contradicts our assumptions.
Therefore in any row of the matrix there should be elements that are not equal.
Also, all rows of the matrix should be different, since otherwise we
can identify the corresponding variables of $\rho$ to obtain 
a relation of smaller arity that is not preserved by 
$nu_{B,C}^{2k}$.

Assume that $\beta$ has at least two elements equal to $2$ and they appear at the $i$-th and $j$-th positions.
Then $s_2(\beta_{i},\beta_{j}) = \beta$, which contradicts the fact that 
$s_2$ preserves $\rho$.
Assume that $\beta$ has no $2$. 
Every row of the matrix must have exactly one element different from all other elements,
and the result of applying $nu_{B,C}^{2k}$ to every row is the most popular element of the row. Since we have $N\le k$ rows and $2k$ columns, there should be a column among $\alpha_{1},\ldots,\alpha_{2k}$ equal to $\beta$, which contradicts our assumptions.
W.l.o.g., we assume that $\beta(N) = 2$ 
(otherwise we just permute variables of $\rho$).

Put $\alpha_{i}' = \proj_{1,\ldots,N-1}(\alpha_{i})$ 
for every $i$, $\beta' = \proj_{1,\ldots,N-1}(\beta)$.
We know that for every $i\in\{1,\dots,N-1\}$ the $i$-th row contains exactly one element different from the others.
Since $N\le k$, at least $k+1$ of the tuples 
$\alpha_{1}',\ldots,\alpha_{2k}'$ are equal to $\beta'$.
W.l.o.g., we assume that 
$\alpha_{s+1}' = \ldots = \alpha_{2k}' = \beta'$, $\alpha_{i}'\neq\beta'$ 
for every $i\in\{1,\ldots,s\}$, where $s< N$.  
Since not all elements of the first row of the matrix are equal, 
we have $s\ge 1$.

Replace the first column of the matrix $M$ by $\alpha_{2k}$
and consider 
$nu_{B,C}^{2k}(\alpha_{2k},\alpha_{2},\alpha_{3},\ldots,\alpha_{2k}) 
= \gamma_{1}$.
Since $\alpha_{1}'\neq \beta'$, there should be a row
in the new matrix whose elements are equal.
Therefore, $\gamma_{1}\in\rho$, otherwise 
we consider $\alpha_{2k},\alpha_{2},\alpha_{3},\ldots,\alpha_{2k}$
instead of the original sequence, and get a contradiction 
with the fact that every row should contain at least two different elements.
By the definition $\proj_{1,\ldots,N-1}(\gamma_{1}) = \beta'$, 
therefore $\gamma_{1}(N)\neq 2$.
This means that 
at least $2k-2$ among the elements
$\alpha_{2}(N),\ldots,\alpha_{2k}(N)$ are equal 
to $\alpha_{2k}(N)$.
In the same way we can substitute $\alpha_{j}$ instead of $\alpha_{i}$ 
for any $i\in\{1,\dots,s\}$ and $j\in\{s+1,\dots,2k\}$ to conclude that 
$2k-2$ elements among the elements 
$\alpha_{1}(N),\dots,\alpha_{i-1}(N),\alpha_{i+1}(N),\dots,\alpha_{2k}(N)$ 
are equal to $\alpha_{j}(N)$.
Since $\beta(N) = 2$, we also have
$\alpha_{i}(N)\neq\alpha_{j}(N)$.
This implies that 
$\alpha_{s+1}(N) = \dots = \alpha_{2k}(N)$, 
and one of the following cases holds: 
\begin{enumerate}
    \item $s = 1$, $\alpha_{2} = \dots = \alpha_{2k}$;
    \item $s = 2$, $\alpha_{1}(N),\alpha_{2}(N)\in C$,
    $\alpha_{3} = \dots = \alpha_{2k}$, 
    $\alpha_{3}(N)\in B$.
\end{enumerate}

If $s=1$ then $\alpha_{1}(i)\neq \alpha_{2}(i)$ for every $i$ since there should be different elements in every row. 
Suppose $A=C$. Since $s_{2}(\alpha_{1},\alpha_{2})=(2,\dots,2)\in \rho$, we can take $\alpha_{1}=(2,\dots,2)$. 
Since $\alpha_{2}(i)\neq \alpha_{1}(i)$ and all rows are different, 
we have $N=2$ and $\alpha(2)(N)\notin B$.

If $s=2$, then $\alpha_3(i) \in B$ for all $i<N$ since $\beta(i) \neq 2$.
It remains to get an extra condition for the case when $2\in C$
and $s=2$.
In this case we replace 
$\alpha_{1}$ by $\delta_{1} = s_2(\beta_{1},\beta_{N})$ (see the definition of $\beta_{1}$ and $\beta_{N}$ above),
and $\alpha_{2}$ by $\delta_{2} = s_2(\beta_{2},\beta_{N})$. 
%$\delta_{1} = nu_{B,C}^{2k}(\alpha_{1},\alpha_{1},\alpha_{3},\alpha_{3},\ldots,\alpha_{3})$
%and 
%$\alpha_{2}$ by 
%$\delta_{2} = nu_{B,C}^{2k}(\alpha_{2},\alpha_{2},\alpha_{3},\alpha_{3},\ldots,\alpha_{3})$.
%Both of these tuples should be in $\rho$, otherwise 
%we get a new matrix with just one element in a row, which contradicts our assumptions.
Then 
$\delta_{1}(i),\delta_{2}(i)\in\{\alpha_{3}(i),2\}$ for every $i$, 
and 
$nu_{B,C}^{2k}(\delta_1,\delta_2,\alpha_{3},\ldots,\alpha_{3})= \beta$.
Note that $\delta_{1}$ can differ from $\alpha_{3}$ in at most 2 coordinates and one of them is the $N$-th coordinate.
The same is true for $\delta_{2}$.
Therefore, since $\delta_{1},\delta_{2},\alpha_{3}$ (viewed as a matrix with 3 columns)
cannot have equal rows,
$N$ should be equal to 3.
\end{proof}

Our plan is to apply the above lemma to define relations that will be used to prove hardness results.

\begin{lem}\label{ArityThreeRelations}
One of the following cases holds.
\begin{enumerate}
\item $(\{0,2\}\times\{0,2\}\times\{0,2\})\setminus\{(0,0,2)\}\in\Gamma$,
\item $(\{1,2\}\times\{1,2\}\times\{1,2\})\setminus\{(1,1,2)\}\in\Gamma$,
\item $(\{0,2\}\times\{0,2\}\times\{1,2\})\setminus\{(0,0,2)\}\in\Gamma$,
\item $(\{1,2\}\times\{1,2\}\times\{0,2\})\setminus\{(0,0,2)\}\in\Gamma$.
\end{enumerate}
\end{lem}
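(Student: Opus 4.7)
The plan is to extract the required ternary relations as minimum-arity witnesses of the failure of $\Pol(\Gamma)$ to contain certain operations $nu^{2k}_{B,C}$. The $\{0,2\}\{1,2\}$-projectivity of $\Pol(\Gamma)$ rules out such operations for several natural parameter choices, and the ``Moreover'' clause of Case~1 of Lemma~\ref{almostNU} forces the witness to be ternary as soon as $2\in C$. The closure assumptions on $\Gamma$ (arity bound $k\ge 4$ and pp-closure within that bound) mean that once any of the four target relations is pp-definable, it belongs to $\Gamma$.

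Concretely, I would pair each of the four target shapes with a choice of $(B,C)$: case~1 with $(B,C)=(\{0\},\{2\})$, case~2 with $(\{1\},\{2\})$, case~3 with $(\{0\},\{1,2\})$, and case~4 with the dual $(B,C)=(\{1,2\},\{0\})$. For each, a short calculation shows the operation fails $\{0,2\}\{1,2\}$-projectivity: taking $2k-1$ copies of an element $b\in B$ and a single $c\in C$ at position $j$, the operation outputs $b$, whereas projectivity at $j$ with $c=2$ would force output equal to $2$ (since $2\in\{0,2\}\cap\{1,2\}=\{2\}$). Varying $j$ over all positions rules out every candidate projective coordinate, hence the operation is not in $\Pol(\Gamma)$, so by assumption~(6) some relation of minimum arity $\rho\in\Gamma$ fails to be preserved, and Lemma~\ref{almostNU} applies to $\rho$.

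For the first three parameter choices, since $2\in C$, the Moreover clause of Case~1 forces $N=3$, $\alpha_3=(b,b,b)$ with $b\in B$, and $\{\alpha_1,\alpha_2\}\supseteq\{(2,b,2),(b,2,2)\}$ with forbidden tuple $\beta=(b,b,2)$. For case~4 (where $2\in B$ rather than $C$), Case~2 of Lemma~\ref{almostNU} yields a relation of arity $N\le k$ of the form $\alpha_1=(2,\ldots,2,a_1)$, $\alpha_2=(1,\ldots,1,a_2)$, from which a variable identification using the constants in $\Gamma$ reduces the witness to arity three with the required shape on $\{1,2\}^2\times\{0,2\}$. Once the ternary $\rho$ is in hand, I would pp-define the target as the conjunction of $\rho$ with the unary relations $\{0,2\}$ or $\{1,2\}$ on each coordinate (all in $\Gamma$ by hypothesis), close under the semilattice $s_2$ to obtain $(2,2,2)$, and invoke assumption~(6). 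The seed tuples $\alpha_1,\alpha_2,\alpha_3$ together with the coordinate witnesses $\beta_1,\beta_2,\beta_3\in\rho$ produced in the proof of Lemma~\ref{almostNU} (each differing from $\beta$ in exactly one coordinate) supply the seven required tuples of the target, while $\beta$ itself remains forbidden.

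The chief obstacle is ensuring, in every branch of the above construction, that the pp-definition produces \emph{exactly} one of the four specified shapes rather than a strict subrelation. The worry is that a coordinate witness $\beta_i$ may carry a spurious value outside the intersected domain $\{0,2\}$ or $\{1,2\}$, leaving some target tuples unaccounted for. My plan for this is a short case analysis on the possible values of the $\beta_i$: in branches where certain witnesses fail to lie in the relevant product domain, one re-applies Lemma~\ref{almostNU} with a different $(B,C)$, or switches to Case~2 of that lemma, and extracts a different one of the four target cases instead. Because every possible configuration of witness values falls into at least one of these branches, the disjunction asserted in Lemma~\ref{ArityThreeRelations} is established.
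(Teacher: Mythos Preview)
Your proposal has a genuine gap at the step where you claim the seed tuples $\alpha_1,\alpha_2,\alpha_3$ together with the witnesses $\beta_1,\beta_2,\beta_3$ ``supply the seven required tuples of the target''. They do not. Take your first pairing $(B,C)=(\{0\},\{2\})$: the Moreover clause of Case~1 forces $\alpha_3=(0,0,0)$ and $\{\alpha_1,\alpha_2\}=\{(2,0,2),(0,2,2)\}$, and $s_2$ then yields $(2,2,2)$. That is four tuples. The witnesses $\beta_i$ differ from $\beta=(0,0,2)$ in a single coordinate, so $\beta_1\in\{(1,0,2),(2,0,2)\}$, $\beta_2\in\{(0,1,2),(0,2,2)\}$, $\beta_3\in\{(0,0,0),(0,0,1)\}$: each is either already among your four tuples or lies outside $\{0,2\}^3$. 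Thus $\rho\cap\{0,2\}^3$ is only guaranteed to contain four of the seven tuples of $(\{0,2\}^3)\setminus\{(0,0,2)\}$; the tuples $(2,0,0),(0,2,0),(2,2,0)$ are unaccounted for. Intersecting with the unary domains does not manufacture them. Your fallback of ``re-applying Lemma~\ref{almostNU} with a different $(B,C)$'' does not close this gap, because the missing tuples have nothing to do with any $\beta_i$ taking a spurious value; the obstruction is simply that $\rho$ need not contain them.

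There is a second structural problem: for $(B,C)=(\{0\},\{2\})$ (and likewise for your other singleton choices of $B$), Case~2 of Lemma~\ref{almostNU} is \emph{not} excluded. A minimal witness could satisfy $nu^{2k}_{\{0\},\{2\}}(\alpha_1,\alpha_2,\ldots,\alpha_2)\notin\rho$ with $\alpha_2=(0,\ldots,0,a)$ and $\alpha_1=(2,\ldots,2,a')$ of any arity $N\le k$, so you cannot assume $N=3$ without further argument. The paper sidesteps both issues by a different route: it takes $(B,C)=(A,A)$, for which Case~2 is vacuous since $nu^{2k}_{A,A}(\alpha_1,\alpha_2,\ldots,\alpha_2)=\alpha_2$ always, so Case~1 with $N=3$ is forced. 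This yields a $\delta$ with only the four tuples $(a,a,b),(2,a,2),(a,2,2),(2,2,2)$ guaranteed, and the real work is a sequence of pp-definitions $\delta_1,\delta_2,\delta_3$ (using an auxiliary existential variable $t$ and symmetrising over coordinates) that promote this to the full seven-tuple target. That construction, not the choice of $(B,C)$, is the missing idea in your proposal.
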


\begin{proof}
Since $nu_{A,A}^{2k}$ is not $\{0,2\}\{1,2\}$-projective,
$\Gamma$ is not preserved by $nu_{A,A}^{2k}$.
By Lemma \ref{almostNU} there exists a relation $\rho\in\Gamma$
satisfying one of the two cases. The first case is not possible because $nu_{A,A}^{2k}(\alpha_{1},\alpha_{2},\alpha_{2},\ldots,\alpha_{2}) = \alpha_{2}$ for every $\alpha_{1},\alpha_{2}$.
Thus, $\rho$ is of arity 3 and there exist $\alpha_{1},\alpha_{2},\alpha_{3}\in\rho$ 
    such that 
    $nu_{A,A}^{2k}(\alpha_{1},\alpha_{2},\alpha_{3},\alpha_{3},\ldots,\alpha_{3}) = \beta\notin\rho$, 
    $\beta(i) = \alpha_{3}(i)\neq 2$ for $i\in\{1,2\}$, 
    $\beta(3) = 2$,  
    $\alpha_{1}(i)\neq \alpha_{3}(i)$ or $\alpha_{2}(i)\neq \alpha_{3}(i)$,
    $\alpha_{1}(i),\alpha_{2}(i)\in\{\alpha_{3}(i),2\}$ for every $i$, 
    $\alpha_{1}(3) = \alpha_{2}(3) = 2$.
    Suppose $\alpha_{3} = (a_1,a_2,a_3)$. 
    We know that $(a_{1},a_{2},a_{3}), (2,a_{2},2),(a_1,2,2)\in \rho$ (as $\alpha_{3},\alpha_{1},\alpha_{2}$) 
    and $(a_{1},a_{2},2)\notin\rho$ (as $\beta$).
    Since the semilattice $s_2$ applied to $(a_{1},a_{2},0)$ and 
    $(a_{1},a_{2},1)$ gives $(a_{1},a_{2},2)\notin\rho$, 
    $(a_{1},a_{2},c)\in\rho$ implies $c=a_{3}$.
    
    Since $a_{1},a_{2},a_{3}\in\{0,1\}$, 
    at least two of them are equal.
    We consider three cases:
    
    Case 1. $a_{2} = a_{3}$.
    Put $\delta(x,y,z) = \exists t\;\rho(x,a_{2},t)\wedge \rho (y, t,z)$. 
    It is not hard to see that 
    $(a_{1},a_{1}, 2)\notin \delta$ (we need to put $t = a_{2} = a_{3}$ and, therefore, $z = a_{3}$), 
    $(a_{1},2,2),(a_{1},a_{1},a_{3})\in\delta$ (put $t = a_{2} = a_{3} $),
    $(2,2,2),(2,a_{1},2)\in\delta$ (put $t = 2$).

    Case 2. $a_{1} = a_{3}$.
    Put $\delta(x,y,z) = \exists t\;\rho(a_{1},x,t)\wedge \rho (t, y,z)$. 
    It is not hard to see that 
    $(a_{2},a_{2}, 2)\notin \delta$ (we need to put $t = a_{1} = a_{3}$ and, therefore, $z = a_{3}$), 
    $(a_{2},2,2),(a_{2},a_{2},a_{3})\in\delta$ (put $t = a_{1} = a_{3} $),
    $(2,2,2),(2,a_{2},2)\in\delta$ (put $t = 2$).    
    
    Case 3. $a_{1} = a_{2}$. In this case we put $\delta = \rho$.
    
    Thus, we have a relation $\delta\in\Gamma$ 
    satisfying the following properties:
    for some $a,b\in\{0,1\}$ we have 
    $(a,a,b),(2,a,2),(a,2,2),(2,2,2)\in \delta$, 
    $(a,a,2)\notin\delta$.
    Since $s_2$ preserves $\delta$, 
    $(a,a,c)\in\delta$ implies $c=b$.
    
    Put $\delta_{1} = \delta\cap (\{a,2\}\times\{a,2\}\times\{b,2\})$,
$\delta_{2}(x,y,z) = 
\exists t \;
\delta_{1}(x,y,t)\wedge \delta_{1}(x,z,t)
\wedge \delta_{1}(z,x,t).$

Assume that 
$(a,a,2)\in\delta_{2}$.
Since we need to set $t=b$, we have
$(a,2,b),(2,a,b)\in\delta_{1}$.
Since $s_2$ preserves $\delta_{1}$ we obtain 
$(2,2,b)\in\delta_{1}$, 
which means that $\delta_{1}=
(\{a,2\}\times\{a,2\}\times\{b,2\})\setminus\{(a,a,2)\}$
and completes this case.

Assume that 
$(a,a,2)\notin\delta_{2}$.
We know that $(a,a,a)\in\delta_{2}$ (put $t=b$), 
$(a,2,2)$, $(2,a,2)$, $(2,2,a)$, $(2,2,2)\in\delta_{2}$ (put $t = 2$).
Put
$$\delta_{3}(x,y,z) = 
\exists x'\exists y'\exists z' \;
\delta_{2}(x,x,x')\wedge
\delta_{2}(y,y,y')\wedge
\delta_{2}(z',z',z)\wedge
\delta_{2}(x',y',z').$$
Let us show that $\delta_{3} = (\{a,2\}\times\{a,2\}\times\{a,2\})\setminus\{(a,a,2)\}$.
If $x = 2$, then we put $x'=z'=2$ and $y'=y$.
If $y=2$ then we put $y'=z'=2$ and $x'=x$.
If $x=y=z=a$, then we put $y'=z'=y' = a$.
It remains to show that $(a,a,2)\notin\delta_{3}$, which follows from the fact that 
$x'$ should be equal to $a$, 
$y'$ should be equal to $a$, 
$z'$ should be equal to $2$, 
$(a,a,2)\notin\delta_{2}$.
Thus, $\delta_{3} = (\{a,2\}\times\{a,2\}\times\{a,2\})\setminus\{(a,a,2)\}$,
which completes the proof.
\end{proof}

\begin{lem}\label{ArityThreeOrTransZero}
One of the following conditions holds.
\begin{enumerate}
\item $(\{0,2\}\times\{0,2\}\times\{0,2\})\setminus\{(0,0,2)\}\in\Gamma$;
\item there exists a relation $\delta\in\Gamma$ such that 
$(0,1),(2,2)\in\delta$, 
$(0,2)\notin\delta$;
\end{enumerate}
\end{lem}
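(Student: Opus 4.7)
The plan is to apply Lemma~\ref{ArityThreeRelations}, which produces a ternary relation $\rho$ in $\Gamma$ of one of four specific shapes, and then, case by case, either to recognise case~1 of the present lemma directly or to pp-construct the required binary $\delta$ out of $\rho$ and the unary relations $\{0\},\{1\},\{2\},\{0,2\},\{1,2\}$, relying on the standing assumption that $\Gamma$ is closed under pp-definitions of arity at most $k\ge 4$.

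Case~1 of Lemma~\ref{ArityThreeRelations} is literally case~1 of the current lemma, so nothing further is needed in that branch. In case~3 of Lemma~\ref{ArityThreeRelations}, where $\rho=(\{0,2\}\times\{0,2\}\times\{1,2\})\setminus\{(0,0,2)\}\in\Gamma$, the simple pp-definition $\delta(x,y):=\rho(x,x,y)$ suffices: a direct enumeration gives $\delta=\{(0,1),(2,1),(2,2)\}$, because $(0,0,1)\in\rho$ while $(0,0,2)\notin\rho$, and both $(2,2,1)$ and $(2,2,2)$ lie in $\rho$. Since $\delta$ is pp-definable of arity $2\le k$, we get $\delta\in\Gamma$ with $(0,1),(2,2)\in\delta$ and $(0,2)\notin\delta$, delivering case~2 of the current lemma.

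The genuinely delicate branches are cases~2 and~4 of Lemma~\ref{ArityThreeRelations}, in which the first two coordinates of $\rho$ are confined to $\{1,2\}$ and a naive identification of variables cannot put a $0$ in the first place of a binary relation. My intention in these branches is to pp-combine $\rho$ with the unary constants $\{0\}$ and $\{0,2\}$, and possibly with several copies of $\rho$ at once, so as to transport the asymmetry witnessed by the omitted tuple $(1,1,2)\notin\rho$ from the $\{1,2\}$-side to a $(0,1)$-versus-$(0,2)$ distinction on the $\{0,2\}$-side; the pp-closure of $\Gamma$ at arity $\le k$ then yields the binary $\delta$ inside $\Gamma$.

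The main obstacle sits precisely in these last two branches. The $\{0,2\}\{1,2\}$-projectivity of $\Pol(\Gamma)$ makes $(0,2)$ and $(2,2)$ naturally coupled, since they agree in the second coordinate and any projective-like polymorphism preserves that coordinate, so a careless pp-combination that admits $(2,2)$ will usually also admit $(0,2)$. The construction must therefore route the witness for $(0,1)$ through a genuinely different place of $\rho$ than the witness for $(2,2)$, using the constant $\{0\}$ to force one coordinate to $0$ exactly where the omitted tuple $(1,1,2)$ is being exploited to block $(0,2)$; getting this routing right, while keeping the total arity of the intermediate pp-formula within the bound $k$, is the step that needs care.
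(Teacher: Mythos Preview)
Your handling of cases~1 and~3 of Lemma~\ref{ArityThreeRelations} is fine, but the plan for cases~2 and~4 cannot be completed as stated. You propose to build the binary $\delta$ by pp-combining $\rho$ with the unary relations, but from those relations alone no such $\delta$ exists. Here is the obstruction: define the idempotent binary operation $p$ that acts as $\max$ on $\{1,2\}$, as $\min$ on $\{0,2\}$, and sends $(0,1)$ and $(1,0)$ to $2$. One checks directly that $p$ preserves each of $\{0\},\{1\},\{2\},\{0,2\},\{1,2\}$, and that $p$ preserves both $\rho=(\{1,2\}^3)\setminus\{(1,1,2)\}$ and $\rho=(\{1,2\}^2\times\{0,2\})\setminus\{(1,1,2)\}$ (the only way the result could be $(1,1,2)$ forces both input tuples to equal $(1,1,2)$, which is excluded). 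Hence $p$ preserves every relation pp-definable from $\rho$ and the unaries. But $p\bigl((0,1),(2,2)\bigr)=(p(0,2),p(1,2))=(0,2)$, so any such pp-definable $\delta$ containing $(0,1)$ and $(2,2)$ must also contain $(0,2)$. Your ``careful routing'' cannot get around this, because the obstruction is algebraic rather than combinatorial.

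The paper does not go through Lemma~\ref{ArityThreeRelations} at all. Instead it reapplies Lemma~\ref{almostNU} with the asymmetric choice $B=\{0\}$, $C=A$ (rather than $B=C=A$). The point is that $nu_{\{0\},A}^{2k}$ is still not $\{0,2\}\{1,2\}$-projective, so some minimal-arity $\rho\in\Gamma$ is not preserved by it, and now the second alternative of Lemma~\ref{almostNU} produces a \emph{binary} $\rho$ containing $(0,1)$ but not $(0,2)$, with $(2,2)$ obtained via $s_2$; this is exactly condition~2. The first alternative, because $2\in C$ triggers the ``Moreover'' clause, forces $\alpha_3=(0,0,0)$ and lands in $\{0,2\}^3$, from which condition~1 is then extracted. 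The asymmetry in $B$ is what breaks the symmetry between the $\{0,2\}$- and $\{1,2\}$-sides that blocks your approach; the companion Lemma~\ref{ArityThreeOrTransOne} is proved by taking $B=\{1\}$ instead.
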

\begin{proof}
Since $nu_{\{0\},A}^{2k}$ is not $\{0,2\}\{1,2\}$-projective,
$\Gamma$ is not preserved by $nu_{\{0\},A}^{2k}$.
By Lemma \ref{almostNU} there exists a relation $\rho\in\Gamma$
satisfying one of the two cases of Lemma \ref{almostNU}.
Suppose we have the second case of Lemma \ref{almostNU}, where we import that lemma's notation.
Then $\alpha_{3} = (0,0,0), \alpha_{1} = (2,0,2), \alpha_{2}=(0,2,2)$,
$\beta = (0,0,2)$
(or $\alpha_{1}$ and $\alpha_{2}$ can be switched).
Since $s_2$ preserves $\rho$, we have $(2,2,2)\in\rho$.

Let 
$\delta_{0} = \rho\cap\{0,2\}^{3}$.
Consider two cases:

Case 1. $(2,2,0)\in\delta_{0}$. 
For $\epsilon(x,y)=\delta_{0}(y,y,x)$ 
we have $\epsilon=
\begin{pmatrix}
0 & 0& 2\\
0 & 2 & 2
\end{pmatrix}$.

Case 2. $(2,2,0)\notin\delta_{0}$. 
Put $\epsilon(x,y)=\exists t \; \delta_{0}(t,x,y)
\wedge \delta_{0}(x,t,y)$.
If $(2,0)\in\epsilon$, then $(0,2,0),(2,0,0)\in\delta_{0}$,
which, using the semilattice $s_2$, implies 
$(2,2,0)\in\delta_{0}$. This contradiction
shows that $(2,0)\notin\epsilon$ and $\epsilon=
\begin{pmatrix}
0 & 0& 2\\
0 & 2 & 2
\end{pmatrix}$.

Then the required relation $(\{0,2\}\times\{0,2\}\times\{0,2\})\setminus\{(0,0,2)\}$
can be pp-defined by
$$\delta_{1}(x,y,z) = \exists x'\exists y' \exists z'\;
\epsilon(x',x)\wedge \epsilon(y',y)\wedge 
\epsilon(z,z')\wedge \delta_{0}(x',y',z').$$

Suppose we have the first case in Lemma \ref{almostNU}.
Then $\beta = (0,2), \alpha_{1} = (2,2), \alpha_{2} = (0,1)$, 
$s_2(\alpha_{1},\alpha_{2}) = (2,2)\in \rho$.
%If $(0,0)\in\rho$, then %$s\left(\begin{pmatrix}0\\0\end{pmatrix},\begin{pmatrix}0\\1\end{pmatrix}\%right) = \begin{pmatrix}0\\2\end{pmatrix}\in \rho$, which contradicts the %fact that $\beta\notin\rho$. 
Thus, $\rho$ satisfies the conditions of item 2.
\end{proof}

Similarly, if we switch 0 and 1 we get the following lemma.
\begin{lem}\label{ArityThreeOrTransOne}
One of the following cases holds.
\begin{enumerate}
\item $(\{1,2\}\times\{1,2\}\times\{1,2\})\setminus\{(1,1,2)\}\in\Gamma$
\item there exists $\delta\in\Gamma$ such that 
$(1,0),(2,2)\in\delta$, 
$(1,2)\notin\delta$
\end{enumerate}
\end{lem}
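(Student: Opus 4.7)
The plan is to mirror the argument of Lemma~\ref{ArityThreeOrTransZero}, replacing the role of $0$ by $1$ throughout. Note that the operation $nu_{\{1\},A}^{2k}$ is not $\{0,2\}\{1,2\}$-projective: on tuples of the form $(1,\ldots,1,2)$ the first $2k-1$ entries belong to $\{1,2\}$ while the last is in $\{1,2\}$, yet the operation returns $1$, and on tuples $(0,\ldots,0,2)$ the output is $2$, which together rules out both $\{0,2\}$- and $\{1,2\}$-projections. Hence by assumption~(4) on $\Gamma$ there is a relation $\rho\in\Gamma$ that $nu_{\{1\},A}^{2k}$ fails to preserve. Let us pick such a $\rho$ of minimal arity and apply Lemma~\ref{almostNU} with $B=\{1\}$, $C=A$.

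Case 1 (first alternative of Lemma~\ref{almostNU}). After the permutation of coordinates promised by the lemma, and noting the extra conclusion available because $2\in C$, we obtain $\alpha_3=(1,1,1)$, $\alpha_1=(2,1,2)$, $\alpha_2=(1,2,2)$ (or $\alpha_1$ and $\alpha_2$ swapped), and $\beta=(1,1,2)\notin\rho$; closure of $\rho$ under $s_2$ forces $(2,2,2)\in\rho$. Set $\delta_0=\rho\cap\{1,2\}^3$. Two subcases, just as in Lemma~\ref{ArityThreeOrTransZero}:
\begin{itemize}
\item If $(2,2,1)\in\delta_0$, then $\epsilon(x,y):=\delta_0(y,y,x)$ equals $\begin{pmatrix}1&1&2\\1&2&2\end{pmatrix}$.
\item If $(2,2,1)\notin\delta_0$, then $\epsilon(x,y):=\exists t\;\delta_0(t,x,y)\wedge\delta_0(x,t,y)$ works: if $(2,1)$ lay in $\epsilon$ we would get $(0$\text{-analogue swapped}$)$ $(1,2,1),(2,1,1)\in\delta_0$ and then $(2,2,1)\in\delta_0$ by $s_2$, contradiction. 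So again $\epsilon=\begin{pmatrix}1&1&2\\1&2&2\end{pmatrix}$.
\end{itemize}
The relation of item~1 is then pp-defined by
\[
\delta_1(x,y,z)=\exists x'\exists y'\exists z'\;\epsilon(x',x)\wedge \epsilon(y',y)\wedge \epsilon(z,z')\wedge \delta_0(x',y',z'),
\]
which by assumption~(6) lies in $\Gamma$.

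Case 2 (second alternative of Lemma~\ref{almostNU}). Here $\rho$ is binary with $\alpha_2=(1,1)$ (wait: we should get $\beta=(1,2)$ with $\alpha_2=(1,c)$ having $c\neq 1$, etc.); more precisely we obtain $\beta=(1,2)\notin\rho$, $\alpha_2=(1,1)$, and $\alpha_1=(c,d)$ with $c\ne 1$, $d\ne 1$, and by closure under $s_2$ applied to $\alpha_1$ and a tuple $\alpha\in\rho$ differing from $\beta$ only in the second coordinate we also secure $(2,2)\in\rho$ and a tuple of the form $(1,0)\in\rho$ (taking $d=0$; the $d=2$ possibility is handled by combining with $\alpha_2$ via $s_2$). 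Thus $\rho$ satisfies item~2, with the matching reading $(1,0),(2,2)\in\delta$ and $(1,2)\notin\delta$.

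The main obstacle is to track the case distinctions of Lemma~\ref{almostNU} under the substitution $0\leftrightarrow 1$, and in particular to verify the subcase Case~1 reduction when $(2,2,1)\notin\delta_0$: one has to argue that no pp-definable $\epsilon$ ``collapses'' so as to force $(2,1)\in\epsilon$. This is precisely the $s_2$-preservation argument transported from Lemma~\ref{ArityThreeOrTransZero}, and it goes through verbatim after the swap. All remaining verifications are the straightforward symmetric analogues of the computations already performed in Lemma~\ref{ArityThreeOrTransZero}.
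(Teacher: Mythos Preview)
Your approach is exactly the paper's: swap $0$ and $1$ in the proof of Lemma~\ref{ArityThreeOrTransZero}. Case~1 is carried out correctly. There are, however, two concrete slips.

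First, your justification that $nu_{\{1\},A}^{2k}$ is not $\{0,2\}\{1,2\}$-projective does not work: on $(1,\ldots,1,2)$ the output $1$ lies in $\{1,2\}$, and on $(0,\ldots,0,2)$ the output $2$ lies in $\{0,2\}$, so neither tuple rules anything out. The correct witnesses are the tuples $(0,1,\ldots,1)$, $(1,0,1,\ldots,1)$, \ldots, each having output $1$ while one coordinate equals $0\in\{0,2\}$; this excludes every coordinate as a possible $\{0,2\}\{1,2\}$-projection index.

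Second, in Case~2 you write $\alpha_2=(1,1)$, which is impossible: on the second row $nu_{\{1\},A}^{2k}$ sees $2k-1$ copies of $\alpha_2(2)$ and one $\alpha_1(2)\ne\alpha_2(2)$, and must output $2$; this forces $\alpha_2(2)\notin B=\{1\}$. Together with $\alpha_2\in\rho$ and $\beta=(1,2)\notin\rho$ one gets $\alpha_2=(1,0)$ directly, and then $s_2(\alpha_1,\alpha_2)=(2,2)\in\rho$. So $\rho$ itself (or, if the arity $N>2$, the binary relation $\delta(x,y)=\rho(x,\ldots,x,y)$) witnesses item~2 without the detour through a separate $\beta_N$. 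Your convoluted recovery via ``a tuple differing from $\beta$ only in the second coordinate'' happens to reach the right endpoint, but the reasoning along the way (in particular the role of $d$ and the $s_2$ combination with the wrong $\alpha_2$) does not make sense as written.
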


\begin{lem}\label{ArityThreeRelationsStrong}
One of the following cases holds:
\begin{enumerate}
\item $(\{0,2\}\times\{0,2\}\times\{0,2\})\setminus\{(0,0,2)\},(\{1,2\}\times\{1,2\}\times\{1,2\})\setminus\{(1,1,2)\}\in\Gamma$
\item $(\{1,2\}\times\{0,2\}\times\{0,2\})\setminus\{(1,0,2)\}, (\{0,2\}\times\{0,2\}\times\{0,2\})\setminus\{(0,0,2)\}\in\Gamma$
\item $(\{0,2\}\times\{1,2\}\times\{1,2\})\setminus\{(0,1,2)\}, (\{1,2\}\times\{1,2\}\times\{1,2\})\setminus\{(1,1,2)\}\in\Gamma$
\end{enumerate}
\end{lem}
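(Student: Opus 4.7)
The strategy is a structured case analysis combining the dichotomies provided by Lemmas~\ref{ArityThreeOrTransZero} and~\ref{ArityThreeOrTransOne}, and ``closing the gap'' in the problematic subcase using Lemma~\ref{ArityThreeRelations}. Write $R_0 := (\{0,2\}^3)\setminus\{(0,0,2)\}$ and $R_1 := (\{1,2\}^3)\setminus\{(1,1,2)\}$ for the two ``diagonal'' targets, and $T_0 := (\{1,2\}\times\{0,2\}\times\{0,2\})\setminus\{(1,0,2)\}$ and $T_1 := (\{0,2\}\times\{1,2\}\times\{1,2\})\setminus\{(0,1,2)\}$ for the two ``transported'' targets. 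I must exhibit in $\Gamma$ one of the pairs $\{R_0,R_1\}$, $\{R_0,T_0\}$, or $\{R_1,T_1\}$, which are cases 1, 2, 3 of the statement respectively.

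The first step is to apply Lemma~\ref{ArityThreeOrTransZero}, obtaining either $R_0\in\Gamma$ or a binary $\delta_0\in\Gamma$ with $(0,1),(2,2)\in\delta_0$ and $(0,2)\notin\delta_0$; and dually Lemma~\ref{ArityThreeOrTransOne} gives either $R_1\in\Gamma$ or a binary $\delta_1\in\Gamma$ with $(1,0),(2,2)\in\delta_1$ and $(1,2)\notin\delta_1$. If $R_0,R_1\in\Gamma$ we are done (case~1). Otherwise at least one of $\delta_0,\delta_1$ is available, and by assumption 6 on $\Gamma$ all restrictions of these binary relations to subsets of $A\times A$ pinned down by the unary relations in $\Gamma$ are again in $\Gamma$. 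In particular $\delta_0':=\delta_0\cap(\{0,2\}\times\{1,2\})\in\Gamma$ and $\delta_1':=\delta_1\cap(\{1,2\}\times\{0,2\})\in\Gamma$ whenever $\delta_0$, resp.\ $\delta_1$, exists; since $s_2$ preserves these relations, $\delta_1'\in\{\{(1,0),(2,2)\},\{(1,0),(2,0),(2,2)\}\}$, and analogously for $\delta_0'$.

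The second step handles the ``mixed'' configurations. Suppose $R_0\in\Gamma$ and $\delta_1$ exists. I claim the pp-formula
\[ T_0'(x,y,z) \;\equiv\; \exists x'\; \delta_1'(x,x')\wedge R_0(x',y,z) \]
defines exactly $T_0$. When $x=1$, the only admissible $x'$ is $0$ (in both possibilities for $\delta_1'$), so $T_0'(1,\cdot,\cdot)=R_0(0,\cdot,\cdot)$ which misses precisely $(0,2)$, yielding the missing tuple $(1,0,2)$. When $x=2$, one may choose $x'=2$, and $R_0(2,\cdot,\cdot)$ is the full relation $\{0,2\}^2$. This gives case~2. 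By a symmetric argument, if $R_1\in\Gamma$ and $\delta_0$ exists, then $T_1\in\Gamma$, giving case~3.

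The main obstacle is the last configuration, where neither $R_0$ nor $R_1$ lies in $\Gamma$ but both $\delta_0,\delta_1$ are available. Here I apply Lemma~\ref{ArityThreeRelations}: since its first two conclusions are ruled out, one of the ``mixed'' relations $R_0' := (\{0,2\}\times\{0,2\}\times\{1,2\})\setminus\{(0,0,2)\}$ or $R_1' := (\{1,2\}\times\{1,2\}\times\{0,2\})\setminus\{(1,1,2)\}$ must belong to $\Gamma$ (the latter being the natural reading of the statement of Lemma~\ref{ArityThreeRelations}(4), where the indicated missing tuple does not otherwise lie in the stated cube). In the first case, I use $\delta_1$ to ``correct'' the third coordinate via
\[ R_0''(x,y,z) \;\equiv\; \exists z'\; R_0'(x,y,z')\wedge \delta_1'(z',z), \]
and verify directly that $R_0''=R_0$: the offending tuple $(0,0,2)$ forces $z'=1$ (as $R_0'(0,0,z')$ holds only for $z'=1$), but $(1,2)\notin\delta_1$ blocks it; all other tuples of $\{0,2\}^3$ are witnessed by $z'=1$ (for $z=0$) or $z'=2$ (for $z=2$, using $R_0'(x,y,2)$ which holds whenever $(x,y)\neq(0,0)$). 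This reduces us to the previously handled configuration $R_0\in\Gamma$ with $\delta_1$ available. The symmetric reduction from $R_1'$ using $\delta_0$ gives $R_1\in\Gamma$. The delicate part throughout is checking that both possible shapes of $\delta_0'$ and $\delta_1'$ (with or without the ``extra'' tuple) produce the correct target exactly; this is what makes the otherwise natural pp-definitions work uniformly.
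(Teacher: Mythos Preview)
Your proof is correct. The overall scaffold---case analysis via Lemmas~\ref{ArityThreeOrTransZero} and~\ref{ArityThreeOrTransOne}, with Lemma~\ref{ArityThreeRelations} handling the residual subcase---is the same as the paper's, and your pp-definition of $T_0$ from $R_0$ and $\delta_1'$ is exactly the paper's construction in the mixed configuration.

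The one genuine divergence is in the final subcase where both $\delta_0$ and $\delta_1$ are available. The paper first combines them into the bijective relation
\[
\delta(x,y)\;=\;\delta_0(x,y)\wedge\delta_1(y,x)\wedge x\in\{0,2\}\wedge y\in\{1,2\}\;=\;\{(0,1),(2,2)\},
\]
and then uses this $\delta$ as a $0\leftrightarrow 1$ swap to transport \emph{any} of the four outcomes of Lemma~\ref{ArityThreeRelations} (without needing to rule out items~1 and~2) into the pair $R_0,R_1$, always landing in case~1. You instead use the case hypothesis $R_0,R_1\notin\Gamma$ to exclude items~1 and~2 of Lemma~\ref{ArityThreeRelations} up front, and then correct only the third coordinate of the surviving mixed relation via $\delta_1'$ (resp.\ $\delta_0'$) to recover $R_0$ (resp.\ $R_1$), falling back to a mixed case and hence to conclusion~2 or~3. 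Both routes are sound; the paper's buys a single reusable tool at the cost of one extra pp-definition and always reaches case~1 in this branch, while yours is more economical but may terminate in case~2 or~3 instead. Your reading of item~4 of Lemma~\ref{ArityThreeRelations} (missing tuple $(1,1,2)$) is the intended one.
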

\begin{proof}
We apply Lemmas \ref{ArityThreeOrTransZero} and
\ref{ArityThreeOrTransOne}.
If we have the first case in both of them, then the first condition holds and we are done.

Assume that we have the second case in both of them and the corresponding relations are $\delta_{0}$ and $\delta_{1}$, respectively.
Put $$\delta(x,y) = \delta_{0}(x,y)\wedge \delta_{1}(y,x)\wedge (x\in\{0,2\})
\wedge (y\in\{1,2\}).$$
It is not hard to see that 
$\delta = \begin{pmatrix} 0 & 2\\ 1& 2\end{pmatrix}$.
Using this relation we can easily switch $0$ and $1$ in relations.
By Lemma~\ref{ArityThreeRelations}
we have $\rho = (\{a,2\}\times\{a,2\}\times\{b,2\})\setminus\{(a,a,2)\}\in\Gamma$
for some $a,b\in\{0,1\}$.
If $a=b=1$, then we use the relation $\rho$ and the relation defined by the following 
pp-definition to satisfy condition 1:
$$\rho_{1}(x,y,z) = \exists x'\exists y'\exists z'\;
\rho(x',y',z') \wedge \delta(x,x')\wedge \delta(y,y')\wedge \delta(z,z').$$
If $a=b=0$ then the same pp-definition, but with the arguments of $\delta$ switched, applies.

If $a\neq b$, and $(a,b)=(0,1)$, then the relations we need to satisfy condition 1 can be defined by:
$$\rho_{2}(x,y,z) = \exists z'\;
\rho(x,y,z') \wedge \delta(z,z'),$$
$$\rho_{3}(x,y,z) = \exists x'\exists y'\;
\rho(x',y',z) \wedge \delta(x',x)\wedge \delta(y',y).$$
If $(a,b)=(1,0)$ then the same pp-definitions, but with the arguments of $\delta$ switched, applies.

Assume that we have 
the first case in Lemma \ref{ArityThreeOrTransZero}
and 
the second case in Lemma \ref{ArityThreeOrTransOne}.
Thus, $\rho_{4} = (\{0,2\}\times\{0,2\}\times\{0,2\})\setminus\{(0,0,2)\}\in\Gamma$
and 
there exists $\delta\in\Gamma$ such that 
$(1,0),(2,2)\in\delta$, 
$(1,2)\notin\delta$.
Put
$$\rho_{5}(x,y,z)= \exists x' \;\rho_{4}(x',y,z) \wedge \delta(x,x')
\wedge x\in\{1,2\}\wedge x'\in\{0,2\}.$$
Then $\rho_{4}$ and $\rho_{5}$ satisfy the second condition of this lemma.

Assume that we have 
the second case in Lemma \ref{ArityThreeOrTransZero}
and 
the first case in Lemma \ref{ArityThreeOrTransOne}.
Thus, $\rho_{6} = (\{1,2\}\times\{1,2\}\times\{1,2\})\setminus\{(1,1,2)\}\in\Gamma$
and 
there exists $\delta\in\Gamma$ such that 
$(0,1),(2,2)\in\delta$, 
$(0,2)\notin\delta$.
Put
$$\rho_{7}(x,y,z)= \exists x' \; \rho_{6}(x',y,z) \wedge \delta(x,x')
\wedge x\in\{0,2\}\wedge x'\in\{1,2\}.$$
Then $\rho_{6}$ and $\rho_{7}$ satisfy the third condition of this lemma.
\end{proof}

\subsection{Connection with Boolean functions}

A mapping $f:\{0,1\}^{n}\to\{0,1\}$ is called \emph{a Boolean function}.
We say that 
a Boolean function $f$ of arity $n$ is represented by $\Gamma$
if there exists a relation $R$ of arity $n+1$ pp-definable using $\Gamma$
such that for 
all $a_{1},\ldots,a_{n}\in\{0,1\}$, $b\in\{0,1,2\}$
$$(a_{1},\ldots,a_{n},b)\in R\Leftrightarrow  f(a_{1},\ldots,a_{n}) = b.$$
For example, an AND-type relation is a representation of conjunction, 
an OR-type relation is a representation of disjunction.
A $k$-ary Boolean function $f$ is \emph{monotonic} if $f(x_1,\ldots,x_k)\leq f(y_1,\ldots,y_k)$ whenever 
$x_{i}\le y_{i}$ for every 
$i\in\{1,2,\dots,k\}$.
A $k$-ary Boolean function is \emph{linear} if it is of the form $c_{1}x_1+\ldots+c_{k}x_k+c_{0}$ for 
$c_{0},c_{1},\dots,c_{k}\in\{0,1\}$.

\begin{lem}\label{IsClone}
The set of all Boolean functions represented by $\Gamma$ is a clone generated by one of the following sets:
$\{0,1\}$, $\{+,0,1\}$, $\{\wedge,0,1\}$, $\{\vee,0,1\}$, $\{\wedge,\vee,0,1\}$ or $\{\wedge,\neg,0,1\}$.
\end{lem}
\noindent Note that this is true for 
arbitrary constraint languages containing 
$\{0\}$, $\{1\}$ and $=$.

\begin{proof}
Notice that 
constants 0 and 1 are represented by 
the singleton relations $\{0\}$ and $\{1\}$, which are in $\Gamma$, and 
a projection can be represented 
by the equality relation, which is assumed to be in $\Gamma$.

Suppose we have
Boolean functions 
$g_{1},\dots,g_{n}$ 
of arity $m$ represented by 
relations 
$R_{1},\ldots,R_{n}$, respectively.
Suppose 
$f$ is Boolean function of arity $n$
represented by a relation $R$.
Then the composition 
$h= f(g_1,\dots,g_n)$ can be represented by
the relation defined by 
\begin{align*}
R'(x_{1},\ldots,x_{m},z)
=
\exists y_1\dots\exists y_n
R_{1}(x_{1},\dots,x_{m},y_{1})\wedge
\dots
\wedge
R_{n}(x_{1},\dots,x_{m},y_{n})
\wedge 
R(y_1,\dots,y_n,z).
\end{align*}
Since we are closed under composition and above $\{0,1\}$ we can read from Post's Lattice that we are one of the clones in the statement of the lemma.
\end{proof}

By $\BoolClo(\Gamma)$ we denote the set of all 
Boolean functions represented by $\Gamma$, which is a clone 
by Lemma~\ref{IsClone}.

Let 
$\alpha_{1},\ldots,\alpha_{N}$, where $N = 2^{k+1}-1$,
be the set of all tuples from $\{0,1\}^{k+1}$
except for the tuple $(1,1,\dots,1)$.
Thus, $\alpha_{i}$ can be viewed as a binary representation of 
the number $(i-1)$.

Let us define an operation $f_{\Gamma}$ on $A$ of arity $N$.
If there exists a unique Boolean function $h\in\BoolClo(\Gamma)$ 
of arity $k+1$ such that 
$h(\alpha_{i}) = a_{i}$ for every $i$, then 
put $f(a_{1},\ldots,a_{N}) = h(1,1,\ldots,1)$.
Otherwise, put $f(a_{1},\ldots,a_{N}) = 2$.

\begin{lem}
Suppose Boolean functions $h_{1}$ and $h_{2}$ of arity $n\ge 2$
differ only on the tuple $(1,1,\dots,1)$.
Then conjunction can be generated from $h_{1},h_{2}$ and constants $0,1$.
\end{lem}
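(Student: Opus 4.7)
The plan is to interpret this as a completeness statement for the Boolean clone $K := \Clo(\{h_1, h_2, 0, 1\})$ and reduce it to Post's classification of Boolean clones. After swapping $h_1$ and $h_2$ if necessary, I assume $h_1(1,\ldots,1) = 0$ and $h_2(1,\ldots,1) = 1$; they still agree on every tuple other than $(1,\ldots,1)$.

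First I would dispose of two degenerate subcases. If $h_1 \equiv 0$ on $\{0,1\}^n$, then $h_2$ is the indicator of $(1,\ldots,1)$, i.e. $n$-ary conjunction, from which binary $\wedge$ is obtained by substituting the constant $1$ into $n-2$ of the arguments. Symmetrically, if $h_2 \equiv 1$, then $h_1$ is $n$-ary NAND, so $\neg x = h_1(x,\ldots,x)$ and $x \wedge y = \neg h_1(x, y, 1, \ldots, 1)$.

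Outside these subcases, $h_1$ attains $1$ somewhere and $h_2$ attains $0$ somewhere. Here I invoke Post's theorem: the maximal proper subclones of $\mathcal{B}_2$ are the $0$-preserving $T_0$, the $1$-preserving $T_1$, the monotone $M$, the self-dual $S$, and the affine $L$. Since $0, 1 \in K$, the clone $K$ lies in neither $T_0$ nor $T_1$, so to conclude $K = \mathcal{B}_2$ (hence $\wedge \in K$) it suffices to verify $K \not\subseteq M, S, L$. For $M$: any monotone $f$ with $f(1,\ldots,1) = 0$ is identically $0$, but $h_1 \not\equiv 0$ in the present case, so $h_1 \notin M$. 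For $S$: self-duality of $h_1$ would force $h_1(0,\ldots,0) = \neg h_1(1,\ldots,1) = 1$, while self-duality of $h_2$ would force $h_2(0,\ldots,0) = \neg h_2(1,\ldots,1) = 0$, contradicting the hypothesis $h_1(0,\ldots,0) = h_2(0,\ldots,0)$. For $L$: the pointwise XOR $h_1 \oplus h_2$ takes value $1$ exactly on $(1,\ldots,1)$, i.e. equals the $n$-ary conjunction, which is not affine for $n \geq 2$; but the XOR of two affine functions over $\mathrm{GF}(2)$ is affine, so $h_1$ and $h_2$ cannot both lie in $L$.

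Hence $K = \mathcal{B}_2$, and in particular $\wedge \in K$. The only ingredient doing real work is the Post lattice; everything else is immediate. The main obstacle, such as it is, lies in recognising that the hypothesis ``$h_1$ and $h_2$ differ only at $(1,\ldots,1)$'' is precisely what simultaneously wrecks monotonicity (through $h_1$), self-duality (through the clash at $(0,\ldots,0)$), and affineness (through the XOR being the $n$-ary AND).
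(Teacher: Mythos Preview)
Your argument is correct and follows essentially the same route as the paper: reduce to Post's classification of the five maximal Boolean clones, using the constants to rule out $T_0,T_1$ and the structure of $h_1,h_2$ to rule out $M$ and $L$. The paper organises the case split slightly differently (first observing ``not both linear'', then splitting on ``both monotone'' versus ``one non-monotone''), whereas you peel off the degenerate constant cases first; but the content is the same. One small remark: your self-duality argument is unnecessary, since the constants $0,1$ already witness $K\not\subseteq S$.
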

\begin{proof}
Since the functions $h_1$ and $h_2$ differ just on one tuple,
they cannot be both linear.
Assume that they are monotonic.
Since $h_1(1,1,\dots,1) \neq h_2(1,1,\dots,1)$, 
we have $h_1(x_1,\dots,x_n) = 0$ 
and $h_2(x_1,\dots,x_n) = x_1\wedge\dots\wedge x_n$
(or we should switch $h_1$ and $h_2$).
Thus, we generated conjunction in this case.

Assume that one of the functions $h_1$ and $h_2$ is not monotonic.
Then 
%By \cite{Post},
%it is sufficient 
%to check that 
$h_{1},h_{2},0,1$ are not contained in any maximal clone 
of Boolean functions.
Hence, by \cite{Post}, any function, including conjunction, 
can be generated from $h_1$, $h_2$, $0$, and $1$.
\end{proof}

Therefore, 
if $\BoolClo(\Gamma)$ does not contain conjunction, 
then the word ``unique'' can be removed from the definition of $f_{\Gamma}$.
Also, since constants 0 and 1 are in $\BoolClo(\Gamma)$, 
the operation 
$f_{\Gamma}$ is idempotent in this case.

\begin{lem}\label{fGammaIsNotProjective}
Suppose $\BoolClo(\Gamma)$ does not contain conjunction, 
then $f_{\Gamma}$ is not 
$\{0,2\}\{1,2\}$-projective.
\end{lem}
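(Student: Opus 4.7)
The plan is to verify non-projectivity coordinate by coordinate, exploiting the fact that the projections $\pi_1,\ldots,\pi_{k+1}$ of arity $k+1$ all belong to $\BoolClo(\Gamma)$ (the equality relation in $\Gamma$ represents the unary projection, as noted in the proof of Lemma~\ref{IsClone}, and it lifts by composition to projections of any arity). Under the hypothesis that conjunction is not in $\BoolClo(\Gamma)$, the lemma immediately preceding the definition of $f_{\Gamma}$ ensures that any Boolean function of arity $k+1$ is determined by its values on $\alpha_{1},\ldots,\alpha_{N}$; so whenever we exhibit any $h\in\BoolClo(\Gamma)$ with $h(\alpha_j)=a_j$ for all $j$, we are entitled to conclude $f_{\Gamma}(a_1,\ldots,a_N)=h(1,\ldots,1)$.

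Now fix an arbitrary coordinate $i\in\{1,\ldots,N\}$ and suppose, towards a contradiction, that $i$ witnesses the $\{0,2\}\{1,2\}$-projectivity of $f_{\Gamma}$. Since $\alpha_{i}\neq(1,1,\ldots,1)$ by construction of the enumeration, one can pick $s\in\{1,\ldots,k+1\}$ with $\alpha_{i}(s)=0$, and let $h:=\pi_{s}\in\BoolClo(\Gamma)$. Consider the input tuple
\[
\mathbf{a}=(\alpha_{1}(s),\alpha_{2}(s),\ldots,\alpha_{N}(s))\in\{0,1\}^{N}.
\]
Since $h(\alpha_{j})=\alpha_{j}(s)=a_{j}$ for every $j$, the definition of $f_{\Gamma}$ (with uniqueness free of charge) yields $f_{\Gamma}(\mathbf{a})=h(1,1,\ldots,1)=1$, while $a_{i}=\alpha_{i}(s)=0\in\{0,2\}$. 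Hence $a_{i}\in\{0,2\}$ yet $f_{\Gamma}(\mathbf{a})=1\notin\{0,2\}$, contradicting the assumption that $i$ is a projective coordinate. Since $i$ was arbitrary, no coordinate is a projective coordinate, so $f_{\Gamma}$ is not $\{0,2\}\{1,2\}$-projective.

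I do not expect this argument to encounter a real obstacle: it is essentially a definitional chase once the right witness Boolean function (a projection to a zero-coordinate of $\alpha_{i}$) is identified. The only small point to check is that such an $s$ always exists, which is immediate because the all-ones tuple is precisely the one vector excluded from the enumeration $\alpha_{1},\ldots,\alpha_{N}$. An analogous argument using the constant-$0$ Boolean function (which is in $\BoolClo(\Gamma)$ via $\{0\}\in\Gamma$) could alternatively be used to break the $\{1,2\}$-half of the condition, but the single witness above already suffices.
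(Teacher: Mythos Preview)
Your proof is correct and is essentially the same as the paper's: both fix a hypothetical projective coordinate $i$, choose a position where $\alpha_i$ has a $0$, use the Boolean projection onto that position as the witness $h\in\BoolClo(\Gamma)$, and derive $f_{\Gamma}(\mathbf a)=1$ while $a_i=0$. You simply spell out more carefully why uniqueness is automatic (via the preceding lemma) and why such a zero-coordinate of $\alpha_i$ exists, but the argument is identical.
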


\begin{proof}
Assume that 
$f_{\Gamma}$ is a $\{0,2\}\{1,2\}$-projection to the $i$-th coordinate.
Let $\alpha_{i}$ have $0$ at the $j$-th coordinate, 
and $h$ is the $(k+1)$-ary Boolean projection to the $j$-th coordinate.
Then 
$f_{\Gamma}(h(\alpha_{1}),h(\alpha_{2}),\dots,h(\alpha_{N}))=h(1,1,\dots,1) = 1$
and $h(\alpha_{i})=0$. Contradiction.
\end{proof}

\begin{lem}\label{0StableNegationImplies1Stable}
Suppose the negation Boolean function is represented by $\Gamma$, 
and $\Gamma$ is preserved by a 0-stable operation $v_{0}$.
Then $v_{0}$ is 1-stable.
\end{lem}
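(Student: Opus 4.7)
The plan is to exploit the negation-representing relation directly, using the fact that $v_{0}$ preserves it together with the semilattice $s_{2}$.

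First I would unpack what it means for $\Gamma$ to represent negation. By the definition of $\BoolClo(\Gamma)$ there is a binary relation $R$, pp-definable from $\Gamma$, such that for every $a \in \{0,1\}$ and $b \in \{0,1,2\}$, $(a,b) \in R$ iff $b = \neg a$. This pins down the intersection of $R$ with $\{0,1\} \times \{0,1,2\}$ completely: $(0,1), (1,0) \in R$, while $(0,0), (0,2), (1,1), (1,2) \notin R$. In particular, the only way a tuple $(1,b) \in R$ can hold is $b=0$, and the only way $(a,2) \in R$ can hold with $a \in \{0,1\}$ is never. Since pp-definable relations inherit polymorphisms, both $s_{2}$ and $v_{0}$ preserve $R$.

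Next I would use $s_{2}$ to fill in the one missing diagonal point: $s_{2}((0,1),(1,0)) = (2,2)$, so $(2,2) \in R$. Combined with what the representation forbids, this is enough ammunition. I want to establish $v_{0}(x,1) = x$ for each $x \in \{0,1,2\}$; together with the hypothesis $v_{0}(x,2) = 2$, this is precisely $1$-stability. Idempotency of $v_{0}$ (forced by $\{0\}, \{1\}, \{2\} \in \Gamma$) already gives $v_{0}(1,1) = 1$, so the two remaining values are $v_{0}(0,1)$ and $v_{0}(2,1)$.

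For $v_{0}(0,1) = 0$, I would apply $v_{0}$ componentwise to the pair of tuples $(1,0), (0,1) \in R$:
\[
v_{0}\bigl((1,0),(0,1)\bigr) = (v_{0}(1,0),\, v_{0}(0,1)) = (1,\, v_{0}(0,1)).
\]
Because this tuple must lie in $R$ and the only element $b$ with $(1,b) \in R$ is $b=0$, we force $v_{0}(0,1)=0$. For $v_{0}(2,1) = 2$, I would apply $v_{0}$ to the pair $(2,2), (1,0) \in R$:
\[
v_{0}\bigl((2,2),(1,0)\bigr) = (v_{0}(2,1),\, v_{0}(2,0)) = (v_{0}(2,1),\, 2).
\]
Since $(0,2), (1,2) \notin R$, the second coordinate being $2$ forces the first coordinate to be $2$ as well, giving $v_{0}(2,1) = 2$.

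There is no real obstacle here beyond choosing the correct pairs of $R$-tuples to feed to $v_{0}$; the slightly delicate point is remembering that the representation condition only constrains rows of $R$ whose first coordinate is Boolean, so the fact $(2,2) \in R$ must be supplied separately via $s_{2}$. Once $v_{0}(0,1)$, $v_{0}(1,1)$, $v_{0}(2,1)$ are all determined to equal $0,1,2$ respectively, combined with the $0$-stability hypothesis $v_{0}(x,2) = 2$, the operation $v_{0}$ satisfies $v_{0}(x,1)=x$ and $v_{0}(x,2)=2$, i.e.\ it is $1$-stable.
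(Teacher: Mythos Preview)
Your proof is correct and follows the same core idea as the paper: apply $v_{0}$ to the pair $(1,0),(0,1)$ in the negation-representing relation to force $v_{0}(0,1)=0$, then combine with idempotency and $v_{0}(x,2)=2$. The only difference is how you obtain $v_{0}(2,1)=2$: the paper simply invokes the standing assumption that every polymorphism of $\Gamma$ is $\{0,2\}\{1,2\}$-projective (so $v_{0}$, projecting onto its first coordinate, must send $(2,1)$ into $\{0,2\}\cap\{1,2\}=\{2\}$), whereas you manufacture the tuple $(2,2)\in R$ via $s_{2}$ and feed it back through $v_{0}$. Both routes are valid under the section's standing hypotheses; yours trades the projectivity assumption for the semilattice assumption and a second application of preservation of $R$, which makes the argument marginally more self-contained at the cost of one extra step.
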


\begin{proof}
Since $v_{0}$ preserves the relation representing the negation and
$v_{0}(1,0) = 1$, we have 
$v_{0}(0,1)=0$. 
Combining this with the fact that 
$v_{0}$ is idempotent and
$v_{0}$ is a $\{0,2\}\{1,2\}$-projection to the first coordinate, we obtain that $v_{0}$ is 1-stable.
\end{proof}

\begin{thm}\label{ANDTypeForStable}
Suppose $\Gamma$ is preserved by a 0-stable operation $v_{0}$, 
then $\Gamma$ contains an AND-type relation 
(equivalently $\BoolClo(\Gamma)$ contains conjunction).
\end{thm}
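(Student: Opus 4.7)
The plan is to argue by contraposition: suppose $\BoolClo(\Gamma)$ does not contain conjunction, and derive a contradiction with assumption $(4)$, namely the $\{0,2\}\{1,2\}$-projectivity of $\Pol(\Gamma)$. Lemma~\ref{fGammaIsNotProjective} already tells us that, under this assumption, the auxiliary operation $f_{\Gamma}$ defined earlier is not $\{0,2\}\{1,2\}$-projective. Thus it suffices to exhibit $f_{\Gamma}$ as a polymorphism of $\Gamma$, since then $f_{\Gamma}\in\Pol(\Gamma)$ would directly contradict $(4)$. The preceding lemma on Boolean functions differing only on $(1,\ldots,1)$ ensures that, absent conjunction in $\BoolClo(\Gamma)$, every $h\in\BoolClo(\Gamma)$ of arity $k+1$ is uniquely determined by its values on $\alpha_{1},\ldots,\alpha_{N}$, so the uniqueness clause in the definition of $f_{\Gamma}$ is automatic and $f_{\Gamma}$ is idempotent.

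I would then use the $0$-stable operation $v_{0}$ to control the shape of $f_{\Gamma}$. First, a quick check pins down $v_{0}$ as $\{0,2\}\{1,2\}$-projective onto its first coordinate (projection onto the second coordinate is incompatible with $v_{0}(1,0)=1\notin\{0,2\}$), giving $v_{0}(0,1)\in\{0,2\}$, $v_{0}(2,1)=2$, and $v_{0}$ acting as $\max$ on $\{0,2\}$. To verify $f_{\Gamma}$ preserves an arbitrary $R\in\Gamma$, I fix $R$ of arity $m$ and columns $\beta_{1},\ldots,\beta_{N}\in R$, and examine the result row by row. For each row $i$ with entries $a_{j}=\beta_{j}(i)\in\{0,1\}$ that are interpolated by some $h_{i}\in\BoolClo(\Gamma)$, the relation representing $h_{i}$ — of arity $k+2$ and hence in $\Gamma$ by assumption $(6)$ — together with $s_{2}$-invariance of $R$, lets us find a tuple of $R$ whose $i$th entry is $h_{i}(1,\ldots,1)$, which is exactly $f_{\Gamma}$'s output there. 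For rows falling into the "otherwise" clause of $f_{\Gamma}$'s definition, the output is $2$, and the $0$-stable $v_{0}$ is used to witness that such a row must in fact arise from combining two earlier Boolean rows via $v_{0}$, so that the resulting $2$-valued entry is compatible with membership in $R$ by $s_{2}$-absorption.

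The principal obstacle is the mixed-row case: rows of the matrix whose entries involve both $1$ and $2$ (so that no Boolean $h$ can interpolate them) but whose "correct" output under $f_{\Gamma}$ must still lie in $R$. This is precisely the situation in which $0$-stability is indispensable, because $v_{0}(x,0)=x$ and $v_{0}(x,2)=2$ are exactly the rules that let us rewrite such a mixed row as the image under $v_{0}$ of two rows lying in the previous cases, and then invoke the fact that both $v_{0}$ and $f_{\Gamma}$ commute row-wise in the way required for membership in $R$. If all of this goes through then $f_{\Gamma}\in\Pol(\Gamma)$, contradicting $(4)$ via Lemma~\ref{fGammaIsNotProjective}, and we conclude that $\BoolClo(\Gamma)$ must contain conjunction, i.e., $\Gamma$ contains an AND-type relation.
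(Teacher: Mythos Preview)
Your overall strategy coincides with the paper's: assume $\BoolClo(\Gamma)$ omits conjunction and reach a contradiction by showing $f_{\Gamma}\in\Pol(\Gamma)$, contrary to Lemma~\ref{fGammaIsNotProjective} and assumption~(4). The paper organises this as ``let $\rho\in\Gamma$ have minimal arity among relations not preserved by $f_{\Gamma}$, and derive a contradiction'', which is the standard way to establish preservation. Your sketch, however, has substantive gaps in the execution.

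First, your row-by-row description does not assemble into a proof that the output tuple lies in $R$. Producing, for each coordinate $i$, some tuple of $R$ with the correct $i$th entry is not the same as producing a single tuple of $R$ with the correct value in every coordinate; there is no mechanism (neither $s_{2}$ nor $v_{0}$) that merges such partial witnesses into the desired $\beta$. The paper avoids this trap via the minimality of $\rho$, which guarantees the existence of the tuples $\beta_{i}\in\rho$ differing from $\beta$ in a single coordinate, and then uses $s_{2}$ to force $\beta$ to have at most one $2$. You do not invoke minimality, and without it your argument has no traction. (Also, the representing relations $R_{i}$ have arity $k+2>k$, so assumption~(6) does not place them in $\Gamma$; this is not fatal since they are still pp-definable, but it is an error as stated.)

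Second, two essential case analyses are missing. When $\beta\in\{0,1\}^{L}$ (every row interpolated by some $h_{i}\in\BoolClo(\Gamma)$), the paper appeals to Post's lattice: since $\BoolClo(\Gamma)$ contains constants but not conjunction, it is either the linear clone or disjunction-plus-constants, and in each case one exhibits a column of the matrix already equal to $\beta$. Your sketch does not address this case at all. When $\beta$ has exactly one $2$, the role of the $0$-stable $v_{0}$ is not what you describe: in the paper, $v_{0}$ (together with Lemma~\ref{0StableNegationImplies1Stable} if negation is present) is applied to $\beta_{L}$ and a hypothetical $\gamma\in\rho$ with $\gamma(L)=2$ to rule such $\gamma$ out; this forces the last row of the matrix to be Boolean, and then a pp-definition using $\rho$ and the $R_{i}$ produces a relation representing a Boolean function whose value on $(1,\ldots,1)$ contradicts $\beta(L)=2$. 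Your account of $v_{0}$ as ``rewriting a mixed row as the image under $v_{0}$ of two earlier Boolean rows'' and of $v_{0}$ and $f_{\Gamma}$ ``commuting row-wise'' does not correspond to any valid step and would need to be replaced by the argument above.
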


\begin{proof}
Assume that $\BoolClo(\Gamma)$ does not contain conjunction.

By Lemma~\ref{fGammaIsNotProjective}, 
$f_{\Gamma}$ is not $\{0,2\}\{1,2\}$-projective, 
hence there should be a relation in $\Gamma$ that is not preserved by 
$f_{\Gamma}$.
Let $\rho$ be a relation of the minimal arity in $\Gamma$ that is not preserved by $f_{\Gamma}$.
Thus, 
there exists a matrix $M$ whose columns are tuples from $\rho$ 
such that 
$f_{\Gamma}$ applied to the rows of $M$ gives $\beta\notin\rho$.
Let $L$ be the arity of $\rho$.

Note that each row of the matrix should contain different elements, 
since otherwise we could substitute a constant into $\rho$ and reduce the arity of $\rho$ (recall that $f_{\Gamma}$ is idempotent).
Since $\rho$ is a relation of the minimal arity, for every $i$ there should be a tuple $\beta_{i}\in\rho$ such that 
$\beta_{i}$ differs from $\beta$ only in the $i$-th coordinate (otherwise, we consider the projection of $\rho$ to all coordinates but $i$-th, which has the same properties as $\rho$ but smaller arity).
%(see the proof of Lemma~\ref{almostNU} for the detailed explanation).

Assume that $\beta$ has at least two elements equal to $2$ and they appear at the $i$-th and $j$-th coordinates.
Then $s_2(\beta_{i},\beta_{j}) = \beta$, which contradicts the fact that 
$s_2$ preserves $\rho$.

Assume that $\beta$ contains exactly one $2$.
W.l.o.g., we assume that $\beta(L) = 2$.
By the definition of 
$f_{\Gamma}$ 
for every $i\in[L-1]$
there exists a Boolean function 
$h_{i}\in\BoolClo(\Gamma)$ (not a constant function)
such that 
the $i$-th row of $M$ 
is 
$(h_{i}(\alpha_{1}),\dots,h_{i}(\alpha_{N}))$.

Assume that there exists a tuple $\gamma\in\rho$ such that $\gamma(L) = 2$ and $\gamma(i) \in\{0,1\}$
for $i\in[L-1]$.
By Lemma~\ref{0StableNegationImplies1Stable}, 
we have two cases:

Case 1. $\BoolClo(\Gamma)$ does not contain negation.
It follows from Post's lattice (see \cite{Post}) that all functions in $\BoolClo(\Gamma)$ are monotonic.
Thus, $h_{i}$ is a monotonic function that is not a constant, 
therefore 
$\beta(i) = h_{i}(1,1,\ldots,1) = 1$ for every $i\in[L-1]$.
Applying the 0-stable operation $v_{0}$ to $\beta_{L}$ and $\gamma$
we obtain $(1,1,\dots,1,2) = \beta$, which contradicts the fact that 
$v_{0}$ preserves $\rho$.

Case 2. $v_{0}$ is 1-stable.
Applying $v_{0}$ to $\beta_{L}$ and $\gamma$
we obtain $\beta$, which contradicts the fact that
$v_{0}$ preserves $\rho$.

Thus, we proved that 
$\rho$ does not contain a tuple whose last element is 2 and the remaining elements are from $\{0,1\}$. For instance, this means that the matrix $M$ contains only 0 and 1.
Let $R_{i}$ be a relation pp-definable from $\Gamma$ 
representing $h_{i}$.
Define a new relation by 
\begin{align*}
R(x_{1},\ldots,x_{k+1},z)
=&\\
\exists y_1\dots\exists y_{L-1}\;&
R_{1}(x_{1},\dots,x_{k+1},y_{1})\wedge
\dots
\wedge
R_{L-1}(x_{1},\dots,x_{k+1},y_{L-1})
\wedge 
\rho(y_1,\dots,y_{L-1},z).
\end{align*}
We can check that for each 
$\alpha_{i}$ 
there exists a unique $b_{i}\in\{0,1\}$ 
such that 
$\alpha_{i}b_{i}\in R$.
Note that $b_{i}$ is the $i$-th element of the last 
row of the matrix.
Since
$(1,1,\ldots,1,h_i(1,1,\dots,1)) \in R_{i}$ 
and $\beta(i) = \beta_{L}(i) = h_i(1,1,\dots,1)$
for every $i\in[L-1]$, we have
$(1,1,\dots,1,\beta_{L}(L))\in R$,
which means that $R$ represents the 
corresponding Boolean function 
and $f_{\Gamma}(b_{1},\dots,b_{N})=\beta_{L}(L)$.
This contradicts the fact that 
$\beta(L) = 2$.

It remains to consider the case when $\beta$ has no $2$. 
Therefore,
for every $i\in[L]$
there exists a Boolean function 
$h_{i}\in\BoolClo(\Gamma)$ (not a constant function)
such that 
the $i$-th row of $M$ 
is 
$(h_{i}(\alpha_{1}),\dots,h_{i}(\alpha_{N}))$.
It follows from Post's Lattice (see \cite{Post}) that we have only the following two cases.

Case 1. $\BoolClo(\Gamma)$ contains only linear functions.
Consider a system of linear equations 
$$
\begin{cases}
h_{1}(x_{1},\ldots,x_{k+1}) = h_{1}(1,\dots,1)\\
h_{2}(x_{1},\ldots,x_{k+1}) = h_{2}(1,\dots,1)\\
\dots\\
h_{L}(x_{1},\ldots,x_{k+1}) = h_{L}(1,\dots,1)
\end{cases}
$$
Since $L\le k$, there should be a solution of this system of equations
different from 
$(1,1,\dots,1)$.
Let this solution be equal to $\alpha_{i}$ for $i\in[N]$.
Therefore, the $i$-th column of the matrix $M$ 
is equal to 
$(h_{1}(1,\dots,1),h_{2}(1,\dots,1),\dots,h_{L}(1,1,\dots,1))$, 
which is equal to $\beta$. 
This contradicts the fact that $\beta\notin\rho$.

Case 2. $\BoolClo(\Gamma)$ is the clone generated by disjunction and constants.
Since every operation $h_{j}$ is monotonic and not constant, 
$\beta = (1,1,\dots,1)$.
Since $h_{j}$ is not a constant for every $j\in[L]$, 
we can choose $n_{j}$ such that 
$h_{j}(x_{1},\dots,x_{k+1})\ge x_{n_{j}}$.
Let $\alpha_{i}$ be the tuple having 1 only on 
coordinates $n_{1},\dots,n_{L}$ (such $\alpha_{i}$ exists because $L<k+1$).
Then 
the $i$-th column of the matrix $M$ 
is equal to $(1,1,\dots,1)$, which contradicts the fact that $\beta\notin\rho$.
\end{proof}

In the same way we can prove the following theorem.
\begin{thm}\label{ORTypeForStable}
Suppose $\Gamma$ is preserved by a 1-stable operation, 
then $\Gamma$ contains an OR-type relation 
(equivalently $\BoolClo(\Gamma)$ contains disjunction).
\end{thm}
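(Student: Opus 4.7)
This statement is the exact $0\leftrightarrow 1$ dual of Theorem~\ref{ANDTypeForStable}. Since $2$ is the top of $s_2$ while $0$ and $1$ are symmetric under the swap $(0\ 1)$ (which fixes $2$), $1$-stability mirrors $0$-stability, OR-type mirrors AND-type, and disjunction mirrors conjunction. My plan is to adapt the proof of Theorem~\ref{ANDTypeForStable} by dualizing each auxiliary construction through this swap.

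First I introduce a dual operation $f_\Gamma'$ of arity $N=2^{k+1}-1$, whose coordinates are indexed by the tuples in $\{0,1\}^{k+1}\setminus\{(0,\ldots,0)\}$: on input $(a_1,\ldots,a_N)$, if some (unique) $h\in\BoolClo(\Gamma)$ of arity $k+1$ matches $h(\alpha_i)=a_i$ at these indices, output $h(0,\ldots,0)$; otherwise output $2$. The dual of the lemma preceding Lemma~\ref{fGammaIsNotProjective} reads: if $\BoolClo(\Gamma)$ lacks disjunction then no two Boolean functions of arity $\ge 2$ differ only on $(0,\ldots,0)$ (the same Post's Lattice analysis, dualized), so ``unique'' is redundant and $f_\Gamma'$ is idempotent. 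The dual of Lemma~\ref{fGammaIsNotProjective} is then the same one-line projection argument: if $f_\Gamma'$ projected onto coordinate $i$ and $h$ were the Boolean projection onto a coordinate where $\alpha_i$ equals $1$, then $f_\Gamma'(h(\alpha_1),\ldots,h(\alpha_N))=h(0,\ldots,0)=0$ whereas $h(\alpha_i)=1$. I also need the dual of Lemma~\ref{0StableNegationImplies1Stable}: when negation is representable, $1$-stability of a polymorphism $v_1$ of $\Gamma$ combined with preservation of the negation relation and $\{0,2\}\{1,2\}$-projectivity onto the first coordinate forces $v_1$ to be $0$-stable as well.

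With these dualized lemmas, the main argument proceeds exactly as in Theorem~\ref{ANDTypeForStable}. A minimal-arity $\rho\in\Gamma$ not preserved by $f_\Gamma'$ supplies a witnessing matrix $M$ whose $f_\Gamma'$-image $\beta\notin\rho$ has no constant row, admits one-coordinate-away witnesses $\beta_i\in\rho$, and has at most one coordinate equal to $2$ (by $s_2$-preservation). The subcase in which $\beta(L)=2$ and $\rho$ contains a tuple $\gamma$ with $\gamma(L)=2$ and earlier entries in $\{0,1\}$ splits into: (i) negation is not representable, so $\BoolClo(\Gamma)$ lies in the monotonic maximal clone of Post's Lattice (it contains both constants and therefore cannot lie in the linear or $0/1$-preserving maximal clones), which forces $\beta(i)=h_i(0,\ldots,0)=0$ for $i<L$ and yields $\beta=v_1(\beta_L,\gamma)\in\rho$; or (ii) $v_1$ is $0$-stable, with the same conclusion. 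The no-$\gamma$ subcase mirrors the original pp-definition trick verbatim, producing an $R$ whose definition forces $f_\Gamma'$ to evaluate to an element of $\{0,1\}$ against $\beta(L)=2$. Finally, when $\beta\in\{0,1\}^L$ with non-constant row functions $h_1,\ldots,h_L$, Post's Lattice yields either a purely linear clone (an $L\le k$ linear system in $k+1$ unknowns admits a solution different from $(0,\ldots,0)$, producing a column of $M$ equal to $\beta$) or the clone generated by conjunction and constants (each $h_j\le x_{n_j}$, so $\beta=(0,\ldots,0)$, and the $\alpha_i$ that has $0$ only at coordinates $n_1,\ldots,n_L$ is the required column). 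The main obstacle is pure bookkeeping: verifying that every Post's Lattice invocation remains valid under the swap, which it does because the relevant maximal clones together with the constants form a picture self-dual under $(0\ 1)$.
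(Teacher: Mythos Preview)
Your proposal is correct and is precisely the approach the paper takes: the paper's entire proof of Theorem~\ref{ORTypeForStable} is the sentence ``In the same way we can prove the following theorem,'' i.e.\ the $0\leftrightarrow 1$ dualization of Theorem~\ref{ANDTypeForStable} that you have spelled out. One small slip in your write-up: the parenthetical claim that a clone containing both constants ``cannot lie in the linear maximal clone'' is false (the affine clone $L$ does contain $0$ and $1$); the correct justification that $\BoolClo(\Gamma)$ is monotone in your Case~(i) is that any non-monotone function together with both constants yields negation by a single-coordinate substitution, which you have excluded---this does not affect the validity of your argument.
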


\subsection{The proof of the EGP classification}

\begin{lem}\label{NoStableOperation}
Suppose $\Pol(\Gamma)$ does not have a $0$-stable operation.
Then $\Gamma$ contains one of the two relations
$\begin{pmatrix}
0 & 1 & 2 & 0 & 2\\
0 & 0 & 0 & 2 & 2 
\end{pmatrix}$,
$\begin{pmatrix}
0 & 1 & 2 & 0 & 2\\
1 & 1 & 1 & 2 & 2 
\end{pmatrix}$.
\end{lem}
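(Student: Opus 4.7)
The plan is to argue the contrapositive: assume both $R_0 \notin \Gamma$ and $R_1 \notin \Gamma$ and construct a $0$-stable polymorphism, contradicting the hypothesis. The central object is the $4$-ary relation
\[
 S := \{(f(1,0),\, f(2,0),\, f(0,2),\, f(1,2)) : f \in \Pol(\Gamma),\ f \text{ binary}\},
\]
equivalently the subalgebra of $\algA^4$ generated by the two tuples $(1,2,0,1)$ and $(0,0,2,2)$. Since $S \in \Inv(\Pol(\Gamma))$ has arity $4 \le k$, the sixth standing assumption of the section gives $S \in \Gamma$; moreover $S \subseteq A \times \{0,2\} \times \{0,2\} \times \{1,2\}$ and a $0$-stable polymorphism of $\Gamma$ exists if and only if $(1,2,2,2) \in S$, so under our hypothesis $(1,2,2,2) \notin S$.

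Next I will extract from $R_0 \notin \Gamma$ a tuple $(1, z_1, 2, z_2) \in S$ with $z_1 \in \{0,2\}$, $z_2 \in \{1,2\}$, and from $R_1 \notin \Gamma$ a tuple $(1, w_1, w_2, 2) \in S$ with $w_1, w_2 \in \{0,2\}$. For $R_0$: the unique pair of $A \times \{0,2\}$ missing from $R_0$ is $(1,2)$, so a violating polymorphism $f_0 \in \Pol(\Gamma)$ of some arity $n$ maps tuples $t_1,\dots,t_n \in R_0$ coordinatewise to $(1,2)$; since the output first-coordinate is $1$ and the unique $R_0$-tuple with first coordinate $1$ is $(1,0)$, we may take $t_1=(1,0)$. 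Using $s_2$-preservation I will show that each remaining $t_j=(a_j,0)$ with $a_j \in \{0,2\}$ may be replaced by $(a_j,2) \in R_0$ without altering $f_0$'s output: the new second-coordinate input-tuple is the $s_2$-join of itself and the old one, so $s_2$-preservation gives $s_2(2, f_0(\mathrm{new})) = f_0(\mathrm{new})$, which together with $\{0,2\}$-preservation forces $f_0(\mathrm{new}) = 2$. Once all $t_j$ lie in $\{(1,0),(0,2),(2,2)\}$, the corresponding columns are realised by the polymorphisms $\pi_1, \pi_2, s_2$ respectively, so composing yields a binary $h_0 \in \Pol(\Gamma)$ with $h_0(1,0)=1$ and $h_0(0,2)=2$; then $(1, h_0(2,0), 2, h_0(1,2)) \in S$. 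The argument for $R_1$ is symmetric, starting from $t_1=(1,1)$ and $s_2$-replacing second coordinates equal to $1$ by $2$.

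Finally, applying $s_2$ coordinatewise to the two witnessing tuples,
\[
 s_2\bigl((1, z_1, 2, z_2),\, (1, w_1, w_2, 2)\bigr) \;=\; (1,\, s_2(z_1, w_1),\, 2,\, 2) \;\in\; S.
\]
If $s_2(z_1, w_1)=2$, then $(1,2,2,2) \in S$ immediately; otherwise $z_1 = w_1 = 0$ and $(1,0,2,2) \in S$, in which case
\[
 s_2\bigl((1,0,2,2),\, (1,2,0,1)\bigr) \;=\; (1,2,2,2) \;\in\; S
\]
using that $(1,2,0,1)$ is a generator of $S$. In either case we reach the contradiction $(1,2,2,2) \in S$.

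The main obstacle will be the reduction to binary witnesses: the $s_2$-replacement must be carried out carefully to verify that both the first- and second-coordinate outputs of $f_0$ (and of $f_1$) survive intact, and the resulting composition $h_0$ needs to be checked termwise against the targeted values on $(1,0)$ and $(0,2)$. Once that is in place, the final $s_2$-combination step is a short coordinatewise computation based on the identities $s_2(2,a) = s_2(a,2) = 2$ and $s_2(1,1)=1$.
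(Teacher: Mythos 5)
The key step of your reduction breaks down: you want to replace a column $(a_j,0)$ by $(a_j,2)$ and conclude that $f_0$ still evaluates to $2$ on the new second row, and you argue this via ``$s_2$-preservation gives $s_2(2, f_0(\mathrm{new})) = f_0(\mathrm{new})$.'' But the identity you are implicitly invoking is $f_0(s_2(u',u)) = s_2(f_0(u'),f_0(u))$, i.e.\ that $f_0$ \emph{commutes} with $s_2$. That is not a consequence of both being polymorphisms of $\Gamma$; the operations of $\Pol(\Gamma)$ need not commute with one another, and in particular a polymorphism $f_0$ need not be monotone with respect to the semilattice order on $\{0,2\}^n$. (The binary relation $\{(0,0),(0,2),(2,2)\}$ that would force monotonicity is not among the standing invariants, and assuming it is would be circular.) What one \emph{does} get from $s_2 \in \Pol(\Gamma)$ and $\{0,2\}\in\Gamma$ is only $f_0(u') \in \{0,2\}$, which is not enough. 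So the passage from an $n$-ary violation of $R_0$ to a binary polymorphism $h_0$ with $h_0(1,0)=1,\ h_0(0,2)=2$ is not justified, and the same gap recurs in the $R_1$ part.

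This gap is not cosmetic: the claim you are trying to prove at that step, ``$R_0\notin\Gamma$ alone already yields a binary witness,'' is stronger than what the paper proves and is exactly where the paper invokes additional structure. The paper works with the subalgebras $\delta_1=\langle\{(1,0),(0,2)\}\rangle$ and $\delta_2=\langle\{(1,1),(0,2)\}\rangle$: if both contain $(1,2)$ a $0$-stable operation is built (your final $s_2$-combination step is essentially this part, and it is fine); but when, say, $(1,2)\notin\delta_1$, the paper does not try to show there is no $R_0$-violating polymorphism of any arity---instead it invokes Lemma~\ref{ArityThreeRelationsStrong} to obtain an auxiliary relation $\psi_1$ and uses it to \emph{pp-define} $R_0$ or $R_1$ from $\delta_1$, which only gives the disjunction ``$R_0\in\Gamma$ or $R_1\in\Gamma$.'' Your argument bypasses Lemma~\ref{ArityThreeRelationsStrong} entirely, and the price is the unjustified monotonicity. (A minor point: your Step 4 is vacuous, since any binary $\{0,2\}\{1,2\}$-projective $h$ with $h(1,0)=1$ must be a projection onto its first argument, forcing $h(2,0)=2$, so $z_1=w_1=2$ always; this does not hurt the proof, but it is a sign the constraints were not fully traced through.)
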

\begin{proof}
Let $\delta_{1}$ and $\delta_2$ be the 
relations generated from 
%\left\langle
$\begin{pmatrix}
1 & 0\\
0 & 2 
\end{pmatrix}$
%\right\rangle_{\Pol(\Gamma)}$,
%$\delta_{2} = \left\langle
and 
$\begin{pmatrix}
1 & 0\\
1 & 2 
\end{pmatrix}$
%\right\rangle_{\Pol(\Gamma)}$.
using $\Pol(\Gamma)$, respectively.
Note that since 
$s_2\in\Pol(\Gamma)$ we have $(2,2)\in \delta_{1}$ and $(2,2)\in \delta_2$.
Assume that 
$(1,2)\in \delta_{1}$ and $(1,2)\in\delta_{2}$.
Then there exist operations 
$f_{1}, f_{2}\in\Pol(\Gamma)$ 
such that 
$f_{1}(1,0) = f_{2}(1,0) = 1$, 
$f_{1}(0,2) = f_{2}(1,2) = 2$.
Since $f_{1}$ and $f_{2}$ are $\{0,2\}\{1,2\}$-projective, 
they should be $\{0,2\}\{1,2\}$-projections to the first coordinate and $f_{1}(2,x) =f_{2}(2,x) = 2$. Furthermore, $f_1(0,2), f_2(0,2) \in \{0,2\}$ as 
$f_{1},f_{2}$ preserves $\{0,2\}\in \Gamma$.
Put $f(x,y) = f_{1}(f_{2}(x,y),y)$,
then
$f(1,0) = 1$, $f(0,2) = f(1,2) = f(2,0) = 2$.
%Since $f$ is $\{0,2\}\{1,2\}$-projective, we obtain that 
%$f(2,0) = 2$. 
Therefore, $f$ is a $0$-stable operation. Contradiction.

Assume that
$(1,2)\notin \delta_{1}$.
Then by Lemma~\ref{ArityThreeRelationsStrong}, 
$\psi_{1} = (\{0,2\}\times\{a,2\}\times\{a,2\})\setminus\{(0,a,2)\}\in\Gamma$
for some $a\in\{0,1\}$.
Then one of the two necessary relations is defined 
by the formula 
$\delta_{1}'(x,y) = \exists z \;\delta_{1}(x,z) \wedge \psi_{1}(z,a,y)$.

It remains to consider the case when
$(1,2)\notin \delta_{2}$.
Then by Lemma~\ref{ArityThreeRelationsStrong}, 
$\psi_{2} = (\{1,2\}\times\{a,2\}\times\{a,2\})\setminus\{(1,a,2)\}\in\Gamma$ for some $a\in\{0,1\}$.
Then one of the two necessary relations is defined 
by the formula 
$\delta_{2}'(x,y) = \exists z \;\delta_{2}(x,z) \wedge \psi_{2}(z,a,y)$.

\end{proof}

In the same way we can prove the following lemma.
\begin{lem}\label{NoStableOperationOne}
Suppose $\Pol(\Gamma)$ does not have a $1$-stable operation.
Then $\Gamma$ contains one of the two relations
$\begin{pmatrix}
0 & 1 & 2 & 1 & 2\\
0 & 0 & 0 & 2 & 2 
\end{pmatrix}$,
$\begin{pmatrix}
0 & 1 & 2 & 1 & 2\\
1 & 1 & 1 & 2 & 2 
\end{pmatrix}$.
\end{lem}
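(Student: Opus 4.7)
The proof follows the template of Lemma~\ref{NoStableOperation} with the roles of $0$ and $1$ interchanged. I would let $\delta_1$ and $\delta_2$ denote the binary relations obtained by closing the columns of
\[
\begin{pmatrix} 0 & 1 \\ 1 & 2 \end{pmatrix} \quad\text{and}\quad \begin{pmatrix} 0 & 1 \\ 0 & 2 \end{pmatrix}
\]
under $\Pol(\Gamma)$, respectively. Applying $s_2$ componentwise to the initial pair of tuples in each case yields $(2,2)$, so $(2,2) \in \delta_1 \cap \delta_2$. The argument then splits on whether both $\delta_i$ contain the tuple $(0,2)$.

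If $(0,2) \in \delta_1 \cap \delta_2$, then there exist binary $f_1, f_2 \in \Pol(\Gamma)$ with $f_1(0,1) = 0$, $f_1(1,2) = 2$, $f_2(0,1) = 0$, and $f_2(0,2) = 2$. Set $f(x,y) = f_1(f_2(x,y), y)$. I would verify that $f(x,1) = x$ and $f(x,2) = 2$ hold on all of $A$, producing a $1$-stable operation in $\Pol(\Gamma)$ and contradicting the hypothesis. The verification uses idempotency together with $\{0,2\}\{1,2\}$-projectivity of $f_1$ and $f_2$ (both forced to be projections onto their first coordinate in the projective sense) to pin down intermediate values like $f_2(1,2) \in \{1,2\}$ and $f_2(0,2) \in \{0,2\}$, after which $f_1(c,2) = 2$ for every $c \in \{1,2\}$ closes the computation.

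At least one $\delta_i$ therefore omits $(0,2)$. In the subcase $(0,2) \notin \delta_1$, Lemma~\ref{ArityThreeRelationsStrong} supplies $\psi_1 = (\{1,2\} \times \{a,2\} \times \{a,2\}) \setminus \{(1,a,2)\} \in \Gamma$ for some $a \in \{0,1\}$ (the first relation in case~2 when $a=0$, the second relation in cases~1 or~3 when $a=1$), and I would define
\[
\delta_1'(x,y) = \exists z\, \delta_1(x,z) \wedge \psi_1(z,a,y).
\]
Since the second coordinate of every tuple in $\delta_1$ lies in $\{1,2\}$ by projectivity, tuples with $z=1$ force $y=a$ while those with $z=2$ allow $y \in \{a,2\}$; enumerating the contributions from the known tuples $(0,1), (1,2), (2,2) \in \delta_1$ together with the exclusion of $(0,2)$ yields $\delta_1' = \{(0,a), (1,a), (1,2), (2,a), (2,2)\}$, which is the first target relation when $a=0$ and the second when $a=1$. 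In the symmetric subcase $(0,2) \notin \delta_2$, I would instead take $\psi_2 = (\{0,2\} \times \{a,2\} \times \{a,2\}) \setminus \{(0,a,2)\}$ from Lemma~\ref{ArityThreeRelationsStrong} and set $\delta_2'(x,y) = \exists z\, \delta_2(x,z) \wedge \psi_2(z,a,y)$, obtaining the same target set.

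The only subtle point is the composition step: one must invoke $\{0,2\}\{1,2\}$-projectivity precisely enough to control the intermediate outputs of $f_2$ that are not directly fixed by the defining equations, and idempotency at the projective "diagonal" values $(2,1), (1,1), (2,2)$. Everything else is mechanical bookkeeping parallel to Lemma~\ref{NoStableOperation}.
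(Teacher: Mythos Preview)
Your proposal is correct and follows exactly the approach the paper intends: it is the $0\leftrightarrow 1$ mirror of the proof of Lemma~\ref{NoStableOperation}, with the generating pairs, the critical tuple $(1,2)$ replaced by $(0,2)$, and the appropriate $\psi_i$ drawn from the matching cases of Lemma~\ref{ArityThreeRelationsStrong}. Your explicit tracking of the $\{0,2\}\{1,2\}$-projectivity coordinate for $f_1,f_2,f$ is slightly more detailed than the paper's own write-up of Lemma~\ref{NoStableOperation}, but the argument is the same.
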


%Recall the following relations from 
%Section~\ref{HardnessSection}:
%$$\sigma_{0} = \{(a,b,c)\mid a\in A, b,c\in\{1,2\}, a\in \{0,2\}\vee(b=c)\}$$
%$$\sigma_{0}' = \{(a,b,c)\mid a\in A, b,c\in\{0,2\}, a\in \{0,2\}\vee(b=c)\}$$
%$$\sigma_{1} = \{(a,b,c)\mid a\in A, b,c\in\{0,2\}, a\in \{1,2\}\vee(b=c)\}$$
%$$\sigma_{1}' = \{(a,b,c)\mid a\in A, b,c\in\{1,2\}, a\in \{1,2\}\vee(b=c)\}$$

\begin{lem}\label{lem:Pspace-definitions}
Suppose $\Pol(\Gamma)$ does not have a 01-stable operation, then
there exists $b\in \{0,1\}$ such that $\sigma_{0},\sigma_{1}\in\Gamma$, where 
$$\begin{array}{c}
\sigma_0 = \{ (a_1,a_2,a_3) : a_1 \in A, a_2,a_3 \in \{b,2\}, (a_1 \in \{0,2\} \vee a_2=a_3) \},\\
\sigma_1 = \{ (a_1,a_2,a_3) : a_1 \in A, a_2,a_3 \in \{b,2\}, (a_1 \in \{1,2\} \vee a_2=a_3) \}. 
\end{array}$$
\end{lem}

\begin{proof}
Let 
$\delta_0$ and $\delta_{1}$ be the relations generated from
$\begin{pmatrix}
0&0&1\\ 
1&0&1\\
0&2&2
\end{pmatrix}$
and 
$\begin{pmatrix}
0&0&1\\ 
1&0&1\\
1&2&2
\end{pmatrix}$
using $\Pol(\Gamma)$,
respectively.
Since the semilattice preserves $\delta_{0}$ and $\delta_1$, 
the tuples $(0,2,2),(2,1,2)$ belong to $\delta_{0}$ and $\delta_1$.

If both $\delta_{0}$ and $\delta_1$ contain
$(0,1,2)$ 
then there exist operations 
$f_0$ and $f_1$ in $\Pol(\Gamma)$ such that 
$f_{0}(0,0,1)=f_{1}(0,0,1)= 0$, 
$f_{0}(1,0,1) = f_{1}(1,0,1)=1$,
$f_{0}(0,2,2) = f_{1}(1,2,2)=2$.
Since $f_0$ and $f_1$ are $\{0,1\}\{0,2\}$-projective, 
$f_{0}(2,a,b) = f_{1}(2,a,b)=2$ for all $a,b\in A$. 
Then the operation 
$f(x,y,z)=f_1(f_0(x,y,z),y,z)$ is a 01-stable operation, which contradicts our assumptions.

Assume that 
$\delta_{0}$ does not contain 
$(0,1,2)$.
By Lemma~\ref{ArityThreeRelationsStrong}, 
for some $b\in \{0,1\}$ the relation $\psi=(\{0,2\}\times\{b,2\}\times\{b,2\})\setminus\{(0,b,2)\}$
is in $\Gamma$.
Then 
$\sigma_{0}$ and $\sigma_{1}$ can be defined by 
$$\sigma_{0}(x,y,z) = \exists t \;\delta_{0}(0,x,t)\wedge \psi(t,y,z)\wedge \psi(t,z,y),$$
$$\sigma_{1}(x,y,z) = \exists t \;\delta_{0}(x,1,t)\wedge \psi(t,y,z)\wedge \psi(t,z,y).$$

Assume that 
$\delta_{1}$ does not contain 
$(0,1,2)$.
By Lemma~\ref{ArityThreeRelationsStrong}, 
for some $b\in \{0,1\}$ the relation $\psi=(\{1,2\}\times\{b,2\}\times\{b,2\})\setminus\{(1,b,2)\}$
is in $\Gamma$.
Then 
$\sigma_{0}$ and $\sigma_{1}$ can be defined by 
$$\sigma_{0}(x,y,z) = \exists t \;\delta_{1}(0,x,t)\wedge \psi(t,y,z)\wedge \psi(t,z,y),$$
$$\sigma_{1}(x,y,z) = \exists t \;\delta_{1}(x,1,t)\wedge \psi(t,y,z)\wedge \psi(t,z,y).$$
\end{proof}

Combining the above lemma with Lemma~\ref{PSpaceHardness} 
we obtain the following corollary.

\begin{cor}\label{PSpaceCompleteness}
Suppose $\Pol(\Gamma)$ does not have a $01$-stable 
operation, then 
$\QCSP(\Gamma)$ is PSpace-complete.
\end{cor}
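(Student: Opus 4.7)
The plan is to combine Lemma~\ref{lem:Pspace-definitions} with Lemma~\ref{PSpaceHardness}, together with the standard PSpace membership of $\QCSP(\Gamma)$ for any finite $\Gamma$. Lemma~\ref{lem:Pspace-definitions} produces three cases; in each, I will argue that a set of relations matching one of the three PSpace-hard problems of Lemma~\ref{PSpaceHardness} is pp-definable from $\Gamma$. Since a pp-definable relation can be replaced by its definition in any QCSP instance without changing the polynomial-time complexity of the reduction, this transfers hardness to $\QCSP(\Gamma)$.

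Cases 2 and 3 of Lemma~\ref{lem:Pspace-definitions} are immediate: they directly furnish $\{\sigma_0', \sigma_1\}\subseteq\Gamma$ or $\{\sigma_0, \sigma_1'\}\subseteq\Gamma$, matching the second and third hardness statements of Lemma~\ref{PSpaceHardness}. Case~1 gives $\{\sigma_0,\sigma_1\}\subseteq\Gamma$, and Lemma~\ref{PSpaceHardness} additionally needs the binary relation $\sigma = \{0,2\}^2 \cup \{1,2\}^2$.

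The main (and essentially only) point is therefore to show that in case~1 the relation $\sigma$ is pp-definable from $\Gamma$. For this I will invoke the Galois correspondence between clones and relational clones on a finite domain, applicable because $\Gamma$ contains all constants (assumption~2 of Section~9): $\sigma$ is pp-definable from $\Gamma$ iff $\sigma \in \Inv(\Pol(\Gamma))$. The latter is a short case analysis. By assumption~4, every $f \in \Pol(\Gamma)$ is $\{0,2\}\{1,2\}$-projective onto some coordinate $i$. Take tuples $(a_1,b_1),\ldots,(a_n,b_n)\in\sigma$; then the pivot pair $(a_i,b_i)$ lies either in $\{0,2\}^2$ or in $\{1,2\}^2$, and the projective property forces $f(a_1,\ldots,a_n)$ and $f(b_1,\ldots,b_n)$ to lie in the same one of $\{0,2\}$ or $\{1,2\}$. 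Hence $(f(a_1,\ldots,a_n),f(b_1,\ldots,b_n))\in\sigma$, so $\sigma\in\Inv(\Pol(\Gamma))$ as required.

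With the pp-definability of $\sigma$ in hand, Lemma~\ref{PSpaceHardness} yields PSpace-hardness of $\QCSP(\Gamma)$ in all three cases, and the matching PSpace upper bound is the well-known fact that $\QCSP$ over any finite constraint language lies in PSpace (for instance, by the obvious alternating polynomial-time decision procedure). I do not anticipate any real obstacle: the technical content has been pushed into Lemma~\ref{lem:Pspace-definitions} and Lemma~\ref{PSpaceHardness}, and the Galois-theoretic check for $\sigma$ is a single-paragraph verification on the pivot coordinate.
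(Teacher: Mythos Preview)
Your proposal is correct and takes essentially the same approach as the paper, which simply records that the corollary follows by combining Lemma~\ref{lem:Pspace-definitions} with Lemma~\ref{PSpaceHardness}. You have made explicit the one point the paper leaves implicit, namely that in case~1 the relation $\sigma=\{0,2\}^2\cup\{1,2\}^2$ is available: under the standing assumption~4 of Section~9 (every $f\in\Pol(\Gamma)$ is $\{0,2\}\{1,2\}$-projective) your pivot-coordinate check shows $\sigma\in\Inv(\Pol(\Gamma))$, and by assumption~6 (closure under pp-definable relations of arity $\le k$ with $k\ge 4$) one even has $\sigma\in\Gamma$ outright.
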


\begin{lem}
Suppose $\Pol(\Gamma)$ does not have a 0-stable or 1-stable operation, then
$\QCSP(\Gamma)$ is co-NP-hard.
\label{lem:coNPHard-nostable}
\end{lem}

\begin{proof}
By Lemmas~\ref{NoStableOperation}
and \ref{NoStableOperationOne},
we have the relations $\delta_{0} = \begin{pmatrix}
0 & 1 & 2 & 0 & 2\\
a_{0} & a_{0} & a_{0} & 2 & 2 
\end{pmatrix}$ and 
$\delta_{1}=\begin{pmatrix}
0 & 1 & 2 & 1 & 2\\
a_{1} & a_{1} & a_{1} & 2 & 2 
\end{pmatrix}$
in $\Gamma$
for some $a_{0},a_{1}\in\{0,1\}$.
By Lemma~\ref{ArityThreeRelationsStrong}, 
for some $b_{0},b_{1}\in\{0,1\}$ the relations $\psi_{0}=(\{a_{0},2\}\times\{b_{0},2\}\times\{b_{0},2\})\setminus\{(a_{0},b_{0},2)\}$,
$\psi_{1}= (\{a_{1},2\}\times\{b_{1},2\}\times\{b_{1},2\})\setminus\{(a_{1},b_{1},2)\}$ are in $\Gamma$.
We define 
$$\sigma_0(x,y,z) = \exists t \;\delta_{0}(x,t)\wedge \psi_{0}(t,y,z)\wedge \psi_{0}(t,z,y),$$
$$\sigma_1(x,y,z) = \exists t \;\delta_{1}(x,t)\wedge \psi_{1}(t,y,z)
\wedge \psi_{1}(t,z,y).$$
If $b_0=b_{1}$ then 
by Lemma \ref{PSpaceHardness}
the problem is PSpace-hard, and therefore co-NP-hard.

%Assume that $b_{0} = 1$ and $b_{1}=0$.
Since $\Gamma$ contains $\{0,2\}^{2}\cup \{1,2\}^{2}$,  
if $b_{0} = 1$ and $b_{1}=0$ then by Lemma \ref{lem:coNP-hardness-oldPSpace}
the problem is co-NP-hard.
Assume that  
$b_{0} = 0$ and $b_{1}=1$.
Put
$$\sigma_{1}'(x,y,z) = \exists z'\;
\sigma_{1}(x,1,z')\wedge \sigma_{0}(z',y,z),$$
$$\sigma_{0}'(x,y,z) = \exists z'\;
\sigma_{0}(x,0,z')\wedge \sigma_{1}(z',y,z),$$
and notice that
$\sigma_{0}'$ and $\sigma_1'$ are the 
relations required in Lemma \ref{lem:coNP-hardness-oldPSpace}
for the proof of co-NP-hardness.
\end{proof}

%\begin{lem}\label{OnePPDefinition}
%Suppose 
%$R_{1}$ is an AND-type relation, 
%$\begin{pmatrix}
%0&1&2\\
%0&2&2
%\end{pmatrix}\subseteq R_{2}\subseteq 
%\begin{pmatrix}
%0&1&2&1&2\\
%0&0&0&2&2
%\end{pmatrix}$,
%$\begin{pmatrix}
%0&2\\
%1&2
%\end{pmatrix}\subseteq R_{3}\subseteq 
%\begin{pmatrix}
%0&2&2\\
%1&1&2
%\end{pmatrix}$.
%Then 
%$\begin{pmatrix}
%0&1&1&1&2&2&2\\
%1&0&1&2&0&1&2
%\end{pmatrix}$ can be pp-defined from $\{R_{1},R_{2},R_{3}\}$.
%\end{lem}
%\begin{proof}
%Since $(2,2,0),(2,2,2)\in R_{1}$, the required relation can be pp-defined as follows.
%$$\rho(x,y) = \exists 
%z_{1}\exists z_{2}\exists z_{3}\;
%R_{2}(x,z_{1})\wedge R_{1}(z_{1},z_{1},z_{2})
%\wedge 
%R_{3}(z_{2},z_{3})\wedge R_{1}(z_{3},z_{3},y).$$
%\end{proof}

\begin{lem}\label{SecondPPDefinition}
Suppose $\Gamma$ is not preserved by $f_{0,2}$, and
$\Gamma$ is preserved by $s_{0,2}$,
$$\begin{pmatrix}
0&1&1&1&2&2&2\\
1&0&1&2&0&1&2
\end{pmatrix},
\begin{pmatrix}
0&1&1&1&2&2&2\\
0&0&1&2&0&1&2
\end{pmatrix},\begin{pmatrix}
0&1&2&2&2\\
0&1&0&1&2
\end{pmatrix}\in\Gamma,$$
then the relation 
$\delta(x,y,z) = (x\neq 0)\vee (y=z)$ can be pp-defined from $\Gamma$.
\end{lem}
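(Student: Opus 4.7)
The plan is to exploit both the three explicit binary relations and the hypothesis $f_{0,2}\notin\Pol(\Gamma)$.

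First, I would derive the pp-definable binary relation $\hat R_{2}(x,y):=(x=0\Rightarrow y=2)$ via
\[
\hat R_{2}(x,y)\equiv \exists u_{1}\exists u_{2}\; R_{1}(x,u_{1})\wedge R_{2}(x,u_{2})\wedge R_{3}(y,u_{1})\wedge R_{3}(y,u_{2}),
\]
which can be checked by tracing the two cases $x=0$ (forcing $u_{1}=1,u_{2}=0$, so $y\in\{1,2\}\cap\{0,2\}=\{2\}$) and $x\neq 0$ (all $y$). Together with $R_{1}$ and $R_{2}$ this gives all three ``conditional fixing'' relations ``$x=0\Rightarrow y=c$'' for $c\in\{0,1,2\}$, which are essentially all that binary pp-composition of $R_{1},R_{2},R_{3}$ can produce.

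Second, I would observe that any attempt to pp-define $\delta$ using only $R_{1},R_{2},R_{3}$ and their $s_{0,2}$-closure seems to run into a circularity: every natural pp-formula one writes turns out to be $\delta$ itself up to variable renaming, because allowing any existentially quantified auxiliary $u$ to take value $2$ is always an ``escape'' (via $R_{3}(u,y)$) that drops the diagonal constraint, while forbidding $u=2$ when $x=0$ would require the relation $\{(x,u):x=0\Rightarrow u\in\{0,1\}\}$, which fails to be $s_{0,2}$-preserved (since $s_{0,2}((0,0),(0,1))=(0,2)$). This is exactly why the hypothesis $f_{0,2}\notin\Pol(\Gamma)$ must enter essentially.

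Third, I would apply a minimality argument in the spirit of Lemma~\ref{almostNU}, but for $f_{0,2}$ instead of $nu_{B,C}^{2k}$: using that $s_{2}\in\Pol(\Gamma)$ forces the ``bad'' tuple $\beta=f_{0,2}(\alpha_{1},\alpha_{2},\alpha_{3})\notin\rho$ to have exactly one coordinate equal to $2$ (say the $N$-th), I would extract from a minimum-arity witness $\rho\in\Gamma$ a simpler auxiliary relation $\sigma\in\Gamma$ whose structure reflects the asymmetric behavior of $f_{0,2}(a,b,c)$ (which returns a non-$2$ value only when $a=c=0$ or when $a=1$ and $b=c$). Combining $\sigma$ with $R_{1},R_{2},R_{3},\hat R_{2}$ then yields the pp-formula for $\delta$, with $\sigma$ playing the role of an auxiliary variable that ranges over the three diagonal values when $x=0$ and becomes inactive when $x\neq 0$.

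The hard part will be identifying the precise shape of $\sigma$: this will require a case analysis of $\rho$'s witness matrix $(\alpha_{1},\alpha_{2},\alpha_{3})$ depending on whether the $2$-coordinate of $\beta$ is produced by a row of the form $(0,\ast,c)$ with $c\neq 0$ (breaking the first disjunct in $f_{0,2}$'s definition) or of the form $(1,y,z)$ with $y\neq z$ (breaking the second), tracking how the constraints inherited from $R_{1},R_{2},R_{3}$ and $s_{0,2}$-closure propagate through the minimality argument, and finally assembling the correct combination of $\sigma,R_{1},R_{2},R_{3},\hat R_{2}$ that defines $\delta$.
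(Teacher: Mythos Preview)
Your first step (constructing $\hat R_{2}$) is correct but plays no role in the paper's argument, and your second step is reasonable intuition. The real gap is in your third and fourth steps: you propose a minimality argument for $f_{0,2}$ ``in the spirit of Lemma~\ref{almostNU}'', but you do not actually carry it out --- you say yourself that identifying the shape of $\sigma$ is ``the hard part'' and will require a case analysis on the witness matrix, and none of that analysis is present. A minimum-arity $\rho\in\Gamma$ not preserved by $f_{0,2}$ could have many shapes (unlike $nu_{B,C}^{2k}$, the operation $f_{0,2}$ is only ternary and highly asymmetric), and it is not at all clear that the resulting case split terminates in something you can splice together with $R_{1},R_{2},R_{3},\hat R_{2}$ to get $\delta$. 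As written, this is a plan rather than a proof.

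The paper avoids this case analysis entirely by a cleaner device. Instead of starting from an \emph{arbitrary} witness to $f_{0,2}\notin\Pol(\Gamma)$, it \emph{manufactures} a specific ternary witness: let $\rho_{1}$ be the relation generated under $\Pol(\Gamma)$ from the three columns $(0,1,1)$, $(0,0,0)$, $(2,0,1)$. This $\rho_{1}$ is pp-definable from $\Gamma$ by construction. The key observation is that if $(0,1,2)\in\rho_{1}$, then some $f\in\Pol(\Gamma)$ satisfies $f(0,0,2)=0$, $f(1,0,0)=1$, $f(1,0,1)=2$; using $\{0,2\}\{1,2\}$-projectivity and preservation of $R_{1},R_{2}$ one pins down enough values of $f$ to verify the identity $f_{0,2}(x,y,z)=f(s_{0,2}(x,y),y,z)$, contradicting $f_{0,2}\notin\Pol(\Gamma)$. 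Hence $(0,1,2)\notin\rho_{1}$. From $\rho_{1}$ one then defines $\rho_{2}(x,y,z)=\exists x'y'z'\,\rho_{1}(x',y',z')\wedge R_{2}(x,x')\wedge R_{3}(y,y')\wedge R_{3}(z',z)$ and checks, using $s_{2}$ and $s_{0,2}$, that $\delta'(x,y,z):=\rho_{2}(x,y,z)\wedge\rho_{2}(x,z,y)$ equals $\delta$. The point is that the hypothesis $f_{0,2}\notin\Pol(\Gamma)$ is converted into a single missing tuple of a single explicit ternary relation, with no branching.
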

\begin{proof}
First, denote 
the relations from the statement by $R_1$, $R_2$ and 
$R_3$, respectively.
Let $\rho_{1}$ be the relation generated from
%=\left\langle
$\begin{pmatrix}
0 & 0 & 2\\
1 & 0 & 0\\
1 & 0 & 1
\end{pmatrix}$ using $\Pol(\Gamma)$.
%\right\rangle_{\Pol(\Gamma)}$.
Assume that 
$(0,1,2)\in\rho_{1}$, 
then there exists 
a function $f\in\Pol(\Gamma)$ such that 
$f(0,0,2) = 0$, $f(1,0,0)=1$, and $f(1,0,1) =2$.
Since $f$ is $\{0,2\}\{1,2\}$-projective, 
$f$ always returns the first variable or 2.
Since $f$ preserves $R_1$
%$\begin{pmatrix}
%0&1&1&1&2&2&2\\
%1&0&1&2&0&1&2
%\end{pmatrix}$ 
and $f(0,0,2) = 0$, we have $f(1,1,a) = 1$ for every $a$.
Since 
$f$ preserves $R_2$
%$\begin{pmatrix}
%0&1&1&1&2&2&2\\
%0&0&1&2&0&1&2
%\end{pmatrix}$
and $f(0,0,2) = 0$, 
we have $f(0,0,a) = 0$ for every $a$.

Let us show that 
$f_{0,2}(x,y,z) = f(s_{0,2}(x,y),y,z)$.
If $x=y$ then 
by the above property we have
$f(s_{0,2}(x,y),y,z) = f(x,y,z) = x$.
Also, if $y=z=0$ then
since 
$f(1,0,0)=1$ and $f(0,0,0)=0$, 
we obtain
$f(s_{0,2}(x,y),y,z) = f(x,y,z) = x$.
If $x=2$ or $y=2$ then 
$f(s_{0,2}(x,y),y,z)$ returns 2.
If 
$x=0$ and $y=1$ then $s_{0,2}(x,y)=2$
and $f(s_{0,2}(x,y),y,z)=2$.
If $x=z=1$ and $y=0$, then 
$f(s_{0,2}(x,y),y,z)=f(x,y,z)=f(1,0,1)=2.$
Thus, we considered all cases and showed that 
$f_{0,2}(x,y,z) = f(s_{0,2}(x,y),y,z)$, 
which contradicts the fact that $f_{0,2}\notin\Pol(\Gamma)$.
Therefore 
$(0,1,2)\notin\rho_{1}$.
%Denote $\rho_{2}= \begin{pmatrix}
%0&1&1&1&2&2&2\\
%0&0&1&2&0&1&2
%\end{pmatrix}$,
%$\rho_{3}= \begin{pmatrix}
%0&1&2&2&2\\
%0&1&0&1&2
%\end{pmatrix}$.
Put
$$\rho_{2}(x,y,z) = 
\exists x'\exists y' \exists z'\;
\rho_{1}(x',y',z')
\wedge 
R_{2}(x,x')
\wedge 
R_{3}(y,y')
\wedge 
R_{3}(z',z).$$

Note that $\rho_{1}\subseteq \rho_{2}$, 
and $(0,1,2)\notin\rho_{2}$.
Since $\rho_{1}$ is preserved by $s_2$ 
and $s_{0,2}$, we have 
$s_2
\begin{pmatrix}
0&2\\
0&0\\
0&1
\end{pmatrix}
=
\begin{pmatrix}
2\\
0\\
2
\end{pmatrix}\in\rho_{1}$,
$s_{0,2}
\begin{pmatrix}
0&2\\
1&0\\
1&2
\end{pmatrix}
=
\begin{pmatrix}
2\\
1\\
2
\end{pmatrix}\in\rho_{1}$,
$s_2
\begin{pmatrix}
2&2\\
0&1\\
2&2
\end{pmatrix}
=
\begin{pmatrix}
2\\
2\\
2
\end{pmatrix}\in\rho_{1}$.
Therefore $\rho_{2}$ contains all the tuples $(1,a,b),(2,a,b)$ for every 
$a,b\in A$ 
(we just put $x'=z'=2$ in the definition of $\rho_{2}$).

Define $\delta'(x,y,z) = \rho_{2}(x,y,z)\wedge \rho_{2}(x,z,y)$.
By the previous fact, 
$\delta'$ contains all the tuples with the first element different from 
$0$.
If $\rho_{2}$ contains $(0,1,0)$ then 
we may apply $s_2$ to 
$(0,1,0)$ and $(0,1,1)$ to obtain $(0,1,2)$, which is not from $\rho_{2}$. 
Thus $(0,1,0)\notin\rho_{2}$.
Therefore, $(0,1,0),(0,0,1),(0,1,2),(0,2,1)\notin\delta'$.

Assume that $\delta'$ contains the tuple 
$(0,0,2)$.
Then applying $s_{0,2}$ to 
$(0,1,1)$ and $(0,0,2)$ we get a tuple $(0,1,2)$ which contradicts our assumptions.
Therefore $\delta'=\delta$, which completes the proof.
\end{proof}

\begin{lem}\label{StrangeOneImplies}
Suppose $\Gamma$ is preserved by a 0-stable operation but not preserved by $g_{0,2}$, and 
$R= \begin{pmatrix}
0&1&2&2&2\\
0&1&0&1&2
\end{pmatrix}\in\Gamma$.
Then $\Gamma$ contains a relation
$\delta\in\left\{
\begin{pmatrix}
0&2\\
1&2
\end{pmatrix},
\begin{pmatrix}
0&2&2\\
1&1&2
\end{pmatrix}\right\}$.
\end{lem}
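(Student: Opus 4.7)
The plan is to locate a relation $\rho\in\Gamma$ of minimal arity that is not preserved by $g_{0,2}$ and then peel it down, using $R$ and the available polymorphisms $s_2$ and the $0$-stable operation $v_0$, until only a binary skeleton encoding the $g_{0,2}$-violation remains.

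First I would choose $\rho\in\Gamma$ of minimal arity not preserved by $g_{0,2}$, together with witnesses $\alpha,\beta\in\rho$ such that $\gamma:=g_{0,2}(\alpha,\beta)\notin\rho$. Minimality gives, for every coordinate $i$, a tuple in $\rho$ that agrees with $\gamma$ on every coordinate except $i$; and no row of the witness matrix $(\alpha,\beta)$ is constant. Since $s_2\in\Pol(\Gamma)$, $\gamma$ can contain at most one entry $2$ ``produced'' by the $g_{0,2}$-step, otherwise we could $s_2$-combine two of the minimality tuples to obtain $\gamma\in\rho$. Because $g_{0,2}(2,y)=2=\alpha_i$ already when $\alpha_i=2$, this forces a unique bad coordinate $i_0$ with $\alpha_{i_0}=1$, $\beta_{i_0}=2$, $\gamma_{i_0}=2$, and $\gamma_j=\alpha_j$ for all $j\ne i_0$.

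Next, applying $v_0$ to $\alpha$ and $\beta$ yields a tuple in $\rho$ equal to $\alpha_j$ where $\beta_j=0$ and equal to $2$ where $\beta_j=2$ (in particular at $i_0$). To avoid producing a second bad $2$ that, combined with the minimality tuples via $s_2$, would force $\gamma\in\rho$, one must have $\alpha_j\in\{0,2\}$ at every $j\neq i_0$ with $\beta_j=2$. Coordinates with $\beta_j=1$ cannot be controlled by $v_0$ (its action on $1$ is unknown), so there I would use $R$ as a pp-definable type filter: $R(y,0)$ pp-defines $y\in\{0,2\}$ and $R(y,1)$ pp-defines $y\in\{1,2\}$, and by composing $R$ with itself (and with equality) one obtains the analogous typed binary relations that force a variable to ``match $\alpha_j$ up to a $2$''. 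Substituting $\alpha_j$ as a constant at every $j\neq i_0$ with $\alpha_j\in\{0,1\}$, and substituting a typed existential variable at positions where $\alpha_j=2$, yields a binary relation $\rho'(y,x)$ where $y$ is a freshly introduced type-filter variable and $x$ is the $i_0$-th coordinate of $\rho$.

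A direct analysis of the surviving tuples of $\rho'$ then yields one of the two target relations. The tuple $(0,1)$ survives (from $\alpha$), the tuple $(0,2)$ is forbidden (from $\gamma$), and $(2,2)$ survives (by $s_2$-closing $\alpha$ with any witness containing $2$ coming from $\beta$). The remaining ambiguity is whether $(2,1)$ survives: if the minimal witness has a coordinate $j$ with $(\alpha_j,\beta_j)=(1,1)$ then $(2,1)$ sits in $\rho'$ and we obtain $\delta=\bigl(\begin{smallmatrix}0&2&2\\1&1&2\end{smallmatrix}\bigr)$; otherwise it does not, and we obtain $\delta=\bigl(\begin{smallmatrix}0&2\\1&2\end{smallmatrix}\bigr)$.

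The main obstacle will be the reduction step: verifying that, after substituting typed constants and existentially quantifying everything but the bad coordinate and one type-filter variable, the resulting binary relation still carries the $g_{0,2}$-violation and does not collapse under $s_2$ to something trivial like $\{0,1,2\}\times\{1,2\}$. Controlling the interplay between $v_0$ (whose action on the value $1$ is not fixed by $0$-stability) and $s_2$, as well as handling the case where $\alpha$ or $\beta$ takes the value $1$ off the bad coordinate, requires a case split that mirrors Lemma~\ref{almostNU}; I expect the two outcomes $\delta_1$ versus $\delta_2$ to correspond exactly to the two configurations that case split produces.
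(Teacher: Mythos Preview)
Your initial localisation is correct: a minimal-arity $\rho\in\Gamma$ violated by $g_{0,2}$, with witnesses $\alpha,\beta\in\rho$ and $\gamma=g_{0,2}(\alpha,\beta)\notin\rho$, must have a single ``bad'' coordinate $i_0$ with $(\alpha_{i_0},\beta_{i_0},\gamma_{i_0})=(1,2,2)$, and $\alpha_j\in\{0,1\}$, $\alpha_j\neq\beta_j$ at all $j\neq i_0$. But the reduction step is a real gap, not just a detail. Substituting the constants $\alpha_j$ at every $j\neq i_0$ yields a \emph{unary} relation, and that relation is just $\{1\}$ (since $\gamma\notin\rho$ rules out $2$, and $s_2$ would send $\{0,1\}$ to $2$). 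Your ``freshly introduced type-filter variable $y$'' is not a construction: you never write down a pp-formula in $\rho$ and $R$ that produces a binary $\rho'(y,x)$, and there is no argument for why such a $\rho'$ would contain $(0,1),(2,2)$ yet omit $(0,2)$. The two outcomes you predict do not visibly correspond to the structure of $\alpha,\beta$ in the way you suggest.

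The paper bypasses arity reduction entirely by switching to the operation side of the Galois connection. Rather than pick an arbitrary violated relation, it \emph{builds} a specific ternary one: let $\rho$ be generated under $\Pol(\Gamma)$ from the two tuples $(0,1,1)$ and $(2,0,2)$. Then $(0,1,2)\in\rho$ iff some binary $f\in\Pol(\Gamma)$ has $f(0,2)=0$, $f(1,0)=1$, $f(1,2)=2$; using $R$ and $\{0,2\}\{1,2\}$-projectivity one checks such an $f$ is forced to equal $g_{0,2}$, so $(0,1,2)\notin\rho$. The $0$-stable operation gives $(2,1,2)=v_0((0,1,1),(2,0,2))\in\rho$. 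Now $\delta(x,y):=\rho(x,1,y)\wedge x\in\{0,2\}\wedge y\in\{1,2\}$ contains $(0,1),(2,2)$ but not $(0,2)$, and is therefore one of the two target relations (which one depends only on whether $(2,1,1)\in\rho$). No minimality, no arity descent, and the $R$ relation is used once, to pin down $f$ on the remaining inputs.
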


\begin{proof}
Let
$\rho$ 
be the relation generated from 
$\begin{pmatrix}
0&2\\
1&0\\
1&2
\end{pmatrix}$
using $\Pol(\Gamma)$.
Assume that 
$(0,1,2)\in\rho$, then there exists an operation $f\in\Pol(\Gamma)$ such that $f(0,2)= 0$, $f(1,0) = 0$, and
$f(1,2) =2$.
Let us check that $f=g_{0,2}$, 
which will give us a contradiction.
Since $f$ preserves $R$ and $f(0,2) = 0$,
we have $f(0,a) = 0$ for any $a$.
We also know that 
$f(1,0) = 1$, $f(1,1) =1$, $f(1,2) = 2$.
It remains to check that $f$ returns 2 if the first variable equals 2. 
This follows from the fact that $f$ is $\{0,2\}\{1,2\}$-projective.

Therefore
$(0,1,2)\notin\rho$.
Since $\rho$ is preserved by a 0-stable operation
and $(0,1,1),(2,0,2)\in\rho$, we have $(2,1,2)\in\rho$.
%Since $\Pol(\Gamma)$ is $\{0,2\}\{1,2\}$-projective, 
%$\{0,2\}$ and $\{1,2\}$ are from $\Gamma$.
Put 
$\delta(x,y) = \rho(x,1,y)\wedge x\in\{0,2\}\wedge y\in\{1,2\}$. 
We know that 
$(0,1),(2,2)\in\delta$, $(0,2)\notin\delta$, which completes the proof.
\end{proof}

Now we are ready to prove the classification of the complexity for 
constraint languages we consider in this section.

\begin{thm}\label{SpecialClassification}
Suppose $\Gamma$ is a finite constraint language on $\{0,1,2\}$   
with constants,
%containing $\{0\},\{1\},\{2\}$, 
$\Gamma$ is preserved by $s_2$, and 
$\Pol(\Gamma)$ is $\{0,2\}\{1,2\}$-projective. Then $\QCSP(\Gamma)$ is 
\begin{enumerate}
\item 
PSpace-complete, if $\Pol(\Gamma)$ does not contain a 01-stable operation.
\item in P, if 
$\Gamma$ is preserved by $s_{a,2}$ and $g_{a,2}$ for some $a\in\{0,1\}$.
\item in P, if
$\Gamma$ is preserved by $f_{a,2}$ for some $a\in\{0,1\}$.
\item co-NP-complete, otherwise.
\end{enumerate}
\end{thm}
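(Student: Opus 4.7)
Items (1)--(3) are immediate from results already established. Item~(1) is Corollary~\ref{PSpaceCompleteness}. Items~(2) and~(3) follow from Corollary~\ref{CorComplexityStrange1} and Corollary~\ref{CorComplexityStrange2} respectively (the case $a=1$ is handled by the $0\leftrightarrow 1$ symmetry, which preserves all standing assumptions of the section, including $\{0,2\}\{1,2\}$-projectivity and $s_2 \in \Pol(\Gamma)$).

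For item~(4), I first observe that, since items (1)--(3) all fail, $\Pol(\Gamma)$ contains a $0$-stable or a $1$-stable operation; WLOG the former. Lemma~\ref{InCONPForStable} then places $\QCSP(\Gamma)$ in $\coNP$, so it remains to prove $\coNP$-hardness. By Theorem~\ref{ANDTypeForStable}, $\Gamma$ contains an AND-type relation $R_{\mathrm{and}}$. The lower-bound argument now splits into two cases, according to whether $\Pol(\Gamma)$ also contains a $1$-stable operation.

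Case~A is easy: Theorem~\ref{ORTypeForStable} yields an OR-type relation in $\Gamma$, and Lemma~\ref{ANDORHardness} gives $\coNP$-hardness directly. In Case~B, by Lemma~\ref{NoStableOperationOne}, $\Gamma$ contains one of the binary relations $R_3'=\begin{pmatrix}0 & 1 & 2 & 1 & 2\\0 & 0 & 0 & 2 & 2\end{pmatrix}$, $R_3''=\begin{pmatrix}0 & 1 & 2 & 1 & 2\\1 & 1 & 1 & 2 & 2\end{pmatrix}$. The failure of cases (2) and (3) further tells us that $f_{0,2}\notin\Pol(\Gamma)$ and that at most one of $s_{0,2},g_{0,2}$ preserves $\Gamma$. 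My strategy is to pp-define the ternary relation $\delta(x,y,z) = (x\neq 0) \vee (y=z)$ and then apply Lemma~\ref{NoStrange2Hardness} together with $R_{\mathrm{and}}$. When $s_{0,2}\in\Pol(\Gamma)$ (so $g_{0,2}\notin\Pol(\Gamma)$), I would first pp-define from $R_{\mathrm{and}}$ and $R_3'/R_3''$ the binary relation $\begin{pmatrix}0&1&2&2&2\\0&1&0&1&2\end{pmatrix}$ required by Lemma~\ref{StrangeOneImplies}, apply that lemma to obtain a swap-like binary relation $\delta_0$, and then combine $\delta_0$, $R_{\mathrm{and}}$, and $R_3'/R_3''$ to pp-define the three binary relations appearing as hypotheses of Lemma~\ref{SecondPPDefinition}, which in turn delivers $\delta$. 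When $s_{0,2}\notin\Pol(\Gamma)$, a closer analysis of the available $0$-stable polymorphism together with $s_2$ and $R_{\mathrm{and}}$ should either derive a term operation acting as $s_{0,2}$ on $\Gamma$ (reducing to the previous sub-subcase) or directly manufacture an OR-type relation (reducing to Case~A).

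The main obstacle is the pp-definition bookkeeping in Case~B. The delicate step is to verify, under the combined exclusion of $f_{0,2}$, at least one of $s_{0,2}/g_{0,2}$, and all $1$-stable operations from $\Pol(\Gamma)$, that the available ingredients ($R_{\mathrm{and}}$, one of $R_3'/R_3''$, $s_2$, the constants, and the chosen $0$-stable polymorphism) always suffice to pp-define the three specific binary relations needed by Lemma~\ref{SecondPPDefinition}, and therefore the relation $\delta$ needed for Lemma~\ref{NoStrange2Hardness}.
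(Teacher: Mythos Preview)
Your overall route matches the paper's: items (1)--(3) and Case~A of item~(4) are identical, and in Case~B you correctly aim at Lemma~\ref{NoStrange2Hardness} via Lemmas~\ref{StrangeOneImplies} and~\ref{SecondPPDefinition}, just as the paper does.

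The gap is your Case~B sub-split on whether $s_{0,2}\in\Pol(\Gamma)$. You leave the branch ``$s_{0,2}\notin\Pol(\Gamma)$'' to a vague ``closer analysis \ldots\ should either \ldots\ or \ldots'', but in fact this branch never occurs, and seeing why is the key step you are missing. The paper observes that if $h_0\in\Pol(\Gamma)$ is any $0$-stable operation, then $\{0,2\}\{1,2\}$-projectivity onto the first coordinate together with idempotency determines every value of $h_0$ except $h_0(0,1)\in\{0,2\}$; and $h_0(0,1)=0$ would give $h_0(x,1)=x$ for all $x$, making $h_0$ also $1$-stable, contrary to Case~B. Hence $h_0(0,1)=2$, i.e.\ $h_0=s_{0,2}$. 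So $s_{0,2}\in\Pol(\Gamma)$ automatically, and the failure of case~(2) then forces $g_{0,2}\notin\Pol(\Gamma)$. With this in hand the paper makes the pp-definitions you sketch completely explicit: $\delta_1(x,y)=R_{\mathrm{and}}(x,x,y)$ is exactly the relation needed for Lemma~\ref{StrangeOneImplies}; one composition of your $R_3'/R_3''$ with $R_{\mathrm{and}}$ and $\delta_1$ gives $\delta_3$; Lemma~\ref{StrangeOneImplies} supplies $\delta_4$; one further composition gives $\delta_5$; and $(\delta_5,\delta_3,\delta_1)$ are precisely the three binary relations in the hypothesis of Lemma~\ref{SecondPPDefinition}. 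One technicality you also omit: before invoking Theorem~\ref{ANDTypeForStable} and Lemma~\ref{NoStableOperationOne} you must first enlarge $\Gamma$ by all pp-definable relations of arity at most $k$, so that the standing assumptions of the section hold; this does not change the complexity of $\QCSP(\Gamma)$.
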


\begin{proof}
Let $r$ be the maximal arity of relations in $\Gamma$, 
let $k=max(4,r)$. 
Note that if we add to $\Gamma$ all the relations of arity at most $k$
that can be pp-defined from $\Gamma$ then  we do not change the complexity of $\QCSP(\Gamma)$.
Also, adding the equality relation to $\Gamma$ does not change the complexity since 
the equalities can be propagated out.
Another important fact is that adding pp-definable relations and the equality relation to a constraint language does not affect the set of polymorphisms $\Pol(\Gamma)$.
Thus, it is sufficient to prove the claim only for constraint languages $\Gamma$ containing all such relations.
Additionally, 
since $\Pol(\Gamma)$ is $\{0,2\}\{1,2\}$-projective, 
$\{0,2\}$, $\{1,2\}$, 
and 
$\{0,2\}^{2}\cup\{1,2\}^{2}$ are from $\Gamma$.
Therefore, $\Gamma$ satisfies all the assumptions 
we formulated in the beginning of this section.

Case 1 follows from Corollary~\ref{PSpaceCompleteness}.
Otherwise, 
$\Gamma$ is preserved by a $01$-stable operation, and 
by Lemma~\ref{InCONPForStable} the problem 
$\QCSP(\Gamma)$ is in co-NP.

If $\Gamma$ is not preserved by a $0$-stable or $1$-stable operation, 
then  by Lemma~\ref{lem:coNPHard-nostable} $\QCSP(\Gamma)$ is co-NP-complete.

Then we may assume that 
$\Gamma$ is preserved by a 0-stable operation (the 1-stable case can be considered in the same way).
If $\Gamma$ is preserved by $f_{0,2}$ 
then 
by Corollary~\ref{CorComplexityStrange2} $\QCSP(\Gamma)$ is in P.
Similarly, 
if $\Gamma$ is preserved by $s_{0,2}$ and $g_{0,2}$ 
then by Corollary~\ref{CorComplexityStrange1} 
$\QCSP(\Gamma)$ is in P.
Thus we assume that $f_{0,2}$ does not preserve $\Gamma$, 
and $s_{0,2}$ or $g_{0,2}$ does not preserve $\Gamma$.

By Theorem~\ref{ANDTypeForStable},
$\Gamma$ contains an 
AND-type relation $\delta$.
We can check that $\delta_{1}(x,y) = \delta(x,x,y)$ is equal to $\begin{pmatrix}
0&1&2&2&2\\
0&1&0&1&2
\end{pmatrix}$
 (to derive $(2,1)$ we apply a 0-stable operation to $(1,1)$ and $(2,0)$).

If $\Gamma$ is also preserved by a 1-stable operation, 
then by Theorem~\ref{ORTypeForStable}, 
$\Gamma$ contains an OR-type relation,
and 
by Lemma~\ref{ANDORHardness}, $\QCSP(\Gamma)$ is co-NP-complete.

Thus, we assume that $\Gamma$ is not preserved by a 1-stable operation.
By Lemma~\ref{NoStableOperationOne},
$\Gamma$ contains a relation
$\delta_{2}\in\left\{\begin{pmatrix}
0 & 1 & 2 & 1 & 2\\
0 & 0 & 0 & 2 & 2 
\end{pmatrix},
\begin{pmatrix}
0 & 1 & 2 & 1 & 2\\
1 & 1 & 1 & 2 & 2 
\end{pmatrix}\right\}$.
Define 
$$\delta_{3}(x,y)= \exists x'\exists y'\;\delta_{2}(x,x')\wedge
\delta(x,x',y')\wedge \delta_{1}(y',y).$$
We can check that $(0,a)\in\delta_{3}$ implies $a=0$.
Applying $s_2$ to the tuples 
$(1,1,1)$,$(1,0,0)$, 
and to the tuples 
$(1,1,1)$,$(0,0,0)$, 
we obtain that 
$(1,2,2),(2,2,2)\in\delta$.
Then 
$(1,a),(2,a)\in\delta_{3}$ for every $a$
(put $x'=y' = 2$ in the definition of $\delta_{3}$),
which means that
$\delta_{3}=\begin{pmatrix}
0 & 1 & 1 & 1 & 2 & 2 & 2\\
0 & 0 & 1 & 2 & 0 & 1 & 2 
\end{pmatrix}$.

Assume that a 0-stable operation we have is 
$h_{0}$.
Since $h_{0}$ is $\{0,2\}\{1,2\}$-projective, 
it returns the first variable or 2.
If $h_{0}(0,1) = 0$, then $h_{0}$ is also 1-stable operation, which contradicts our assumptions.
Thus $h_{0}(0,1) = 2$.
If $h_{0}(0,2) =0$ then we get a contradiction 
with the fact that $h_{0}$ preserves $\delta_{3}$.
Thus $h_{0} = s_{0,2}$, which means that $\Gamma$ is not preserved by $g_{0,2}$.
By Lemma~\ref{StrangeOneImplies}, 
$\Gamma$ contains 
a relation 
$\delta_{4}\in \left\{\begin{pmatrix}
0&2\\
1&2
\end{pmatrix},
\begin{pmatrix}
0&2&2\\
1&1&2
\end{pmatrix}\right\}$.
Define 
$$\delta_{5}(x,y) = \exists 
z_{1}\exists z_{2}\;
\delta_{3}(x,z_{1})\wedge \delta_{4}(z_{1},z_{2})\wedge \delta_1(z_{2},y).$$
%By Lemma~\ref{OnePPDefinition},
%
We can check that 
$\delta_{5}=\begin{pmatrix}
0&1&1&1&2&2&2\\
1&0&1&2&0&1&2
\end{pmatrix}\in\Gamma$.
Applying Lemma~\ref{SecondPPDefinition} to $\delta_{5}$, $\delta_{3}$, and $\delta_{1}$, 
we derive that the required relation $(x\neq 0)\vee (y=z)$ can be pp-defined from $\Gamma$.
Then by Lemma~\ref{NoStrange2Hardness},
$\QCSP(\Gamma)$ is co-NP-complete.
\end{proof}
\section{Main result for 3-element domain: proof of 
Theorem~\ref{ThreeElementClassification}}

\begin{lem}\label{NOWNUComplexity}
Suppose $\Gamma$ is a finite constraint language on $\{0,1,2\}$ with constants, such that $\Pol(\Gamma)$ has the EGP property and has no WNU.
Then $\QCSP(\Gamma)$ is PSpace-complete.
\end{lem}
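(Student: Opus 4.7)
The plan is to establish PSpace-hardness by reduction; PSpace-membership is standard via the alternating polynomial-time semantics. The strategy combines the EGP structure with the consequences of the no-WNU hypothesis.

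First, I would apply Lemma~\ref{tauRelationsExistence} to the EGP hypothesis, extracting subsets $\alpha,\beta\subsetneq\{0,1,2\}$ with $\alpha\cup\beta=\{0,1,2\}$ such that $\tau_n$ is pp-definable in $\Gamma$ for every $n\ge 1$. Using the constants in $\Gamma$, I relabel so $\alpha=\{0,2\}$ and $\beta=\{1,2\}$, making $\alpha\setminus\beta=\{0\}$ and $\beta\setminus\alpha=\{1\}$. Invoking the structural result of \cite{ZhukGap2015}, EGP further forces every polymorphism of $\Gamma$ of sufficient arity to be $\{0,2\}\{1,2\}$-projective. The absence of a WNU, in particular, rules out $s_2$ (which is a binary WNU) and all symmetric idempotent polymorphisms of higher arity.

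Second, I would case-split according to whether the unary relation $\{0,1\}$ is preserved by $\Pol(\Gamma)$. The key observation is that a $\{0,2\}\{1,2\}$-projective $f$ either restricts to a projection on $\{0,1\}^n$ (and hence preserves $\{0,1\}$) or outputs $2$ on some Boolean input. In the first case, $\{0,1\}$ is pp-definable. Then, paralleling Lemma~\ref{lem:cons} but with the pp-definable $\{0,1\}$ in place of the conservative unary relations, I obtain polynomial-size pp-definitions of $\tau_n$ from $\tau_3$. With these tools in hand, the reduction from co-QNAE3SAT in the proof of Theorem~\ref{thm:conservative} carries over essentially verbatim: universal variables are restricted to $\{0,1\}$ via pp-definition, and NAE-3-SAT clauses are collectively encoded via $\tau_k$, giving PSpace-hardness.

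Third, in the remaining case where some polymorphism of $\Gamma$ fails to preserve $\{0,1\}$, I would exploit this failure together with the $\{0,2\}\{1,2\}$-projectivity and the absence of any WNU to show that $\Gamma$ pp-defines one of the triples $\{\sigma,\sigma_0,\sigma_1\}$, $\{\sigma'_0,\sigma_1\}$, or $\{\sigma_0,\sigma'_1\}$ from Section~\ref{HardnessSection}. Once any such triple is pp-definable, Lemma~\ref{PSpaceHardness} yields PSpace-hardness directly. The argument parallels that of Lemma~\ref{lem:Pspace-definitions}, but since Section 9's derivations implicitly used the semilattice $s_2\in\Pol(\Gamma)$, the closure steps must be re-derived using only the $\{0,2\}\{1,2\}$-projectivity and whatever polymorphism witnesses the failure to preserve $\{0,1\}$; the non-preservation provides the necessary analogue of the $s_2$-closure.

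The main obstacle is the third step: translating the failure to preserve $\{0,1\}$ into the pp-definability of the PSpace-hard triples, in the absence of $s_2$ as a tool. This requires a careful clone-theoretic case analysis on the structure of the offending polymorphism, and is the technical heart of the proof. The first case (preservation of $\{0,1\}$) reduces to the already-established conservative machinery, so the novelty lies entirely in the alternative case.
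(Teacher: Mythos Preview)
Your first case (where $\{0,1\}$ is preserved by $\Pol(\Gamma)$) is correct and coincides with the paper's handling of the subalgebra $B=\{0,1\}$: both reduce to the conservative machinery of Theorem~\ref{thm:conservative}. The problems lie elsewhere.

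First, a minor but real gap: your relabelling to $\alpha=\{0,2\}$, $\beta=\{1,2\}$ tacitly assumes $\alpha\cap\beta\neq\varnothing$. On three elements Lemma~\ref{tauRelationsExistence} may return disjoint $\alpha,\beta$ (e.g.\ $\alpha=\{0\}$, $\beta=\{1,2\}$), and you never address this. The paper disposes of it separately, observing that $\alpha^2\cup\beta^2$ is then a congruence modulo which every polymorphism is a projection, and citing Lemma~5 of \cite{QCSPmonoids} together with Theorem~5.2 of \cite{BBCJK}.

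Second, and more seriously, your third step is a genuine gap. You propose to adapt Lemma~\ref{lem:Pspace-definitions} by replacing the $s_2$-closure with ``the non-preservation of $\{0,1\}$'', but you give no mechanism by which one polymorphism sending some Boolean tuple to $2$ can substitute for the semilattice closure that Section~9 uses pervasively. The paper takes a completely different route here. It invokes the known consequence of the no-WNU hypothesis from \cite{CSPconjecture,MarotiMcKenzie}: there must be a two-element factor (a congruence quotient, or a two-element subalgebra $B$) on which $\Pol(\Gamma)$ acts by projections only. The congruence case is again handled by the external lemmas above. If $B=\{0,2\}$, then since only projections act on $B$ the relation $\delta=\{0,2\}^3\setminus\{(2,2,0),(2,0,2)\}$ is an invariant of $\Pol(\Gamma)$; combining $\delta$ with the EGP-supplied relations $\sigma$ and $\sigma_2$ one pp-defines $\sigma_0'$ and $\sigma_1$ explicitly, and Lemma~\ref{PSpaceHardness} finishes. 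The case $B=\{1,2\}$ is symmetric. Your use of the no-WNU hypothesis---merely ruling out $s_2$ and symmetric idempotents---is too weak to drive the argument; the factor-with-only-projections structure result is the missing ingredient.
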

\begin{proof}
Since $\Pol(\Gamma)$ has the EGP property,
every operation of $\Pol(\Gamma)$ is $\alpha\beta$-projective 
for some $\alpha$ and $\beta$, strict subsets of $\{0,1,2\}$, so that $\alpha \cup \beta = \{0,1,2\}$.
If $\alpha\cap\beta = \varnothing$,
then $\alpha\times\alpha\cup\beta\times\beta$ 
is an equivalence relation preserved by 
$\Pol(\Gamma)$ and any operation of $\Pol(\Gamma)$
modulo this congruence is a projection. 
It follows therefore from Lemma 5 in \cite{QCSPmonoids} and Theorem 5.2 of \cite{BBCJK} that
$\QCSP(\Gamma)$ is PSpace-complete.

Otherwise, w.l.o.g. we assume that $\alpha=\{0,2\}$ and $\beta=\{1,2\}$. 
Since there does not exist a WNU operation, it is known from \cite{CSPconjecture,MarotiMcKenzie}
that there should be a factor of size at least two whose operations are projections.

We consider two cases. There exists a congruence $\sigma$ such that 
$\Pol(\Gamma)/\sigma$ has only projections. This case we have already covered using Lemma 5 in \cite{QCSPmonoids} and Theorem 5.2 of \cite{BBCJK}.

Thus we may assume, there exists a subset $B\subsetneq \{0,1,2\}$ of size two
such that all operations on $B$ are projections
and $B$ is a subalgebra of $\Pol(\Gamma)$.

Assume that $B = \{0,2\}$. 
Then $\delta = \{0,2\}^{3}\setminus \{(2,2,0),(2,0,2)\}$ 
is preserved by $\Pol(\Gamma)$.
Recall the relations $\sigma = \alpha^{2}\cup\beta^{2}$ and 
$\sigma_{2}(x_{1},x_{2},x_{3},x_{4}) = \sigma(x_{1},x_{2})\vee\sigma(x_{3},x_{4})$ are pp-definable from $\Gamma$ because $\Pol(\Gamma)$ is $\alpha\beta$-projective.
We 
have $\sigma_{0}'(x,y,z) = \exists x'\;
\sigma(x,x')\wedge \delta(x',y,z)$,
$\sigma_{1}'(x,y,z) = \exists x'\;
\sigma_{2}(x,1,x',1)\wedge \delta(x',y,z)$, 
which by Lemma \ref{PSpaceHardness} guarantees PSpace-completeness of  $\QCSP(\Gamma)$.

If $B = \{1,2\}$, then we are in a symmetric case to the last and we again apply Lemma \ref{PSpaceHardness} (this time $b=1$) to prove PSpace-completeness.

If $B = \{0,1\}$, then we can complete the argument in various ways. Note that $B$ is a subalgebra, and we already know that $\{0,2\}$ and $\{1,2\}$ are subalgebras too. It follows that we are in the conservative case, and the result comes from Theorem \ref{thm:conservative}. 
\end{proof}

Note that the above lemma does not hold on a larger domain. 
As a counter example on the 4-element domain we can 
take the constraint language we build in Corollary~\ref{DPon4elements}.
%To build a counter example on 4-element domain it is sufficient to apply
%Theorem~\ref{QCSPMonsters} for $i=1$ and $\Gamma$ from Section~\ref{StrangeOneSection}.

Now, we are ready to prove Theorem~\ref{ThreeElementClassification}.

\begin{THMThreeElementClassification}
Suppose 
$\Gamma$ is a finite constraint language on $\{0,1,2\}$ with constants. Then $\QCSP(\Gamma)$ is 
\begin{enumerate}
\item in P, if $\Pol(\Gamma)$ has the PGP property and has a WNU operation.
\item NP-complete, if $\Pol(\Gamma)$ has the PGP property and has no WNU operation.
\item PSpace-complete, if $\Pol(\Gamma)$ has the EGP property and
has no WNU operation.
%\item PSpace-complete, if $\Pol(\Gamma)$ has the EGP property and $\Pol(\Gamma)$ does not contain $f$ such that 
%$f(x,a)=x$ and $f(x,c) = c$, where $a,c\in\{0,1,2\}$.
\item PSpace-complete, if $\Pol(\Gamma)$ has the EGP property and $\Pol(\Gamma)$ does not contain an $ab$-stable operation.
%PSpace-complete, if $\Pol(\Gamma)$ has the EGP property and $\Pol(\Gamma)$ does not contain $f$ such that 
%$f(x,a,b)=x$ and $f(x,c,c) = c$, where $\{a,b,c\} = \{0,1,2\}$.
\item in P, if $\Pol(\Gamma)$ contains $s_{a,c}$ and $g_{a,c}$ for some $a,c\in\{0,1,2\}$, $a\neq c$.
\item in P, if $\Pol(\Gamma)$ contains $f_{a,c}$ for some $a,c\in\{0,1,2\}$, $a\neq c$.
\item co-NP-complete otherwise.
\end{enumerate}
\end{THMThreeElementClassification}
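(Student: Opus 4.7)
The plan is to dispatch the seven cases by combining the machinery already built in the paper. Cases (1) and (2) are the PGP cases: by Theorem~\ref{thm:PGP-in-NP}, $\QCSP(\Gamma)$ reduces to a polynomial number of $\CSP(\Gamma)$ instances, and the CSP Dichotomy (Theorem~\ref{thm:csp}) then places $\QCSP(\Gamma)$ in P when a WNU polymorphism is present and makes it NP-complete otherwise. Case (3) is exactly the content of Lemma~\ref{NOWNUComplexity}. All the real work is therefore in cases (4)--(7), where $\Pol(\Gamma)$ has the EGP property and simultaneously admits a WNU operation.

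Under those assumptions, I would invoke the structural result from \cite{ZhukGap2015} (the same ingredient underlying Lemma~\ref{tauRelationsExistence}) to obtain $\alpha,\beta\subsetneq\{0,1,2\}$ with $\alpha\cup\beta=\{0,1,2\}$ such that every member of $\Pol(\Gamma)$ is $\alpha\beta$-projective. The first key step is to show $\alpha\cap\beta\neq\varnothing$. If instead $\alpha\cap\beta=\varnothing$, then $\alpha^{2}\cup\beta^{2}$ is a two-block equivalence relation preserved by every polymorphism, and on the two-element quotient every $\alpha\beta$-projective operation becomes a projection. A clone of projections contains no WNU operation, contradicting the hypothesis. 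Hence $|\alpha|=|\beta|=2$ and $|\alpha\cap\beta|=1$; after permuting the domain (which does not change QCSP complexity) we may assume $\alpha=\{0,2\}$ and $\beta=\{1,2\}$, so $\Pol(\Gamma)$ is $\{0,2\}\{1,2\}$-projective.

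At this point I would apply Theorem~\ref{SpecialClassification}, whose four alternatives (PSpace-complete / in P / in P / co-NP-complete) are stated under $\{0,2\}\{1,2\}$-projectivity and correspond, respectively, to the absence of any 0-stable or 1-stable operation, to the presence of both $s_{a,2}$ and $g_{a,2}$ for some $a\in\{0,1\}$, to the presence of $f_{a,2}$ for some $a\in\{0,1\}$, and to none of the above. To finish, I would verify that these four alternatives match cases (4)--(7). The one subtlety is that cases (4), (5), (6) quantify over arbitrary pairs $a\neq c$ in $\{0,1,2\}$, whereas Theorem~\ref{SpecialClassification} only treats $c=2$. A short check eliminates any $c\neq 2$ under $\alpha\beta$-projectivity: if $f\in\Pol(\Gamma)$ satisfied $f(x,a)=x$ and $f(x,c)=c$ with $c\in\{0,1\}$, then evaluating $f$ at the pair that mixes $\alpha\setminus\beta$ with $\beta\setminus\alpha$ would force a value outside of both $\alpha$ and $\beta$, contradicting the fact that $\{0,2\}$ and $\{1,2\}$ are subalgebras. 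The identical argument handles $s_{a,c}$, $g_{a,c}$, and $f_{a,c}$, so ``for some $a,c$'' collapses to ``for some $a\in\{0,1\}$ with $c=2$'', and the seven-case list aligns with the four-case list.

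The main obstacle is the derivation of $\{0,2\}\{1,2\}$-projectivity of the \emph{entire} polymorphism clone from EGP together with the existence of a WNU on a three-element domain; everything else is either a direct invocation of earlier results or a small subalgebra computation. The key leverage is the combination of \cite{ZhukGap2015} (to turn EGP into universal $\alpha\beta$-projectivity) with the elementary observation that a clone of projections cannot contain a WNU, which together rule out the disjoint-$\alpha\beta$ scenario and pin down the relabelling.
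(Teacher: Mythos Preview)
Your outline matches the paper's proof almost exactly: cases (1)--(3) are handled the same way, and for (4)--(7) both you and the paper reduce to Theorem~\ref{SpecialClassification} after establishing $\{0,2\}\{1,2\}$-projectivity and eliminating the disjoint $\alpha\cap\beta=\varnothing$ scenario via the WNU. There is, however, one step the paper performs that you omit and that is not optional. All of Section~9, including Theorem~\ref{SpecialClassification}, is developed under the standing hypothesis that the semilattice $s_2$ preserves $\Gamma$; this is not implied by $\{0,2\}\{1,2\}$-projectivity alone, and you never verify it. The paper obtains it from the WNU: setting $f(x,y)=w(x,\dots,x,y)$, the WNU identities give $f(x,y)=w(y,x,\dots,x)=w(x,y,x,\dots,x)=\cdots$, so $f$ is $\alpha\beta$-projective onto \emph{both} of its arguments, which forces $f(0,1)=f(1,0)=f(0,2)=f(2,0)=f(1,2)=f(2,1)=2$ and hence $f=s_2$. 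You should insert this short argument before invoking Theorem~\ref{SpecialClassification}. A minor phrasing issue in your subalgebra check: with $c=0$, the value $f(1,c)=0$ lies outside $\beta$ but inside $\alpha$, not ``outside of both''; the contradiction is with $\alpha\beta$-projectivity onto the first coordinate, and your conclusion $c\in\alpha\cap\beta=\{2\}$ is correct.
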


\begin{proof}
Suppose $\Pol(\Gamma)$ has the PGP property. We know from Theorem \ref{thm:PGP-in-NP} that $\QCSP(\Gamma)$ can be reduced to a polynomial collection of instances of $\CSP(\Gamma)$. If $\Gamma$ additionally has a WNU then it follows from \cite{BulatovFVConjecture,ZhukFVConjecture} that $\CSP(\Gamma)$, and therefore also $\QCSP(\Gamma)$, is in P. If $\Pol(\Gamma)$  does not have a WNU operation, then NP-hardness follows from the identity reduction from $\CSP(\Gamma)$, whose NP-hardness has long been known (\mbox{e.g.} see again \cite{BulatovFVConjecture,ZhukFVConjecture}).

Suppose $\Pol(\Gamma)$ has the EGP property.
If $\Pol(\Gamma)$ has no WNU operation 
then by Lemma~\ref{NOWNUComplexity}
$\QCSP(\Gamma)$ is PSpace-complete.

Suppose $\Pol(\Gamma)$ contains a WNU operation $w$.
Since $\Pol(\Gamma)$ has the EGP property, all operations of $\Pol(\Gamma)$ are $\alpha\beta$-projective for some 
$\alpha$ and $\beta$. 

If $\alpha\cap\beta = \varnothing$, then 
$\Pol(\Gamma)$ cannot have a WNU operation, 
which contradicts our assumption.

Suppose $\alpha\cap\beta \neq \varnothing$.
If $\Pol(\Gamma)$ contains 
an $ab$-stable operation $f$, that is 
$f(x,a,b) = x$ and $f(x,c,c)=c$ for $\{a,b,c\}=\{0,1,2\}$, 
then, since 
$f$ is $\alpha\beta$-projective, we have
$c \in\alpha\cap\beta$.
Similarly, if 
$\Pol(\Gamma)$ contains $s_{a,c},g_{a,c}$, or $f_{a,c}$, then 
$c \in\alpha\cap\beta$.
Since 
both variables of the operation 
$f(x,y)=w(x,\ldots,x,y)$ should be $\alpha\beta$-projective, 
$f$ is the semilattice operation $s_c$. W.l.o.g. we assume that $c=2$, 
then the remaining part of the classification 
follows from Theorem~\ref{SpecialClassification}.
\end{proof}

%\input{Introduction.tex}
%\input{MainResults.tex}
%%\input{RelatedWork.tex}
%\input{Preliminaries.tex}
%\input{Conservative.tex}
%\input{Complexity.tex}
%\input{Reductions.tex}
%\input{Strange.tex}
%input{Strange2.tex}
%\input{ThreeElements.tex}
%\input{Main.tex}

\section{Conclusion}

Our demonstration of QCSP monsters suggests that a complete complexity classification of $\QCSP(\Gamma)$ under polynomial reductions is likely to be exceedingly challenging. Indeed, suppose $\mathrm{P} \neq \NP$, how many equivalence classes of problems $\QCSP(\Gamma)$ are there up to polynomial equivalence? In this paper we showed that there are at least six of them. Are there any more? Are there infinitely many? We don't know
the answer.

Meanwhile, the most sensible approach to complexity classification for $\QCSP(\Gamma)$ might be to try to find those that are in P, in contradistinction to those that are $\NP$-hard under polynomial Turing reductions (which would thus capture also the $\coNP$-hardness). 
Similarly, someone could ask about a general 
criteria for the QCSP to be PSpace-hard
or to be a member of a concrete complexity class, 
which is also a very intriguing question.

As the next step, it seems very natural to work on a classification for 
constraint languages on a three-element domain 
without constants, where the reduction to CSP doesn't work and brand new ideas are required.

Even though we consider only finite constraint languages in this paper, we believe our classification of the complexity for a three-element domain holds for infinite constraint languages (relations are given by the list of tuples). To prove this stronger result 
we would need to add many technicalities, which would make 
our current proof even more complicated. 
For instance, for the algorithm 
in Section~\ref{StrangeTwoSection} to work in polynomial time we need to enforce that any constraint relation contains exponentially many tuples on 
its arity (see line 19 of the pseudocode). This can be achieved by 
showing that any relation preserved by the semilattice can be decomposed in polynomial time into several relations with exponentially many tuples
(see \cite{zhuk2018modification} for such a decomposition).
That is why we decided to limit ourselves to finite constraint languages.

\section*{Acknowledgements}

The first proof that there is a concrete finite constraint language $\Gamma$ such that QCSP$(\Gamma)$ is co-NP-complete is due to Miroslav Ol\u{s}\'ak. The observation that $\QCSP(\Gamma)$ as in Corollary \ref{cor:emil} is $\Theta^{\mathrm{P}}_2$-complete was communicated to us by Emil Je\v r\'abek. We are grateful for discussions with Victor Lagerkvist about polynomial-sized pp-definitions, with particular reference to our relations $\tau_i$.
We would like also to thank 
Antoine Mottet and Libor Barto for fruitful discussions, as well as discerning comments from several anonymous reviews.
  
\bibliographystyle{plain}
\bibliography{refs}

\end{document}